\newcommand{\E}{\mathbb{E}}
\renewcommand{\Pr}{\mathbb{P}}
\newcommand{\bX}{\mathbf{X}}
\newcommand{\bY}{\mathbf{Y}}
\newcommand{\bU}{\mathbf{U}}
\newcommand{\prescription}{\varGamma}
\newtheorem{thm}{Theorem}
\newtheorem{lemma}{Lemma}
\newtheorem{prop}{Proposition}
\theoremstyle{definition}
\newtheorem{assump}{Assumption}
\newtheorem{defn}{Definition}
\newtheorem{example}{Example}
\newtheorem{remark}{Remark}
\title{
Dynamic Games among Teams with Delayed Intra-Team Information Sharing
}
\author{Dengwang Tang, Hamidreza Tavafoghi, Vijay Subramanian, Ashutosh Nayyar, Demosthenis Teneketzis
\thanks{This work is supported by NSF Grant No. ECCS 1750041, ECCS 2038416, ECCS 1608361, CCF 2008130, ARO Award No. W911NF-17-1-0232, and MIDAS Sponsorship Funds by General Dynamics.}
\thanks{D. Tang, V. Subramanian, and D. Teneketzis are with Electrical and Computer Engineering,
        University of Michigan, Ann Arbor, MI, 48109, USA. E-mail:
        {\tt\small dwtang@umich.edu, vgsubram@umich.edu, teneket@umich.edu}.}%
\thanks{H. Tavafoghi is with Mechanical Engineering, University of California, Berkeley, CA, 94720, USA. E-mail: {\tt\small tavaf@berkeley.edu}. }%
\thanks{A. Nayyar is with the Ming Hsieh Department of Electrical and Computer Engineering, University of Southern California, Los Angeles, CA, 90089, USA. E-mail: {\tt\small ashutosn@usc.edu}.}%
}
\begin{document}

\maketitle
\thispagestyle{empty}
\pagestyle{empty}

\begin{abstract}
We analyze a class of stochastic dynamic games among teams with asymmetric information, where members of a team share their observations internally with a delay of $d$. Each team is associated with a controlled Markov Chain, whose dynamics are coupled through the players' actions. These games exhibit challenges in both theory and practice due to the presence of signaling and the increasing domain of information over time. We develop a general approach to characterize a subset of Nash Equilibria where the agents can use a compressed version of their information, instead of the full information, to choose their actions. We identify two subclasses of strategies: Sufficient Private Information Based (SPIB) strategies, which only compress private information, and Compressed Information Based (CIB) strategies, which compress both common and private information. 
We show that while SPIB-strategy-based equilibria always exist, the same is not true for CIB-strategy-based equilibria.
We develop a backward inductive sequential procedure, whose solution (if it exists) provides a CIB strategy-based equilibrium. We identify some instances where we can guarantee the existence of a solution to the above procedure. Our results highlight the tension among compression of information, existence of (compression based) equilibria, and backward inductive sequential computation of such equilibria in stochastic dynamic games with asymmetric information.
\end{abstract}

\section{INTRODUCTION}


Dynamic games with asymmetric information appear in many socioeconomic contexts. In these games, multiple agents/decision makers interact repeatedly in a changing environment. Agents have different information and seek to optimize their respective long-term payoffs. For example, multiple companies may compete with each other in a market over time and each company attempts to optimize its own long-term benefits \cite{maskin1988theory1,maskin1988theory2,doganoglu2003dynamic,bergemann2006pricecompetition,cabral2011dynamic}; the market is also changing over time driven by the actions the companies take. Another instance of such games arises in cyberphysical systems \cite{hamidbookchapter,amin2012cyber,amin2015game,zhu2015game,shelar2016security}; at each time, attackers make decisions on which hosts to attack, and the system administrators/defenders choose actions to defend against the attackers, for example, by isolating some hosts from the rest of the system \cite{hamidbookchapter}; the system's state changes over time as a result of the attackers' and defenders' actions. In all instances of these games, when an agent takes an action, she needs to consider not only how the action will affect her current payoff but also how it will influence the system's evolution and the future actions of all agents, and hence her future payoffs.

In some settings, agents can form groups, or teams \cite{colombino2017mutually,summers2017information}. The agents in the same group share a common goal but may have different information available to them. This information asymmetry among teammates appears in many engineering applications. In most of these applications, the state of the system changes fast, and agents have to make real-time decisions. Moreover, the communication between agents is either costly, or restricted by bandwidth or delay. Examples of our settings include competing fleets of automated cars from rival companies \cite{hancock2019future} and the DARPA Spectrum Challenge \cite{spectrumchallenge}. In the DARPA Spectrum challenge setup, individual transceivers work in teams to maximize the sum throughput of their networks. Teams compete with other teams, and members of the same team need to coordinate and evolve their responses over time. In these settings, agents in the same team aim to choose their strategy jointly to achieve team optimality (i.e. to choose the joint strategy profile that maximizes the expected utility of the team over all joint strategy profiles) rather than just person-by-person optimality (a team strategy is person-by-person optimal, PBPO, when each team member's strategy is an optimal response to other team members' strategy profile). We study a stylized model of such settings in this paper.

It is worth stating that the games among teams problems we focus on in this paper are different from cooperative games in economics research (e.g. see  \cite{myerson2013game} Chapters 8-10). In cooperative game theory, the goal is to study the group formation process among agents with different objectives. In our setting, groups are assumed to be fixed and given, and we focus instead on determining the optimal actions and payoffs for each group. A unilateral deviation in our problems means one or more agents in one group deviates, but the community structure of the agents stays the same.

There are three main challenges that need to be addressed when studying dynamic games among individual players: (i) the agents' decisions and information are interdependent over time. In particular, \emph{signaling} is present in these games, i.e. agents actively infer other agents' private information based on their actions and their strategy; (ii) the domain of the agents' strategy grows over time; (iii) signaling in games is more challenging and subtle than in team problems due to the diverging incentives of the agents. Games of teams inherit all the above challenges. Moreover, we have the additional challenge of coordination within asymmetrically informed team members to achieve team optimality instead of person-by-person optimality.

In this paper we propose a general approach to characterize a subset of equilibrium strategies of dynamic games among teams with the following goals: (i) to determine appropriate compression of information for each agent to base their decision on; (ii) to develop a sequential decomposition of the game. In addition we would like to determine conditions sufficient to guarantee the existence of such equilibrium strategies.

\subsection{Related Literature}
To understand games among teams, we first examine a team's best-response strategy when other teams' strategies are fixed. Team problems, or decentralized control problems, 
have been extensively studied in the control literature.
Researchers have developed various methodologies/approaches to decentralized control problems to determine team optimal strategies or PBPO strategies, and to determine structural results/properties for the above mentioned strategies. These methodologies include: (i) the person-by-person approach \cite{witsenhausen1979structure,walrand1983optimal,teneketzis2006structure,nayyar2011structure,kaspi2010structure,tenney1981detection,tsitsiklis1989decentralized,teneketzis1987decentralized,veeravalli1993decentralized,veeravalli1994decentralized,nayyar2011sequential,teneketzis1984decentralized,veeravalli2001decentralized,varaiya1983causal,mahajan2009optimal} (ii) the designer's approach \cite{witsenhausen1973standard,mahajan2008sequential} (iii) the coordinator's approach \cite{nayyar2010optimal,nayyar2013decentralized,mahajan2013controlsharing,tavafoghi2018unified}.
The person-by-person approach has been used to determine qualitative/structural properties of team optimal or PBPO strategies. In this approach, the strategies of all team members/agents except one, say agent $i$, are assumed to be arbitrary but fixed; then the qualitative properties of agent $i$'s best response strategy are determined. These properties are then valid for all possible (fixed) strategies of the other agents.
The designer's approach investigates the decentralized control/team problem from the point of view of a designer who knows the system model and the joint probability distributions of the primitive random variables (the system's initial state, the noise driving the system, and the noise in the agents' observations). The designer chooses the strategies of all team members at time 0 by solving an open-loop stochastic control problem, where her decision at each time is the strategy/control law for all the team members/agents. Applying stochastic control results, the designer can obtain a dynamic programming decomposition.
The methodology developed in this paper is inspired by the coordinator's approach used in \cite{nayyar2010optimal,nayyar2013decentralized,tavafoghi2018unified}.
Similar to the designer's approach, the coordinator's approach assumes that a fictitious agent, called the coordinator, assigns instructions to agents. However, unlike the designer's approach, the coordinator is assumed to know the common information of all agents, and assigns partial strategies (prescriptions) instead of full strategies to agents. The partial strategies tell an agent how to utilize her private information to generate actions. Both the designer's approach and the coordinator's approach lead to the determination of globally optimal team strategy profiles. 

Research on dynamic games roughly consists of two directions: One direction focuses on repeated games or multi-stage games, where the instantaneous payoffs at each stage is only affected by actions in this stage but not by the actions in the previous stages. In these games, researchers investigated long term interactions among agents (e.g. punishment and reward strategies) and characterized the set of equilibrium payoffs (e.g. see \cite{mailath2006repeated} or \cite{myerson2013game} Chapter 7). The other direction focuses on games with an underlying dynamic system, in other words, games where instantaneous payoffs can be affected by previous actions. In this more complicated setting, researchers attempted to develop methodologies for the determination of equilibria with either a general structure or a specialized structure. In this paper we focus on the latter direction.

Games of individual agents (i.e. agents do not form teams) with an underlying dynamic system have been studied in both the economics and the control literature. Dynamic games with symmetric information have been studied extensively \cite{bacsar1998dynamic,filar2012competitive}. In \cite{maskin2001markov}, the authors propose the concept of Markov Perfect Equilibrium (MPE) for the case where the state of the system and agents' actions are perfectly observable. The research on dynamic games with asymmetric information can be classified into two categories: zero-sum games and general (i.e. not necessarily zero-sum) games. Zero-sum games are analyzed in  \cite{renault2006value,renault2012value,zheng2013decomposition,gensbittel2015value,li2014lp,li2017solving,cardaliaguet2016markov,kartik2020upper}. In these works, the authors take advantage of many properties of zero-sum games, such as having a unique value and the interchangeability of equilibrium strategies. These properties do not extend to general non-zero-sum games. The literature on general dynamic games includes \cite{maskin2013youtube,nayyar2012dynamic,gupta2014common,gupta2016dynamic,ouyang2015oligopoly,nayyar2013common,ouyang2016dynamic,tavafoghi2016stochastic,hamidthesis,vasal2019spbe}.
In \cite{nayyar2013common}, the authors extend the MPE concept in \cite{maskin2001markov} to the case where the underlying dynamics is only partially observable. Under the crucial assumption that the common information based (CIB) belief is strategy-independent, the authors prove that there exist equilibria where agents play \emph{CIB strategies}, i.e. the agents choose their actions based on CIB belief and private information instead of full information. Furthermore, such equilibria can be found through a sequential decomposition of the game. 
In our setup the system state is not perfectly observed, thus our model is distinctly different from that of \cite{maskin2001markov}. Furthermore, in contrast to \cite{nayyar2013common}, the CIB belief in our model is strategy-dependent. 

The closest work to our paper in terms of both model and approach is \cite{ouyang2016dynamic}. In \cite{ouyang2016dynamic}, the authors consider a game model where, in contrast to \cite{nayyar2013common}, the CIB beliefs are strategy-dependent. 
They propose the concept of Common Information Based Perfect Bayesian Equilibrium (CIB-PBE) as a solution concept for this game model and prove that CIB-PBE can be found through a sequential decomposition whenever this decomposition has a solution. The game model of \cite{ouyang2016dynamic} has multiple features that prevent us from directly applying their results in our analysis in Section \ref{sec: 4}. We will make a more detailed comparison in Section \ref{sec: goc}. Our work is also close in spirit to \cite{maskin2013youtube}. In \cite{maskin2013youtube}, the authors extend their work in \cite{maskin2001markov} by considering games where actions are observable but each agent has a fixed, private utility type.  
They propose Markov Sequential Equilibrium (MSE) as a solution concept for these games, where the agents choose their actions based on a compression of their information along with their beliefs on the types of other agents. The authors show by example that MSE do not necessarily exist. As an alternative to MSE they propose a new concept obtained from limits of $\varepsilon$-MSE as $\varepsilon$ goes to 0. 

Unlike either team problems or dynamic games among individual agents, games among teams (in particular, ones with an underlying dynamic system) have not been systematically studied in the literature. There are only a few works on special models of games among teams. In \cite{farina2018ex} and \cite{zhang2020computing}, the authors proposed algorithms to compute equilibria for zero-sum multiplayer extensive form games, where a team of players plays against an adversary.
In \cite{anantharam2007common} the authors provide an example of a zero-sum game which involves a team. However the players in this team have symmetrical information, hence the team is equivalent to an individual player with vector-valued actions. 
In \cite{nayyar2012dynamic} the authors briefly extend their results in \cite{nayyar2013common} to games among teams for a specialized model where the CIB belief is strategy independent. In both \cite{colombino2017mutually} and \cite{summers2017information} the authors solve a two-team zero-sum linear quadratic stochastic dynamic game. In \cite{bhattacharya2012multi} the authors formulate and solve a game between two teams of mobile agents. The model and information structure of \cite{bhattacharya2012multi} are different from ours.
Additionally, games among teams have been the subject of empirical research (see, for example, \cite{cox2018strategic,cooper2005two}). In our work, we study analytically a model of non zero-sum dynamic stochastic games among teams where the CIB belief is strategy dependent.
\subsection{Contribution}

In this paper, we consider a model of dynamic games among teams with asymmetric information. We assume that each team is associated with a dynamical system that has Markovian dynamics driven by the actions of all agents of all teams. The state of each dynamical system is assumed to be vector-valued, where each component represents an agent's local state. Agents can observe their own local states perfectly and communicate them within their respective teams with a delay of $d$. All actions are public, i.e., observable by every agent in every team. We also assume the presence of public noisy observations of the system's state. The instantaneous reward of a team depends on the states and actions of all teams. Our model is a generalization of the model in \cite{ouyang2016dynamic} to competing teams.

Our contributions are as follows: 
\begin{itemize}
	\item We identify appropriate compression of information for each agent. The compression is achieved in two steps: (i) the compression of team-private information that depends only on the team strategy; (ii) the compression of common information that depends on the strategy of all agents. The compression steps induce two special classes of strategies: (i) Sufficient Private Information Based (SPIB) strategies, where agents only apply the first step of compression  (ii) Compressed Information Based (CIB) strategies, where agents apply both steps of compression.
	
	\item We develop a sequential decomposition of the game where agents play CIB strategies. We show that any solution of the sequential decomposition forms a Nash Equilibrium of the game.  
	
	\item We show that SPIB-strategy-based Nash Equilibria always exist, while CIB-strategy-based Nash Equilibria do not always exist. We identify some simple instances where CIB-strategy-based equilibria are guaranteed to exist.
\end{itemize}

In a broader context, our results highlight the conflicts between compression of information, sequential decomposition, and existence of equilibria that occur in a wide range of dynamic games with asymmetric information, reiterating the message in \cite{maskin2013youtube}: In general, compression can hurt the ability to sustain equilibria, since the full history can allow for a finer calibration of the agents' strategies.

\subsection{Organization}

We organize the rest of the paper as follows: In Section \ref{sec: problemformulation} we formally present our model and problem. In Section \ref{sec: goc} we transform the game among teams into an equivalent game among coordinators where each coordinator represents a team. In Section \ref{sec: spi} we introduce our first step of compression of information and SPIB strategies, and we show the existence of SPIB-strategy-based equilibria. In Section \ref{sec: 4} we introduce the second step of compression and CIB strategies, and we provide a sequential decomposition of the game. We also show the general non-existence of CIB-strategy-based equilibria and provide some conditions for existence. We present some extensions and special cases of our results in Section \ref{sec: add}. Then we discuss our results in Section \ref{sec: dis}. We conclude in Section \ref{sec: concl}. Proof details are provided in the Appendix.

\subsection{Notation}
We use capital letters to represent random variables, bold capital letters to denote random vectors, and lower case letters to represent realizations. We use superscripts to indicate teams and agents, and subscripts to indicate time. We use $i$ to represent a typical team, and $-i$ represents all teams other than $i$. We use $t_1:t_2$ to indicate the collection of timestamps $(t_1, t_1+1, \cdots, t_2)$. For example $X_{5:8}^1$ stands for the random vector $(X_5^1, X_6^1, X_7^1, X_8^1)$. For random variables or random vectors, we use the corresponding script capital letters (italic capital letters for greek letters) to denote the space of values these random vectors can take. For example, $\mathcal{H}_t^i$ denotes the space of values the random vector $H_t^i$ can take. The products of sets in this paper are Cartesian products. We use $\Pr(\cdot)$ and $\E[\cdot]$ to denote probabilities and expectations, respectively. We use $\Delta(\varOmega)$ to denote the set of probability distributions on a finite set $\varOmega$. When writing probabilities, we will omit the random variables when the lower case letters that represent the realizations clearly indicates the random variable it represents. For example, we will use $\Pr(y_t^i|x_t, u_t)$ as a shorthand for $\Pr(Y_t^i = y_t^i|\bX_t = x_t, \bU_t=u_t)$. When $\lambda$ is a function from $\varOmega_1$ to $\Delta(\varOmega_2)$, with some abuse of notation we write $\lambda(\omega_2|\omega_1):=(\lambda(\omega_1))(\omega_2)$ as if $\lambda$ is a conditional distribution. We use $\bm{1}_A$ to denote the indicator random variable of an event $A$.

In general, probability distributions of random variables in a dynamic system are only well defined after a complete strategy profile is specified. We specify the strategy profile that defines the distribution in superscripts, e.g. $\Pr^g(x_t^i|h_t^0)$. When the conditional probability is independent of a certain part of the strategy $(g_t^i)_{(i, t)\in \varOmega}$, we may omit this part of the strategy in the notation, e.g. $\Pr^{g_{1:t-1}}(x_{t}|y_{1:t-1}, u_{1:t-1})$, $\Pr^{g^{i}}(x_{t}^i|h_t^0)$ or $\Pr(x_{t+1}|x_{t}, u_{t})$. We say that a realization of some random vector (for example $h_t^0$) is \emph{admissible} under a partially specified strategy profile (for example $g^{-i}$) if the realization has strictly positive probability under some completion of the partially specified strategy profile (In this example, that means $\Pr^{g^i, g^{-i}}(h_t^0) > 0$ for some $g^i$). Whenever we write a conditional probability or conditional expectation, we implicitly assume that the condition has non-zero probability under the specified strategy profile. When only part of the strategy profile is specified in the superscript, we implicitly assume that the condition is admissible under the specified partial strategy profile.

\section{PROBLEM FORMULATION}\label{sec: problemformulation}
\subsection{System Model and Information Structure}
We consider a finite horizon dynamic game among finitely many teams each consisting of a finite number of agents, where agents have asymmetric information. Let $\mathcal{I}=\{1, \cdots, I\}$ denote the set of teams and $\mathcal{T}=\{1,\cdots, T\}$ denote the set of time indices. We use a tuple $(i, j)$ to indicate the $j$-th member of team $i$. For a team $i\in\mathcal{I}$, let $\mathcal{N}_i=\{(i, 1), \cdots, (i, N_i) \}$ denote team $i$'s members. Let $\mathcal{N} = \bigcup_{i\in\mathcal{I}}\mathcal{N}_i$ denote the set of all agents. At each time $t\in\mathcal{T}$, each agent $(i, j)$ selects an action $U_t^{i, j}\in\mathcal{U}_t^{i, j}$, where $\mathcal{U}_t^{i, j}$ denotes the action space of agent $(i, j)$ at time $t$. Each team is associated with a vector-valued dynamical system $\bX_t^{i} = (X_t^{i, j})_{(i, j)\in\mathcal{N}_i}$ which evolves according to
\begin{equation}\label{eq: system}
\bX_{t+1}^{i} = f_t^{i}(\bX_t^{i}, \mathbf{U}_t, W_t^{i, X}),\quad i\in\mathcal{I},
\end{equation}
where $\bU_t=(U_t^{k, j})_{(k, j)\in\mathcal{N}}$, and $(W_t^{i, X})_{i\in\mathcal{I}, t\in\mathcal{T}}$ is the noise in the dynamical system. We assume that $X_t^{i, j} \in\mathcal{X}_t^{i, j}$ for $(i, j)\in \mathcal{N}$. 

We assume that the actions of all agents are publicly observed. Further, at time $t$, after all the agents take actions, a public observation of team $i$'s state is generated according to
\begin{equation}\label{eq: obs}
Y_t^i = \ell_t^i(\bX_t^i, \mathbf{U}_t, W_t^{i, Y}),\quad i\in\mathcal{I},
\end{equation}
where $Y_t^i\in\mathcal{Y}_t^i$, and $(W_t^{i, Y})_{ i\in\mathcal{I}, t\in\mathcal{T}}$ are the observation noises. 

The order of events occuring between time steps $t$ and $t+1$ is shown in the figure below:
\begin{center}
\begin{tikzpicture}
\draw[dashed] (-0.75, 0) -- (0, 0);
\draw (0,0) -- (4.5,0);
\draw[dashed] (4.5, 0) -- (5.25, 0);

\foreach \x in {0,1.5,3,4.5}
\draw (\x cm,3pt) -- (\x cm,-3pt);

\draw (0,0) node[below=3pt] {$ t $} node[above=3pt] {$~~\bX_t^i $};
\draw (1.5,0) node[below=3pt] {$  $} node[above=3pt] {$~~U_t^{i, j} $};
\draw (3,0) node[below=3pt] {$  $} node[above=3pt] {$~~Y_t^{i} $};
\draw (4.5,0) node[below=3pt] {$ t+1 $} node[above=3pt] {$~~\bX_{t+1}^{i} $};
\end{tikzpicture}
\end{center}

We assume that the functions $(f_t^i)_{ i\in\mathcal{I}, t\in\mathcal{T} }, (\ell_t^i)_{ i\in\mathcal{I}, t\in\mathcal{T} }$ are common knowledge among all agents. We further assume that $(\bX_1^i)_{i\in\mathcal{I}}, (W_t^{i, X} )_{ i\in\mathcal{I}, t\in\mathcal{T} }$, and $(W_t^{i, Y} )_{ i\in\mathcal{I}, t\in\mathcal{T} }$ are mutually independent primitive random variables whose distributions are also common knowledge among all agents. As a result, the teams' dynamics $(\bX_t^i)_{t\in \mathcal{T}}, i\in\mathcal{I}$ are conditionally independent given the actions, and the public observations of different teams' systems are conditionally independent given the states and actions of all teams.

At each time $t$, the following information is available to all agents:
\begin{equation}
H_t^0 = (\bY_{1:t-1}, \bU_{1:t-1}),
\end{equation} 
where $\bY_t=(Y_t^i)_{i\in\mathcal{I}}, \bU_t=(U_t^{i, j})_{(i, j)\in\mathcal{N}}$.
We refer to $H_t^0$ as the common information among teams.

We assume that each agent $(i, j)$ observes her own state $X_t^{i, j}$. Further, agents in the same team share their states with each other with a time delay $d\geq 1$. Thus, at time $t$, all agents in team $i$ have access to $H_t^i$, given by
\begin{equation}
H_t^i = (\bY_{1:t-1}, \bU_{1:t-1}, \bX_{1:t-d}^i), \quad i\in\mathcal{I}.
\end{equation}
We call $H_t^i$ the common information within team $i$.

Finally, the information available to agent $(i, j)$ at time $t$, denoted by $H_t^{i, j}$, is
\begin{equation}
H_t^{i, j} = (\bY_{1:t-1}, \bU_{1:t-1}, \bX_{1:t-d}^i, X_{t-d+1:t}^{i, j}),\quad (i, j)\in\mathcal{N}.
\end{equation}

This model captures the hierarchy of information asymmetry among teams and team members. {It is an abstract representation of dynamic oligopoly games \cite{ouyang2015oligopoly,ouyang2016dynamic} where each member of the oligopoly is a team.}

\begin{remark}\label{remark: onepersonteam}
	Our model also captures the scenarios where a team has only one member. Such a team can be incorporated in our framework by adding a dummy agent to it and assuming a suitable internal communication delay $d$. If all teams are single-member teams, then $d$ can be arbitrarily chosen.
\end{remark}

To illustrate the key ideas of the paper without dealing with technical difficulties arising from continuum spaces, we assume that all the system random variables (i.e. all states, actions, and observations) take values in finite sets.
\begin{assump}
	$\mathcal{X}_t^{i, j}, \mathcal{Y}_t^i, \mathcal{U}_t^{i, j}$ are finite sets for all $(i, j)\in\mathcal{N}, t\in\mathcal{T}$.
\end{assump}

\subsection{Strategies and Reward Functions}\label{sec: rew}
For games among teams, there are three possible types of team strategies one could consider: (1) pure strategies, i.e. deterministic strategies; and (2) randomized strategies where team members independently randomize; (3) randomized strategies where team members jointly randomize.

A pure strategy profile of a team is a collection of functions $\mu^i=(\mu_t^{i, j})_{ (i, j)\in\mathcal{N}_i, t\in\mathcal{T}}$, where ${\mu}_t^{i, j}: \mathcal{H}_t^{i, j} \mapsto \mathcal{U}_t^{i, j}$. Define $\mathcal{M}_t^{i, j}$ as the space of functions from $\mathcal{H}_t^{i, j}$ to $\mathcal{U}_t^{i, j}$. Let $\mathcal{M}^i=\prod_{t\in\mathcal{T}}\prod_{(i, j)\in\mathcal{N}_i}\mathcal{M}_t^{i, j}$. Any randomized strategy of a team, either of type 2 or type 3, can be described through a mixed strategy $\sigma^i\in \Delta(\mathcal{M}^i)$. In particular, if team members independently randomize, the mixed strategy $\sigma^i$ being used to describe the strategy profile will be a product of measures on $\mathcal{M}^{i, j}=\prod_{t\in\mathcal{T}}\mathcal{M}_t^{i, j}$ for $(i, j)\in \mathcal{N}_i$.

Team $i$'s total reward under a pure strategy profile $\mu = (\mu_t^{i, j})_{ (i, j)\in\mathcal{N}, t\in\mathcal{T}}$ is
\begin{equation}
J^i({\mu}) = \E^\mu\left[\sum_{t\in\mathcal{T}} r_t^i(\bX_t, \bU_t)\right],
\end{equation}
where the functions $(r_t^i)_{ i\in\mathcal{I}, t\in\mathcal{T} }, r_t^i: \mathcal{X}_t\times \mathcal{U}_t \mapsto \mathbb{R}$, representing the instantaneous rewards, are common knowledge among all agents. Team $i$'s total reward under a mixed strategy profile $\sigma=(\sigma^{i})_{i\in\mathcal{I}}, \sigma^{i}\in \Delta(\mathcal{M}^i)$, is then an average of the total rewards under pure strategy profiles, i.e.
\begin{equation}
	J^i(\sigma) = \sum_{\mu\in \mathcal{M}} \left(\prod_{i\in\mathcal{I}} \sigma^i(\mu^i)\right) J^i(\mu).
\end{equation}

Note that while members of the same team may jointly randomize their strategies, the randomizations of different teams are independent of each other.

\begin{remark}\label{remark: initial}
	For convenience of notation and proofs, for $t\in \{-(d-1), \cdots, -1, 0\}$, we define $\mathcal{X}_t^{i, j} = \mathcal{U}_t^{i, j} = \mathcal{Y}_t^i = \{0\}$ and $r_t^i(\bX_t, \bU_t) = 0$ for all $i\in\mathcal{N}$ and $(i, j)\in \mathcal{N}$.
\end{remark}

\subsection{Solution Concept}\label{sec: solconcept}
In this work, a team refers to a group of agents that have asymmetric information and the same objective. Because of the shared objective, members of the same team can jointly decide on the strategy to use before the start of the game for the collective benefit of the team. Hence, we can assume that every member of the team knows the strategy of the others in the team. Therefore, when considering an equilibrium concept, we should consider team deviations rather than individual deviations, i.e. multiple members of the same team may decide to play a different strategy than the equilibrium strategy.
We consider randomized strategies where team members jointly randomize. Example \ref{ex: 1} of Section \ref{sec: example1} illustrates why such strategies must be considered when we study games among teams.


The above discussion motivates the definition of a Team Nash Equilibrium.\footnote{We first focus on Nash Equilibrium since it is the simplest and broadest solution concept for games. We will show (in Section \ref{sec: refinement}) that in some cases, our solution satisfy some notion of sequential rationality for games among teams.}
\begin{defn}[Team Nash Equilibrium]
	A mixed strategy profile $\sigma^*=(\sigma^{*i})_{i\in\mathcal{I}}, \sigma^{*i}\in\Delta(\mathcal{M}^i)$, is said to form a Team Nash Equilibrium (TNE) if
	\begin{equation}
		J^i(\sigma^{*i}, \sigma^{*-i}) \geq J^i(\tilde{\sigma}^{i}, \sigma^{*-i})
	\end{equation}
	for any mixed strategy profile $\tilde{\sigma}^i \in\Delta(\mathcal{M}^i)$ for all $i\in\mathcal{I}$.
\end{defn}

To implement an arbitrary mixed strategy, a team can choose their strategy profile jointly before the game starts. For example, the team can jointly choose a random strategy profile out of an ensemble of profiles according to some distribution. Alternatively, they can agree on a protocol to utilize a commonly observed randomness source to randomize their strategies in a correlated manner in real time. 

The primary objective of this paper is to characterize a subclass of Team NE and to devise a backward inductive sequential computation procedure to determine these Team NE.

\subsubsection{A Motivating Example}\label{sec: example1}
The following example illustrates the importance of considering jointly randomized mixed strategies when we study games among teams. Similar to the role mixed strategies play in games among individual players, the space of jointly randomized mixed strategies contains the minimum richness of strategies that ensures an equilibrium exists in games among teams. In particular, if we restrict the teams to use independently randomized strategies, i.e. type 1 and type 2 strategies described in Section \ref{sec: rew}, then an equilibrium may not exist. This example is similar to the examples in \cite{farina2018ex,zhang2020computing,anantharam2007common} in spirit, despite the fact that in our example the players in the same team have asymmetric information.

\begin{example}[Guessing Game]\label{ex: 1}
	Consider a two-stage game (i.e. $\mathcal{T}=\{1, 2\}$) of two teams $\mathcal{I} = \{A, B\}$, each consisting of two players. The set of all agents is given by $\mathcal{N} = \{(A, 1), (A, 2), (B,1), (B, 2) \}$. Let $\bX_t^A=(X_t^{A,1}, X_t^{A,2})\in \{-1, 1\}^2$ and Team B does not have a state, i.e. $\bX_t^B = \varnothing$. Assume $\mathcal{U}_t^{i, j} = \{-1, 1\}$ for $t=1, i=A$ or $t=2, i=B$ and $\mathcal{U}_t^{i, j}=\varnothing$ otherwise, i.e. Team A moves at time $1$, and Team B moves at time $2$. At time $1$, $X_1^{A, 1}$ and $X_1^{A, 2}$ are independently uniformly distributed on $\{-1, 1\}$. Team A's system is assumed to be static, i.e. $\bX_{2}^{A} = \bX_{1}^{A}$.
	
	The rewards of Team A are given by
	\begin{align*}
	r_1^A(\bX_1, \bU_1) &= \bm{1}_{\{ X_{1}^{A, 1} U_1^{A, 1} X_{1}^{A, 2}  U_1^{A, 2}=-1\} },\\
	r_2^A(\bX_2, \bU_2) &= - \bm{1}_{\{ X_2^{A, 1} = U_2^{B, 1}\}} - \bm{1}_{\{ X_2^{A, 2} = U_2^{B, 2}\}},
	\end{align*}
	and the rewards of Team B are given by
	\begin{align*}
	r_1^B(\bX_1, \bU_1) &= 0,\\
	r_2^B(\bX_2, \bU_2) &= \bm{1}_{\{ X_2^{A, 1} = U_2^{B, 1}\}} + \bm{1}_{\{ X_2^{A, 2} = U_2^{B, 2}\}}.
	\end{align*}
	
	Assume that there are no additional common observations other than past actions, i.e. $\bY_t = \varnothing$. We set the delay $d=2$, i.e. agent (A, 1) does not know $X_t^{A,2}$ throughout the game and a similar property is true for agent (A, 2).
	In this game, the task of Team A is to choose actions according to their states at $t=1$ in order to earn a positive reward, while not revealing too much information through their actions to Team B. The task of Team B is to guess Team A's state.
	
	It can be verified (see Appendix \ref{app: ex1} for a detailed derivation) that if we restrict both teams to use independently randomized strategies (including deterministic strategies), then there exists no equilibria. However, there does exist an equilibrium where Team A randomizes in a correlated manner, specifically, the following strategy profile $\sigma^*$: At $t=1$, Team A plays $\gamma^A=(\gamma^{A, 1}, \gamma^{A, 2})$ with probability 1/2, and $\tilde{\gamma}^A=(\tilde{\gamma}^{A, 1}, \tilde{\gamma}^{A, 2})$ with probability 1/2, where 
	\begin{align*}
	\gamma^{A, 1}(x_1^{A, 1}) &= x_1^{A, 1}, \quad \gamma^{A, 2}(x_1^{A, 2}) = - x_1^{A, 2},\\
	\tilde{\gamma}^{A, 1}(x_1^{A, 1}) &= - x_1^{A, 1}, \quad \tilde{\gamma}^{A, 2}(x_1^{A, 2}) = x_1^{A, 2}
	\end{align*}
	and at $t=2$, the two members of Team B choose independent and uniformly distributed actions on $\{-1, 1\}$, independent of their action and observation history. In $\sigma^*$, each agent $(A, j)$ chooses a uniform random action irrespective of their states. It is important to have $(A, 1)$ and $(A, 2)$ choose these actions in a correlated way to ensure that they obtain the full instantaneous reward while not revealing any information.
\end{example}

\section{GAME OF COORDINATORS}\label{sec: goc}
In this section we present a game among individual players that is equivalent to the game among teams formulated in Section \ref{sec: problemformulation}.

We view the agents of a team as being coordinated by a fictitious \emph{coordinator} as in \cite{nayyar2013decentralized}: At each time $t$, team $i$'s coordinator instructs the members of team $i$ how to use their private information $H_t^{i, j}\backslash H_t^{i}$, based on $H_t^i$ and her past instructions up to time $t-1$ (see \cite{nayyar2013decentralized}).
Using this vantage point, we can view the games among teams as games among coordinators, where the coordinators' actions are the instructions, or \emph{prescriptions}, provided to individual agents. Notice that unlike agents' actions, coordinators' actions (prescriptions) cannot be publicly observed. To proceed further we formally define coordinators' actions and strategies, and prove Lemma \ref{lem: pure2coord}.

\begin{defn}[Prescription]
	Coordinator $i$'s \emph{prescriptions} at time $t$ is a collection of functions $\gamma_{t}^i=(\gamma_t^{i, j})_{(i, j)\in\mathcal{N}_i}$ where $\gamma_t^{i, j}: \mathcal{X}_{t-d+1:t}^{i,j} \mapsto \mathcal{U}_t^{i, j}$.
\end{defn}
Define $\prescription_t^{i, j}$ to be the space of functions that maps $\mathcal{X}_{t-d+1:t}^{i,j}$ to $\mathcal{U}_t^{i, j}$. Define $\prescription_t^i=\prod_{(i, j)\in\mathcal{N}_i}\prescription_t^{i, j}$.

\begin{defn}[Pure Coordination Strategy]
	Define the augmented team-common information of team $i$ to be $\overline{H}_t^i=(H_t^i, \bm{\Gamma}_{1:t-1}^i)$, where $\bm{\Gamma}_{1:t-1}^i$ are past prescriptions assigned by the coordinator of team $i$.
	A pure coordination strategy of team $i$ is a collection of mappings $\nu^i=(\nu_t^i)_{t\in\mathcal{T}}$ where $\nu_{t}^i: \overline{\mathcal{H}}_t^i \mapsto \prescription_t^i$.
\end{defn}

The next lemma establishes the equivalence between pure coordination strategies and pure strategies of a team.
\begin{lemma}\label{lem: pure2coord}
	For every pure coordination strategy profile $\nu$, there exists a pure strategy profile $\mu$ that yields the same payoffs for all teams and vice versa.
\end{lemma}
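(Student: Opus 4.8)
The plan is to establish the equivalence in both directions by an explicit construction, and then to argue that the induced probability distributions over the primitive-plus-observation-plus-action trajectories coincide, which immediately gives equality of all teams' payoffs. The key observation is that at time $t$ an agent $(i,j)$'s information $H_t^{i,j}$ splits into the team-common part $H_t^i$ (known to the coordinator) and the private part $X_{t-d+1:t}^{i,j}$; a prescription $\gamma_t^{i,j}\in\prescription_t^{i,j}$ is exactly a map telling the agent how to turn that private part into an action, given everything the coordinator knows.

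First, from coordination strategies to team strategies: given a pure coordination strategy profile $\nu=(\nu^i)_{i\in\mathcal{I}}$, I would define a pure team strategy profile $\mu$ as follows. Since $\nu^i$ is deterministic and $\bm\Gamma_{1:t-1}^i$ is itself a deterministic function of $H_t^i$ through the recursion $\Gamma_s^i=\nu_s^i(H_s^i,\bm\Gamma_{1:s-1}^i)$, one shows by induction on $t$ that $\overline H_t^i$ is a deterministic function $\phi_t^i$ of $H_t^i$. Then set $\mu_t^{i,j}(h_t^{i,j}):=\bigl(\nu_t^i(\phi_t^i(h_t^i))\bigr)^{j}\!\bigl(x_{t-d+1:t}^{i,j}\bigr)$, i.e. apply the coordinator's prescription, computed from the team-common part of $h_t^{i,j}$, to the private part. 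Conversely, from team strategies to coordination strategies: given $\mu$, define $\nu_t^i$ on $\overline{\mathcal H}_t^i$ by ignoring the $\bm\Gamma_{1:t-1}^i$ component and setting $\bigl(\nu_t^i(h_t^i,\gamma_{1:t-1}^i)\bigr)^{j}:=\bigl(x_{t-d+1:t}^{i,j}\mapsto \mu_t^{i,j}(h_t^i,x_{t-d+1:t}^{i,j})\bigr)$, which is a well-defined element of $\prescription_t^{i,j}$ because $\mu_t^{i,j}$ is a function on $\mathcal H_t^{i,j}=\mathcal H_t^i\times\mathcal X_{t-d+1:t}^{i,j}$.

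Next I would verify that under either correspondence the two systems generate the same law on $(\bX_{1:T},\bY_{1:T},\bU_{1:T})$. I would do this by induction on $t$, tracking the joint distribution of $(\bX_{1:t},\bY_{1:t-1},\bU_{1:t-1})$ (equivalently of the tuple of all agents' information). The base case is fixed by the common prior on $\bX_1$. For the inductive step, the state transition \eqref{eq: system} and observation \eqref{eq: obs} kernels and the primitive-noise distributions are identical in both descriptions, so the only thing to check is that the conditional law of $\bU_t$ given the history is the same; but in both descriptions agent $(i,j)$'s realized action equals $\mu_t^{i,j}(H_t^{i,j})$ with $\mu$ as constructed above, so $\bU_t$ is the same deterministic function of the common history in both cases. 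Hence the full trajectory laws agree, and therefore $J^i(\mu)=\E^\mu[\sum_t r_t^i(\bX_t,\bU_t)]$ equals the corresponding expectation in the coordinated system, for every $i$.

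The main obstacle — really the only subtlety — is the bookkeeping showing that $\overline H_t^i$ carries no information beyond $H_t^i$ under a \emph{pure} coordination strategy, so that a prescription chosen as a function of $\overline H_t^i$ can be re-expressed as a function of $H_t^i$ (needed to produce a genuine element of $\mathcal M_t^{i,j}$, whose domain is $\mathcal H_t^{i,j}$, not $\overline{\mathcal H}_t^{i,j}$). This is where determinism is essential: with randomized coordination strategies the past prescriptions would be a nontrivial additional random variable and the reduction would fail, which is precisely why the lemma is stated for pure strategies and why mixed coordination strategies are treated separately later. I would handle this with the explicit induction defining $\phi_t^i$ above. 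Everything else is routine substitution into the definitions of $H_t^{i,j}$, $\prescription_t^{i,j}$, and $J^i$.
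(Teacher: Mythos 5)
Your proposal is correct and follows essentially the same route as the paper: the same explicit constructions in both directions (partial application of $\mu_t^{i,j}$ to the team-common part to get prescriptions, and the recursive reconstruction of $\bm{\Gamma}_{1:t-1}^i$ from $H_t^i$ under a pure $\nu$ to go back), followed by an induction on $t$ showing the two systems generate the same trajectories. The only cosmetic difference is that the paper argues via a pathwise coupling of the primitive random variables while you argue at the level of the induced laws; both yield equality of payoffs.
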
 

\begin{proof}
	See Appendix \ref{app: pure2coord}.
\end{proof}

Based on the above lemma, we can immediately conclude that a mixed strategy profile is equivalent to a mixed coordination strategy (i.e. a distribution on the space of pure coordination strategy profiles). As a result, Team Nash Equilibria, as defined in Section \ref{sec: solconcept}, will be equivalent to Nash Equilibria of coordinators, where the coordinators can use mixed coordination strategies. 

Therefore, we can transform the games among teams to games among individual players, where each player is a (team) coordinator whose actions are prescriptions. Following the standard approach in game theory, we now consider behavioral strategies of the individuals (i.e. the coordinators) in this lifted game since, unlike mixed strategies, behavioral strategies allow for independent randomizations across time and therefore better facilitate a sequential decomposition of the dynamic game. 

\begin{defn}[Behavioral Coordination Strategy]
	A behavioral coordination strategy of team $i$ is a collection of mappings $g^i=(g_t^i)_{t\in\mathcal{T}}$ where $g_{t}^i: \overline{\mathcal{H}}_t^i \mapsto \Delta(\prescription_t^i)$.
\end{defn}

Given that the coordinators have perfect recall, that is, at any time $t$, the coordinator remembers all her observations up to time $t$, and all her ``actions'' (prescriptions) up to time $t-1$, we can conclude from Kuhn's theorem \cite{kuhn2016extensive} that behavioral coordination strategies are equivalent to mixed coordination strategies in the following sense.

\begin{lemma}\label{lem: bcs2ms}
	For any behavioral coordination strategy profile, there exists a mixed coordination strategy profile with the same expected payoffs and vice versa.
\end{lemma}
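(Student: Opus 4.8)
The plan is to invoke Kuhn's theorem, which the text has already flagged as the relevant tool, and the main work is to set up the coordinators' game as a finite extensive-form game with perfect recall so that Kuhn's theorem applies verbatim. First I would describe the extensive-form game $\Gamma_{\text{coord}}$ whose players are the $I$ coordinators: nature moves first, drawing the primitive random variables $(\bX_1^i)_{i}, (W_t^{i,X})_{i,t}, (W_t^{i,Y})_{i,t}$ from their (finite-support, commonly known) distributions; then play proceeds in stages $t=1,\dots,T$, where at stage $t$ coordinator $i$'s information set is indexed by the realization of $\overline{H}_t^i=(H_t^i,\bm\Gamma_{1:t-1}^i)$ and her choice at that information set is an element of the finite set $\prescription_t^i$. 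The maps $f_t^i,\ell_t^i$ together with the prescriptions then determine $\bU_t$, $\bY_t$, $\bX_{t+1}^i$, and hence the reward streams $r_t^i(\bX_t,\bU_t)$; the payoff to coordinator $i$ at a terminal node is $\sum_{t\in\mathcal{T}} r_t^i(\bX_t,\bU_t)$. Because all the underlying spaces are finite (Assumption 1, plus Remark \ref{remark: initial} for the padding at nonpositive times) and the horizon $T$ is finite, the tree $\Gamma_{\text{coord}}$ is finite.

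Next I would verify the two hypotheses needed for Kuhn's theorem. The first is perfect recall: I would check that $\overline{H}_t^i$ is a strictly increasing (nested) sequence in $t$ for each coordinator $i$ — indeed $H_t^i\subseteq H_{t+1}^i$ since common information only accumulates, and $\bm\Gamma_{1:t-1}^i$ is a prefix of $\bm\Gamma_{1:t}^i$ — so coordinator $i$ at stage $t+1$ knows everything she knew at stage $t$ and the prescription she chose at stage $t$; no other coordinator's moves or nature's moves are forgotten because they were never observed in the first place, and the definition of perfect recall only requires a player to remember her own past information and actions. The second point, which is really just bookkeeping, is that a behavioral coordination strategy $g^i=(g_t^i)_t$ with $g_t^i:\overline{\mathcal H}_t^i\to\Delta(\prescription_t^i)$ is exactly a behavioral strategy in $\Gamma_{\text{coord}}$ (an independent randomization at each information set), and a mixed coordination strategy $\sigma^i\in\Delta(\mathcal{N}^i)$ — where $\mathcal{N}^i=\prod_{t\in\mathcal T}\prod_{\bar h_t^i\in\overline{\mathcal H}_t^i}\prescription_t^i$ is the (finite) pure-coordination-strategy set of coordinator $i$ — is exactly a mixed strategy in $\Gamma_{\text{coord}}$.

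Given this setup, Kuhn's theorem \cite{kuhn2016extensive} states that for a finite extensive-form game with perfect recall, every mixed strategy of a player has a realization-equivalent behavioral strategy and vice versa, where realization-equivalent means inducing the same probability distribution over terminal nodes for any fixed strategy profile of the other players. I would apply this coordinator by coordinator: given a behavioral profile $(g^i)_i$, for each $i$ let $\sigma^i$ be a mixed coordination strategy realization-equivalent to $g^i$; since realization equivalence for player $i$ holds against every fixed profile of the opponents, in particular against $(g^{-i})$ (themselves behavioral, hence inducing some fixed distribution over the opponents' pure coordination strategies), the joint distribution over terminal nodes — and therefore every team's expected payoff $J^k$ — is unchanged when we swap $g^i$ for $\sigma^i$; iterating over $i=1,\dots,I$ converts the whole behavioral profile to a mixed profile with identical expected payoffs for all teams. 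The reverse direction is symmetric. I would then note that by Lemma \ref{lem: pure2coord} and the remark following it, mixed coordination strategies correspond to mixed team strategies, closing the loop with the original game.

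The only genuine obstacle, and it is a mild one, is making the ``same expected payoffs'' claim fully rigorous in the multi-player setting: Kuhn's theorem is a single-player statement (it fixes everyone else's strategy), so one must be slightly careful that replacing $g^i$ by $\sigma^i$ while the \emph{other} coordinators are still using behavioral strategies is legitimate — this is fine because realization equivalence is quantified over \emph{all} opponent profiles, and a behavioral opponent profile simply induces a particular distribution over opponents' terminal-node-relevant randomness, against which equivalence still holds — but it deserves an explicit sentence. A secondary, purely expository point is that one should confirm the coordinators' information sets are well-defined as a partition of their decision nodes, i.e. that two histories with the same $\overline{H}_t^i$ are indeed indistinguishable to coordinator $i$ and offer the same action set $\prescription_t^i$; this is immediate from the definitions but is what legitimizes calling $\Gamma_{\text{coord}}$ an extensive-form game in the first place.
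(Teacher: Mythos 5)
Your proposal is correct and follows essentially the same route as the paper, which gives no separate proof of this lemma and simply invokes Kuhn's theorem on the grounds that each coordinator has perfect recall of her observations and past prescriptions. Your additional care in setting up the finite extensive form, verifying perfect recall, and handling the player-by-player replacement via realization equivalence fills in details the paper leaves implicit but does not change the argument.
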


Based on this equivalence we can first define Nash Equilibria for the coordinator's game and then restate our objective from Section \ref{sec: solconcept}.
\begin{defn}[Coordinators' Nash Equilibrium]\label{def: cne}
	For any behavioral coordination strategy profile $g$, define
	\begin{equation}
		J^i(g) = \E^{g}\left[ \sum_{t\in\mathcal{T}}r_t^i(\bX_t, \bU_t)\right].
	\end{equation}
	
	A behavioral coordination strategy profile $g^*=$$(g_t^{*i})_{i\in\mathcal{I}, t\in\mathcal{T}}$ where $g_t^{*i}: \overline{\mathcal{H}}_t^i\mapsto \Delta(\prescription_t^i)$ is said to form a Coordinator's Nash Equilibrium (CNE) if
	\begin{equation}
	J^i(g^{*i}, g^{*-i}) \geq J^i(\tilde{g}^{i}, g^{*-i}) 
	\end{equation}
	for any behavioral coordination strategy profile $\tilde{g}^i: \overline{\mathcal{H}}_t^i\mapsto \Delta(\prescription_t^i)$ for any team $i\in\mathcal{I}$, i.e. the behavioral strategies of coordinators form a Bayes-Nash Equilibrium in the game of coordinators. 
\end{defn}

Given that we have lifted the game among teams to a game among coordinators, we adjust the terminology for the information structure accordingly. From now on, we will refer to the common information among all teams (i.e. $H_t^0$) as simply the \emph{common information}, while the information that members of team $i$ share but is not known to other teams (i.e. $\overline{H}_t^i\backslash H_t^0=(\bX_{1:t-d}^i, \bm{\Gamma}_{1:t-1}^i)$) will be referred to as the \emph{private information} of coordinator $i$. The information that is private to an agent (i.e. $X_{t-d+1:t}^{i, j}$) will be referred to as \emph{hidden information}, since none of the coordinators observe this information.

\begin{remark}
The games among coordinators we obtain has a few differences from the game model in \cite{ouyang2016dynamic}:
\begin{itemize}
	\item Actions in \cite{ouyang2016dynamic} are publicly observable. As mentioned before, in our game among coordinators, the ``actions'' (prescriptions) of the coordinators are private information.
	
	\item The local state $X_t^i$ in \cite{ouyang2016dynamic} is perfectly observable by player $i$ without delay. In our game among coordinators, at time $t$, a coordinator can only observe her local state up to time $t-d$.
	
	\item The transitions of local states in \cite{ouyang2016dynamic} are conditionally independent given the actions, i.e. $\Pr(x_{t+1}| x_t, u_t) = \prod_{i} \Pr(x_{t+1}^i| x_t^i, u_t)$. In our game among coordinators, transition of local states are not independent given the prescriptions.
	
	\item The public observation process of local states in \cite{ouyang2016dynamic} is conditionally independent given the actions, i.e. $\Pr(y_{t}| x_t, u_t) = \prod_{i} \Pr(y_{t}^i| x_t^i, u_t)$. In our game among coordinators,  public observations of local states are not independent given the prescriptions and local states.
\end{itemize}

Due to the above differences, we cannot directly apply the results of \cite{ouyang2016dynamic} to the game of coordinators.
\end{remark}

\subsection{An Illustrative Example}
The following example illustrates how to visualize games among teams from the coordinators' viewpoint. 

\begin{example}\label{ex: 2}
	Consider a variant of the Guessing Game in Example \ref{ex: 1} with the same system model and information structure but different action sets and reward functions. In the new game, Team A moves at both $t=1$ and $t=2$, with $\mathcal{U}_t^{A, j} = \{-1, 1\}$ for $t=1,2$ and $j=1,2$. Team B moves only at time $t=2$ as in the original game. The new reward functions are given by
	\begin{align*}
	r_1^A(\bX_1, \bU_1) &= 0,\\
	r_2^A(\bX_2, \bU_2) &= \bm{1}_{\{X_2^{A, 2} = U_2^{A, 1}, X_2^{A, 1} = U_2^{A, 2}\}} + \bm{1}_{\{ \bX_2^{A} \neq \bU_2^{B}\}},\\
	r_1^B(\bX_1, \bU_1) &= 0,\\
	r_2^B(\bX_2, \bU_2) &= \bm{1}_{\{ \bX_2^{A} = \bU_2^{B}\}}.
	\end{align*}
	In this example, Team A's task is to guess its own state after a round of publicly observable communication while not leaking information to Team B. 
	
	A Team Nash Equilibrium $(\sigma^{*A}, \sigma^{*B})$ of this game is as follows: Team A chooses one of the four pure strategy profiles listed below with equal probability:
	\begin{align*}
		\bullet~&\mu_1^{A, 1}(x_1^{A, 1}) = - x_1^{A, 1}, \mu_1^{A, 2}(x_1^{A, 2}) = x_1^{A, 2},\\
		&\mu_2^{A, 1}(\mathbf{u}_1, x_{1:2}^{A, 1}) = u_1^{A, 2}, \mu_2^{A, 2}(\mathbf{u}_1, x_{1:2}^{A, 2}) = -u_1^{A, 1};\\
		\bullet~&\mu_1^{A, 1}(x_1^{A, 1}) = -  x_1^{A, 1}, \mu_1^{A, 2}(x_1^{A, 2}) = - x_1^{A, 2},\\ 
		&\mu_2^{A, 1}(\mathbf{u}_1, x_{1:2}^{A, 1}) = -u_1^{A, 2}, \mu_2^{A, 2}(\mathbf{u}_1, x_{1:2}^{A, 2}) =  -u_1^{A, 1};\\
		\bullet~&\mu_1^{A, 1}(x_1^{A, 1}) = x_1^{A, 1}, \mu_1^{A, 2}(x_1^{A, 2}) = x_1^{A, 2},\\
		&\mu_2^{A, 1}(\mathbf{u}_1, x_{1:2}^{A, 1}) = u_1^{A, 2} , \mu_2^{A, 2}(\mathbf{u}_1, x_{1:2}^{A, 2}) = u_1^{A, 1};\\
		\bullet~&\mu_1^{A, 1}(x_1^{A, 1}) = x_1^{A, 1}, \mu_1^{A, 2}(x_1^{A, 2}) = -x_1^{A, 2},\\ 
		&\mu_2^{A, 1}(\mathbf{u}_1, x_{1:2}^{A, 1}) =  -u_1^{A, 2}, \mu_2^{A, 2}(\mathbf{u}_1, x_{1:2}^{A, 2}) = u_1^{A, 1};
	\end{align*}
	while Team B choose $\bU_2^{B}$ uniformly at random independent of $\bU_1$. In words, from Team B's point of view, Team A chooses $\bU_1^{A}$ to be a uniform random vector independent of $\bX_1^A$. However the randomization is done in a coordinated manner: Before the game starts, both members of team A randomly draw a card from two cards, where one card says ``lie'' and the other says ``tell the truth.'' Both players then tell each other what card they have drawn before the game starts. At time $t=1$, both players in Team A play the strategy indicated by their cards. At time $t=2$, Team A can then perfectly recover $\bX_1^A$ from $\bU_1^A$ and the knowledge about the strategy being used at $t=1$. 
	
  	Now we describe Team A's equilibrium strategy by the equivalent coordinator A's behavioral strategy. Use $\mathbf{ng}$ to denote the prescription that maps $-1$ to $1$ and $1$ to $-1$. Use $\mathbf{id}$ to denote the identity map prescription, i.e. the prescription that maps $-1$ to $-1$ and $1$ to $1$. Use $\mathbf{cp}_{b}$ to denote the constant prescription that always instruct individuals to play $b\in \{-1, 1\}$.
  	The mixed strategy profile $\sigma^{*A}$ is equivalent to the following behavioral coordination strategy: At time $t=1$, $g_1^A(\varnothing)\in \Delta(\prescription_1^{A, 1} \times \prescription_1^{A, 2} )$ satisfies 
  	\begin{align*}
  		&g_1^A(\varnothing)(\gamma_1^{A, 1}, \gamma_1^{A, 2}) = \frac{1}{4}\qquad \forall \gamma_1^{A, 1}, \gamma_1^{A, 2} \in \{\mathbf{ng}, \mathbf{id} \}.
  	\end{align*}
  	At time $t=2$, $g_2^A: \mathcal{U}_1^{A, 1}\times \mathcal{U}_1^{A, 2}\times \prescription_1^{A, 1}\times \prescription_1^{A, 2} \mapsto \Delta(\prescription_2^{A, 1} \times \prescription_2^{A, 2})$ is a deterministic strategy that satisfies
  	\begin{align*}
  		&g_2^A(u^1, u^2, \mathbf{ng}, \mathbf{id} ) = \textsc{dm}(\mathbf{cp}_{u^2}, \mathbf{cp}_{-u^1}),\\
  		&g_2^A(u^1, u^2, \mathbf{ng}, \mathbf{ng} ) = \textsc{dm}(\mathbf{cp}_{-u^2}, \mathbf{cp}_{-u^1}),\\
  		&g_2^A(u^1, u^2, \mathbf{id}, \mathbf{id} ) = \textsc{dm}(\mathbf{cp}_{u^2}, \mathbf{cp}_{u^1}),\\
  		&g_2^A(u^1, u^2, \mathbf{id}, \mathbf{ng} ) = \textsc{dm}(\mathbf{cp}_{-u^2}, \mathbf{cp}_{u^1}),
  	\end{align*}
  	where $\textsc{dm}: \prescription_2^{A, 1}\times \prescription_2^{A, 2} \mapsto \Delta(\prescription_2^{A, 1}\times \prescription_2^{A, 2})$ represents the delta measure. In words, the coordinator of Team A randomly chooses one of all four possible prescription profiles at time $t=1$. At time $t=2$, based on the observed action and the prescriptions chosen before, the coordinator of Team A directly assign actions to agents to instruct them to recover the state from the actions at $t=1$.
  	Note that the behavioral coordination strategy at $t=2$ depends explicitly on the past prescription $\bm{\Gamma}_1^A$ in addition to the realization of past actions. This is because the coordinator needs to remember not only the agents' actions, but also the rationale behind those actions in order to interpret the signals sent through the actions.
\end{example}

\section{COMPRESSION OF PRIVATE INFORMATION}\label{sec: spi}
In this section, we identify a subset of a coordinator's private information that is sufficient for decision-making for the game of coordinators formulated in Section \ref{sec: goc}. We refer to this subset of private information as the Sufficient Private Information (SPI) for this coordinator. We restrict attention to Sufficient Private Information Based (SPIB) strategies, where coordinators choose prescriptions based on their sufficient private information along with the common information. As a result, the coordinators do not need full recall to play SPIB strategies. We show that there always exist a Coordinator's Nash Equilibrium where coordinators play SPIB strategies. As a result, the restriction to SPIB strategies does not hurt the existence of equilibria.

We proceed as follows. We first present a structural result that plays an important role in the subsequent analysis. We then introduce our results in two steps in separate sections based on the value of $d$, the delay in information sharing within the same team. We treat the cases $d=1$ and $d>1$ separately. This is since when $d=1$ the equilibrium strategies we obtain are simpler than those under $d>1$. For $d>1$, we introduce the notion of Partially Realized Prescriptions (PRP) and use them to construct a subset of private information that is sufficient for decision-making. We then define the notion of Sufficient Private Information (SPI) and Sufficient Private Information Based (SPIB) strategies to unify the results for $d=1$ and $d>1$. Finally, we show that CNEs where coordinators play SPIB strategies always exist.

\subsection{A Preliminary Result}
We show that the states and prescriptions of different coordinators are conditionally independent given the common information.

\begin{lemma}[Conditional Independence]\label{lem: condindep}
	Under any behavioral coordination strategy profile $g$ and for each time $t\in\mathcal{T}$, $(\bX_{1:t}^k, \bm{\Gamma}_{1:t}^k)_{k\in\mathcal{I}}$ are conditionally independent given the common information $H_t^0$. Furthermore, the conditional distribution of $(\bX_{1:t}^k, \bm{\Gamma}_{1:t}^k)$ depends on $g$ only through $g^k$.
\end{lemma}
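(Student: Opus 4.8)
The plan is to prove both assertions together by induction on $t$, taking as the induction hypothesis exactly the statement of the lemma at time $t$: for every realization $h_t^0$ admissible under $g$,
\[
\Pr^g\!\big((\bx_{1:t}^k,\gamma_{1:t}^k)_{k\in\mathcal{I}}\mid h_t^0\big)=\prod_{k\in\mathcal{I}}\Pr^{g^k}\!\big(\bx_{1:t}^k,\gamma_{1:t}^k\mid h_t^0\big),
\]
with each right-hand factor, for fixed model primitives, a function of $g$ only through $g^k$. Write $\Xi_t^k:=(\bX_{1:t}^k,\bm{\Gamma}_{1:t}^k)$. The base case $t=1$ (equivalently $t=0$ with the degenerate conventions of Remark \ref{remark: initial}) is immediate: $H_1^0$ is degenerate, $\bX_1^1,\dots,\bX_1^I$ are mutually independent primitives, and each $\bm{\Gamma}_1^k$ is drawn from $g_1^k(\varnothing)$ using randomness that, by the definition of a behavioral coordination strategy profile, is independent across teams; hence $\Xi_1^1,\dots,\Xi_1^I$ are mutually independent and the law of $\Xi_1^k$ is determined by $g_1^k$ alone.

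For the inductive step I would pass from the information $\big((\Xi_t^k)_k,h_t^0\big)$ to $\big((\Xi_{t+1}^k)_k,h_{t+1}^0\big)$ by appending, in causal order, the new variables $\bU_t$, $\bY_t$, $(\bX_{t+1}^k)_k$, and $(\bm{\Gamma}_{t+1}^k)_k$; note that $h_{t+1}^0=(h_t^0,\bY_t,\bU_t)$ and $\Xi_{t+1}^k=(\Xi_t^k,\bX_{t+1}^k,\bm{\Gamma}_{t+1}^k)$, so nothing is double-counted. Conditioning on $\big((\Xi_t^k)_k,h_t^0\big)$ and then successively on the appended variables, each incremental conditional law factorizes across teams: (i) $U_t^{k,j}=\Gamma_t^{k,j}(X_{t-d+1:t}^{k,j})$ is a deterministic function of quantities measurable with respect to $\Xi_t^k$, so the $\bU_t$-factor is a product of $\{0,1\}$-valued kernels, the $k$-th depending only on $\Xi_t^k$; (ii) since $Y_t^k=\ell_t^k(\bX_t^k,\bU_t,W_t^{k,Y})$ with $(W_t^{k,Y})_k$ mutually independent and independent of every functional of the primitives carrying time index at most $t-1$, and since $\big((\Xi_t^k)_k,h_t^0,\bU_t\big)$ is such a functional, the $\bY_t$-factor equals $\prod_k\Pr(y_t^k\mid x_t^k,u_t)$, a model quantity; (iii) by the same reasoning applied to $\bX_{t+1}^k=f_t^k(\bX_t^k,\bU_t,W_t^{k,X})$, the $(\bX_{t+1}^k)_k$-factor equals $\prod_k\Pr(x_{t+1}^k\mid x_t^k,u_t)$; and (iv) the argument $\overline{H}_{t+1}^k=(h_{t+1}^0,\bX_{1:t+1-d}^k,\bm{\Gamma}_{1:t}^k)$ of $g_{t+1}^k$ is measurable with respect to $(h_{t+1}^0,\Xi_t^k)$ because $d\ge 1$, and the coordinators' randomizations at time $t+1$ are independent across teams, so the $(\bm{\Gamma}_{t+1}^k)_k$-factor is $\prod_k g_{t+1}^k(\gamma_{t+1}^k\mid\overline{h}_{t+1}^k)$, the $k$-th factor depending on $g$ only through $g_{t+1}^k$. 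Multiplying these four factors by the inductive hypothesis for $\Pr^g\big((\Xi_t^k)_k\mid h_t^0\big)$ gives $\Pr^g\big((\Xi_{t+1}^k)_k,h_{t+1}^0\big)=\Pr^g(h_t^0)\prod_k\phi^k(\Xi_{t+1}^k,h_{t+1}^0)$, where each $\phi^k$ is a function of $(\Xi_{t+1}^k,h_{t+1}^0)$ alone---here $h_{t+1}^0$ absorbs the cross-team dependence on $u_t$ in the $\bY$- and $\bX$-kernels---and depends on $g$ only through $g^k$. Dividing by $\Pr^g(h_{t+1}^0)=\Pr^g(h_t^0)\prod_k\big(\sum_{\xi}\phi^k(\xi,h_{t+1}^0)\big)$ and marginalizing yields both $\Pr^g(\Xi_{t+1}^k\mid h_{t+1}^0)=\phi^k/\sum_{\xi}\phi^k$ and the claimed product form, closing the induction.

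I expect the main obstacle to be the bookkeeping inside steps (ii)--(iv): checking, for each appended variable, both that its conditional law is a product over teams and that the $k$-th factor is measurable with respect to $(\Xi_{t+1}^k,h_{t+1}^0)$ and depends on $g$ only through $g^k$. This rests on two structural facts that I would isolate as sub-observations beforehand: (a) the primitives $\{\bX_1^k\}$, $\{W_t^{k,X}\}$, $\{W_t^{k,Y}\}$ are mutually independent over teams and time, so any ``future'' noise is independent of every functional of the earlier primitives (and $g$ being deterministic does not interfere with this); and (b) in a behavioral coordination strategy profile the coordinators' internal randomizations are mutually independent across teams. The delay hypothesis $d\ge 1$ enters exactly once, ensuring $\bX_{1:t+1-d}^k\subseteq\bX_{1:t}^k$ so the argument of $g_{t+1}^k$ carries no information beyond $(h_{t+1}^0,\Xi_t^k)$. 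Finally, following the paper's admissibility convention, when $h_{t+1}^0$ is admissible under $g$ the normalizing constants $\Pr^g(h_{t+1}^0)$ and $\sum_{\xi}\phi^k(\xi,h_{t+1}^0)$ are strictly positive, so every conditional probability above is well defined.
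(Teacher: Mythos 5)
Your proposal is correct and follows essentially the same route as the paper's proof: induction on $t$, team-wise factorization of the conditional laws of $\bU_t$, $\bY_t$, $\bX_{t+1}$, and $\bm{\Gamma}_{t+1}$ given the past, followed by normalization in which the sum of a product over per-team variables factors into a product of per-team sums. The only difference is organizational (you chain forward from $h_t^0$ and normalize once at the end, whereas the paper isolates the factorization of $\Pr^g(x_{1:t},\gamma_{1:t},y_t,u_t\mid h_t^0)$ as an intermediate claim and applies Bayes' rule to it), which does not change the substance of the argument.
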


\begin{proof}
	See Appendix \ref{app: condindep}.
\end{proof}

As a result of Lemma \ref{lem: condindep}, coordinator $i$'s estimation of other coordinators' state and prescriptions is independent of her own strategy and private information. In other words, while coordinator $i$ has access to both the common information and her private information, her belief on the other coordinators' private information (history of states and prescription) is solely based on the common information. 

\subsection{Result for $d=1$}
While coordinator $i$'s private information consists of $(\bX_{1:t-1}^i, \bm{\Gamma}_{1:t-1}^i)$, she does not have to use all of it to form a best response.

\begin{lemma}\label{lem: suffprivate}
	Under $d=1$, for any behavioral coordination strategy profile $g^{-i}$ of all coordinators other than $i$, there exists a best response behavioral coordination strategy $g^i$ for coordinator $i$ that chooses randomized prescriptions based solely on $(H_t^0, \bX_{t-1}^i)$.
\end{lemma}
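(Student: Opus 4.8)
The plan is to show that, when $d=1$, the pair $(H_t^0, \bX_{t-1}^i)$ is a sufficient statistic for coordinator $i$'s private information $(\bX_{1:t-1}^i, \bm{\Gamma}_{1:t-1}^i)$ as far as her own best-response problem is concerned, and then invoke a standard dynamic-programming / policy-improvement argument to conclude that a best response exists within the class of strategies measurable with respect to $(H_t^0, \bX_{t-1}^i)$. The key structural observation is Lemma \ref{lem: condindep}: since $g^{-i}$ is fixed, the only part of the world that coordinator $i$ cannot control or is uncertain about — namely $(\bX_{1:t}^k, \bm{\Gamma}_{1:t}^k)_{k\neq i}$ together with the future noises $(W_s^{i,X}, W_s^{i,Y})_{s\ge t}$ and $(W_s^{k,X}, W_s^{k,Y})_{s\ge t, k\neq i}$ — has a conditional law given everything coordinator $i$ knows that factors as (law of others' data given $H_t^0$ only) $\times$ (law of coordinator $i$'s own future primitives, which are exogenous). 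Thus the ``relevant uncertainty'' for coordinator $i$ at time $t$ collapses onto $(H_t^0, \bX_{t-1}^i)$, because under $d=1$ the current own-state $X_t^{i,j}$ that each agent will privately observe depends on the past only through $\bX_{t-1}^i$ and $\bU_{t-1}$ (both functions of the proposed sufficient statistic), the reward-relevant and observation-relevant state $\bX_t$ likewise, and the distribution of the opponents' contribution $(\bX_t^{-i}, \bU_t^{-i})$ depends only on $H_t^0$.

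First I would set up the controlled Markov decision problem faced by coordinator $i$ against the frozen $g^{-i}$: the ``environment state'' at time $t$ is $(H_t^0, \bX_{1:t-1}^i, \bm{\Gamma}_{1:t-1}^i)$, actions are prescriptions $\gamma_t^i \in \prescription_t^i$, and one-step rewards are $\E^{g^{-i}}[\,r_t^i(\bX_t, \bU_t)\mid H_t^0, \bX_{1:t-1}^i, \bm{\Gamma}_{1:t-1}^i, \gamma_t^i\,]$. I would then prove, by backward induction on $t=T, T-1, \dots, 1$, that the optimal value function $V_t^i$ at the node $(H_t^0, \bX_{1:t-1}^i, \bm{\Gamma}_{1:t-1}^i)$ depends on this history only through $(H_t^0, \bX_{t-1}^i)$, and that a corresponding maximizing prescription can be chosen measurably with respect to $(H_t^0, \bX_{t-1}^i)$ — here I would also observe that the prescription $\gamma_t^i$, being a map on $\mathcal{X}_{t}^{i,j}$ when $d=1$, together with $\bX_{t-1}^i$ and $\bU_{t-1}$ determines the conditional law of $(\bX_t, \bU_t, \bY_t)$ in the right way. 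The base case $t=T+1$ is trivial ($V_{T+1}^i \equiv 0$). For the inductive step, the transition kernel from the node at $t$ to the node at $t+1$: $H_{t+1}^0 = (H_t^0, \bY_t, \bU_t)$, $\bX_t^i$ appended to the own-state history, and $\gamma_t^i$ appended to the prescription history — and one checks, using Lemma \ref{lem: condindep} and the $d=1$ dynamics \eqref{eq: system}–\eqref{eq: obs}, that the joint law of $(\bY_t, \bU_t, \bX_t^i)$ given the full private history and $\gamma_t^i$ equals its law given only $(H_t^0, \bX_{t-1}^i, \gamma_t^i)$; combined with the induction hypothesis that $V_{t+1}^i$ only depends on $(H_{t+1}^0, \bX_t^i)$, this makes the Bellman right-hand side a function of $(H_t^0, \bX_{t-1}^i, \gamma_t^i)$ alone, so the argmax can be taken as a function of $(H_t^0, \bX_{t-1}^i)$.

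The main obstacle I anticipate is being careful about the measure-theoretic/game-theoretic subtlety already flagged in the paper: conditioning must be restricted to histories that are \emph{admissible} under $g^{-i}$, and the opponents' conditional law $\Pr^{g^{-i}}(\cdot \mid H_t^0)$ is itself strategy-dependent (the CIB belief is not strategy-independent here), so I must make sure that the claimed sufficiency holds for every admissible realization and not merely almost surely, and that the constructed $g^i$ is well-defined (in particular that the common information $H_t^0$ reached after playing it stays admissible for the continuation). A secondary technical point is that the prescription spaces $\prescription_t^i$ are finite, so maximizers exist and no measurable-selection machinery beyond choosing an argmax is needed; the randomization in the statement (``randomized prescriptions'') is therefore not essential for optimality but is included for uniformity with the later SPIB development — I would simply note that a deterministic $(H_t^0, \bX_{t-1}^i)$-measurable best response exists and is a (degenerate) randomized one. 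Finally, I would translate the resulting strategy back through Lemmas \ref{lem: pure2coord} and \ref{lem: bcs2ms} if one wants the statement phrased at the level of behavioral coordination strategies, concluding that $g^i$ so constructed attains $J^i(g^i, g^{-i}) = \max_{\tilde g^i} J^i(\tilde g^i, g^{-i})$.
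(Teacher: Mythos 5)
Your proposal is correct and follows essentially the same route as the paper's proof (which is deferred to Lemma~\ref{lem: suffprivated3}): use Lemma~\ref{lem: condindep} to factor coordinator $i$'s conditional beliefs so that both the expected instantaneous reward and the transition kernel depend on her private history only through $(H_t^0, \bX_{t-1}^i)$, making this pair the state of an MDP against the frozen $g^{-i}$, and then conclude by dynamic programming. Your explicit backward induction on value functions is exactly the ``standard MDP theory'' step the paper invokes, and your remarks on admissibility and on the sufficiency of deterministic maximizers match the paper's conventions.
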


\begin{proof}
	Deferred to the proof of Lemma \ref{lem: suffprivated3}.
\end{proof}

Lemma \ref{lem: suffprivate} shows that the coordinators can ignore much of their private information without compromising their objective.

\subsection{Result for $d>1$}
We now identify a compressed version of private information for $d>1$ case that is sufficient for decision-making.

Recall that coordinator $i$'s information at time $t$ consists of $\overline{H}_t^i=(\bY_{1:t-1}, \bU_{1:t-1}, \bX_{1:t-d}^i, \bm{\Gamma}_{1:t-1}^i)$. To choose her prescriptions at time $t$, coordinator $i$ needs to estimate her hidden information (i.e. $\bX_{t-d+1:t}^i$). When $d=1$, the belief on hidden information is simply constructed using $(\bX_{t-1}^i, \bU_{t-1})$ and the knowledge of the transition probabilities of the underlying system. However, when $d>1$, more information in addition to $(\bX_{t-d}^i, \bU_{t-d:t-1})$ is needed to form the belief.

To illustrate this, we start with the case $d=2$. When $d=2$, the belief of coordinator $i$ on her hidden information would depend on the last prescription $\bm{\Gamma}_{t-1}^i$ in addition to $(\bX_{t-2}^i, \bU_{t-2:t-1})$. This is due to the signaling effect of the action $\bU_{t-1}^i$: since coordinator $i$ knows $\bU_{t-1}^i$, she can infer something about $\bX_{t-1}^i$ through the prescription used to produce these actions (recall that $U_{t-1}^{i, j}=\Gamma_{t-1}^{i, j}(X_{t-2:t-1}^{i, j})$ for $(i, j)\in\mathcal{N}_i$). Hence at time $t$, coordinator $i$ needs to take $\bm{\Gamma}_{t-1}^i$ into account when forming her belief on the hidden information. 

Furthermore, for $d=2$, when making a decision at time $t$, coordinator $i$ can use a compressed version of the prescription $\bm{\Gamma}_{t-1}^i$ instead of $\bm{\Gamma}_{t-1}^i$ itself. This is because at time $t$, coordinator $i$ has learned $\bX_{t-2}^i$ that she didn't know at time $t-1$. The coordinator can then focus on the following essential question: given the knowledge of $\bX_{t-2}^i$, what is the relationship between $\bX_{t-1}^i$ and $\bU_{t-1}^i$?

Similarly, for a general $d>1$, to estimate the hidden information, each coordinator needs to utilize her past $(d-1)$ prescriptions. Again, a coordinator can use a compressed version of the past $(d-1)$ prescriptions, since she can incorporate the additional information she knows at time $t$ that she did not know back when the prescriptions were chosen. Each coordinator can now focus on the relationship between the unknown states and the known actions, given what is already known. This motivates the definition of $(d-1)$-step \emph{partially realized prescriptions} PRPs.

\begin{defn}
	The $(d-1)$-step \emph{partially realized prescriptions}\footnote{The $(d-1)$-step PRPs are the same as the partial functions defined in the second structural result in \cite{nayyar2010optimal}. } (PRPs) for coordinator $i$ at time $t$ is a collection of functions $\bm{\Phi}_{t}^i:=(\Phi_{t-l, l}^{i,j})_{\substack{ (i, j)\in\mathcal{N}_i, 1\leq l\leq d-1 } }$, where 
	\begin{equation}
	\Phi_{t-l, l}^{i,j} = \Gamma_{t-l}^{i, j}(X_{t-l-d+1:t-d}^{i, j}, \cdot)
	\end{equation}
	is a function from $\mathcal{X}_{t-d+1:t-l}^{i, j}$ to $\mathcal{U}_{t-l}^{i, j}$.
\end{defn}

PRPs have smaller dimension than prescriptions. To illustrate this point, consider the case where $d=2$: A prescription $\gamma_{t-1}^{i, j}$ can be represented as a table, where the rows represent $x_{t-2}^{i, j}\in \mathcal{X}_{t-2}^{i, j}$, the columns represent $x_{t-1}^{i, j}\in \mathcal{X}_{t-1}^{i, j}$, and the entries represent the corresponding action $u_{t-1}^{i, j}=\gamma_{t-1}^{i, j}(x_{t-2:t-1}^{i, j})$ to take. On the other hand, the 1-step partially realized prescription $\phi_{t}^{i, j}= \gamma_{t-1}^{i, j}(x_{t-2}^{i, j}, \cdot)$ can be represented by one row of the table of $\gamma_{t-1}^{i, j}$ chosen based on the realization of $X_{t-2}^{i, j}$.

In addition to $(\bX_{t-d}^i, \bU_{t-d:t-1}, \bm{\Phi}_t^i)$, coordinator $i$ also needs to use $Y_{t-d+1:t-1}^i$ to form a belief on her hidden information since $Y_{t-d+1:t-1}^i$ can provide additional insight on $\bX_{t-d+1:t-1}^i$ that $(\bX_{t-d}^i, \bU_{t-d:t-1}, \bm{\Phi}_t^i)$ cannot necessarily provide. The belief coordinator $i$ has on her hidden information is summarized in the following lemma.

\begin{lemma}\label{lem: selfbelief}
	Suppose that the behavioral coordination strategy profile $g=(g^i)_{i\in\mathcal{I}}$ is being played. Then the conditional distribution of $\bX_{t-d+1:t}^i$ given $\overline{H}_t^i$ under $g$ can be expressed as a fixed function of $(Y_{t-d+1:t-1}^i, \bU_{t-d:t-1}, \bX_{t-d}^i, \bm{\Phi}_t^i)$, i.e.
	\begin{equation}
	\begin{split}
	&\quad~\Pr^g(x_{t-d+1:t}^i|\overline{h}_t^i)~~\\
	&= P_t^i(x_{t-d+1:t}^i|y_{t-d+1:t-1}^i, u_{t-d:t-1}, x_{t-d}^i, \phi_{t}^i)~\forall \overline{h}_t^i\in\overline{\mathcal{H}}_t^i~~
	\end{split}\label{eq: belieffunction}
	\end{equation}
	for some function $P_t^i$ that does not depend on $g$.
\end{lemma}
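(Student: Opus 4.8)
The plan is to establish \eqref{eq: belieffunction} by backward induction on the time index, computing the conditional distribution of $\bX_{t-d+1:t}^i$ from $\overline{H}_t^i = (\bY_{1:t-1}, \bU_{1:t-1}, \bX_{1:t-d}^i, \bm{\Gamma}_{1:t-1}^i)$ using Bayes' rule, and then checking at each step that only the reduced statistic $(Y_{t-d+1:t-1}^i, \bU_{t-d:t-1}, \bX_{t-d}^i, \bm{\Phi}_t^i)$ is actually needed. First I would invoke Lemma \ref{lem: condindep} to reduce the problem to a single team: since $(\bX_{1:t}^k, \bm{\Gamma}_{1:t}^k)_{k\in\mathcal{I}}$ are conditionally independent given $H_t^0$, and $\overline{H}_t^i$ only adds team-$i$ private information, the posterior on $\bX_{t-d+1:t}^i$ factors through team $i$'s own data; moreover the dynamics \eqref{eq: system} and observations \eqref{eq: obs} of team $i$ depend on the other teams only through the publicly-observed actions $\bU_{1:t-1}$, which are part of the conditioning. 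So it suffices to analyze the one-team filtering problem with the state recursion $\bX_{t+1}^i = f_t^i(\bX_t^i, \bU_t, W_t^{i,X})$ and observations $Y_t^i = \ell_t^i(\bX_t^i, \bU_t, W_t^{i,Y})$, treating $\bU_{1:t-1}$ as an exogenous (observed) input.

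Next I would write out the posterior explicitly. Given $\overline{H}_t^i$, I know $\bX_{t-d}^i$ exactly; conditioning further on it, the relevant unknowns are $\bX_{t-d+1:t}^i$, which by the Markov recursion are generated from $\bX_{t-d}^i$ and the noises $W_{t-d:t-1}^{i,X}$. The data that bear on these unknowns are: (i) the observations $Y_{t-d+1:t-1}^i$ (each $Y_s^i$ depends on $\bX_s^i$ and the known $\bU_s$); (ii) the realized actions $\bU_{t-d+1:t-1}^i$, which through the prescription identities $U_s^{i,j} = \Gamma_s^{i,j}(X_{s-d+1:s}^{i,j})$ impose constraints linking the unknown components of $\bX_{t-d+1:t-1}^i$ to the known actions; here is exactly where the PRPs enter, because conditioned on the known $\bX_{t-d}^i$ (more precisely $X_{s-d+1:s-d}^{i,j}$ for each relevant $s$), the prescription $\Gamma_s^{i,j}$ collapses to its partially realized form $\Phi_{s,\,t-s}^{i,j} = \Gamma_s^{i,j}(X_{s-d+1:s-d}^{i,j},\cdot)$, so only $\bm{\Phi}_t^i$ — not the full $\bm{\Gamma}_{t-d+1:t-1}^i$ — matters. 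Writing the joint density of $(\bX_{t-d+1:t}^i, Y_{t-d+1:t-1}^i, \bU_{t-d+1:t-1}^i)$ given $\bX_{t-d}^i$ as a product of transition kernels $\Pr(x_{s+1}^i|x_s^i,u_s)$, observation kernels $\Pr(y_s^i|x_s^i,u_s)$, and indicator factors $\prod_{j}\bm{1}_{\{u_s^{i,j} = \phi_{s,\,t-s}^{i,j}(x_{s-d+1:s}^{i,j})\}}$, and then normalizing, yields a posterior that is manifestly a fixed function $P_t^i$ of $(Y_{t-d+1:t-1}^i, \bU_{t-d:t-1}, \bX_{t-d}^i, \bm{\Phi}_t^i)$ with no dependence on $g$ — the strategy $g$ has dropped out because once we condition on the realized prescriptions (summarized by $\bm{\Phi}_t^i$) and actions, the remaining randomness is entirely primitive. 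I would also note that $\bm{\Gamma}_{t-1:t-d+1}^i$ being part of the conditioning is what makes the signaling content of $\bU^i$ usable, and that the older prescriptions $\bm{\Gamma}_{1:t-d}^i$ and states $\bX_{1:t-d-1}^i$ are irrelevant by the Markov property.

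The main obstacle I anticipate is the bookkeeping of \emph{why} the full prescriptions $\bm{\Gamma}_{t-d+1:t-1}^i$ can be replaced by the lower-dimensional PRPs $\bm{\Phi}_t^i$ without losing information — i.e., verifying carefully that in every indicator factor $\bm{1}_{\{U_s^{i,j} = \Gamma_s^{i,j}(X_{s-d+1:s}^{i,j})\}}$ the arguments $X_{s-d+1:s-d}^{i,j}$ that get ``frozen'' into the PRP are indeed all measurable with respect to $\overline{H}_t^i$ (they are, since $s \le t-1$ forces $s-d \le t-d-1 < t-d$, so those state coordinates lie in $\bX_{1:t-d}^i$), and that no other coordinate of the prescription table is ever queried in the posterior. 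A secondary technical point is handling conditioning events of zero probability — but this is covered by the paper's standing convention that all conditioned realizations are admissible — and ensuring the normalization constant is strictly positive, which again follows from admissibility of $\overline{h}_t^i$. Once these measurability checks are in place, the explicit Bayes computation gives the claimed form of $P_t^i$ and its independence of $g$.
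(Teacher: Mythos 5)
Your proposal is correct and follows essentially the same route as the paper's proof in Appendix D: reduce to a single-team filtering problem via the conditional-independence result (Lemma~\ref{lem: condindep}), expand the posterior by Bayes' rule into a product of transition kernels, observation kernels, and action-consistency indicators in which the full prescriptions collapse to the PRPs $\bm{\Phi}_t^i$, and observe that the coordinator's randomization terms depend only on the conditioning data and hence cancel in the normalization. The only device you do not mention explicitly is the paper's replacement of $g^{-i}$ by an open-loop strategy $\hat{g}^{-i}$ reproducing $u_{1:t-1}^{-i}$ to keep the conditioning event of positive probability, but your appeal to admissibility covers the same concern.
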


\begin{proof}
	See Appendix \ref{app: selfbelief}.
\end{proof}

\begin{remark}
	The above result can be interpreted in the following way: $\bX_{t-d}^i$ is perfectly observed, hence coordinator $i$ can discard $\bX_{1:t-d-1}^i$ which are irrelevant information due to the Markov property. Since $\bX_{t-d+1:t-1}^i$ are not perfectly observed by coordinator $i$, every public observation and action based upon $\bX_{t-d+1:t-1}^i$ are important to coordinator $i$ since it can help in estimating the state $\bX_{t-d+1:t-1}^i$. Note that $\bm{\Phi}_{t}^i$ encodes the essential information coordinator $i$ needs to remember at time $t$ about her previous signaling strategy: how does $\bX_{t-d+1:t-1}^i$ (unknown) map to $\bU_{t-d+1:t-1}^i$ (known)? With this piece of information, coordinator $i$ can fully interpret the signals sent through $\bU_{t-d+1:t-1}^i$.
\end{remark}

We claim that while coordinator $i$'s private information consists of $(\bX_{1:t-d}^i, \bm{\Gamma}_{1:t-1}^i)$, she only needs to use $(\bX_{t-d}^i, \bm{\Phi}_{t}^i)$ along with the common information to choose prescriptions.

\begin{lemma}\label{lem: suffprivated3}
	Given an arbitrary $d>1$, for any behavioral coordination strategy profile $g^{-i}$ of all coordinators other than $i$, there exists a best response behavioral coordination strategy $g^i$ for coordinator $i$ that chooses randomized prescriptions based solely on $(H_t^0, \bX_{t-d}^i, \bm{\Phi}_{t}^i)$.
\end{lemma}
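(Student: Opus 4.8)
The plan is to observe that, once the other coordinators' strategies $g^{-i}$ are held fixed, coordinator $i$'s best-response problem is an ordinary finite-horizon Markov decision problem whose state at time $t$ is
$Z_t^i:=(H_t^0,\bX_{t-d}^i,\bm{\Phi}_t^i)$ and whose control is the randomized prescription in $\Delta(\prescription_t^i)$, and then to invoke dynamic programming. Since $Z_t^i$ is a deterministic function of $\overline{H}_t^i$, the prescription spaces $\prescription_t^i$ are finite, the horizon $T$ is finite, and each $\mathcal{Z}_t^i$ is finite, the standard sufficient-statistic/dynamic-programming argument would then produce an optimal policy that at each $t$ is a function of $Z_t^i$ alone, which is exactly the assertion. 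Two structural facts have to be checked: (i) the conditional expected stage reward $\E^{g^{-i}}[r_t^i(\bX_t,\bU_t)\mid \overline{H}_t^i,\bm{\Gamma}_t^i=\gamma_t^i]$ is a function of $(Z_t^i,\gamma_t^i)$ only; and (ii) $Z_t^i$ is a controlled Markov chain, i.e.\ the conditional law of $Z_{t+1}^i$ given $(\overline{H}_t^i,\bm{\Gamma}_t^i=\gamma_t^i)$ is a function of $(Z_t^i,\gamma_t^i)$ only.

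For both facts the engine would be a decomposition of the one-step conditional law of $(\bX_{t-d+1:t}^i,\bU_t,\bY_t,\bX_t^{-i})$ given $(\overline{H}_t^i,\bm{\Gamma}_t^i=\gamma_t^i)$. Conditionally on $H_t^0$, Lemma~\ref{lem: condindep} makes $(\bX_{1:t}^i,\bm{\Gamma}_{1:t}^i)$ independent of $(\bX_{1:t}^{-i},\bm{\Gamma}_{1:t}^{-i})$, and $\overline{H}_t^i\setminus H_t^0$ is a function of the former; hence conditioning on $\overline{H}_t^i$ adds nothing to the law of the $-i$ quantities, and --- since coordinator $i$'s internal randomization is independent of all states and of $g^{-i}$ --- conditioning further on $\bm{\Gamma}_t^i=\gamma_t^i$ is harmless. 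So I would sample $\bX_{t-d+1:t}^i$ from $P_t^i(\cdot \mid y_{t-d+1:t-1}^i,u_{t-d:t-1},x_{t-d}^i,\phi_t^i)$ of Lemma~\ref{lem: selfbelief} and set $\bU_t^i=\gamma_t^i(\bX_{t-d+1:t}^i)$ --- a law determined by $(Z_t^i,\gamma_t^i)$; independently draw $(\bX_t^{-i},\bU_t^{-i})$ from a law determined by $H_t^0$ and $g^{-i}$; and then form $\bY_t^i=\ell_t^i(\bX_t^i,\bU_t,W_t^{i,Y})$ and $\bY_t^{-i}=\ell_t^{-i}(\bX_t^{-i},\bU_t,W_t^{-i,Y})$ with fresh, independent noises. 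The resulting joint law depends only on $(Z_t^i,\gamma_t^i)$ and $g^{-i}$, which yields (i) after applying $r_t^i$ and taking expectations.

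For (ii) I would write $Z_{t+1}^i=(H_{t+1}^0,\bX_{t-d+1}^i,\bm{\Phi}_{t+1}^i)$ with $H_{t+1}^0=(H_t^0,\bY_t,\bU_t)$, and use the combinatorial fact that $\bm{\Phi}_{t+1}^i$ is a \emph{deterministic} function of $(\bm{\Phi}_t^i,\gamma_t^i,\bX_{t-d+1}^i)$: its component $\Phi_{t,1}^{i,j}$ equals $\gamma_t^{i,j}(X_{t-d+1}^{i,j},\cdot)$, and for $2\le l\le d-1$ its component $\Phi_{t+1-l,l}^{i,j}$ is obtained from the time-$t$ component $\Phi_{t+1-l,l-1}^{i,j}$ of $\bm{\Phi}_t^i$ by freezing that function's first free argument at the newly revealed value $X_{t-d+1}^{i,j}$ (a check of time indices confirms the domains match). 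Hence $Z_{t+1}^i$ is a function of $(Z_t^i,\gamma_t^i,\bY_t,\bU_t,\bX_{t-d+1}^i)$, whose joint conditional law given $(\overline{H}_t^i,\gamma_t^i)$ is, by the decomposition of the previous paragraph, a function of $(Z_t^i,\gamma_t^i)$. With (i) and (ii) in hand I would set $V_{T+1}^i\equiv 0$ and define, backward in $t$, $V_t^i(z)=\max_{\delta\in\Delta(\prescription_t^i)}\bigl(\bar r_t^i(z,\delta)+\E[V_{t+1}^i(Z_{t+1}^i)\mid Z_t^i=z,\delta]\bigr)$ with $\bar r_t^i$ the stage-reward function from (i); a measurable selection of maximizers defines a $Z_t^i$-measurable behavioral coordination strategy $g^i$, and a standard backward-induction comparison --- using (ii) to show the continuation payoff of \emph{any} behavioral strategy from a history $\overline{h}_t^i$ is at most $V_t^i(Z_t^i(\overline{h}_t^i))$, with equality for the selected $g^i$ --- shows $g^i$ is a best response to $g^{-i}$. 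The $d=1$ case (Lemma~\ref{lem: suffprivate}) is the degenerate instance with no PRPs, $Z_t^i=(H_t^0,\bX_{t-1}^i)$, and Lemma~\ref{lem: selfbelief} replaced by the one-step belief update from $(\bX_{t-1}^i,\bU_{t-1})$.

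The conceptual content here is light --- it is a routine sufficient-statistic argument --- so the hard part will be the bookkeeping. I expect the main obstacle to be twofold: verifying precisely, with all time indices correct, that the PRP collection updates deterministically under iterated partial realization; and carefully combining the two conditional-independence structures --- across teams via Lemma~\ref{lem: condindep}, within team $i$'s own hidden states via Lemma~\ref{lem: selfbelief} --- together with the effect of conditioning on the realized prescription $\gamma_t^i$, in order to pin down the controlled-Markov property. Assembling the transition kernel of $Z_t^i$ is where I expect the real work to lie.
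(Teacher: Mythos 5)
Your proposal is correct and follows essentially the same route as the paper's proof: fix $g^{-i}$, use Lemma~\ref{lem: condindep} and Lemma~\ref{lem: selfbelief} to show the one-step conditional law of $(\bX_{t-d+1:t},\bm{\Gamma}_t^{-i})$ given $(\overline{H}_t^i,\bm{\Gamma}_t^i)$ depends only on $(H_t^0,\bX_{t-d}^i,\bm{\Phi}_t^i,\bm{\Gamma}_t^i)$, deduce that the expected stage reward and the transition kernel of $(H_t^0,\bX_{t-d}^i,\bm{\Phi}_t^i)$ have this dependence (including the same deterministic partial-realization update of the PRPs), and conclude by standard MDP dynamic programming. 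The bookkeeping you flag as the main work --- the PRP update indices and the combination of the two conditional-independence facts with the harmlessness of conditioning on $\bm{\Gamma}_t^i$ --- is exactly what the paper's proof carries out.
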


\begin{proof}
	See Appendix \ref{app: suffprivated3}.
\end{proof}

\begin{remark}
	Lemmas \ref{lem: selfbelief} and \ref{lem: suffprivated3} and their proofs also apply to $d=1$, in which case the $(d-1)$-step PRP $\bm{\Phi}_{t}^i$ is empty by definition. 
\end{remark}

From now on, we unify the results for $d=1$ and $d>1$. We formally define the Sufficient Private Information (SPI) and SPIB strategies which will be used in the rest of the paper.

\begin{defn}[Sufficient Private Information]
	For a given $d>0$, the \emph{Sufficient Private Information} (SPI) for coordinator $i$ at time $t$ is defined as $S_t^i = (\bX_{t-d}^i, \bm{\Phi}_{t}^i)$.
\end{defn}

\begin{defn}[Sufficient Private Information Based Strategy]
	A \emph{Sufficient Private Information Based} (SPIB) strategy for coordinator $i$ is a collection of functions $\rho^i = (\rho_t^i)_{t\in\mathcal{T}}, \rho_t^i: \mathcal{H}_t^0 \times \mathcal{S}_t^i \mapsto \Delta(\prescription_{t}^i)$.
\end{defn}

It can be easily verified that $S_t^i$ can be sequentially updated, i.e., there exists a fixed, strategy-independent function $\iota_t^i$ such that
\begin{equation}\label{eq: spiupdate}
S_{t+1}^i = \iota_t^i(S_t^i, \bX_{t-d+1}^i, \bm{\Gamma}_t^i).
\end{equation}

Therefore, a coordinator does not need full recall to play a SPIB strategy.

\subsection{Coordinators' Nash Equilibrium in SPIB Strategies and its Existence}
Since the coordinators have perfect recall, we know from standard results for dynamic games that a CNE, as defined in Definition \ref{def: cne}, exists (see Chapter 11 of \cite{osborne1994course}, for example). However, in those CNEs, coordinators do not necessarily play SPIB strategies, hence the standard arguments that guarantee the existence of CNE cannot be used to establish the existence of CNE in SPIB strategies. Moreover, SPIB strategies do not feature full recall, hence one cannot directly apply standard arguments to establish the existence of CNE in SPIB strategies.

An SPIB strategy profile $\rho=(\rho_t^i)_{i\in\mathcal{I}, t\in\mathcal{T}}, \rho_t^i: \mathcal{H}_t^0 \times \mathcal{S}_t^i \mapsto \Delta(\prescription_{t}^i)$ is called a \emph{Sufficient Private Information Based Coordinators' Nash Equilibrium} (SPIB-CNE) if $\rho$, seen as a profile of behavioral coordination strategies, forms a Coordinator's Nash Equilibrium.

\begin{thm}\label{thm: spibexist}
	There exists at least one SPIB-CNE for the dynamic game among coordinators.
\end{thm}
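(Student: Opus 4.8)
The plan is to cast the SPIB-restricted coordinators' game as a finite game in extensive form (or, equivalently, as a finite collection of one-shot games played along a tree of "information states") and invoke a standard fixed-point/existence argument, with the crucial added observation — supplied by Lemmas \ref{lem: condindep}, \ref{lem: selfbelief} and \ref{lem: suffprivated3} — that restricting to SPIB strategies costs nothing in terms of best responses. Concretely, I would first note that the common information $H_t^0$ is observed by \emph{all} coordinators, and that by Lemma \ref{lem: condindep} the conditional law of $(\bX_{1:t}^k, \bm{\Gamma}_{1:t}^k)_{k\in\mathcal{I}}$ given $H_t^0$ factorizes across teams and, for team $k$, depends only on $g^k$. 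Hence, once we fix the (SPIB) strategies $g^{-i}$ of the other coordinators, coordinator $i$ faces a well-defined single-agent decision problem: maximize $J^i$ over behavioral coordination strategies, where the relevant "state" that she does not directly control evolves as a controlled Markov process driven by her prescriptions and by exogenous randomness whose distribution is pinned down by $g^{-i}$ through the common-information belief.

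The key structural step is to verify that in this single-agent problem, $(H_t^0, S_t^i)$ (with $S_t^i = (\bX_{t-d}^i, \bm{\Phi}_t^i)$) is an information state: the pair updates in a strategy-independent Markov fashion — $H_{t+1}^0$ is obtained from $H_t^0$ by appending $(\bY_t, \bU_t)$, and $S_{t+1}^i = \iota_t^i(S_t^i, \bX_{t-d+1}^i, \bm{\Gamma}_t^i)$ by \eqref{eq: spiupdate} — and, by Lemma \ref{lem: selfbelief}, the conditional law of the hidden states $\bX_{t-d+1:t}^i$ needed to evaluate the one-step reward and the transition of $(H_t^0, S_t^i)$ is a fixed function of $(H_t^0, S_t^i, \bm{\Gamma}_t^i)$. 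Combined with Lemma \ref{lem: condindep} (the belief on $(\bX_t^{-i}, \bm{\Gamma}_t^{-i})$ given $H_t^0$ is a fixed function of $H_t^0$ and $g^{-i}$, independent of $i$'s private information), this shows that coordinator $i$'s best-response problem against a fixed SPIB $g^{-i}$ is a finite-horizon MDP whose state is $(H_t^0, S_t^i)$; by backward induction it admits an optimal (indeed pure) strategy that is a function of $(H_t^0, S_t^i)$ only — reproving Lemma \ref{lem: suffprivated3} in exactly the form needed here. Therefore, restricting the strategy space of each coordinator to SPIB strategies does not remove any best response: if all opponents play SPIB strategies, coordinator $i$ has an SPIB best response.

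Given this, I would finish by a fixed-point argument on the SPIB strategy spaces. Each $\rho^i$ ranges over $\prod_{t}\prod_{(h_t^0, s_t^i)} \Delta(\prescription_t^i)$, a product of finitely many simplices, hence a nonempty compact convex subset of Euclidean space; $J^i(\rho^i, \rho^{-i})$ is multilinear (a polynomial) in the components of $\rho$, hence continuous, and is linear — thus quasiconcave — in $\rho^i$ for fixed $\rho^{-i}$. The best-response correspondence $\rho^{-i} \mapsto \arg\max_{\rho^i} J^i(\rho^i,\rho^{-i})$ is therefore nonempty-, convex-, compact-valued and upper hemicontinuous (Berge), so Kakutani's fixed-point theorem yields an SPIB profile $\rho^*$ with $J^i(\rho^{*i},\rho^{*-i}) \ge J^i(\rho^i, \rho^{*-i})$ for every SPIB $\rho^i$ and every $i$. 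Finally, to promote this to a genuine CNE (deviations to \emph{arbitrary} behavioral coordination strategies, not just SPIB ones), I invoke the MDP argument of the previous paragraph once more: since $\rho^{*-i}$ is SPIB, coordinator $i$'s best response against it can be taken SPIB, so $\sup_{\tilde g^i} J^i(\tilde g^i, \rho^{*-i}) = \sup_{\rho^i \text{ SPIB}} J^i(\rho^i, \rho^{*-i}) = J^i(\rho^{*i}, \rho^{*-i})$, and $\rho^*$ is an SPIB-CNE. The main obstacle — and the place where the paper's earlier lemmas do the real work — is precisely the "no best response is lost" claim: one must be careful that the reachable set of common-information/SPI pairs, and the reward/transition kernels over them, are genuinely strategy-independent (they are, because the primitives $(\bX_1^i), (W_t^{i,X}), (W_t^{i,Y})$ are mutually independent with known laws, and because of the deterministic $\iota_t^i$ update and the fixed belief function $P_t^i$), so that coordinator $i$'s problem is a bona fide finite MDP in $(H_t^0, S_t^i)$ rather than merely a POMDP whose sufficient statistic depends on the strategy being evaluated.
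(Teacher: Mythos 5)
Your overall architecture (reduce coordinator $i$'s best-response problem to a finite MDP in $(H_t^0, S_t^i)$ so that no best response is lost by restricting to SPIB strategies, then find a fixed point within the SPIB class, then promote to a full CNE) is the same as the paper's, and the first and last steps are fine. The gap is in the middle: you assert that $J^i(\rho^i,\rho^{-i})$ is \emph{linear} in $\rho^i$ for fixed $\rho^{-i}$, hence that the best-response correspondence is convex-valued. This is false for behavioral strategies: the probability of a trajectory is a \emph{product} of the components $\rho_t^i(\cdot\,|\,h_t^0,s_t^i)$ over the stages $t$, so $J^i$ is multilinear across stages, not linear in $\rho^i$. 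Crucially, SPIB strategies do not have perfect recall (the paper points this out explicitly), so you cannot rescue convexity via Kuhn's theorem either: the set of SPIB strategies maximizing $J^i(\cdot,\rho^{-i})$ is the set of policies of a finite MDP that take conserving actions at every information state they actually reach, and such a set is generically non-convex (mixing two optimal policies can put positive probability on a state at which both prescribe suboptimal actions, because each avoided that state). Without convex values, Kakutani does not apply to your correspondence.

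The paper's proof repairs exactly this in two moves that your argument is missing. First, instead of $\arg\max_{\rho^i} J^i(\rho^i,\rho^{-i})$, it uses the correspondence defined stage-by-stage through the Bellman recursion, i.e.\ the set of randomizations over \emph{conserving} prescriptions at \emph{every} admissible pair $(h_t^0,s_t^i)$ (reachable or not); this is a product of faces of simplices, hence convex and compact. Second, to make the $Q$-functions (and hence this correspondence) continuous in $\rho^{-i}$, one must keep the conditional beliefs $\Pr^{\rho^{-i}}(\cdot\,|\,h_t^0,\dots)$ well defined, which fails when denominators vanish; the paper therefore works on the $\epsilon$-perturbed sets $\mathcal{R}^{\epsilon}$ of strictly mixed SPIB strategies, obtains a fixed point $\rho^{(n)}$ for each $\epsilon_n\searrow 0$ via Berge plus Kakutani, and then extracts a convergent subsequence and uses Berge's maximum theorem once more (with the continuity of $J^i$ on $\mathcal{R}^0$, which \emph{is} a polynomial and hence continuous) to show the limit is optimal among SPIB strategies. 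Your proposal needs both of these devices---or some substitute for them---before the fixed-point step is valid.
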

\begin{proof}
	See Appendix \ref{app: spibexist}.
\end{proof}

\section{COMPRESSION OF COMMON INFORMATION AND SEQUENTIAL DECOMPOSITION}\label{sec: 4}
The SPIB strategies defined in the previous section use sufficient private information instead of the entire private information for each coordinator. If the sets $\mathcal{X}_t, \mathcal{Y}_t, \mathcal{U}_t$ are time-invariant, the set of possible values of sufficient private information used in SPIB strategies is also time-invariant. However, the common information still increases with time and this means that the domain of SPIB strategies keeps increasing with time. In order to limit the growing domain of SPIB strategies, we introduce a subclass of SPIB strategies, named Compressed Information Based (CIB) strategies, where the coordinators use a compressed version of common information instead of the entire common information. We show that this new class of strategies satisfies a key best-response/closedness property. Based on this property we provide a backward inductive procedure that identifies an equilibrium in this subclass of strategies if each step of this procedure has a solution. While equilibria in CIB strategies may not exist in general (see example in Section \ref{sec: existence}), we identify classes of games among teams where such equilibria do exist.


\subsection{Compressed Common Information and CIB Strategy}\label{sec: ccicibd1}
In decentralized control problems \cite{nayyar2013decentralized,tavafoghi2018unified} and games among individuals \cite{ouyang2016dynamic,tavafoghi2016stochastic}, agents can compress their common information into beliefs on hidden and (sufficient) private information for the purpose of decision-making. Similarly, we would like to consider a subclass of SPIB strategies where each coordinator compresses the common information $H_t^0$ to a belief on sufficient private information and hidden information, i.e. $\Pr(\bX_{t-d:t}^k=\cdot, \bm{\Phi}_{t}^k=\cdot|H_t^0)$ for $k\in\mathcal{I}$. Due to Lemma \ref{lem: selfbelief}, these beliefs can be constructed from $\Pr(\bX_{t-d}^k=\cdot, \bm{\Phi}_{t}^k=\cdot|H_t^0)$ and $(Y_{t-d+1:t-1}^k, \bU_{t-d:t-1})$. Therefore, we will consider strategies where coordinators use common information based beliefs on the sufficient private information $S_t^k=(\bX_{t-d}^k, \bm{\Phi}_t^k)_{k\in\mathcal{I}}$ along with the uncompressed values of $(\bY_{t-d+1:t-1}, \bU_{t-d:t-1})$, instead of the whole $H_t^0$.

We formalize the above discussion in the rest of this subsection.

\begin{defn}[Belief Generation System]\label{def: bgs2}
	A \emph{Belief Generation System} for coordinator $i$ consists of a sequence of functions $\psi^i=(\psi_t^{i, k})_{k\in\mathcal{I}, t\in\mathcal{T}}$ where $\psi_t^{i, k}: \left(\prod_{l\in \mathcal{I}}\Delta(\mathcal{S}_{t}^l)\right) \times \mathcal{Y}_{t-d+1:t} \times \mathcal{U}_{t-d:t}\mapsto \Delta(\mathcal{S}_{t+1}^k)$
\end{defn}

Coordinator $i$ can use this system to generate common information based beliefs $\Pi_t^{i, k}\in \Delta(\mathcal{S}_{t}^k)$ for all $k\in\mathcal{I}$ as follows:
\begin{itemize}
	\item $\Pi_1^{i, k}$ is the prior distribution of $(\bX_{-(d-1)}^k, \bm{\Phi}_1^k)$, i.e. a measure which assigns probability 1 to the event $(\bX_{-(d-1)}^k=0, \bm{\Phi}_1^k=\hat{\phi}_1^k)$, where $\hat{\phi}_1^k$ is the PRP that always produces actions $u_{t}^{k, j}=0$ for all $(k, j)\in\mathcal{N}_k, t\leq 0$ (see Remark \ref{remark: initial});
	\item $\Pi_{t+1}^{i, k} = \psi_t^{i, k}((\Pi_t^{i, l})_{l\in\mathcal{I}}, \bY_{t-d+1:t}, \bU_{t-d:t}), t\geq 1$.
\end{itemize}
$\Pi_t^{i, k}$ represents coordinator $i$'s subjective belief on coordinator $k$'s sufficient private information $S_t^k$. These beliefs along with $(\bY_{t-d+1:t-1}, \bU_{t-d:t-1})$ will serve as coordinator $i$'s compressed common information.

\begin{defn}[Compressed Common Information]\label{def: cci2}
	We define coordinator $i$'s \emph{Compressed Common Information} (CCI) at time $t$ as
	\begin{equation}
	B_t^i=\left(\left(\Pi_t^{i, l}\right)_{l\in\mathcal{I}}, \bY_{t-d+1:t-1}, \bU_{t-d:t-1}\right),
	\end{equation}
	where $(\Pi_t^{i, l})_{l\in\mathcal{I}}$ are generated using the belief generation system defined in Definition \ref{def: bgs2}. Note that when $d=1$, we have $B_t^i=((\Pi_t^{i, l})_{l\in\mathcal{I}}, \bU_{t-1})$.
\end{defn}

We can write the belief update using $B_t^i$ as $\Pi_{t+1}^{i, k} = \psi_t^{i, k}(B_t^i, \bY_t, \bU_{t})$. With a slight abuse of notation, we use $\psi_t^i$ to represent the collection $(\psi_t^{i, k})_{k\in\mathcal{I}}$ and write the belief updates collectively as $(\Pi_{t+1}^{i, l})_{l\in\mathcal{I}} = \psi_t^{i}(B_t^i, \bY_t, \bU_{t})$.

We now define a subclass of strategies where coordinator $i$ uses her CCI instead of the entire common information.

\begin{defn}[Compressed Information Based Strategy]\label{def: cibstrategy}
	Let $\mathcal{B}_t=\left(\prod_{k\in\mathcal{I}}\Delta(\mathcal{S}_{t}^k)\right) \times \mathcal{Y}_{t-d+1:t-1} \times \mathcal{U}_{t-d:t-1}$.
	A \emph{Compressed Information Based} (CIB) strategy for coordinator $i$ is a pair $(\lambda^i, \psi^i)$, where $\lambda^i=(\lambda_t^i)_{t\in\mathcal{T} }$ is a collection of functions 
	$\lambda_t^i: \mathcal{B}_t\times \mathcal{S}_{t}^i \mapsto \Delta(\prescription_t^i)$, and $\psi^i=(\psi_t^{i, k})_{k\in\mathcal{I}, t\in\mathcal{T} }$, 
	$\psi_t^{i, k}: \mathcal{B}_t\times \mathcal{Y}_t\times \mathcal{U}_t \mapsto \Delta(\mathcal{S}_{t+1}^k)$ is a belief generation system as defined in Definition \ref{def: bgs2}. 
\end{defn}

Under a CIB strategy, coordinator $i$ uses her belief generation system to compress common information into beliefs and then uses these beliefs along with $(\bY_{t-d+1:t-1}, \bU_{t-d:t-1}, S_{t}^i)$ to select a randomized prescription. Thus, a CIB strategy $(\lambda^i, \psi^i)$ is equivalent to an SPIB-strategy
\begin{equation}
\begin{split}
\rho_t^i(h_t^0, s_t^i) = \lambda_t^i\left(\left(\pi_t^{i, k}\right)_{k\in\mathcal{I}}, y_{t-d+1:t-1}, u_{t-d:t-1}, s_{t}^i\right)&\\
\forall h_t^0\in \mathcal{H}_t^0, \forall s_t^i\in \mathcal{S}_t^i&
\end{split}
\end{equation}
where $(\pi_t^{i, k})_{k\in\mathcal{I}}$ is generated from $h_t^0$ through the belief generation system defined in Definition \ref{def: bgs2}.

\begin{remark}
	One advantage of CIB strategies is that at each time coordinator $i$ only needs to use her current CCI rather than the time-increasing full common information (i.e. $H_t^0$). Thus, if the sets $\mathcal{X}_t, \mathcal{Y}_t, \mathcal{U}_t$ are time-invariant, the mappings $\lambda_t^i, \psi_t^i$ in a CIB strategy have a time-invariant domain.
\end{remark}

\begin{remark}
	We have not imposed any restriction on the mapping $\psi_t^i$ in coordinator $i$'s belief generation system (see Definition \ref{def: bgs2}). Intuitively, however, one can imagine that coordinator $i$ has some prediction about others' strategies and is rationally using her prediction about others' strategies to update her beliefs through the mapping $\psi_t^i$. In the following discussion, our focus will be on such ``rational'' $\psi_t^i$ where the notion of rationality will be captured by Bayes' rule.
\end{remark}

Coordinator $i$'s belief generated from $\psi^i$ can be grouped into two parts: $(\Pi_t^{i, -i})_{t\in\mathcal{T}}$ and $(\Pi_t^{i, i})_{t\in\mathcal{T}}$. The first part represents what coordinator $i$ believes about other coordinators' SPI. The second part represents what coordinator $i$ thinks is the other coordinators' belief on her own SPI. 

\subsection{Consistency and Closedness of CIB Strategies}
As mentioned before, our interest in CIB strategies is motivated by the common information belief based strategies that appeared in the solution of decentralized control problems \cite{nayyar2013decentralized,tavafoghi2018unified} or games among individuals \cite{nayyar2013common,ouyang2016dynamic}. The common beliefs used in these prior works are compatible with Bayes' rule (i.e. the beliefs can be obtained using Bayes' rule along with the knowledge of the system model and the strategies being used). Inspired by these observations, we are particularly interested in CIB strategies where the belief generation system is compatible with Bayes' rule, i.e. the beliefs generated by coordinator $i$ using $\psi^i$ agree with those generated using Bayes' rule along with the knowledge of the system model and the strategies being used. 

In the following discussion, we identify a key property of such Bayes' rule compatible CIB strategies. To do so, we use the following technical definition.

\begin{defn}[Consistency]\label{def: consistency}
	Given $\lambda_t^i: \mathcal{B}_t\times \mathcal{S}_{t}^i \mapsto \Delta(\prescription_t^i)$, a belief generation function $\psi_t^{*, i}: \mathcal{B}_t\times \mathcal{Y}_t\times \mathcal{U}_t\mapsto \Delta(\mathcal{S}_{t+1}^i)$ is said to be \emph{consistent} with $\lambda_t^i$ if the following holds: For all $b_t=((\pi_t^l)_{l\in\mathcal{I}}, y_{t-d+1:t-1}, u_{t-d:t-1}) \in\mathcal{B}_t$, $\psi_t^{*, i}(b_t, y_t, u_t)$ is equal to the conditional distribution of $S_{t+1}^i$ given the event $(\bY_t=y_t, \bU_t=u_t)$ found using Bayes rule (whenever Bayes rule applies), assuming that $y_{t-d+1:t-1}$ and $u_{t-d:t-1}$ are the realization of recent observations and actions, $S_t^i$ has prior distribution $\pi_t^{i}$, and given $S_{t}^i=s_{t}^i$, $\bm{\Gamma}_t^i$ has distribution $\lambda_t^i(b_t, s_{t}^i)$. That is,
	\begin{equation}\label{beliefupdate}
	[\psi_t^{*, i}(b_t, y_t, u_t)](s_{t+1}^i)= \dfrac{\Upsilon_t^{ i}(b_t, y_{t}^i, u_{t}, s_{t+1}^i)}{\sum_{\tilde{s}_{t+1}^i} \Upsilon_t^i(b_t, y_{t}^i, u_{t}, \tilde{s}_{t+1}^i)}
	\end{equation}
	whenever the denominator of \eqref{beliefupdate} is non-zero, where
	\begin{align*}
	&~\quad\Upsilon_t^i(b_t, y_{t}^i, u_{t}, s_{t+1}^i )\\&:=\sum_{\tilde{s}_t^i}\sum_{\tilde{x}_{t-d+1:t}^i }\sum_{ \tilde{\gamma}_t^i: \tilde{\gamma}_t^i(\tilde{x}_{t-d+1:t}^i) = u_t^i } \Big[\Pr(y_t^i|\tilde{x}_{t}^i, u_{t}) \times \\
	&\times \bm{1}_{ \{s_{t+1}^i = \iota_t^i(\tilde{s}_t^i, \tilde{x}_{t-d+1}^i, \tilde{\gamma}_t^i) \} } \lambda_{t}^i(\tilde{\gamma}_{t}^i|b_{t}, \tilde{s}_{t}^i) \times\\&\times P_t^i(\tilde{x}_{t-d+1:t}^i| y_{t-d+1:t-1}^i, u_{t-d:t-1}, \tilde{s}_{t}^i) \pi_t^{i}(\tilde{s}_{t}^i)\Big]
	\end{align*}
	for all
	\begin{align*}
	&b_t=((\pi_t^l)_{l\in\mathcal{I}}, y_{t-d+1:t-1}, u_{t-d:t-1}) \in\mathcal{B}_t, y_t^i\in \mathcal{Y}_t^i,\\
	&u_t\in\mathcal{U}_t, s_{t+1}^i\in \mathcal{S}_{t+1}^i,
	\end{align*}
	$\iota_t^i$ is defined in \eqref{eq: spiupdate} and $P_t^i$ is as described in Lemma \ref{lem: selfbelief}. 
	
	For any index set $\varOmega\subset \mathcal{I}\times \mathcal{T}$ We say that $\psi^{*, i}=(\psi_t^{*, i})_{(i, t)\in \varOmega}$ is consistent with $\lambda^i=(\lambda_t^i)_{(i, t)\in \varOmega}$ if $\psi_t^{*, i}$ is consistent with $\lambda_t^i$ for all $(i, t)\in \varOmega$.
\end{defn}

A CIB strategy $(\lambda^i, \psi^i)$ for coordinator $i$ is said to be self-consistent if $\psi^{i, i}$ is consistent with $\lambda^i$. Since self-consistency can be viewed as Bayes' rule compatibility, the beliefs $(\Pi_t^{i, i})_{t\in\mathcal{T}}$ represents true conditional distributions of coordinator $i$'s SPI given the common information under a self-consistent strategy.
\begin{lemma}\label{lem: piistruebelief0}
	Let $(\lambda^i, \psi^i)$ be a self-consistent CIB strategy of coordinator $i$.
	Denote the behavioral strategy generated from $(\lambda^{i}, \psi^i)$ as $g^{i}$. Let $h_t^0\in\mathcal{H}_t^0$ be admissible under $g_{1:t-1}^{i}$, then 
	\begin{align}
	&\quad~\Pr^{g_{1:t-1}^{i}}(s_{t}^i, x_{t-d+1:t}^i|h_t^0)\\
	&=\pi_{t}^{i, i}(s_{t}^i) P_t^i(x_{t-d+1:t}^i|y_{t-d+1:t-1}^i, u_{t-d:t-1}, s_t^i) \\
	&\qquad\qquad  \forall s_{t}^i\in\mathcal{S}_{t}^i~\forall x_{t-d+1:t}^i\in \mathcal{X}_{t-d+1:t}^i
	\end{align}
	where $\pi_t^{i, i}$ is the belief obtained using $\psi^i$ under the realization $h_t^0$ of common information and $P_t^i$ is as described in Lemma \ref{lem: selfbelief}.
\end{lemma}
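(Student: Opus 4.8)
The plan is to prove this by induction on $t$. The claim has two "coupled" pieces: the marginal belief $\pi_t^{i,i}$ should equal the true conditional law $\Pr^{g_{1:t-1}^i}(s_t^i \mid h_t^0)$, and the conditional law of the hidden state $\bX_{t-d+1:t}^i$ given $(S_t^i, H_t^0)$ should coincide with the strategy-independent kernel $P_t^i$ of Lemma \ref{lem: selfbelief}. Actually the second piece is immediate: by Lemma \ref{lem: selfbelief} (and the remark that it applies for $d=1$ too), for the behavioral strategy $g^i$ generated from $(\lambda^i,\psi^i)$ we have $\Pr^{g^i}(x_{t-d+1:t}^i \mid \overline h_t^i) = P_t^i(x_{t-d+1:t}^i \mid y_{t-d+1:t-1}^i, u_{t-d:t-1}, s_t^i)$, and since $\overline H_t^i = (H_t^0,\text{private})$ refines $(S_t^i, H_t^0)$ while $P_t^i$ depends only on $(s_t^i, y_{t-d+1:t-1}^i, u_{t-d:t-1})$ which is measurable w.r.t.\ $(S_t^i,H_t^0)$, one gets $\Pr^{g_{1:t-1}^i}(x_{t-d+1:t}^i \mid s_t^i, h_t^0) = P_t^i(\cdots)$ by averaging. (One should double-check that $P_t^i$ indeed depends on $g$ only through nothing, so this averaging is legitimate regardless of $g^{-i}$; Lemma \ref{lem: selfbelief} asserts exactly this.) So the real content is the first piece: $\pi_t^{i,i}(s_t^i) = \Pr^{g_{1:t-1}^i}(s_t^i \mid h_t^0)$ for all admissible $h_t^0$.

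For the base case $t=1$: $H_1^0$ is degenerate, and $\Pi_1^{i,i}$ is by construction the prior on $S_1^i = (\bX_{-(d-1)}^i, \bm\Phi_1^i)$, which assigns probability one to $(0,\hat\phi_1^i)$; this matches the true distribution since $\bX_{-(d-1)}^i = 0$ and $\bm\Phi_1^i = \hat\phi_1^i$ deterministically by Remark \ref{remark: initial}. For the inductive step, assume the identity holds at time $t$. Fix an admissible $h_{t+1}^0 = (h_t^0, y_t, u_t)$. On one hand, $\pi_{t+1}^{i,i} = \psi_t^{i,i}(b_t^i, y_t, u_t)$ where $b_t^i$ is built from $h_t^0$ via $\psi^i$; since the strategy is self-consistent, $\psi_t^{i,i}$ is given by the explicit Bayes-rule formula \eqref{beliefupdate}, with the ingredients $\lambda_t^i(\cdot \mid b_t^i, \tilde s_t^i)$ (the prescription law), $P_t^i$ (the hidden-state kernel), and $\pi_t^i := \pi_t^{i,i}$ (the prior on $S_t^i$ encoded in $b_t^i$). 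On the other hand, I will compute $\Pr^{g_{1:t}^i}(s_{t+1}^i \mid h_{t+1}^0)$ directly from the dynamics: write it as proportional to $\Pr^{g_{1:t}^i}(s_{t+1}^i, y_t^i, u_t^i \mid h_t^0)$ (using that $y_t = (y_t^k)_k$ and $u_t = (u_t^k)_k$, only the $i$-components of which involve coordinator $i$'s randomness — here Lemma \ref{lem: condindep}'s conditional independence of teams given $H_t^0$ lets me factor out the $-i$ parts), then expand over $(\tilde s_t^i, \tilde x_{t-d+1:t}^i, \tilde\gamma_t^i)$ using: the inductive hypothesis for $\Pr^{g_{1:t-1}^i}(\tilde s_t^i \mid h_t^0) = \pi_t^{i,i}(\tilde s_t^i)$; Lemma \ref{lem: selfbelief}/the first piece for the conditional law of $\tilde x_{t-d+1:t}^i$; the strategy $g_t^i$ (which, being generated from $(\lambda^i,\psi^i)$, assigns $\bm\Gamma_t^i$ the law $\lambda_t^i(b_t^i, \tilde s_t^i)$); the observation kernel $\Pr(y_t^i \mid \tilde x_t^i, u_t)$; the constraint $\tilde\gamma_t^i(\tilde x_{t-d+1:t}^i) = u_t^i$ forcing $U_t^i = u_t^i$; and the deterministic SPI-update \eqref{eq: spiupdate} $s_{t+1}^i = \iota_t^i(\tilde s_t^i, \tilde x_{t-d+1}^i, \tilde\gamma_t^i)$. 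This reproduces exactly the expression $\Upsilon_t^i(b_t^i, y_t^i, u_t, s_{t+1}^i)$ in Definition \ref{def: consistency}, and after normalization it equals $[\psi_t^{i,i}(b_t^i,y_t,u_t)](s_{t+1}^i) = \pi_{t+1}^{i,i}(s_{t+1}^i)$, closing the induction.

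The main obstacle — more bookkeeping than genuine difficulty — is carefully justifying the decomposition $\Pr^{g_{1:t}^i}(s_{t+1}^i \mid h_{t+1}^0) \propto \sum \Pr(y_t^i \mid \tilde x_t^i, u_t)\, \bm 1_{\{s_{t+1}^i = \iota_t^i(\cdots)\}}\, \lambda_t^i(\tilde\gamma_t^i \mid b_t^i,\tilde s_t^i)\, P_t^i(\cdots)\, \pi_t^{i,i}(\tilde s_t^i)$: one must verify that conditioning on $h_{t+1}^0$ vs.\ on $h_t^0$ introduces exactly the likelihood factor for $(Y_t^i, U_t^i)$ and nothing else (the other teams' contributions to $(Y_t,U_t)$ cancel between numerator and denominator by Lemma \ref{lem: condindep}), and that admissibility of $h_{t+1}^0$ under $g_{1:t}^i$ guarantees the relevant denominators are nonzero so Bayes' rule legitimately applies at every step. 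A secondary subtlety is confirming that the "prior" slot $\pi_t^l$ that $\psi_t^{i,i}$ reads off from $b_t^i$ is indeed $\pi_t^{i,i}$ (the $l=i$ component), which is exactly what the inductive hypothesis controls; the other components $\pi_t^{i,l}$, $l \neq i$, play no role in the self-consistency formula for $\psi_t^{i,i}$ and so need not be tracked here.
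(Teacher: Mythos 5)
Your proposal is correct and follows essentially the same route as the paper's proof: induction on $t$ with the degenerate prior as the base case, an inductive step that uses Lemma \ref{lem: condindep} to cancel the $-i$ factors so that the Bayes-rule update of $\Pr^{g_{1:t}^{i}}(s_{t+1}^i\mid h_{t+1}^0)$ reduces exactly to the expression $\Upsilon_t^i$ of Definition \ref{def: consistency}, and Lemma \ref{lem: selfbelief} plus smoothing to attach the $P_t^i$ factor. The only notable difference is that the paper packages the inductive step as a slightly stronger standalone statement (Lemma \ref{lem: piistruebelief}, which allows an arbitrary strategy before time $t$, reused later in Proposition \ref{prop: pbe}) and explicitly treats the degenerate case $\Pr(s_{t+1}^i\mid h_{t+1}^0)=0$ when forming the conditional of the hidden state, a bookkeeping point you gloss over but which is harmless since both sides of the claimed identity vanish there.
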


\begin{proof}
	See Appendix \ref{app: piistruebelief0}.
\end{proof}


Now, consider a game with two coordinators: Suppose that coordinator 1 plays a self-consistent CIB strategy with belief generation system $\psi^1$. Since the belief $\Pi_t^{1, 1}$ generated from $\psi^1$ is a true conditional distribution on coordinator 1's SPI, coordinator 2 can use $\Pi_t^{1, 1}$ as her belief on coordinator 1's SPI. Further, coordinator 2 can use $\psi^1$ to compute coordinator 1's belief about coordinator 2's SPI. This suggests that coordinator 2 should mimic coordinator 1's belief generation system when coordinator 1's strategy is self-consistent. This observation, along with results from Markov decision theory, lead to the following crucial best-response property of CIB strategies.

\begin{lemma}[Closedness of CIB strategies]\label{lem: closenessd1}
	Suppose that all coordinators other than coordinator $i$ are using self-consistent CIB strategies. Let $(\lambda^k, \psi^k)_{k\in \mathcal{I}\backslash\{i \}  }$ be the CIB strategy profile of coordinators other than $i$. Suppose that $\psi^j=\psi^k$ for all $j, k\in\mathcal{I}\backslash\{i\}$.	
	Then, a best-response strategy for coordinator $i$ is a CIB strategy with the same belief generation system as the other coordinators.
\end{lemma}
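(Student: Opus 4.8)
The plan is to cast coordinator $i$'s best-response problem against the fixed self-consistent CIB profile $(\lambda^k,\psi^k)_{k\neq i}$ (with $\psi^k=\psi$ for every $k\neq i$) as a finite-horizon Markov decision process whose state is a deterministic function of $\overline{H}_t^i$, and then invoke standard dynamic-programming results to extract an optimal Markov policy, which will turn out to be a CIB strategy with belief generation system $\psi$.

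First I would record two preliminary observations. Since every coordinator $k\neq i$ runs the same belief generation system $\psi$ from the same prior and observes the same common information $H_t^0$, the generated beliefs coincide: $\Pi_t^{k,l}=\Pi_t^{j,l}=:\Pi_t^l$ for all $j,k\in\mathcal{I}\setminus\{i\}$ and $l\in\mathcal{I}$, and $(\Pi_t^l)_{l\in\mathcal{I}}$ is a deterministic function of $H_t^0$ updated through $\psi_t$. Second, since coordinator $k$'s CIB strategy is self-consistent, Lemma \ref{lem: piistruebelief0} applied to coordinator $k$ gives that $\Pi_t^k=\Pi_t^{k,k}$ equals the true conditional distribution of $S_t^k$ given $H_t^0$; by Lemma \ref{lem: condindep} this conditional law (and its evolution) does not depend on coordinator $i$'s strategy. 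I would then define coordinator $i$'s candidate MDP state at time $t$ as $Z_t^i:=(B_t,S_t^i)$ with $B_t:=((\Pi_t^l)_{l\in\mathcal{I}},\bY_{t-d+1:t-1},\bU_{t-d:t-1})$; note $Z_t^i$ is computable from $\overline{H}_t^i$, and $B_t$ is precisely coordinator $i$'s compressed common information of Definition \ref{def: cci2} when her belief generation system is taken to be $\psi$.

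The core of the argument is to verify the two MDP properties, with $\bm{\Gamma}_t^i$ playing the role of the control. For reward sufficiency: conditioned on $(\overline{H}_t^i,\bm{\Gamma}_t^i)$, the vector $\bX_{t-d+1:t}^i$ is distributed as $P_t^i(\cdot\mid \bY_{t-d+1:t-1}^i,\bU_{t-d:t-1},S_t^i)$ by Lemma \ref{lem: selfbelief}; by Lemma \ref{lem: condindep} the families $(\bX_{1:t}^k,\bm{\Gamma}_{1:t}^k)_{k\neq i}$ are mutually independent and independent of coordinator $i$'s private information given $H_t^0$, so their conditional law given $(\overline{H}_t^i,\bm{\Gamma}_t^i)$ reduces to the law given $H_t^0$, under which $S_t^k\sim\Pi_t^k$, $\bX_{t-d+1:t}^k\sim P_t^k(\cdot\mid \bY_{t-d+1:t-1}^k,\bU_{t-d:t-1},S_t^k)$, and $\bm{\Gamma}_t^k\sim\lambda_t^k(B_t,S_t^k)$. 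Since $\bU_t$ is obtained by evaluating the $\bm{\Gamma}_t^l$ on the corresponding $\bX_{t-d+1:t}^l$ and $\bY_t$ is generated from $(\bX_t,\bU_t)$ through independent observation noises, the joint conditional law of $(\bX_t,\bU_t)$ given $(\overline{H}_t^i,\bm{\Gamma}_t^i)$, hence $\E[r_t^i(\bX_t,\bU_t)\mid \overline{H}_t^i,\bm{\Gamma}_t^i]$, is a function of $(Z_t^i,\bm{\Gamma}_t^i)$ alone. For the controlled Markov property: $S_{t+1}^i=\iota_t^i(S_t^i,\bX_{t-d+1}^i,\bm{\Gamma}_t^i)$ by \eqref{eq: spiupdate}, while $(\Pi_{t+1}^l)_l=\psi_t(B_t,\bY_t,\bU_t)$ and $\bY_{t-d+2:t},\bU_{t-d+1:t}$ are read off from $B_t$ and $(\bY_t,\bU_t)$, so $Z_{t+1}^i$ is a fixed function of $(Z_t^i,\bm{\Gamma}_t^i,\bY_t,\bU_t,\bX_{t-d+1}^i)$; the computation just described shows the conditional law of $(\bY_t,\bU_t,\bX_{t-d+1}^i)$ given $(\overline{H}_t^i,\bm{\Gamma}_t^i)$ depends only on $(Z_t^i,\bm{\Gamma}_t^i)$. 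Hence $(Z_t^i)_{t\in\mathcal{T}}$ is a controlled Markov chain with strategy-independent transition kernels.

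With these two properties in hand, coordinator $i$'s best-response problem is a finite-horizon MDP with state $Z_t^i$ and action $\bm{\Gamma}_t^i$; standard backward-induction arguments (as used in the proof of Lemma \ref{lem: suffprivated3}) then yield an optimal policy of the form $\lambda_t^i:\mathcal{B}_t\times\mathcal{S}_t^i\mapsto\Delta(\prescription_t^i)$, and pairing $\lambda^i=(\lambda_t^i)_t$ with the belief generation system $\psi$ produces a CIB strategy $(\lambda^i,\psi)$ for coordinator $i$ that is a best response, which is exactly the assertion. I expect the main obstacle to be the reward-sufficiency and controlled-Markov verification above, in particular making rigorous that the other coordinators' sufficient private information $S_t^k$ has conditional law exactly $\Pi_t^k$ given the common information --- this is where self-consistency via Lemma \ref{lem: piistruebelief0} and the shared belief generation system $\psi$ both enter essentially --- and in carefully confirming that neither the reward nor the transition depends on the part of $H_t^0$ that $B_t$ discards.
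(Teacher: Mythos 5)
Your proposal is correct and follows essentially the same route as the paper's proof: it establishes that the conditional law of $(\bX_{t-d+1:t},\bm{\Gamma}_t^{-i})$ given $(\overline{H}_t^i,\bm{\Gamma}_t^i)$ is a function of $(B_t,S_t^i)$ alone by combining the conditional-independence factorization (Lemma \ref{lem: condindep}), the private-belief function $P_t^i$ (Lemma \ref{lem: selfbelief}), and the fact that self-consistency makes $\Pi_t^k$ the true conditional law of $S_t^k$ (Lemma \ref{lem: piistruebelief0}), then views $(B_t,S_t^i)$ as the state of a controlled Markov decision process and invokes standard MDP theory. The only point the paper spells out slightly more explicitly is the smoothing step passing from conditioning on $(\overline{H}_t^i,\bm{\Gamma}_t^i)$ to conditioning on the MDP history $(B_{1:t},S_{1:t}^i,\bm{\Gamma}_{1:t}^i)$, which your argument implicitly contains.
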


\begin{proof}
	See Appendix \ref{app: closenessd1}.
\end{proof}

\subsection{Interpretation and Discussion of Consistency and Closedness Property}
Lemma \ref{lem: closenessd1} imposes two conditions on the CIB strategies of coordinators other than $i$, namely (I) they are self-consistent, and (II) their belief generation systems are identical. In order to illustrate the significance of both conditions, we first describe how coordinator $i$ could form her best response when all coordinators other than $i$ are playing some generic CIB strategies that are not necessarily self-consistent or having identical belief generation system.

The problem of finding coordinator $i$'s best response to others' CIB strategies can be thought of as a stochastic control problem with partial observation. This suggest that
in order to form a best response at time $t$, coordinator $i$ needs to compute (or form beliefs on) the data that coordinators $-i$'s CIB strategies use, i.e. the CCI and the SPI of other coordinators. Coordinator $i$ also needs to estimate all the hidden information in order to evaluate the payoffs. Coordinator $i$'s estimation task can be divided into three sub-tasks: (i) to form a belief on her own hidden information $\bX_{t-d+1:t}^i$, (ii) to recover coordinators $-i$'s CCI $(B_t^{k})_{k\in\mathcal{I}\backslash\{i\} }$, and (iii) to form a belief on coordinators $-i$'s SPI and hidden information $\bX_{t-d+1:t}^{-i}$. 

For the first sub-task, coordinator $i$ can compute the belief using $(Y_{t-d+1:t-1}^{i}, \bU_{t-d:t-1}, S_t^i)$ through the function $P_t^i$ defined in Lemma \ref{lem: selfbelief}, without using any belief generation system.
For the second sub-task, recall that $B_t^k$ includes $(Y_{t-d+1:t-1}^{i}, \bU_{t-d:t-1})$, which coordinator $i$ already knows. Thus, to complete the second task, coordinator $i$ can simply use $(\psi^{k})_{k\in\mathcal{I}\backslash\{i\} }$ and the common information $H_t^0$ to compute all the beliefs in $(B_t^{k})_{k\in\mathcal{I}\backslash\{i\} }$. Condition (I), namely that the CIB strategies for coordinators other than $i$ are self-consistent, ensures that coordinator $i$ can also accomplish the third sub-task using the beliefs in $(B_t^{k})_{k\in\mathcal{I}\backslash\{i\} }$ due to Lemma \ref{lem: piistruebelief0}. By using self-consistent CIB strategies, coordinators $-i$ effectively ``invite'' coordinator $i$ to use the same belief generation system as $-i$.

Thus, all of coordinator $i$'s sub-tasks can be done if she keeps track of her own $S_t^i$ and the CCI $(B_t^{k})_{k\in\mathcal{I}\backslash\{i\} }$ used by others. Therefore, coordinator $i$ can form a best response with a strategy that chooses prescriptions based on $(B_t^{k})_{k\in\mathcal{I}\backslash\{i\} }$ and $S_t^i$ at time $t$. Condition (II), namely that the belief generation systems are identical, ensures that $B_t^k$'s are identical for all $k\in\mathcal{I}\backslash\{i\}$ and hence the best response described above becomes a CIB strategy with the same belief generation system as the one used by all coordinators other than $i$.

\begin{remark}
Note the CIB strategy that is a best-response strategy for coordinator $i$ in Lemma \ref{lem: closenessd1} may not necessarily be self-consistent. However, the equilibrium strategies in a CIB-CNE (which we will introduce later) will be self-consistent for all players.
\end{remark}


\subsection{Coordinators' Nash Equilibrium in CIB Strategies and Sequential Decomposition}

The fact that one of coordinator $i$'s best responses to others using CIB strategies (with identical and self-consistent belief generation systems) is itself a CIB strategy (with the same belief generation system as others) suggests the possibility of a Coordinators' Nash Equilibrium (CNE) where all coordinators are using CIB strategies with identical and self-consistent belief generation systems. We refer to such a CNE as a CIB-CNE. More formally, a CIB-CNE is a CIB strategy profile $(\lambda^{*i}, \psi^i)_{i\in\mathcal{I}}$ where (i) all coordinators have the same belief generation system, i.e., for all for all $i\in\mathcal{I}$, $\psi^i=\psi^*$ for some $\psi^*$, (ii) for each $k\in\mathcal{I}$, $\psi^{*, k}$ is consistent with $\lambda^k$, and (iii) for each $i\in\mathcal{I}$, the CIB strategy $(\lambda^{*i}, \psi^i)$ is a best response for coordinator $i$ to $(\lambda^{*k}, \psi^k)_{k\in\mathcal{I}\backslash\{i\}  }$.

Notice that in a CIB-CNE all coordinators are using the same belief generation system, hence the CCI $B_t^i$ (as defined in Definition \ref{def: cci2}) is the same for all coordinators. We denote the identical $B_t^i$ for all coordinators by $B_t$. Furthermore, when all coordinators other than $i$ are using fixed CIB strategies, $(B_t, S_{t}^i)$ can be viewed as an information state for coordinator $i$'s stochastic control problem (see proof of Lemma \ref{lem: closenessd1} for details). Based on this observation, we introduce a backward inductive computation procedure for determining CIB-CNEs where $B_t$ is used as an information state. Our procedure decomposes the game into a collection of one-stage games, one for each time $t$ and each realization of $B_t$. These one-stage games are used to characterize a CIB-CNE in a backward inductive manner. 

\begin{defn}[Stage Game]\label{def: stagegamed2}
	Given the value functions $V_{t+1} = (V_{t+1}^i)_{i\in\mathcal{I}}$, where $V_{t+1}^i: \mathcal{B}_{t+1}\times \mathcal{S}_{t+1}^i \mapsto \mathbb{R}$, a realization of the CCI $b_t=(\bm{\pi}_t, y_{t-d+1:t-1}, u_{t-d:t-1})$ where $\bm{\pi}_t=(\pi_t^i)_{i\in\mathcal{I}}, \pi_t^i\in\Delta(\mathcal{S}_t^i)$, and update functions $\psi_t^*=(\psi_t^{*, i})_{i\in\mathcal{I}}, \psi_t^{*, i}: \mathcal{B}_t\times \mathcal{Y}_t\times \mathcal{U}_t\mapsto \Delta(\mathcal{S}_{t+1}^i)$, we define a stage game for the coordinators dynamic game as follows:
	
	\textbf{Stage Game} $G_t(V_{t+1}, b_t, \psi_t^*)$:
	\begin{itemize}
		\item There are $|\mathcal{I}|$ players, each representing a coordinator.
		\item $(V_{t+1}, b_t, \psi_t^*)$ are commonly known.
		\item Nature chooses $\mathbf{Z}_t=(\mathbf{S}_t, \bX_{t-d+1:t}, \mathbf{W}_t^Y)$\footnote{Since $\mathcal{X}_t,\mathcal{U}_t,\mathcal{Y}_t$ are finite sets, one can assume that $\mathbf{W}_t^Y$ also takes finite values without lost of generality.}, where $\mathbf{S}_t= (S_t^k)_{k\in\mathcal{I}}$.
		\item Player $i$ observes $S_{t}^i=s_{t}^i$.
		\item Player $i$'s belief on $\mathbf{Z}_t$ is given by
		\begin{align}
		&\quad\beta_t^i(\tilde{z}_t|s_{t}^i) = \bm{1}_{\{\tilde{s}_t^i = s_t^i \}  } \prod_{k\neq i} \pi_t^k(\tilde{s}_{t}^k)\times \\
		&\times \prod_{k\in \mathcal{I}} P_t^k(\tilde{x}_{t-d+1:t}^k| y_{t-d+1:t-1}^k, u_{t-d:t-1}, \tilde{s}_{t}^k)\Pr(\tilde{w}_t^{k, Y}),\\
		&\quad \forall \tilde{z}_{t} = (\tilde{s}_t, \tilde{x}_{t-d+1:t}, \tilde{w}_t^Y)\in\mathcal{S}_{t}\times \mathcal{X}_{t-d+1:t}\times \mathcal{W}_t^Y.
		\label{beliefstagegame2}
		\end{align}
		where $P_t^k$ is the belief function defined in Eq. \eqref{eq: belieffunction}.
		\item Player $i$ selects a prescription $\bm{\Gamma}_t^i\in\prescription_t^i$ as her action.
		\item Player $i$ has utility
		\begin{align}\label{Qfunc2}
		Q_t^i(\mathbf{Z}_t, \bm{\Gamma}_{t}) = r_t^i(\bX_t, \bU_t) + V_{t+1}^i(B_{t+1}, S_{t+1}^i),
		\end{align}
		where
		\begin{align*}
		U_t^{k, j} &= \Gamma_t^{k, j}(\bX_{t}^{k, j})\quad\forall (k, j)\in\mathcal{N},\\
		B_{t+1} &= ((\Pi_{t+1}^k)_{k\in\mathcal{I}}, (y_{t-d+2:t-1}, \bY_{t}), (u_{t-d+1:t-1}, \bU_{t})),\\
		\Pi_{t+1}^k &= \psi_t^{*, k}(b_t, \bY_t, \bU_t)\quad\forall k\in\mathcal{I},\\
		Y_t^k &= \ell_t^j(\bX_t^k, \bU_t, W_t^{k, Y})\quad\forall k\in \mathcal{I},\\
		S_{t+1}^i &= \iota_t^i(S_t^i, \bX_{t-d+1}^i, \bm{\Gamma}_t^i)
		\end{align*}
	\end{itemize}
\end{defn}

Given the stage game $G_t(V_{t+1}, b_t, \psi_t^*)$, we define two associated concepts:

\begin{defn}[IBNE Correspondence]\label{def: ibe}
	Given the value functions $V_{t+1} = (V_{t+1}^i)_{i\in\mathcal{I}}$, where $V_{t+1}^i: \mathcal{B}_{t+1}\times \mathcal{S}_{t+1}^i \mapsto \mathbb{R}$ and belief update functions $\psi_t^*=(\psi_t^{*, i})_{i\in\mathcal{I}}, \psi_t^{*, i}: \mathcal{B}_t\times \mathcal{Y}_t\times \mathcal{U}_t\mapsto \Delta(\mathcal{S}_{t+1}^i)$, the \emph{Interim Bayesian Nash Equilibrium correspondence} $\mathrm{IBNE}_t(V_{t+1}, \psi_t^*)$ is defined as the set of all $\lambda_t=(\lambda_t^i)_{i\in\mathcal{I}}, \lambda_t^i:\mathcal{B}_t\times \mathcal{S}_{t}^i \mapsto \Delta(\prescription_{t}^i)$ such that
	\begin{align}
	&\lambda_t^{i}(b_t, s_{t}^i) \in \\
	&\underset{\eta\in \Delta(\prescription_{t}^i)}{\arg\max} \left(\sum_{\tilde{z}_t, \tilde{\gamma}_t} \left[\eta(\tilde{\gamma}_{t}^i) Q_t^i(\tilde{z}_t, \tilde{\gamma}_t)  \beta_t^i(\tilde{z}_t|s_{t}^i)\prod_{k\neq i}\lambda_t^{k}(\tilde{\gamma}_t^k|b_t, \tilde{s}_{t}^k)\right]\right)\\
	&\qquad\forall b_t \in\mathcal{B}_t, s_{t}^i\in\mathcal{S}_{t}^i, \forall i\in\mathcal{I},
	\end{align}
	where $\beta_t^i$ and $Q_t^i$ are defined using $(V_{t+1}^i ,b_t, \psi_t^*)$ in \eqref{beliefstagegame2} and \eqref{Qfunc2} respectively.
\end{defn}

\begin{defn}[DP Operator]\label{def: dpo}
	Given a value function $V_{t+1}^i: \mathcal{B}_{t+1}\times \mathcal{S}_{t+1}^i \mapsto \mathbb{R}$ and a CIB strategy profile $(\lambda_t^*, \psi_t^*)$ at time $t$, where $\lambda_t^*=(\lambda_{t}^{*i})_{i\in\mathcal{I}}, \lambda_{t}^{*i}:\mathcal{B}_t\times \mathcal{S}_{t}^i \mapsto \Delta(\prescription_{t}^i)$ and  $\psi_t^*=(\psi_t^{*, i})_{i\in\mathcal{I}}, \psi_t^{*, i}:\mathcal{B}_t\times \mathcal{Y}_t\times \mathcal{U}_t\mapsto \Delta(\mathcal{S}_{t+1}^i)$, the \emph{dynamic programming operator} $\mathrm{DP}_t^i$ defines the value function at time $t$ through
	\begin{align*}
	&~\quad [\mathrm{DP}_t^i(V_{t+1}^i, \lambda_t^*, \psi_t^*)](b_t, s_{t}^i)\\
	&:=\sum_{\tilde{z}_t, \tilde{\gamma}_t} Q_t^i(\tilde{z}_t, \tilde{\gamma}_t) \beta_t^i(\tilde{z}_t|s_{t}^i) \prod_{k\in \mathcal{I}}\lambda_t^{*k}(\tilde{\gamma}_t^k|b_t, \tilde{s}_{t}^k),
	\end{align*}
	where $\beta_t^i$ and $Q_t^i$ are defined using $(V_{t+1}^i ,b_t, \psi_t^*)$ in \eqref{beliefstagegame2} and \eqref{Qfunc2} respectively.
	
\end{defn}

\begin{thm}[Sequential Decomposition]\label{thm: sd1}
	Let $(\lambda^{*i}, \psi^{*})_{i\in\mathcal{I}}$ be a CIB strategy profile with identical belief generation system $\psi^*$ for all $i\in\mathcal{I}$.
	If this strategy profile satisfies the dynamic program defined below:
	\begin{align*}
	V_{T+1}^i(\cdot, \cdot) &= 0\quad \forall i\in\mathcal{I};
	\end{align*}
	and for $t\in\mathcal{T}$
	\begin{align}
	&\lambda_t^*\in \mathrm{IBNE}_t(V_{t+1}, \psi_t^*);\label{thomyorke}\\
	&\psi_t^*\text{ is consistent with }\lambda_t^*;\label{liannelahavas}\\
	&V_t^i := \mathrm{DP}_t^i(V_{t+1}^i, \lambda_t^*, \psi_t^*)\quad\forall i\in\mathcal{I},
	\end{align}
	then $(\lambda^{*i}, \psi^*)_{i\in\mathcal{I}}$ forms a CIB-CNE.
\end{thm}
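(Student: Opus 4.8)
The plan is to verify the three defining requirements of a CIB-CNE in turn. Two of them are free: all coordinators share the belief generation system $\psi^*$ by construction, and \eqref{liannelahavas} says precisely that $\psi^{*,k}$ is consistent with $\lambda^{*k}$ for every $k$. So the real work is the best-response property: for each $i$, the CIB strategy $(\lambda^{*i}, \psi^*)$ is a best response to $(\lambda^{*k}, \psi^*)_{k\neq i}$ \emph{among all} behavioral coordination strategies of coordinator $i$.

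Fix $i$ and hold $(\lambda^{*k}, \psi^*)_{k\neq i}$ fixed. Since each $\lambda^{*k}$ is self-consistent (by \eqref{liannelahavas}) and the belief generation systems coincide, Lemma~\ref{lem: closenessd1} applies, and it reduces the problem to showing that $\lambda^{*i}$ is optimal within the class of CIB strategies $(\lambda^i, \psi^*)$, i.e. within coordinator $i$'s reduced decision problem in which $(B_t, S_t^i)$ plays the role of an information state. First I would record, reusing the reasoning behind Lemma~\ref{lem: closenessd1}, that $(B_t, S_t^i)_{t\in\mathcal{T}}$ is a controlled Markov chain under coordinator $i$'s prescriptions with running reward $r_t^i(\bX_t, \bU_t)$: this uses Lemma~\ref{lem: condindep} (the $S_t^k$ are conditionally independent across $k$ given $H_t^0$ and depend on the strategy only through $g^k$), Lemma~\ref{lem: selfbelief} (the conditional law of the hidden states $\bX_{t-d+1:t}^k$ is the fixed function $P_t^k$), and Lemma~\ref{lem: piistruebelief0} together with self-consistency of $(\lambda^{*k}, \psi^*)$ for $k\neq i$ (so that the generated belief $\Pi_t^{k}$ is the true conditional distribution of $S_t^k$ given the common information). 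These facts identify $\beta_t^i(\cdot\mid s_t^i)$ from Definition~\ref{def: stagegamed2} as the genuine conditional distribution of $\mathbf{Z}_t=(\mathbf{S}_t, \bX_{t-d+1:t}, \mathbf{W}_t^Y)$ given $(B_t, S_t^i)=(b_t, s_t^i)$, for every reachable $(b_t, s_t^i)$ and irrespective of coordinator $i$'s own past prescriptions; hence the transition kernel and one-stage reward of coordinator $i$'s MDP are exactly those encoded in the stage game $G_t(V_{t+1}, b_t, \psi_t^*)$, with $Q_t^i$ the running reward plus terminal value.

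Next I would run a backward induction on $t$ from $T+1$ down to $1$, with the hypothesis that $V_{t+1}^i$ is the optimal value-to-go of coordinator $i$'s decision problem from time $t+1$ on (against $(\lambda^{*k},\psi^*)_{k\neq i}$), attained by the tail policy $(\lambda_s^{*i})_{s\geq t+1}$; the base case $V_{T+1}^i\equiv 0$ is trivial. For the step: for fixed $\eta\in\Delta(\prescription_t^i)$ the expression $\sum_{\tilde z_t,\tilde\gamma_t}\eta(\tilde\gamma_t^i) Q_t^i(\tilde z_t,\tilde\gamma_t)\beta_t^i(\tilde z_t\mid s_t^i)\prod_{k\neq i}\lambda_t^{*k}(\tilde\gamma_t^k\mid b_t,\tilde s_t^k)$ is, by the previous paragraph and the induction hypothesis, the value of playing $\eta$ at time $t$ and then the optimal tail; so its maximum over $\eta$ is the Bellman backup and equals the optimal value-to-go at $t$. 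The IBNE condition \eqref{thomyorke} says $\lambda_t^{*i}(b_t,s_t^i)$ attains that maximum for \emph{every} $(b_t,s_t^i)$, and $V_t^i:=\mathrm{DP}_t^i(V_{t+1}^i,\lambda_t^*,\psi_t^*)$ is exactly the value of that maximizer, so $V_t^i$ is the optimal value-to-go at $t$ and $(\lambda_s^{*i})_{s\geq t}$ attains it, closing the induction. Because \eqref{thomyorke} demands optimality at every $(b_t,s_t^i)$ — including off-equilibrium ones — the policy stays optimal after any deviation by coordinator $i$, which is what is needed. Taking $t=1$ and evaluating at the (degenerate) prior gives $J^i(\lambda^{*i},\psi^*;\lambda^{*-i},\psi^*)\geq J^i(\tilde g^i;\lambda^{*-i},\psi^*)$ for every behavioral strategy $\tilde g^i$, where Lemma~\ref{lem: closenessd1} is invoked once more to pass from CIB strategies to arbitrary ones. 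Since $i$ was arbitrary, the profile $(\lambda^{*i},\psi^*)_{i\in\mathcal{I}}$ meets all three conditions and is a CIB-CNE.

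\textbf{Main obstacle.} The delicate point is the second paragraph: showing that $(B_t, S_t^i)$ is a legitimate information state and that $\beta_t^i$ is the correct posterior along histories generated by an \emph{arbitrary} (possibly non-CIB, possibly deviating) strategy of coordinator $i$. This needs care with admissibility and the scope of Bayes' rule for the indices $k\neq i$, and rests essentially on the conditional-independence structure of Lemma~\ref{lem: condindep} to decouple coordinator $i$'s influence from the posteriors on the other coordinators' variables. Since this is the same machinery that drives the proof of Lemma~\ref{lem: closenessd1}, in the write-up I would factor the shared argument through that lemma rather than reprove it here.
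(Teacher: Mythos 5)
Your proposal is correct and follows essentially the same route as the paper's proof: it reduces the best-response check to the class of CIB strategies via Lemma~\ref{lem: closenessd1}, identifies $\beta_t^i$ as the true conditional distribution using Lemmas~\ref{lem: selfbelief} and~\ref{lem: piistruebelief0}, recognizes $(B_t, S_t^i)$ as the state of coordinator $i$'s MDP, and reads \eqref{thomyorke} and the DP operator as the Bellman optimality conditions before closing with the closedness property. The only cosmetic difference is that you spell out the backward induction explicitly where the paper appeals to standard MDP theory.
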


\begin{proof}
	See Appendix \ref{app: sd1}.
\end{proof}

\begin{remark}
	Note that \eqref{thomyorke} and \eqref{liannelahavas} can be verified for each realization $b_t\in\mathcal{B}_t$ separately, i.e., one can check that $\lambda_t^*(b_t, \cdot)$ is an IBNE of the stage game game $G_t(V_{t+1}, b_t, \psi_t^*(b_t, \cdot))$, and that $\psi_t^*(b_t, \cdot)$ is consistent with $\lambda_t^*(b_t, \cdot)$ for each $b_t$. 
\end{remark}

\subsection{Existence of CIB-CNE}\label{sec: existence}
We have shown in Theorem \ref{thm: spibexist} that an SPIB-CNE always exists. However, a CIB-CNE does not necessarily exist, even when each team contains only one member (i.e. in games among individuals). We present below one example where CIB-CNEs do not exist. 

\begin{example}\label{ex: nonexistence}
	Consider a 3-stage dynamic game (i.e. $\mathcal{T} = \{1, 2, 3\}$) with two players: Alice (A) and Bob (B). Each player forms a one-person team. Let $X_t^A\in \{-1, 1\}$ and $X_t^B\equiv \varnothing$, i.e. Bob is not associated with a state. Let $\mathbf{Y}_t=\varnothing$, i.e. there is no public observation of the states. The initial state $X_1^A$ is uniformly distributed on $\{-1, 1\}$. At $t=1$, (a) Alice can choose an action $U_1^A\in \{-1, 1\}$ and Bob has no actions to take; (b) the next state is given by $X_2^A=X_1^A\cdot U_1^A$; (c) the instantaneous reward is given by
	\begin{equation}
		r_1^A(\bX_1, \bU_1) = -r_1^B(\bX_1, \bU_1) = \varepsilon \cdot \bm{1}_{\{U_1^A = +1 \} },
	\end{equation}
	where $\varepsilon\in (0, \frac{1}{3})$.
	
	At $t=2$, (a) neither player has any action to take; (b) the state at next time is given by $X_3^A=X_2^A$; (c) the instantaneous rewards are 0 for both players; (This stage is a dummy stage inserted in the game to alter the definition of the CCI at the beginning of the last stage.) 
	
	At $t=3$, (a) Alice has no action to take, and Bob chooses $U_3^B\in \{\mathrm{L}, \mathrm{R} \}$; (b) The instantaneous reward $r_3^A(\bX_3, \bU_3)$ for Alice is given by
	\begin{align*}
	r_3^A(-1, \mathrm{L}) &= 0,\quad r_3^A(-1, \mathrm{R}) = 1\\
	r_3^A(+1, \mathrm{L}) &= 2,\quad
	r_3^A(+1, \mathrm{R}) = 0
	\end{align*}
	and $r_3^B(\bX_3, \bU_3) = - r_3^A(\bX_3, \bU_3)$.
\end{example}

In a game where each team contains only one person, we can assume the delay $d$ to be any number (see Remark \ref{remark: onepersonteam}).	
In the next proposition, we view Example \ref{ex: nonexistence} as a game among teams with internal delay $d=1$. 
	
	\begin{prop}\label{prop: nonexistenceexample}
		There exist no CIB-CNE in the game described in Example \ref{ex: nonexistence}.
	\end{prop}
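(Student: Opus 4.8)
\emph{Step 1: Reduction.} The plan is to use the small size of the game to turn the equilibrium conditions into a finite fixed-point problem, and then to eliminate every candidate by an exhaustive case analysis. Since $d=1$, Alice's coordinator has sufficient private information $S_t^A = X_{t-1}^A$ (the PRP $\bm{\Phi}_t^A$ is empty) and Bob's coordinator has trivial private information. A prescription for Alice's coordinator at $t=1$ is a map $\{-1,1\}\to\{-1,1\}$, i.e.\ one of the four pure prescriptions $\mathbf{id},\mathbf{ng},\mathbf{cp}_1,\mathbf{cp}_{-1}$ (identity, negation, constant $1$, constant $-1$, in the notation of Example~\ref{ex: 2}); no coordinator acts at $t=2$; and a prescription for Bob's coordinator at $t=3$ is just a distribution on $\{\mathrm{L},\mathrm{R}\}$. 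The crucial structural observation is that, because of the dummy stage at $t=2$ and $d=1$, the only non-trivial component of the CCI $B_3$ at $t=3$ is the belief $\Pi_3^A\in\Delta(\{-1,1\})$ on $X_3^A=X_2^A$; and in a CIB-CNE, where $\psi^*$ is consistent with Alice's strategy, this belief is a fixed function of $U_1^A$ (the only non-trivial part of the common information). Hence in any CIB-CNE Bob's randomized action at $t=3$ is a fixed function of $U_1^A$; write $\beta_u := \Pr(U_3^B=\mathrm{L}\mid U_1^A=u)$ for $u\in\{-1,1\}$. This function is fixed and does not change when Alice contemplates a deviation, which makes deviations trivial to evaluate.

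\emph{Step 2: Payoffs and beliefs.} From the dynamics $X_2^A = X_1^A U_1^A$ and the reward functions, Alice's total expected payoff from playing the pure prescription $\gamma$ at $t=1$ against Bob's response $(\beta_1,\beta_{-1})$ is
\begin{align*}
V_{\mathbf{id}} &= \tfrac{\varepsilon}{2} + \beta_1 + \beta_{-1}, & V_{\mathbf{ng}} &= 1 + \tfrac{\varepsilon}{2} - \tfrac{\beta_1+\beta_{-1}}{2},\\
V_{\mathbf{cp}_1} &= \varepsilon + \tfrac12 + \tfrac{\beta_1}{2}, & V_{\mathbf{cp}_{-1}} &= \tfrac12 + \tfrac{\beta_{-1}}{2},
\end{align*}
and in equilibrium Alice randomizes only over $\arg\max_\gamma V_\gamma$. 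On Bob's side, his best response to a belief putting probability $\pi$ on $X_3^A=+1$ is $\mathrm{L}$ if $\pi<\tfrac13$ and $\mathrm{R}$ if $\pi>\tfrac13$; by consistency $\pi=p_u:=\Pr^{\lambda^{*A}}(X_2^A=+1\mid U_1^A=u)$, and if Alice mixes over $(\mathbf{id},\mathbf{ng},\mathbf{cp}_1,\mathbf{cp}_{-1})$ with weights $(a,b,c,e)$ then $p_1=\tfrac{a+c}{a+b+2c}$ and $p_{-1}=\tfrac{a+e}{a+b+2e}$ (whenever the corresponding value of $U_1^A$ has positive probability). In particular $\beta_u\in\{0,1\}$ unless $p_u=\tfrac13$, and $p_1=p_{-1}$ iff $a=b$ or $c=e$. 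Two identities keep the analysis short: $V_{\mathbf{cp}_1}-V_{\mathbf{cp}_{-1}}=\varepsilon+\tfrac{\beta_1-\beta_{-1}}{2}$ (so $\mathbf{cp}_{-1}$ is strictly dominated by $\mathbf{cp}_1$ whenever $\beta_1\ge\beta_{-1}$), and $V_{\mathbf{id}}+V_{\mathbf{ng}}=V_{\mathbf{cp}_1}+V_{\mathbf{cp}_{-1}}$.

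\emph{Step 3: Eliminating all cases.} Assuming a CIB-CNE exists, I would derive a contradiction through the following exhaustive case split. (i) If Alice plays $\mathbf{cp}_1$ with probability $1$: then $U_1^A\equiv+1$, so $p_1=\tfrac12$, $\beta_1=0$, and $U_1^A=-1$ is off path with a free off-path response $\beta_{-1}^{\mathrm{off}}\in[0,1]$; preventing a deviation to $\mathbf{ng}$ forces $\beta_{-1}^{\mathrm{off}}\ge 1-\varepsilon$ while preventing a deviation to $\mathbf{id}$ forces $\beta_{-1}^{\mathrm{off}}\le\tfrac{1+\varepsilon}{2}$, which is impossible since $\varepsilon<\tfrac13$. (ii) If Alice plays $\mathbf{cp}_{-1}$ with probability $1$: deviating to $\mathbf{cp}_1$ strictly improves her payoff for any off-path response. (iii) Otherwise both values of $U_1^A$ occur on path, so $p_1,p_{-1}$ are well defined. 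If $p_1\ne\tfrac13$ and $p_{-1}\ne\tfrac13$, then $(\beta_1,\beta_{-1})$ is one of the four corners of $[0,1]^2$; at each corner $\arg\max_\gamma V_\gamma$ is a single pure prescription, and Bob's best response to that prescription is a different corner, so there is no fixed point. (iv) If exactly one of $p_1,p_{-1}$ equals $\tfrac13$, say $p_1=\tfrac13$: working through the sub-cases $p_{-1}<\tfrac13$ and $p_{-1}>\tfrac13$, the set $\arg\max_\gamma V_\gamma$ is either a singleton (forcing Alice to play a pure prescription, contradicting $p_1=\tfrac13$) or the pair $\{\mathbf{ng},\mathbf{cp}_1\}$ (which can only arise when $p_{-1}>\tfrac13$); but any mixture over $\{\mathbf{ng},\mathbf{cp}_1\}$ with $p_1=\tfrac13$ forces $p_{-1}=0$, contradicting $p_{-1}>\tfrac13$. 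The case $p_{-1}=\tfrac13$ is handled the same way, with the roles of $\mathbf{cp}_1$ and $\mathbf{cp}_{-1}$ suitably adjusted. (v) If $p_1=p_{-1}=\tfrac13$, then $\beta_1=\beta_{-1}=:\beta$ and $\arg\max_\gamma V_\gamma$ is a singleton or one of $\{\mathbf{ng},\mathbf{cp}_1\},\{\mathbf{id},\mathbf{cp}_1\}$; in each case, requiring the resulting mixture to satisfy $p_1=p_{-1}=\tfrac13$ forces one of $p_1,p_{-1}$ to be $0$ or $1$, a contradiction. These cases are exhaustive, so no CIB-CNE exists.

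\emph{Main obstacle.} There is no single slick argument: the proof is an exhaustive case analysis of Alice's best-response correspondence as a function of $(\beta_1,\beta_{-1})$, cross-referenced with the consistency constraints that tie $(\beta_1,\beta_{-1})$ back to Alice's mixing weights. The tedium is controlled by three things: the Step~1 reduction (Bob's $t=3$ play factors through $U_1^A$), the closed-form payoffs and dominance/affine identities of Step~2 (which kill most candidate argmax sets), and the fact that $\varepsilon<\tfrac13$ is precisely the condition making the relevant thresholds ($1-\varepsilon$ versus $\tfrac{1+\varepsilon}{2}$, and $\tfrac{1-\varepsilon}{3}$ versus $\tfrac{1+\varepsilon}{3}$) incompatible with a fixed point. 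The one place where care is needed is the off-path book-keeping in case~(i) (one must check that Bob's strategy, being a map of the compressed belief $\Pi_3^A$ rather than of the raw action $U_1^A$, genuinely cannot simultaneously deter the $\mathbf{id}$- and $\mathbf{ng}$-deviations) — this is exactly where the dummy stage at $t=2$ is used.
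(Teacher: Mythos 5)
Your proposal is correct, but it takes a genuinely different route from the paper. The paper first characterizes the \emph{unique} Bayes--Nash equilibrium of the unrestricted game by an explicit minimax computation over Alice's behavioral strategies (obtaining $p^*=(\tfrac13,\tfrac13)$, $q^*=(\tfrac13+\varepsilon,\tfrac13-\varepsilon)$), then observes that any CIB-CNE must induce a BNE, and that under $p^*$ the consistent belief $\Pi_3$ coincides after both values of $U_1^A$, forcing $q_1=q_2$ --- which the unique BNE violates. You instead work directly with the fixed-point conditions of the compressed game and eliminate every candidate by case analysis on Alice's support and on whether $p_1,p_{-1}$ equal $\tfrac13$. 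What the paper's route buys is a sharp localization of the failure (the compression erases the payoff-irrelevant signal $U_1^A$ that Bob needs to calibrate Alice's indifference) at the cost of the minimax computation; your route avoids characterizing the BNE set entirely but pays in bookkeeping. I checked your payoff formulas, the consistency identities $p_1=\tfrac{a+c}{a+b+2c}$, $p_{-1}=\tfrac{a+e}{a+b+2e}$, and cases (i)--(v); they all go through. Two caveats. First, the sub-case $p_{-1}=\tfrac13$ of (iv) is \emph{not} obtained from $p_1=\tfrac13$ by relabeling, because the $\varepsilon$-bonus for $U_1^A=+1$ breaks the symmetry: with $\beta_1=0$ and $\beta_{-1}$ free, both $\{\mathbf{ng},\mathbf{cp}_{-1}\}$ (at $\beta_{-1}=\tfrac12+\tfrac{\varepsilon}{2}$) and $\{\mathbf{id},\mathbf{cp}_{-1}\}$ (at $\beta_{-1}=1-\varepsilon$) are admissible argmax pairs, and the latter has no analogue in your $p_1=\tfrac13$ analysis; it is ruled out because any mixture over $\{\mathbf{id},\mathbf{cp}_{-1}\}$ gives $p_{-1}=\tfrac{a+e}{a+2e}\ge\tfrac12$. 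This needs to be checked explicitly rather than asserted by symmetry. Second, your closing remark misplaces where the dummy stage and the compression actually bite: case (i) would go through verbatim even if Bob conditioned on the raw action $U_1^A$ (there is only one off-path history, so his response there is a single number either way, and the incompatibility $1-\varepsilon>\tfrac{1+\varepsilon}{2}$ is purely a payoff fact). The compression is essential only in case (v), where $p_1=p_{-1}=\tfrac13$ forces $\beta_1=\beta_{-1}$ precisely because the dummy stage removes $U_1^A$ from $B_3=(\Pi_3,\bU_2)$ --- this is the case that kills the profile corresponding to the game's unique BNE.
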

	
	\begin{proof}
		See Appendix \ref{app: nonexistenceexample}.
	\end{proof}

	\begin{remark}
	One can provide an example for non-existence of CIB-CNE for any $d>0$ by inserting $d-1$ additional dummy stages (analogous to stage 2) into Example \ref{ex: nonexistence}, and viewing it as a game among teams with internal delay $d$. Example \ref{ex: nonexistence} can also be used to show that the CIB-PBE concept defined in \cite{ouyang2016dynamic} for games among individuals does not exist in general, hence the conjecture in \cite{ouyang2016dynamic} that a CIB-PBE always exists is not true.
	\end{remark}

	Intuitively, the reason that a CIB-CNE does not exist in this game is that at $t=3$, a CIB strategy requires Bob to choose his action based only on a compressed version of his information rather than the full information. This compression does not hurt Bob's ability to form a best response. However, in an equilibrium, Bob needs to carefully choose from the set of optimal responses to induce Alice to play the predicted mixed strategy. Being unable to choose different actions under different histories due to information compression makes Bob unable to sustain an equilibrium. In this game, as in the example in \cite{maskin2013youtube}, payoff irrelevant information plays an essential role in sustaining the equilibrium.

In the remainder of this section we present two subclasses of the dynamic games described in Section \ref{sec: problemformulation} where CIB-CNEs exist.

\subsubsection{Signaling-Neutral Teams}
In this subsection we consider $d=1$. One subclass of games where CIB-CNEs exist is when the teams are \emph{signaling-neutral}. In these games, the agents are indifferent in terms of signaling to other teams, i.e. revealing more or less information about their private information to the other teams does not affect their utility. (Note that agents can always actively reveal information to their teammates through their actions.)

We shall now describe the game: 

\begin{defn}\label{def: publicteam}
	A team $i$ whose state $\bX_t^i$ can be recovered from $(\bY_{t}^i, \bU_{t})$ (i.e. for every fixed $u_{t}$, $\ell_t^i(x_t^i, u_{t}, W_t^{i, Y})$ has disjoint support for different $x_t^i\in\mathcal{X}_t^i$) is called a \emph{public} team. Otherwise, it is called \emph{private} team.
\end{defn}

For a public team $i$, the private state $\bX_{t-1}^i$ is effectively part of the common information of all members of all teams.

\begin{defn}[Information Dependency Graph]
	The information dependency graph $\mathcal{G}$ of a dynamic game is a directed graph defined as follows: The vertices represent the teams. A directed edge $i\leftarrow j$ is present if either the state transition, the observation, or the instantaneous reward of team $i$ at some time $t$ depends directly on either the state or the action of team $j$. In other words, there is no directed edge from $j$ to $i$ if and only if $\bX_{t+1}^i = f_t^i(\bX_t^i, \bU_t^{-j}, W_t^{i, X})$, $\bY_{t}^i = \ell_t^i(\bX_t^i, \bU_t^{-j}, W_t^{i, Y})$ and $r_t^i(\bX_t, \bU_t)=r_t^i(\bX_t^{-j}, \bU_t^{-j})$ for some functions $f_t^i, \ell_t^i, r_t^i$ for all $t$. Self loops are not considered in this graph.
\end{defn}

\begin{thm}\label{thm: case1}
	Let $d=1$. If every strongly connected component of the information dependency graph $\mathcal{G}$ of a dynamic game consists of either (I) a single team, or (II) multiple public teams, then a CIB-CNE exists.
\end{thm}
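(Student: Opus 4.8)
The plan is to exhibit a CIB strategy profile that solves the backward-in-time dynamic program of Theorem~\ref{thm: sd1}, exploiting the acyclicity of the condensation of $\mathcal{G}$. First I would condense $\mathcal{G}$ into its strongly connected components (SCCs); the induced ``depends-on'' relation on SCCs is acyclic, so the SCCs can be indexed $C_1,\dots,C_m$ so that whenever a team in $C_a$ depends on a team in $C_b$ one has $b\le a$. The structural fact used throughout is that a team $i\in C_b$ has state-transition, observation and reward functions that do not depend on the state or action of any team in a later SCC; hence (by an easy backward induction in time starting from $V_{T+1}^i\equiv 0$) the stage payoff $Q_t^i=r_t^i+V_{t+1}^i$ and the value function $V_t^i$ never depend on the belief/action components of later SCCs. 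I would then construct the triple $(\lambda_t^*,\psi_t^*,V_t)$ required by Theorem~\ref{thm: sd1} by a nested induction: an outer loop going backward in time $t=T,\dots,1$, and, inside each time step, an inner loop processing the SCCs in the order $C_1,\dots,C_m$. When the pair $(t,C_b)$ is reached, $V_{t+1}$ is known and $\lambda_t^{*k}$ has been fixed for every $k$ in $C_1,\dots,C_{b-1}$; since $Q_t^i$ for $i\in C_b$ only involves teams of $C_b$ and earlier SCCs, determining $(\lambda_t^{*i},\psi_t^{*,i})_{i\in C_b}$ amounts to solving the stage sub-game (in the sense of Definition~\ref{def: stagegamed2}) among the coordinators of $C_b$, with the already-fixed strategies of earlier SCCs and --- by the conditional independence of Lemma~\ref{lem: condindep} --- the states of those teams absorbed into the exogenous randomness.

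It then remains to show the inner step always succeeds, under the two admissible SCC shapes. If $C_b=\{i\}$ is a single team, the stage sub-game has one player, so its Interim Bayesian Nash Equilibrium correspondence is just the (always nonempty) set of maximizers over $\Delta(\prescription_t^i)$ of coordinator $i$'s expected stage utility; I pick any element as $\lambda_t^{*i}$ and then simply \emph{define} $\psi_t^{*,i}$ by the Bayes formula of Definition~\ref{def: consistency}, which makes it consistent with $\lambda_t^{*i}$ by construction. If instead $C_b$ is a set of several public teams, then for each $k\in C_b$, because $k$ is public and $d=1$, the next sufficient private information $S_{t+1}^k=\bX_t^k$ is recovered from $(\bY_t^k,\bU_t)$, so I fix $\psi_t^{*,k}$ to be the deterministic decoder returning the point mass at that recovered value; this decoder is the Bayes posterior of a recoverable quantity regardless of the prior and the strategy, hence it is consistent with \emph{every} $\lambda_t^{k}$, so condition \eqref{liannelahavas} is free for the teams of $C_b$. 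The only remaining obligation is that the stage sub-game among the coordinators of $C_b$ possess an equilibrium, and it does because it is a finite Bayesian game: finitely many players, finitely many prescriptions per player (each $\prescription_t^k$ is a set of maps between finite sets), finitely many type values, well-defined priors $\beta_t^k$, and a bounded payoff (the entries $\Pi_{t+1}^l$ feeding into $V_{t+1}^k$ take only finitely many values once $b_t$ is fixed). After the inner loop completes I set $V_t^i=\mathrm{DP}_t^i(V_{t+1}^i,\lambda_t^*,\psi_t^*)$ for all $i$ and move to time $t-1$.

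Once the nested induction terminates I would assemble $\psi^*=(\psi^{*,k})_{k\in\mathcal{I}}$ and $\lambda^*=(\lambda^{*i})_{i\in\mathcal{I}}$, letting every coordinator use the common belief generation system $\psi^*$ (with $\lambda^{*i}$ ignoring the coordinates of $B_t$ corresponding to teams on which $i$ does not depend), and then check that this profile meets all hypotheses of Theorem~\ref{thm: sd1} for the whole game. Consistency of $\psi^{*,k}$ with $\lambda^{*k}$ holds for each $k$ by construction. For the IBNE and DP conditions: since $Q_t^i$ and $V_t^i$ for $i\in C_b$ depend only on teams of $C_b$ and earlier SCCs, the whole-game conditions for player $i$ (which decouple across players) reduce exactly to the stage sub-game conditions already verified when $(t,C_b)$ was processed; the extra, irrelevant, strategy components of later SCCs are marginalized out and do not change the values. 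Theorem~\ref{thm: sd1} then yields that $(\lambda^{*i},\psi^*)_{i\in\mathcal{I}}$ is a CIB-CNE, which proves the claim.

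The step I expect to be the crux is the multi-public-team SCC: one must argue that the sequential decomposition actually \emph{terminates} with a solution rather than getting stuck, as it does in Example~\ref{ex: nonexistence}. The point is that publicness degenerates the consistency requirement \eqref{liannelahavas} --- the on-path CIB belief on a public team's sufficient private information is forced to be a strategy-independent point mass --- so the problematic coupling ``$\psi_t^*$ depends on $\lambda_t^*$ while $\lambda_t^*$ must be an IBNE of a stage game whose payoff depends on $\psi_t^*$'' disappears, leaving only a finite stage game to which Nash's theorem applies. A secondary, more routine, difficulty is the bookkeeping that reconciles the SCC (spatial) decomposition with the time-recursion of Theorem~\ref{thm: sd1} and checks that the quantities computed in each stage sub-game coincide with the whole-game ones; this relies entirely on the ``no dependence on later SCCs'' property of the information dependency graph.
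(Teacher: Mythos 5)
Your overall architecture --- topological ordering of the SCCs, a nested induction backward in time and forward over SCCs, a one-player optimization for singleton SCCs, and a strategy-independent decoder plus Nash's theorem for multi-public-team SCCs --- is exactly the paper's. However, there is a genuine gap in the singleton-SCC step. You propose to first pick $\lambda_t^{*i}$ as a maximizer of coordinator $i$'s expected stage utility and only afterwards define $\psi_t^{*,i}$ by Bayes' rule. The stage utility, though, is $Q_t^i = r_t^i + V_{t+1}^i(B_{t+1}, S_{t+1}^i)$, and $B_{t+1}$ contains $\Pi_{t+1}^{i} = \psi_t^{*,i}(b_t,\bY_t,\bU_t)$; so unless $V_{t+1}^i$ is independent of the public belief about team $i$'s \emph{own} state, the objective you are maximizing is not yet defined at the moment you maximize it, and you are back to the fixed-point coupling between $\lambda_t^*$ and $\psi_t^*$ that makes Example~\ref{ex: nonexistence} fail. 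Your stated structural invariant (``$Q_t^i$ and $V_t^i$ never depend on the belief/action components of \emph{later} SCCs'') does not rule this out, because $\Pi_t^{[k]}$ belongs to team $i$'s own SCC, not a later one.

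The paper closes this with a strictly stronger induction invariant: for a singleton node $[k]=\{i\}$ one shows $V_t^{i}(b_t, x_{t-1}^{i}) = V_t^{i}(b_t^{[1:k-1]}, u_{t-1}^{i}, x_{t-1}^{i})$, i.e.\ no dependence on $\pi_t^{[k]}$ at all. This holds because (a) for $d=1$ coordinator $i$ observes $S_t^i=\bX_{t-1}^i$ exactly, so the public belief about her own state is payoff-irrelevant to her own decision, and (b) the teams that consume that belief all lie in later SCCs and hence never feed back into $i$'s reward, dynamics, or observations --- this is precisely the ``signaling-neutrality'' that the SCC hypothesis buys for private teams. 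With that invariant the stage objective for a singleton SCC involves only the already-fixed $\psi_t^{*,[1:k-1]}$ together with $u_{t-1}^i$, and your ``maximize first, then define $\psi_t^{*,i}$ consistently'' order becomes legitimate. You need to state this sharper invariant and propagate it through the inner induction for the argument to close; the multi-public-team branch, which you correctly identified as the conceptual crux, is otherwise handled the same way as in the paper.
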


\begin{proof}
	See Appendix \ref{app: case1}.
\end{proof}

\begin{remark}
	The precedence relation among teams considered in Theorem \ref{thm: case1} is similar to the $s$-partition of teams that was presented and analyzed in \cite{Yoshikawa1978}.
\end{remark}

When the condition in Theorem \ref{thm: case1} is satisfied, all teams will be neutral in signaling: When a private team $i$ sends information, this information is only useful to those teams whose actions do not affect team $i$'s utility. Public players are always neutral in signaling since their state history is publicly available. 

Notice that in Example \ref{ex: nonexistence}, Alice (as a one-person team) is a private team while Bob is a public team. The instantaneous reward of Bob at $t=3$ depends on Alice's state $X_2^A$, while Alice's instantaneous reward at $t=3$ depends on Bob's action. Hence Alice and Bob form a strongly connected component in the information dependency graph.

\subsubsection{Signaling-Free Equilibria}
In this section, we introduce another class of games where CIB-CNE exists. These games are games-among-teams extension of \emph{Game M} defined in \cite{ouyang2016dynamic}. We present the result for a general $d>0$.

\begin{example}
	Consider a dynamic game that satisfies the following conditions.
	\begin{itemize}
		\item States are uncontrolled, i.e. $\bX_{t+1}^i = f_t^i(\bX_t^i, W_t^{i, X})$.
		\item Observations are uncontrolled, i.e. $Y_t^i=\ell_t^i(\bX_t, W_t^{i, Y})$.
		\item Instantaneous rewards of team $i$ can be expressed as $r_t^i(\bX_t^{-i}, \bU_t)$.
	\end{itemize}
\end{example}

\begin{thm}\label{thm: signalfreegame}
	A dynamic game that satisfies the above conditions has a CIB-CNE.
\end{thm}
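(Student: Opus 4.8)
The plan is to exploit the structural conditions to exhibit an explicit CIB-CNE in which no coordinator's prescription conveys payoff-relevant information to anyone, so that the consistency and closedness machinery of Section \ref{sec: 4} applies with an essentially trivial (strategy-independent) belief generation system. First I would observe the key consequence of the three conditions: since the states and observations are uncontrolled, the processes $(\bX_t^i)_{t\in\mathcal{T}}$ and $(\bY_t^i)_{t\in\mathcal{T}}$ do not depend on any prescriptions at all, and hence the conditional law of $S_t^k=(\bX_{t-d}^k,\bm{\Phi}_t^k)$ given $H_t^0$ — wait, we must be careful, $\bm{\Phi}_t^k$ does depend on past prescriptions. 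The correct observation is that the law of $\bX_{t-d}^k$ given $H_t^0=(\bY_{1:t-1},\bU_{1:t-1})$ is strategy-independent because states and observations evolve autonomously; and since instantaneous rewards $r_t^i$ depend on $\bX_t$ only through $\bX_t^{-i}$ and on $\bU_t$, coordinator $i$'s payoff never depends on her own hidden state $\bX_{t-d+1:t}^i$ and therefore never depends on her own PRPs $\bm{\Phi}_t^i$. This is the games-among-teams analogue of the ``payoff does not depend on own private state'' feature of Game M in \cite{ouyang2016dynamic}.

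Next I would construct the candidate equilibrium. Because each coordinator's reward depends on her own prescription $\bm{\Gamma}_t^i$ only through the induced actions $\bU_t^i=(\Gamma_t^{i,j}(\bX_t^{i,j}))_j$, and the payoff is affine in others' prescriptions through their induced action distributions, the stage game $G_t(V_{t+1},b_t,\psi_t^*)$ decouples: for each realization of $b_t$ it reduces to a finite one-shot Bayesian game among the $I$ coordinators whose only coupling is through the distributions of actions. I would let each coordinator ignore her SPI entirely (legitimate, since her payoff is SPI-independent) and play a prescription that depends only on the common belief data in $b_t$; the belief generation system $\psi_t^{*,k}$ is then defined to update $\Pi_t^{i,k}$ by the strategy-independent transition of $\bX_{t-d}^k$ together with the (now deterministic, given $b_t$) update of the PRP component. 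Working backward from $V_{T+1}^i\equiv 0$: at each stage, $\mathrm{IBNE}_t(V_{t+1},\psi_t^*)$ is nonempty because the stage game is a finite Bayesian game (apply the standard Nash existence theorem, as in the proof of Theorem \ref{thm: spibexist}); pick any selection $\lambda_t^*$, verify $\psi_t^*$ is consistent with it (which holds by the autonomous-dynamics construction, since Bayes' rule on strategy-independent transitions gives exactly the posited update), then set $V_t^i=\mathrm{DP}_t^i(V_{t+1}^i,\lambda_t^*,\psi_t^*)$. Theorem \ref{thm: sd1} then certifies that $(\lambda^{*i},\psi^*)_{i\in\mathcal{I}}$ is a CIB-CNE.

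The main obstacle I anticipate is handling the PRP component $\bm{\Phi}_t^i$ cleanly inside the belief generation system and the consistency check of Definition \ref{def: consistency}. Even though payoffs do not depend on a coordinator's own PRPs, the formal definition of a CIB strategy carries $S_t^i=(\bX_{t-d}^i,\bm{\Phi}_t^i)$ through $\psi_t^{i,i}$, and one must confirm that the $\Upsilon_t^i$ expression in \eqref{beliefupdate} indeed factorizes so that the posited strategy-independent update is genuinely consistent — this requires checking that, with states/observations uncontrolled, the term $\Pr(y_t^i|\tilde x_t^i,u_t)$ becomes $\Pr(y_t^i|\tilde x_t^i)$ and the sum over $\tilde\gamma_t^i$ matching $u_t^i$ collapses correctly. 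A secondary subtlety is verifying that condition (II) of a CIB-CNE (all coordinators share the same $\psi^*$) is compatible with my construction: since the beliefs $\Pi_t^{i,k}$ depend on the common information in a way that is identical across $i$ by construction, this holds, but it should be stated explicitly. Everything else — nonemptiness of each stage-game equilibrium, monotone backward recursion, and invocation of Theorem \ref{thm: sd1} — is routine once the signaling-free structure is isolated.
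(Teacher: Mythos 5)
Your overall plan coincides with the paper's — a signaling-free belief generation system, backward induction over stage games, and an appeal to Theorem \ref{thm: sd1} — but the step you wave through is exactly the one that carries the proof, and as written it fails. You assert that consistency of the strategy-independent update ``holds by the autonomous-dynamics construction.'' It does not: the consistency condition \eqref{beliefupdate} conditions on the realized actions $u_t$, and even with uncontrolled dynamics a \emph{non-constant} prescription $\gamma_t^i$ makes $u_t^i=\gamma_t^i(x_t^i)$ informative about $x_t^i$ (hence about $S_{t+1}^i$), so the Bayes posterior genuinely depends on $u_t^i$ and the signaling-free update is inconsistent with it. Consequently ``pick any selection $\lambda_t^*\in\mathrm{IBNE}_t(V_{t+1},\psi_t^*)$ and then verify consistency'' does not go through: an arbitrary IBNE may use non-constant prescriptions, and the simultaneous satisfaction of \eqref{thomyorke} and \eqref{liannelahavas} is precisely what fails in general (Example \ref{ex: nonexistence}). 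The missing idea is to restrict to \emph{open-loop (constant) prescriptions} and prove the paper's key claim: if all other coordinators play such strategies, coordinator $i$ has a best response that is also open-loop, because by conditional independence across teams and the fact that $r_t^i(\bX_t^{-i},\bU_t)$ does not depend on $\bX_t^i$, her expected payoff depends on her prescription only through the induced marginal distribution of $\bU_t^i$, which a randomized constant prescription can replicate. The restricted stage game is then a finite symmetric-information game parameterized by the predicted belief $\overline{\bm{\Pi}}_t$, which has a Nash equilibrium, and constant prescriptions are trivially consistent with the signaling-free update (Lemma \ref{lem: sigfreeupdate}).

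A second gap: your claim that the stage game ``decouples'' with coupling only through action distributions presupposes that the continuation value $V_{t+1}^i(B_{t+1},S_{t+1}^i)$ cannot be influenced by coordinator $i$'s current prescription. Since $B_{t+1}$ explicitly contains $\bU_{t-d+1:t}$ and the updated beliefs, this is not automatic. The paper secures it with a backward-induction invariant — $V_t^i$ depends on $b_t$ only through the signaling-free predicted belief $\overline{\bm{\Pi}}_t$ (together with the coordinator's own uncontrolled state) — which must be carried through the recursion; it is what makes each stage game myopic and reduces it to the finite game above. Stating and propagating that invariant, together with the open-loop restriction, is the substance of the proof rather than a routine verification.
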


\begin{proof}
	See Appendix \ref{app: signalfreegame} for a direct proof. Alternatively, one can first assume that the teams share information with a delay of $d=0$, then we can view a team as one individual since team members have the same information. Then one can apply results for \emph{Game M} in \cite{ouyang2016dynamic} to obtain an equilibrium where each player/team plays a public strategy (i.e. a strategy that does not use private information), in particular, a strategy where actions are solely based on the common information based belief. Since public strategies can also be played when $d>0$, we conclude that the equilibrium we obtained is also an equilibrium for the original game.
\end{proof}

\section{ADDITIONAL RESULTS}\label{sec: add}
\subsection{Separated Dynamics and Observations among Teammates}
Consider a special case of the model in Section \ref{sec: problemformulation} where both the evolution and observations of the local states of each member of each team are conditionally independent given the actions, i.e. 
\begin{align}
	X_{t+1}^{i, j} &= f_t^{i, j}(X_t^{i, j}, \bU_t, W_t^{i, j}),\\
	\bY_t^i &= (Y_{t}^{i, j})_{(i, j)\in\mathcal{N}_i}\\
	Y_{t}^{i, j} &= \ell_t^{i, j}(X_t^{i, j}, \bU_t, W_t^{i, j, Y}),
\end{align}
where $(W_t^{i, j, X}, W_t^{i, j, Y})_{t\in\mathcal{T}, (i, j)\in\mathcal{N}}$ are mutually independent primitive random variables.\footnote{This is a correction from an earlier version of this paper, where we did not assume that $Y_{t}^{i, j} = \ell_t^{i, j}(X_t^{i, j}, \bU_t, W_t^{i, j, Y})$.}

In this case, we show that the independence among team members' state dynamics enable us to consider equilibria where the coordinators assign prescriptions that map $X_{t}^{i, j}$ to $U_t^{i, j}$ (instead of mapping $X_{t-d+1:t}^{i, j}$ to $U_t^{i, j}$); this is because, given $H_t^i$, the belief of member $(i, j)$ about her teammates' states is independent of $X_{t-d+1:t}^{i, j}$. In other words, one can replace the hidden information $\bX_{t-d+1:t}^i$ with the \emph{sufficient hidden information} $\bX_t^i$.\footnote{The compression of hidden information to sufficient hidden information is similar to the shredding of irrelevant information in \cite{mahajan2013controlsharing}.}

\begin{defn}[Simple Prescriptions]
	A \emph{simple prescription} for coordinator $i$ at time $t$ is a collections of functions $\theta_t^i = (\theta_t^{i, j})_{(i, j)\in\mathcal{N}_i}, \theta_t^{i, j}: \mathcal{X}_{t}^{i, j} \mapsto \mathcal{U}_t^{i, j}$.
\end{defn}

\begin{lemma}\label{lem: suffhidden}
	Suppose that $g^{-i}$ is a behavioral coordination strategy profile for coordinators other than coordinator $i$, then there exists a best response behavioral coordination strategy $g^i$ for coordinator $i$ that chooses randomized simple prescriptions based on $\overline{H}_t^i$.
\end{lemma}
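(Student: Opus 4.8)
\textbf{Proof plan for Lemma \ref{lem: suffhidden}.}

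The plan is to mimic the proof structure of Lemma \ref{lem: suffprivated3}, now exploiting the extra conditional-independence structure among teammates' local states. The key auxiliary fact I would establish first is a refinement of Lemma \ref{lem: selfbelief}: under the separated dynamics and observations, for each member $(i,j)$, the conditional distribution of $X_{t-d+1:t}^{i,j}$ given $\overline{H}_t^i$ factors through the \emph{sufficient hidden information} $\bX_t^i$ in the sense that member $(i,j)$'s teammates' state beliefs do not depend on $X_{t-d+1:t}^{i,j}$. Concretely, because $(W_t^{i,j,X}, W_t^{i,j,Y})_{t,(i,j)}$ are mutually independent and the transition/observation of $X^{i,j}$ involves only $X^{i,j}$, $\bU$, and its own noise, the processes $(X_{1:t}^{i,j})_j$ are conditionally independent across $j$ given $\bU_{1:t-1}$ and the prescription history; hence conditioning on the full team-common information $\overline{H}_t^i$ (which includes $\bX_{1:t-d}^i$, $\bU_{1:t-1}$, $\bm{\Gamma}_{1:t-1}^i$, $\bY_{1:t-1}$) leaves the coordinates $(X_{t-d+1:t}^{i,j})_j$ conditionally independent, and the per-coordinate conditional law of $X_{t-d+1:t}^{i,j}$ is a fixed function of $(Y_{t-d+1:t-1}^{i,j}, \bU_{t-d:t-1}, X_{t-d}^{i,j}, \Phi_t^{i,j})$. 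The point is that for the purpose of predicting $X_t^{k}$ for $k\neq i$ and for evaluating the reward via $\bX_t$, only $\bX_t^i$ (not the whole trajectory $\bX_{t-d+1:t}^i$) matters once we know how the agents will act.

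Next I would recast coordinator $i$'s best-response problem against fixed $g^{-i}$ as a POMDP, exactly as in the proof of Lemma \ref{lem: suffprivated3}, but with a reduced hidden state. By Lemma \ref{lem: condindep}, coordinator $i$'s estimate of $(\bX^{-i}, \bm{\Gamma}^{-i})$ depends only on $H_t^0$ and $g^{-i}$, so the relevant ``uncontrolled'' part of the environment can be summarized by a common-information statistic that coordinator $i$ computes. Then I would argue that $(H_t^0, \bX_t^i)$ — equivalently $(H_t^0, \bX_{t-d}^i, \bm{\Phi}_t^i, X_{t-d+1:t}^{\,\cdot})$ collapsed appropriately — together with a simple prescription suffices. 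The crucial step: if at every future time the agents use simple prescriptions $\theta_\tau^{i,j}: \mathcal{X}_\tau^{i,j}\mapsto\mathcal{U}_\tau^{i,j}$, then the action $U_\tau^{i,j}$ depends on the hidden trajectory only through $X_\tau^{i,j}$; combined with the factorization above, the controlled Markov chain for coordinator $i$'s problem can be taken to have state $(\,\text{common-info statistic},\ \bX_t^i\,)$ with $\bX_t^i$ the sufficient hidden information. A standard dynamic-programming / policy-improvement argument (the same one invoked in Lemma \ref{lem: suffprivated3}) then shows there is an optimal policy that at each time selects a distribution over simple prescriptions as a function of $\overline H_t^i$ only (indeed only of the common-info statistic and $\bX_t^i$), without loss of optimality.

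The main obstacle I anticipate is the \emph{signaling} bookkeeping: one must verify that restricting to simple prescriptions does not shrink the set of achievable joint distributions of $(\bX_t, \bU_t)$ in a way that hurts coordinator $i$. The worry is that a non-simple prescription $\gamma_t^{i,j}(x_{t-d+1:t}^{i,j},\cdot)$ could let coordinator $i$ signal finer information about $X_{t-d+1:t-1}^{i,j}$ to her teammates; but since teammates' beliefs about \emph{their own} states are already independent of $X_{t-d+1:t}^{i,j}$ under the separated dynamics, and their beliefs about $X_{t-d+1:t-1}^{i,j}$ are irrelevant to predicting the only quantities that matter for reward and transition (namely current and future $X^{k,\ell}$'s, which evolve autonomously given actions), no useful intra-team signaling is lost. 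Making this precise requires carefully pushing the conditional-independence factorization forward through one step of the dynamics — i.e., showing that $\Pr(\bX_{t+1-d+1:t+1}^i \mid \overline H_{t+1}^i)$ under a simple-prescription continuation equals what it would be under the more general prescription, given the same realized actions — which is the technical heart of the argument. I would defer the detailed verification of this factorization to the appendix, following the template already used for Lemmas \ref{lem: selfbelief} and \ref{lem: suffprivated3}.
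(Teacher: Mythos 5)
Your proposal correctly identifies the structural facts that make the result true (conditional independence of teammates' local state trajectories given the team-common information, and the irrelevance of past local states for rewards and one-step transitions), but the central step does not close. You invoke ``a standard dynamic-programming / policy-improvement argument'' at the coordinator level to conclude that an optimal policy can select \emph{simple} prescriptions. Dynamic programming of the kind used in Lemma \ref{lem: suffprivated3} justifies restricting the \emph{information} on which the coordinator's choice is based (an information-state reduction); it does not justify restricting the coordinator's \emph{action space} from $\prescription_t^{i,j}$ (maps on $\mathcal{X}_{t-d+1:t}^{i,j}$) to simple prescriptions (maps on $\mathcal{X}_t^{i,j}$). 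A non-simple prescription makes $U_t^{i,j}$ correlated with $X_{t-d+1:t-1}^{i,j}$ conditionally on $X_t^{i,j}$, and since $U_t^{i,j}$ is public and $X_{t-d+1:t-1}^{i,j}$ is later revealed to the team, this extra correlation changes future beliefs; ruling out any benefit from it is exactly the content of the lemma, and the verification you defer to the appendix (equality of the belief $\Pr(\bX_{t-d+2:t+1}^i\mid\overline H_{t+1}^i)$ under simple versus general continuations ``given the same realized actions'') is not the statement that would close this gap, because the realized actions themselves differ across the two prescription classes.

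The paper closes the gap with a different device: a person-by-person refinement at the level of an individual agent (Lemma \ref{lem: refineteam}). Starting from a pure team best response, it fixes the strategies of all teammates and of the other teams, and shows that agent $(i,j)$ then faces a POMDP in which $X_{t-1}^{i,j}$ appears only in her observation, not in the state, and in which her posterior on the payoff-relevant state is independent of $x_{t-1}^{i,j}$ --- this last step is where the separated observation assumption $Y_t^{i,j}=\ell_t^{i,j}(X_t^{i,j},\bU_t,W_t^{i,j,Y})$ is used, so that the factor involving $x_{t-1}^{i,j}$ cancels between numerator and denominator of the Bayes update. Hence agent $(i,j)$ has an optimal response ignoring her past unshared states; iterating over $j$ and converting the resulting pure strategy to a coordination strategy yields simple prescriptions. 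Your sketch contains no analogue of this exchange/refinement step, so as written there is a genuine gap; the conditional-independence factorization you propose is a useful ingredient but is not by itself sufficient.
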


\begin{proof}
	See Appendix \ref{app: suffhidden}.
\end{proof}

Given the above result, one can restrict attention to \emph{sufficient hidden information based} strategies where each coordinator $i$ assigns simple prescriptions based on $\overline{H}_t^i$. 
Consequently, results analogue to that of Sections \ref{sec: spi} and \ref{sec: 4} can be derived considering similar compression of private and common information.

\subsection{Refinement of Coordinators' Nash Equilibrium}\label{sec: refinement}
In the game among coordinators, one can also consider Coordinators' \emph{weak Perfect Bayesian Equilibrium} (wPBE) \cite{mas1995microeconomic} as a refinement of CNE. Coordinator's wPBE provides a refinement of Coordinator's Nash Equilibrium by ruling out equilibrium outcomes that rely on non-credible threats \cite{fudenberg1991game}.\footnote{We refer an interested reader to Chapter 9 of \cite{mas1995microeconomic} for a detailed description of wPBE.}
\begin{defn}[Coordinators' wPBE]\label{def: wpbe}
	Define $\mathcal{H}_t^{*} = \mathcal{X}_{1:t}\times \mathcal{Y}_{1:t-1} \times \mathcal{U}_{1:t-1} \times \varGamma_{1:t-1}$. Let $g$ denote a behavioral coordination strategy profile of all coordinators and $\vartheta = (\vartheta_t^{i})_{i\in \mathcal{I}, t\in\mathcal{T} }, \vartheta_t^{i}: \overline{\mathcal{H}}_t^i \mapsto \Delta(\mathcal{H}_t^{*})$ denote a belief system. The strategy profile $g$ is said to be sequentially rational given $\vartheta$ if
	\begin{align}
	g_{t:T}^i \in \underset{\tilde{g}_{t:T}^i }{\arg\max}~  J_t^i(\tilde{g}_{t:T}^i, g_{t:T}^{-i}; \vartheta_t^i, \overline{h}_t^i)&\\
	\forall \overline{h}_t^i\in \overline{\mathcal{H}}_t^i, \forall i\in\mathcal{I}, \forall t\in\mathcal{T} &
	\end{align}
	where
	\begin{align}
	J_t^i(\tilde{g}_{t:T}; \vartheta_t^i, \overline{h}_t^i) := \sum_{\tilde{h}_{t}^*}\E^{ \tilde{g}_{t:T}}\left[\sum_{\tau=t}^T r_{\tau}^i(\bX_{\tau}, \bU_{\tau})\Big|\tilde{h}_t^*\right]\vartheta_t^i(\tilde{h}_t^*|\overline{h}_t^i);
	\end{align}
	the belief system $\vartheta$ is said to be consistent with $g$ \cite{mas1995microeconomic} if
	\begin{align}
	\Pr^g(\overline{h}_t^i) > 0~\Rightarrow~\vartheta_t^i(\tilde{h}_{t}^*|\overline{h}_t^i) = \frac{\Pr^g(\tilde{h}_{t}^*, \overline{h}_t^i)}{\Pr^g(\overline{h}_t^i)}&\\
	\forall \tilde{h}_{t}^*\in\mathcal{H}_t^*~ \forall \overline{h}_t^i\in\overline{\mathcal{H}}_t^i~ \forall t\in \mathcal{T}~ \forall i\in\mathcal{I}.&
	\end{align}
	
	A pair $(g, \vartheta)$ is called a Coordinators' wPBE if $g$ is sequentially rational given $\vartheta$ and $\vartheta$ is consistent with $g$.
\end{defn}

Let $\rho$ be an SPIB strategy profile and $\vartheta$ to be a belief system. A pair $(\rho, \vartheta)$ is called an SPIB-wPBE if it forms a wPBE.

\begin{prop}
	SPIB-wPBE exists in the game among coordinators.
\end{prop}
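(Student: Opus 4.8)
The plan is to adapt the classical ``trembles and limits'' construction of weak Perfect Bayesian Equilibria to the SPIB strategy class, using the sufficiency results of Section~\ref{sec: spi} to guarantee that SPIB strategies remain optimal after the limit is taken; throughout I work in the game among coordinators and seek an equilibrium within the SPIB class. For each small $\varepsilon>0$, let $\Sigma_i^\varepsilon$ be the set of SPIB strategies $\rho^i$ with $\rho_t^i(\gamma_t^i\mid h_t^0,s_t^i)\ge\varepsilon$ for all $t$, all $(h_t^0,s_t^i)$ and all $\gamma_t^i\in\prescription_t^i$; this set is nonempty (for $\varepsilon$ small), compact and convex, and every profile in $\Sigma^\varepsilon=\prod_i\Sigma_i^\varepsilon$ has full support, so that every realization of $\overline{H}_t^i$ that is reachable under some strategy profile occurs with positive probability. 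On $\Sigma^\varepsilon$ define the correspondence that sends $\rho$ to the set of profiles $\varrho$ for which, at every $(i,t,h_t^0,s_t^i)$, the component $\varrho_t^i(\cdot\mid h_t^0,s_t^i)$ assigns the forced mass $\varepsilon$ to every prescription and spreads the remaining mass over the prescriptions maximizing coordinator $i$'s one-step continuation value, computed under $\rho$ and the Bayes belief that $\rho$ induces at that information state (which is well defined, and depends only on $(h_t^0,s_t^i)$, by Lemmas~\ref{lem: condindep} and~\ref{lem: selfbelief}). Since the continuation value is an expectation, hence continuous in $\rho$, and the argmax correspondence is upper hemicontinuous while the ``forced mass plus free mass on the argmax'' sets are convex, this correspondence is nonempty-valued, convex-valued and upper hemicontinuous on a compact convex set; by Kakutani's theorem it has a fixed point $\rho^\varepsilon$. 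Let $\vartheta^\varepsilon$ be the belief system obtained from $\rho^\varepsilon$ by Bayes' rule (well defined everywhere by full support). By construction, at $(\rho^\varepsilon,\vartheta^\varepsilon)$ no coordinator has a profitable one-step SPIB deviation within $\Sigma_i^\varepsilon$.

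Next, extract $\varepsilon_n\to 0$ along which $\rho^{\varepsilon_n}\to\rho^*$ and $\vartheta^{\varepsilon_n}\to\vartheta^*$ (compactness). Then $\rho^*$ is an SPIB strategy profile, and $\vartheta^*$ is consistent with $\rho^*$ because, on histories reached under $\rho^*$, the Bayes denominators converge to strictly positive limits. Two structural facts carry the rest of the argument. First, by Lemmas~\ref{lem: condindep} and~\ref{lem: selfbelief}, each $\vartheta^{\varepsilon_n}_t(\cdot\mid\overline{h}_t^i)$, and hence $\vartheta^*_t(\cdot\mid\overline{h}_t^i)$, is a product of the strategy-independent belief $P_t^i$ on coordinator $i$'s hidden state and a belief on the other coordinators' states and prescriptions that depends on $\overline{h}_t^i$ only through $h_t^0$; in particular it is measurable with respect to the compressed data $(h_t^0,s_t^i)$. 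Second, since $(h_t^0,s_t^i)$ encodes the time index, no information state recurs along a play, so the SPIB-restricted game has no absent-mindedness; and since coordinators' histories grow over time, the limit belief system is coherent, i.e.\ $\vartheta^*_{t+1}(\cdot\mid\overline{h}_{t+1}^i)$ is the forward Bayes update under $\rho^*$ of $\vartheta^*_t(\cdot\mid\overline{h}_t^i)$ along every transition of positive probability under $\rho^*$.

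Passing to the limit in the one-step optimality of $\rho^{\varepsilon_n}$ shows that at $(\rho^*,\vartheta^*)$ no coordinator has a profitable one-step SPIB deviation, given $\vartheta^*$ and the continuation $\rho^{*i}$. By the coherence and no-absent-mindedness observations, a Bellman/one-shot-deviation argument then yields that $\rho^{*i}_{t:T}$ is optimal among all SPIB continuations from every $\overline{h}_t^i$. Finally, since the belief at $\overline{h}_t^i$ has the product form above and the opponents use SPIB strategies, the continuation decision problem faced by coordinator $i$ from $\overline{h}_t^i$ is structurally a coordinators' game of the type studied in Section~\ref{sec: spi}, with $\vartheta^*_t(\cdot\mid\overline{h}_t^i)$ playing the role of the prior; Lemmas~\ref{lem: suffprivate} and~\ref{lem: suffprivated3} then show that an SPIB continuation is a best response among \emph{all} continuations, so $\rho^{*i}_{t:T}$ is optimal against every deviation. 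Hence $\rho^*$ is sequentially rational given $\vartheta^*$ and $\vartheta^*$ is consistent with $\rho^*$, i.e.\ $(\rho^*,\vartheta^*)$ is an SPIB-wPBE.

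The crux is the third step: upgrading limiting one-step SPIB optimality to full sequential rationality against arbitrary, possibly non-SPIB, continuation deviations. This requires carefully combining (a) coherence of the limit belief system (so that a Bellman recursion for the continuation problem is valid), (b) the absence of absent-mindedness (so that the one-shot-deviation principle is legitimate despite the imperfect recall introduced by compression), and (c) the sufficiency of the SPI for best responses, now applied not to the whole game but to continuation subgames carrying a product-form prior belief. A subsidiary point is ensuring that every $\overline{h}_t^i$ at which sequential rationality is demanded is reached with positive probability under each $\rho^{\varepsilon_n}$---which holds if $\overline{\mathcal{H}}_t^i$ is read as the set of histories reachable under some strategy profile---so that $\vartheta^*$ there is a genuine limit of Bayes beliefs.
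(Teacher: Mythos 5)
Your proposal is correct and follows essentially the same route as the paper: the paper's own proof is a one-line pointer to the proof of Theorem~\ref{thm: spibexist}, i.e.\ the same Kakutani fixed-point argument on $\varepsilon$-constrained SPIB strategies defined through backward Bellman recursions, followed by a limit $\varepsilon_n\searrow 0$ and an appeal to Lemmas~\ref{lem: condindep}, \ref{lem: selfbelief} and \ref{lem: suffprivated3} to upgrade SPIB-optimality to optimality over all behavioral coordination strategies. Your write-up in fact supplies the wPBE-specific bookkeeping (limit beliefs, consistency on-path, one-shot-deviation validity in the absence of absent-mindedness, and the continuation-game application of the sufficiency lemmas) that the paper leaves implicit.
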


\begin{proof}
	The proof follows steps similar to the proof of Theorem \ref{thm: spibexist}.
\end{proof}

As a result of the sequential decomposition of the dynamic game, with some assumptions on the belief generation systems, a CIB-CNE obtained from the sequential decomposition is a wPBE as well, where the beliefs $\vartheta$ can be derived from the CCI. This is formalized in the following proposition.

\begin{defn}
	Define
	\begin{align}
		\hat{\varPi}_t(h_t^0):=\{&\pi_t\in \prod_{k\in\mathcal{I}}\Delta(\mathcal{S}_t^k): \exists g_{1:t-1}, ~\mathrm{s.t.}~\Pr^{g_{1:t-1}}(h_t^0)>0,\\
		 &\prod_{k\in\mathcal{I}}\pi_t^k(s_t^k) = \Pr^{g_{1:t-1}}(s_t|h_t^0)~\forall s_t\in\mathcal{S}_t \}.
	\end{align}
	A belief generation system $\psi^{*}=(\psi_t^{*})_{t\in\mathcal{T}}, \psi_t^{*}: \mathcal{B}_t \times \mathcal{Y}_t\times \mathcal{U}_t \rightarrow \prod_{k\in\mathcal{I}}\Delta(\mathcal{S}_t^k)$ is said to be \emph{regular} if for all $t\in\mathcal{I}$, we have $\psi_t^{*}\left(\pi_t, y_{t-d+1:t}, u_{t-d:t}\right)\in \hat{\varPi}_{t+1}(h_{t+1}^0)$ for all $\pi_t\in \hat{\varPi}_{t}(h_{t}^0)$.
\end{defn}

Intuitively, a belief generation system is regular if it assigns positive probability only to realizations of SPI that are admissible under some strategy profile $g$.

\begin{prop}\label{prop: pbe}
	Let $(\lambda^*, \psi^*)$ be a CIB strategy profile that satisfies the condition of Theorem \ref{thm: sd1}. Assume that $\psi^*$ is regular. Let $g^*$ be the behavioral coordination strategy profile induced from $(\lambda^*, \psi^*)$. Then there exist a belief system $\vartheta^*$ such that $(g^*, \vartheta^*)$ forms a Coordinators' wPBE.
\end{prop}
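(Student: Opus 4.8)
\textbf{Plan.} The plan is to construct the belief system $\vartheta^*$ explicitly from the compressed common information and the strategy-independent belief functions, and then verify the two defining conditions of a Coordinators' wPBE: consistency of $\vartheta^*$ with $g^*$, and sequential rationality of $g^*$ given $\vartheta^*$. The starting point is Lemma \ref{lem: condindep}: under $g^*$, the states and prescriptions of different coordinators are conditionally independent given the common information, and each conditional marginal depends only on that coordinator's own strategy. Combined with Lemma \ref{lem: piistruebelief0} (which applies since $\psi^*$ is consistent with $\lambda^*$, hence each $(\lambda^{*k}, \psi^{*k})$ is self-consistent), the true conditional distribution of $S_t^k$ given an admissible $h_t^0$ is exactly $\pi_t^{k}$, the belief produced by the belief generation system; and the conditional distribution of the hidden information $\bX_{t-d+1:t}^k$ given $S_t^k$ is the strategy-independent $P_t^k$ of Lemma \ref{lem: selfbelief}.

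\textbf{Construction of $\vartheta^*$.} For a coordinator $i$ with information $\overline{h}_t^i = (h_t^0, s_t^i, \cdots)$, I would define $\vartheta_t^{*i}(\tilde h_t^* \mid \overline h_t^i)$ as the product measure that: (i) places the marginal on coordinator $i$'s own hidden states $\tilde{x}_{t-d+1:t}^i$ equal to $P_t^i(\cdot \mid y_{t-d+1:t-1}^i, u_{t-d:t-1}, s_t^i)$ and fixes the already-observed components of the history at their realized values in $\overline h_t^i$; (ii) for each $k \ne i$, places the marginal on $(\tilde s_t^k, \tilde x_{t-d+1:t}^k, \text{and the induced }\tilde\gamma_{1:t-1}^k\text{-consistent history})$ equal to $\pi_t^{k}(\tilde s_t^k)\, P_t^k(\tilde x_{t-d+1:t}^k \mid \cdots)$ further propagated backward through the (strategy-independent given $s_t^k$) dynamics, using the regularity of $\psi^*$ to guarantee that $\pi_t^k \in \hat\varPi_t(h_t^0)$ so that this is a genuine conditional law arising from \emph{some} completion $g_{1:t-1}$. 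The regularity assumption is exactly what makes $\vartheta_t^{*i}(\cdot\mid\overline h_t^i)$ well-defined (supported on admissible $\tilde h_t^*$) even at histories $\overline h_t^i$ that are off the equilibrium path of $g^*$ but still admissible under some strategy profile. Consistency on the equilibrium path (where $\Pr^{g^*}(\overline h_t^i) > 0$) then follows directly from Lemma \ref{lem: condindep} and Lemma \ref{lem: piistruebelief0}: the product form I wrote down \emph{is} $\Pr^{g^*}(\tilde h_t^* \mid \overline h_t^i)$.

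\textbf{Sequential rationality.} For this I would replay the argument behind Theorem \ref{thm: sd1} / Lemma \ref{lem: closenessd1}, but now from an interim (post-history) rather than ex ante viewpoint. Fix $i$, $t$, and an admissible $\overline h_t^i$. Under the belief $\vartheta_t^{*i}(\cdot\mid\overline h_t^i)$, the continuation problem faced by coordinator $i$ from time $t$ onward — with coordinators $-i$ playing the CIB strategies $(\lambda^{*k},\psi^*)_{k\ne i}$ — is precisely the stochastic control problem whose information state is $(B_\tau, S_\tau^i)$, because the CCI $B_t^i$ generated by $\psi^*$ from $h_t^0$ is a sufficient statistic for the beliefs appearing in $\vartheta_t^{*i}$ (the $\pi_t^k$'s and the recent $(\bY,\bU)$ window are exactly its components), and the $P_t^i$-part handles $i$'s own hidden state. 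The dynamic program in Theorem \ref{thm: sd1}, whose stagewise IBNE condition \eqref{thomyorke} says $\lambda_\tau^{*i}(b_\tau,\cdot)$ maximizes the interim expected $Q_\tau^i$ against $\lambda_\tau^{*-i}$ for \emph{every} $b_\tau$ and $s_\tau^i$, then certifies that $g_{t:T}^{*i}$ attains $\max_{\tilde g_{t:T}^i} J_t^i(\tilde g_{t:T}^i, g_{t:T}^{*-i}; \vartheta_t^{*i}, \overline h_t^i)$ — the "for every $b_t$" quantifier in the DP is what upgrades ex ante optimality to sequential rationality at every admissible history. One checks by backward induction over $\tau = T, T-1, \ldots, t$ that $V_\tau^i(B_\tau, S_\tau^i)$ equals the optimal continuation value from $\tau$, matching the $\mathrm{DP}_\tau^i$ recursion.

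\textbf{Main obstacle.} The delicate point is handling admissible-but-off-$g^*$-path histories $\overline h_t^i$: I must show $\vartheta_t^{*i}(\cdot\mid\overline h_t^i)$ is (a) a well-defined probability measure concentrated on histories $\tilde h_t^*$ consistent with $\overline h_t^i$, and (b) still makes $(B_t,S_t^i)$ a valid information state so that the DP optimality transfers. Part (a) is where \emph{regularity} of $\psi^*$ does the work — without it, $\psi^*$ could output a $\pi_t^k$ assigning mass to an SPI value $s_t^k$ realizable under no strategy, and there would be no conditional law to match. Part (b) requires that the factorization "product over $k$ of ($\pi_t^k$-marginal on SPI) $\times$ ($P_t^k$-kernel to hidden states)" is preserved by the stage-game transition regardless of the $\tilde\gamma_t$ realizations — this is essentially the content of the consistency relation \eqref{beliefupdate}/\eqref{liannelahavas} together with Lemma \ref{lem: condindep}, so it should go through, but writing the backward-induction bookkeeping that threads $\vartheta^*$, the information-state property, and the $V_\tau^i$ identities simultaneously is the bulk of the work. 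Everything else is routine. \qed
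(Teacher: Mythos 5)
Your proposal is correct and follows essentially the same route as the paper: you construct $\vartheta^*$ from the CCI beliefs $\pi_t^k$ and the strategy-independent kernels $P_t^k$ (invoking regularity of $\psi^*$ to realize these beliefs as genuine conditional laws under some generating strategy at off-path histories), verify consistency via Lemmas \ref{lem: condindep} and \ref{lem: piistruebelief0}, and obtain sequential rationality by re-running the $(B_\tau, S_\tau^i)$ information-state / dynamic-programming argument of Lemma \ref{lem: closenessd1} and Theorem \ref{thm: sd1} from the interim viewpoint. The one piece you describe only implicitly (preservation of the product factorization under the stage transition) is exactly what the paper isolates as the forward-propagation Lemma \ref{lem: piistruebelief}, so your plan is complete in substance.
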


\begin{proof}
	See Appendix \ref{app: pbe}.
\end{proof}

\section{DISCUSSION}\label{sec: dis}

\subsection{Implementation of Behavioral Coordination Strategies}
One can also interpret behavioral coordination strategies as strategies with coordinated randomization, i.e., the strategies are randomized, but all the team members know exactly how this randomization is done. We note that one can view the main purpose of randomization as to ``confuse'' other teams. As such, it is best to use coordinated randomization where every team member knows what partial mapping their teammate is using; such coordinated randomization is superior to private and independent randomization by each individual member in a team: This is since individual randomization can create information that are unknown to teammates, while the same ``confusion'' effect to other teams can be achieved with coordinated randomization.

To implement behavioral coordination strategies, a team can utilize a correlation device which generates a random seed at each time $t$. Then each member $(i, j)$ of the team $i$ can choose an action based on $H_t^{i, j}$ and present and past random seeds generated by the correlation device, or equivalently, choose an action based on $(H_t^{i, j}, \bm{\Gamma}_{1:t-1}^i)$ where $\bm{\Gamma}_{1:t-1}^i$ is sequentially updated. If the behavioral coordination strategy is a CIB strategy, then member $(i, j)$ need to use $(B_t, \bX_{t-d}^i, \bm{\Phi}_t^i, X_{t-d+1:t}^{i, j})$ and current random seed to chose an action, where $(B_t, \bm{\Phi}_t^i)$ are sequentially updated.

In the absence of correlation devices accessible at every time, a behavioral coordination strategy can also be implemented as its equivalent mixed strategy (recall Lemma \ref{lem: pure2coord} and Lemma \ref{lem: bcs2ms}): Before the beginning of the game, the team can jointly pick a strategy profile in $\mathcal{G}^i$ randomly, according to a distribution induced from the behavioral coordination strategy. 

\subsection{Stage Game: IBNE vs BNE}\label{sec: ibevsbne}

One can observe that the belief of the agents defined in the stage game (Definition \ref{def: stagegamed2}) can be seen as a conditional distribution derived from the common prior
\begin{align}
&\quad~\beta_t(\tilde{z}_t) \\
&=\prod_{k\in \mathcal{I}}\left[ \pi_t^k(\tilde{s}_{t}^k) P_t^k(\tilde{x}_t^k|y_{t-d+1:t-1}^k, u_{t-d:t-1}, \tilde{s}_t^k)\Pr(\tilde{w}_t^{k, Y})\right].~~\label{priorofstagegame}
\end{align}
However, in the aforementioned stage game we focus on the beliefs of agents instead of a common prior, and we use \emph{Interim Bayesian Nash Equilibrium} (IBNE) as the equilibrium concept instead of BNE. This is since, unlike a standard Bayesian game with a common prior, the true prior of the stage game is dependent on the actual strategy played in previous stages. The prior $\beta_t$ described in \eqref{priorofstagegame} may not be a true prior, since some coordinator $i$ may have already deviated from the strategy prediction which $\pi_t^i$'s were relying on. However, coordinator $i$ is always trying to optimize her reward given $(b_t, s_{t}^i)$, no matter $\pi_t^i(s_{t}^i)=0$ or not. Hence in this stage game, we must consider the player's belief and strategy for all possible realizations $s_{t}^i$ under \emph{any} strategy profile, not just those with positive probability under the prior in \eqref{priorofstagegame}. The corresponding equilibrium concept is Interim Bayesian Equilibrium instead of Bayes-Nash Equilibrium. IBNE strengthens BNE by requiring the strategy of an agent to be optimal under \emph{all} private information realizations, including those with zero probability under the common prior.

\subsection{Choice of Compressed Common Information}\label{app: why}
In decentralized control \cite{nayyar2013decentralized} and certain settings of games among individuals \cite{nayyar2013common,ouyang2016dynamic}, a common information based belief $\Pi_t$ on the state is usually enough to serve as an information state, or compression of common information. However, in our setting we use a subset of actions and observations in addition to the CIB belief as the compressed common information. We argue below that this is necessary for our setting.

To illustrate the point, consider the case $d=1$ and assume that all coordinators use the same belief generation system and hence the same CCI (denoted by $B_t^*$). An alternative for the CCI $B_t^*=((\Pi_t^{*, i})_{i\in\mathcal{I}}, \bU_{t-1})$ is the CIB belief $\tilde{\bm{\Pi}}_t^*=(\tilde{\Pi}_t^{*, i})_{i\in\mathcal{I}}, \tilde{\Pi}_t^{*, i}\in \Delta(\mathcal{X}_{t-1:t}^i)$ where $\tilde{\Pi}_t^{*, i}$ represents the belief on $\bX_{t-1:t}^i$ based on common information. One might argue that we can use $\tilde{\bm{\Pi}}_t^*$ instead of $B_t^*$ through the following argument: After we transform the game into games among coordinators, because of the full recall of coordinator $i$, coordinator $i$'s belief (on other coordinator's private information and all hidden information) is independent of her behavioral coordination strategy $\tilde{g}^i$. Hence coordinator $i$ can always form this belief as if she was using the strategy prediction $g^{*i}$ no matter what strategy she is actually using. 

However this argument can run into technical problems: A crucial step for Lemma \ref{lem: closenessd1} is Eq. \eqref{eq: ftid2}, which establishes that coordinator $i$'s belief can be expressed as a function of $(B_t^*, \bX_{t-1}^i)$ for any behavioral coordination strategy $\tilde{g}^i$ coordinator $i$ might use. To use $\tilde{\bm{\Pi}}_t^*$ alone as the information state, one need to argue that coordinator $i$'s belief on her hidden information, $\Pr(X_t^i=\cdot|x_{t-1}^i, u_{t-1})$, can be computed solely through $(\tilde{\pi}_t^{*, i}, x_{t-1}^i)$ without using $u_{t-1}$. Through belief independence of strategy, one may argue that
\begin{align}
\Pr(x_t^i|x_{t-1}^i, u_{t-1}) &= \Pr^{g^{*i}, g^{*-i}}(x_t^i|x_{t-1}^i, u_{t-1})\\
&=\Pr^{g^{*, i}, g^{*,-i}}(x_t^i|x_{t-1}^i, y_{1:t-1}, u_{1:t-1})\\
&=\dfrac{\Pr^{g^{*, i}, g^{*,-i}}(x_t^i, x_{t-1}^i| y_{1:t-1}, u_{1:t-1})}{\Pr^{g^{*, i}, g^{*,-i}}( x_{t-1}^i|y_{1:t-1}, u_{1:t-1})}\\
&=\dfrac{\tilde{\pi}_t^{*, i}(x_{t-1}^i, x_t^i)}{\sum_{\tilde{x}_t^i} \tilde{\pi}_t^{*, i}(x_{t-1}^i, \tilde{x}_t^i)}.\label{trydiv}
\end{align}

However, the above argument is not always valid. It is only valid when the denominator of \eqref{trydiv} is non-zero, but it can be zero. One simple example is as the following: Let $\hat{x}_{t-1}^{i}\in \mathcal{X}_{t-1}^i$ be some fixed state and $\hat{u}_{t-1}^i\in \mathcal{X}_{t-1}^i$ be some fixed action profile. Let $\hat{\prescription}_{t-1}^i$ be the set of prescriptions that maps $\hat{x}_{t-1}^i$ to $\hat{u}_{t-1}^i$.
Suppose that the strategy prediction $g^{*i}$ is a behavioral coordination strategy satisfying the following:
\begin{align}
g^{*i}_{t-1}(\overline{h}_{t-1}^i)(\gamma_{t-1}^i) = 0\qquad \forall \overline{h}_{t-1}^i\in\overline{\mathcal{H}}_{t-1}^i, \gamma_{t-1}^i\in \hat{\prescription}_{t-1}^i,
\end{align}
i.e. under $g^{*i}$, coordinator $i$ never assigns any prescription that maps $\hat{x}_{t-1}^i$ to $\hat{u}_{t-1}^i$. If $\tilde{\pi}_t^{*, i}$ is consistent with the strategy prediction $g^{*i}$, then 
\begin{equation}
\sum_{\tilde{x}_t^i} \tilde{\pi}_t^{*, i}(\hat{x}_{t-1}^i, \tilde{x}_t^i) = \Pr^{g^i, g^{-i}}(\hat{x}_{t-1}^i|h_t^0) = 0
\end{equation}
if $u_{t-1}^i = \hat{u}_{t-1}^i$. When coordinator $i$ use a strategy $\tilde{g}^i$ such that $\bX_{t-1}^i = \hat{x}_{t-1}^i, \bU_{t-1}^i= \hat{u}_{t-1}^i$ could happen with non-zero probability, coordinator $i$ cannot use $\tilde{\pi}_t^{*, i}$ to form her belief on her hidden information. This is contrary to what we need in Eq. \eqref{eq: ftid2} in the proof of Lemma \ref{lem: closenessd1}, which states that the belief function is compatible with \emph{any} behavioral coordination strategy $\tilde{g}^i$.


\subsection{Connection with Sufficient Information Approach}
The compression of private information of coordinators in our model can be seen as an application of Tavafoghi et al.'s \cite{tavafoghi2018unified} sufficient information approach. One can show that our sufficient private information $S_{t}^i=(\bX_{t-d}^i, \bm{\Phi}_t^i)$ satisfy the definition of \emph{sufficient private information} (Definition 4) in \cite{tavafoghi2018unified} (hence we choose to use the same terminology): (i) It can be sequentially updated; (ii) it is sufficient for estimating future private information; (iii) it is sufficient for estimating cost; (iv) it is sufficient for estimating others' information at current time.\cite{tavafoghi2018unified} We note that (i) is true due to Eq. \eqref{eq: spiupdate}; (ii) and (iii) are established and utilized in our Lemma \ref{lem: suffprivated3}; (iv) is true because of conditional independence between the coordinators (our Lemma \ref{lem: condindep}). In \cite{tavafoghi2018unified}, the authors proved that one can compress the common information into the \emph{sufficient common information} (SCI) and consider \emph{sufficient information based} strategies, which choose actions based on SCI and SPI. The SCI is defined to be the common information based belief on the sufficient private information along with the system state, which is $\bX_{t-d+1:t}$ in our case. As we discussed in Section \ref{sec: ccicibd1}, our CCI $B_t$ can be used to create the belief on $\bX_{t-d+1:t}$, hence our CCI $B_t$ (defined in Definition \ref{def: cci2}) is playing the role of the SCI.


\section{CONCLUSION AND FUTURE WORK}\label{sec: concl}
We studied a model of dynamic games among teams with asymmetric information, where agents in each team share their observations with a delay of $d$. Each team is associated with a controlled Markov Chain, whose dynamics are controlled by the actions of all agents. We developed a general approach to characterize a subset of Nash Equilibria with the following feature: At each time, each agent can make their decision based on a compressed version of their information, instead of the full information. We identified two subclasses of strategies: sufficient private information based (SPIB) strategies, which only compresses private information, and compressed information based (CIB) strategies, which compresses both common and private information. 
We showed that while SPIB-strategy-based equilibria always exist, CIB strategy-based equilibria do not always exist.
We developed a backward inductive sequential procedure, whose solution (if it exists) is a CIB strategy-based equilibrium. We characterized certain game environment where the solution exists. Our results highlight the discord between compression of information, existence of (compression based) equilibria, and backward inductive sequential computation of such equilibria in stochastic dynamic games.

Moving forward, there are a few research problems arising from this work: (i) discovering broader conditions for the existence of CIB-CNE in the model of this paper; (ii) developing an efficient algorithm which solves the dynamic program of CIB-CNE (when they exist); (iii) determining minimal additional information needed to be added to the CCI such that CIB-CNE (under the new CCI) is guaranteed to exist; (iv) defining a notion of $\epsilon$-CIB-CNE, analyzing its existence, and developing sequential computation procedures to find them.

Other future research directions include identifying a suitable compression of information and developing a sequential decomposition for other models of games among teams, for example (i) games with continuous state and action spaces (e.g. linear quadratic Gaussian settings), and (ii) general models with non-observable actions.


\bibliographystyle{ieeetran}
\bibliography{mybib}

\begin{thebibliography}{10}
\providecommand{\url}[1]{#1}
\csname url@samestyle\endcsname
\providecommand{\newblock}{\relax}
\providecommand{\bibinfo}[2]{#2}
\providecommand{\BIBentrySTDinterwordspacing}{\spaceskip=0pt\relax}
\providecommand{\BIBentryALTinterwordstretchfactor}{4}
\providecommand{\BIBentryALTinterwordspacing}{\spaceskip=\fontdimen2\font plus
\BIBentryALTinterwordstretchfactor\fontdimen3\font minus
  \fontdimen4\font\relax}
\providecommand{\BIBforeignlanguage}[2]{{%
\expandafter\ifx\csname l@#1\endcsname\relax
\typeout{** WARNING: IEEEtran.bst: No hyphenation pattern has been}%
\typeout{** loaded for the language `#1'. Using the pattern for}%
\typeout{** the default language instead.}%
\else
\language=\csname l@#1\endcsname
\fi
#2}}
\providecommand{\BIBdecl}{\relax}
\BIBdecl

\bibitem{maskin1988theory1}
E.~Maskin and J.~Tirole, ``A theory of dynamic oligopoly, i: {Overview} and
  quantity competition with large fixed costs,'' \emph{Econometrica: Journal of
  the Econometric Society}, pp. 549--569, 1988.

\bibitem{maskin1988theory2}
------, ``A theory of dynamic oligopoly, ii: {Price} competition, kinked demand
  curves, and edgeworth cycles,'' \emph{Econometrica: Journal of the
  Econometric Society}, pp. 571--599, 1988.

\bibitem{doganoglu2003dynamic}
T.~Doganoglu, ``Dynamic price competition with consumption externalities,''
  \emph{netnomics}, vol.~5, no.~1, pp. 43--69, 2003.

\bibitem{bergemann2006pricecompetition}
D.~Bergemann and J.~V{\"a}lim{\"a}ki, ``Dynamic price competition,''
  \emph{Journal of Economic Theory}, vol. 127, no.~1, pp. 232--263, 2006.

\bibitem{cabral2011dynamic}
L.~Cabral, ``Dynamic price competition with network effects,'' \emph{The Review
  of Economic Studies}, vol.~78, no.~1, pp. 83--111, 2011.

\bibitem{hamidbookchapter}
H.~Tavafoghi, Y.~Ouyang, D.~Teneketzis, and M.~Wellman, ``Game theoretic
  approaches to cyber security: {Challenges}, results, and open problems,'' in
  \emph{Adversarial and Uncertain Reasoning for Adaptive Cyber Defense:
  Control-and Game-theoretic Approaches to Cyber Security}, S.~Jajodia,
  G.~Cybenko, P.~Liu, C.~Wang, and M.~Wellman, Eds.\hskip 1em plus 0.5em minus
  0.4em\relax Springer Nature, 2019, vol. 11830, pp. 29--53.

\bibitem{amin2012cyber}
S.~Amin, X.~Litrico, S.~Sastry, and A.~M. Bayen, ``Cyber security of water
  {SCADA} systems -- part i: {Analysis} and experimentation of stealthy
  deception attacks,'' \emph{IEEE Transactions on Control Systems Technology},
  vol.~21, no.~5, pp. 1963--1970, 2012.

\bibitem{amin2015game}
S.~Amin, G.~A. Schwartz, A.~A. C{\'a}rdenas, and S.~S. Sastry, ``Game-theoretic
  models of electricity theft detection in smart utility networks: {Providing}
  new capabilities with advanced metering infrastructure,'' \emph{IEEE Control
  Systems Magazine}, vol.~35, no.~1, pp. 66--81, 2015.

\bibitem{zhu2015game}
Q.~Zhu and T.~Ba{\c{s}}ar, ``Game-theoretic methods for robustness, security,
  and resilience of cyberphysical control systems: games-in-games principle for
  optimal cross-layer resilient control systems,'' \emph{IEEE Control Systems
  Magazine}, vol.~35, no.~1, pp. 46--65, 2015.

\bibitem{shelar2016security}
D.~Shelar and S.~Amin, ``Security assessment of electricity distribution
  networks under {DER} node compromises,'' \emph{IEEE Transactions on Control
  of Network Systems}, vol.~4, no.~1, pp. 23--36, 2016.

\bibitem{colombino2017mutually}
M.~Colombino, R.~S. Smith, and T.~H. Summers, ``Mutually quadratically
  invariant information structures in two-team stochastic dynamic games,''
  \emph{IEEE Transactions on Automatic Control}, vol.~63, no.~7, pp.
  2256--2263, 2017.

\bibitem{summers2017information}
T.~Summers, C.~Li, and M.~Kamgarpour, ``Information structure design in team
  decision problems,'' \emph{IFAC-PapersOnLine}, vol.~50, no.~1, pp.
  2530--2535, 2017.

\bibitem{hancock2019future}
P.~A. Hancock, I.~Nourbakhsh, and J.~Stewart, ``On the future of transportation
  in an era of automated and autonomous vehicles,'' \emph{Proceedings of the
  National Academy of Sciences}, vol. 116, no.~16, pp. 7684--7691, 2019.

\bibitem{spectrumchallenge}
\BIBentryALTinterwordspacing
T.~Harbert. (2014) Radio wrestlers fight it out at the {DARPA} {Spectrum}
  {Challenge}. [Online]. Available:
  \url{https://spectrum.ieee.org/telecom/wireless/radio-wrestlers-fight-it-out-at-the-darpa-spectrum-challenge}
\BIBentrySTDinterwordspacing

\bibitem{myerson2013game}
R.~B. Myerson, \emph{Game theory}.\hskip 1em plus 0.5em minus 0.4em\relax
  Harvard university press, 2013.

\bibitem{witsenhausen1979structure}
H.~Witsenhausen, ``On the structure of real-time source coders,'' \emph{Bell
  System Technical Journal}, vol.~58, no.~6, pp. 1437--1451, 1979.

\bibitem{walrand1983optimal}
J.~Walrand and P.~Varaiya, ``Optimal causal coding-decoding problems,''
  \emph{IEEE Transactions on Information Theory}, vol.~29, no.~6, pp. 814--820,
  1983.

\bibitem{teneketzis2006structure}
D.~Teneketzis, ``On the structure of optimal real-time encoders and decoders in
  noisy communication,'' \emph{IEEE Transactions on Information Theory},
  vol.~52, no.~9, pp. 4017--4035, 2006.

\bibitem{nayyar2011structure}
A.~Nayyar and D.~Teneketzis, ``On the structure of real-time encoding and
  decoding functions in a multiterminal communication system,'' \emph{IEEE
  transactions on information theory}, vol.~57, no.~9, pp. 6196--6214, 2011.

\bibitem{kaspi2010structure}
Y.~Kaspi and N.~Merhav, ``Structure theorem for real-time variable-rate lossy
  source encoders and memory-limited decoders with side information,'' in
  \emph{ISIT}, 2010, pp. 86--90.

\bibitem{tenney1981detection}
R.~R. Tenney and N.~R. Sandell, ``Detection with distributed sensors,''
  \emph{IEEE Transactions on Aerospace and Electronic systems}, no.~4, pp.
  501--510, 1981.

\bibitem{tsitsiklis1989decentralized}
J.~N. Tsitsiklis, ``Decentralized detection,'' \emph{Advances in Statistical
  Signal Processing}, pp. 297--344, 1993.

\bibitem{teneketzis1987decentralized}
D.~Teneketzis and Y.-C. Ho, ``The decentralized {Wald} problem,''
  \emph{Information and Computation}, vol.~73, no.~1, pp. 23--44, 1987.

\bibitem{veeravalli1993decentralized}
V.~V. Veeravalli, T.~Ba{\c{s}}ar, and H.~V. Poor, ``Decentralized sequential
  detection with a fusion center performing the sequential test,'' \emph{IEEE
  Transactions on Information Theory}, vol.~39, no.~2, pp. 433--442, 1993.

\bibitem{veeravalli1994decentralized}
------, ``Decentralized sequential detection with sensors performing sequential
  tests,'' \emph{Mathematics of Control, Signals and Systems}, vol.~7, no.~4,
  pp. 292--305, 1994.

\bibitem{nayyar2011sequential}
A.~Nayyar and D.~Teneketzis, ``Sequential problems in decentralized detection
  with communication,'' \emph{IEEE transactions on information theory},
  vol.~57, no.~8, pp. 5410--5435, 2011.

\bibitem{teneketzis1984decentralized}
D.~Teneketzis and P.~Varaiya, ``The decentralized quickest detection problem,''
  \emph{IEEE Transactions on Automatic Control}, vol.~29, no.~7, pp. 641--644,
  1984.

\bibitem{veeravalli2001decentralized}
V.~V. Veeravalli, ``Decentralized quickest change detection,'' \emph{IEEE
  Transactions on Information theory}, vol.~47, no.~4, pp. 1657--1665, 2001.

\bibitem{varaiya1983causal}
P.~Varaiya and J.~Walrand, ``Causal coding and control for {Markov} chains,''
  \emph{Systems \& control letters}, vol.~3, no.~4, pp. 189--192, 1983.

\bibitem{mahajan2009optimal}
A.~Mahajan and D.~Teneketzis, ``Optimal performance of networked control
  systems with nonclassical information structures,'' \emph{SIAM Journal on
  Control and Optimization}, vol.~48, no.~3, pp. 1377--1404, 2009.

\bibitem{witsenhausen1973standard}
H.~S. Witsenhausen, ``A standard form for sequential stochastic control,''
  \emph{Mathematical systems theory}, vol.~7, no.~1, pp. 5--11, 1973.

\bibitem{mahajan2008sequential}
A.~Mahajan, ``Sequential decomposition of sequential dynamic teams:
  {Applications} to real-time communication and networked control systems.''
  Ph.D. dissertation, Ph. D. dissertation, University of Michigan, Ann Arbor,
  2008.

\bibitem{nayyar2010optimal}
A.~Nayyar, A.~Mahajan, and D.~Teneketzis, ``Optimal control strategies in
  delayed sharing information structures,'' \emph{IEEE Transactions on
  Automatic Control}, vol.~56, no.~7, pp. 1606--1620, 2011.

\bibitem{nayyar2013decentralized}
------, ``Decentralized stochastic control with partial history sharing: A
  common information approach,'' \emph{IEEE Transactions on Automatic Control},
  vol.~58, no.~7, pp. 1644--1658, 2013.

\bibitem{mahajan2013controlsharing}
A.~{Mahajan}, ``Optimal decentralized control of coupled subsystems with
  control sharing,'' \emph{IEEE Transactions on Automatic Control}, vol.~58,
  no.~9, pp. 2377--2382, 2013.

\bibitem{tavafoghi2018unified}
H.~{Tavafoghi}, Y.~{Ouyang}, and D.~{Teneketzis}, ``A unified approach to
  dynamic decision problems with asymmetric information: Non-strategic
  agents,'' \emph{IEEE Transactions on Automatic Control}, March 2022, to
  appear.

\bibitem{mailath2006repeated}
G.~J. Mailath, J.~George, L.~Samuelson \emph{et~al.}, \emph{Repeated games and
  reputations: long-run relationships}.\hskip 1em plus 0.5em minus 0.4em\relax
  Oxford university press, 2006.

\bibitem{bacsar1998dynamic}
T.~Ba{\c{s}}ar and G.~J. Olsder, \emph{Dynamic noncooperative game
  theory}.\hskip 1em plus 0.5em minus 0.4em\relax SIAM, 1999, vol.~23.

\bibitem{filar2012competitive}
J.~Filar and K.~Vrieze, \emph{Competitive {Markov} decision processes}.\hskip
  1em plus 0.5em minus 0.4em\relax Springer Science \& Business Media, 2012.

\bibitem{maskin2001markov}
E.~Maskin and J.~Tirole, ``{Markov} perfect equilibrium: {I.} observable
  actions,'' \emph{Journal of Economic Theory}, vol. 100, no.~2, pp. 191--219,
  2001.

\bibitem{renault2006value}
J.~Renault, ``The value of {Markov} chain games with lack of information on one
  side,'' \emph{Mathematics of Operations Research}, vol.~31, no.~3, pp.
  490--512, 2006.

\bibitem{renault2012value}
------, ``The value of repeated games with an informed controller,''
  \emph{Mathematics of operations Research}, vol.~37, no.~1, pp. 154--179,
  2012.

\bibitem{zheng2013decomposition}
J.~Zheng and D.~A. Casta{\~n}{\'o}n, ``Decomposition techniques for {Markov}
  zero-sum games with nested information,'' in \emph{52nd IEEE Conference on
  Decision and Control}.\hskip 1em plus 0.5em minus 0.4em\relax IEEE, 2013, pp.
  574--581.

\bibitem{gensbittel2015value}
F.~Gensbittel and J.~Renault, ``The value of {Markov} chain games with
  incomplete information on both sides,'' \emph{Mathematics of Operations
  Research}, vol.~40, no.~4, pp. 820--841, 2015.

\bibitem{li2014lp}
L.~Li and J.~Shamma, ``{LP} formulation of asymmetric zero-sum stochastic
  games,'' in \emph{53rd IEEE Conference on Decision and Control}.\hskip 1em
  plus 0.5em minus 0.4em\relax IEEE, 2014, pp. 1930--1935.

\bibitem{li2017solving}
L.~Li, C.~Langbort, and J.~Shamma, ``Solving two-player zero-sum repeated
  {Bayesian} games,'' \emph{arXiv preprint arXiv:1703.01957}, 2017.

\bibitem{cardaliaguet2016markov}
P.~Cardaliaguet, C.~Rainer, D.~Rosenberg, and N.~Vieille, ``{Markov} games with
  frequent actions and incomplete information—the limit case,''
  \emph{Mathematics of Operations Research}, vol.~41, no.~1, pp. 49--71, 2016.

\bibitem{kartik2020upper}
D.~Kartik and A.~Nayyar, ``Upper and lower values in zero-sum stochastic games
  with asymmetric information,'' \emph{Dynamic Games and Applications}, pp.
  1--26, 2020.

\bibitem{maskin2013youtube}
\BIBentryALTinterwordspacing
E.~Maskin and J.~Tirole, ``{Markov} equilibrium,'' in \emph{J. F. Mertens
  Memorial Conference}, 2013. [Online]. Available:
  \url{https://youtu.be/UNtLnKJzrhs}
\BIBentrySTDinterwordspacing

\bibitem{nayyar2012dynamic}
A.~Nayyar and T.~Ba{\c{s}}ar, ``Dynamic stochastic games with asymmetric
  information,'' in \emph{2012 IEEE 51st IEEE Conference on Decision and
  Control (CDC)}.\hskip 1em plus 0.5em minus 0.4em\relax IEEE, 2012, pp.
  7145--7150.

\bibitem{gupta2014common}
A.~Gupta, A.~Nayyar, C.~Langbort, and T.~Ba{\c{s}}ar, ``Common information
  based {Markov} perfect equilibria for {linear-Gaussian} games with asymmetric
  information,'' \emph{SIAM Journal on Control and Optimization}, vol.~52,
  no.~5, pp. 3228--3260, 2014.

\bibitem{gupta2016dynamic}
A.~Gupta, C.~Langbort, and T.~Ba{\c{s}}ar, ``Dynamic games with asymmetric
  information and resource constrained players with applications to security of
  cyberphysical systems,'' \emph{IEEE Transactions on Control of Network
  Systems}, vol.~4, no.~1, pp. 71--81, 2016.

\bibitem{ouyang2015oligopoly}
Y.~Ouyang, H.~Tavafoghi, and D.~Teneketzis, ``Dynamic oligopoly games with
  private {Markovian} dynamics,'' in \emph{2015 54th IEEE Conference on
  Decision and Control (CDC)}.\hskip 1em plus 0.5em minus 0.4em\relax IEEE,
  2015, pp. 5851--5858.

\bibitem{nayyar2013common}
A.~Nayyar, A.~Gupta, C.~Langbort, and T.~Ba{\c{s}}ar, ``Common information
  based {Markov} perfect equilibria for stochastic games with asymmetric
  information: Finite games,'' \emph{IEEE Transactions on Automatic Control},
  vol.~59, no.~3, pp. 555--570, 2013.

\bibitem{ouyang2016dynamic}
Y.~Ouyang, H.~Tavafoghi, and D.~Teneketzis, ``Dynamic games with asymmetric
  information: {Common} information based perfect {Bayesian} equilibria and
  sequential decomposition,'' \emph{IEEE Transactions on Automatic Control},
  vol.~62, no.~1, pp. 222--237, 2016.

\bibitem{tavafoghi2016stochastic}
H.~Tavafoghi, Y.~Ouyang, and D.~Teneketzis, ``On stochastic dynamic games with
  delayed sharing information structure,'' in \emph{2016 IEEE 55th Conference
  on Decision and Control (CDC)}.\hskip 1em plus 0.5em minus 0.4em\relax IEEE,
  2016, pp. 7002--7009.

\bibitem{hamidthesis}
H.~Tavafoghi, ``On design and analysis of cyber-physical systems with strategic
  agents,'' Ph.D. dissertation, Ph. D. dissertation, University of Michigan,
  Ann Arbor, 2017.

\bibitem{vasal2019spbe}
D.~{Vasal}, A.~{Sinha}, and A.~{Anastasopoulos}, ``A systematic process for
  evaluating structured perfect {Bayesian} equilibria in dynamic games with
  asymmetric information,'' \emph{IEEE Transactions on Automatic Control},
  vol.~64, no.~1, pp. 81--96, 2019.

\bibitem{farina2018ex}
G.~Farina, A.~Celli, N.~Gatti, and T.~Sandholm, ``Ex ante coordination and
  collusion in zero-sum multi-player extensive-form games,'' in
  \emph{Conference on Neural Information Processing Systems (NIPS)}, 2018.

\bibitem{zhang2020computing}
Y.~Zhang and B.~An, ``Computing team-maxmin equilibria in zero-sum multiplayer
  extensive-form games,'' in \emph{Proceedings of the AAAI Conference on
  Artificial Intelligence}, vol.~34, no.~02, 2020, pp. 2318--2325.

\bibitem{anantharam2007common}
V.~Anantharam and V.~Borkar, ``Common randomness and distributed control: A
  counterexample,'' \emph{Systems \& control letters}, vol.~56, no. 7-8, pp.
  568--572, 2007.

\bibitem{bhattacharya2012multi}
S.~Bhattacharya and T.~Ba{\c{s}}ar, ``Multi-layer hierarchical approach to
  double sided jamming games among teams of mobile agents,'' in \emph{2012 IEEE
  51st IEEE Conference on Decision and Control (CDC)}.\hskip 1em plus 0.5em
  minus 0.4em\relax IEEE, 2012, pp. 5774--5779.

\bibitem{cox2018strategic}
C.~A. Cox and B.~Stoddard, ``Strategic thinking in public goods games with
  teams,'' \emph{Journal of Public Economics}, vol. 161, pp. 31--43, 2018.

\bibitem{cooper2005two}
D.~J. Cooper and J.~H. Kagel, ``Are two heads better than one? {Team} versus
  individual play in signaling games,'' \emph{American Economic Review},
  vol.~95, no.~3, pp. 477--509, 2005.

\bibitem{kuhn2016extensive}
H.~Kuhn, ``Extensive games and the problem of information, in: {H.W. Kuhn} and
  {A.W. Tucker} (eds.),'' \emph{Contributions to the Theory of Games}, vol.~2,
  pp. 193--216, 1953.

\bibitem{osborne1994course}
M.~J. Osborne and A.~Rubinstein, \emph{A course in game theory}.\hskip 1em plus
  0.5em minus 0.4em\relax The MIT Press, 1994.

\bibitem{Yoshikawa1978}
T.~Yoshikawa, ``Decomposition of dynamic team decision problems,'' \emph{IEEE
  Transactions on Automatic Control}, vol.~23, no.~4, pp. 627--632, Aug 1978.

\bibitem{mas1995microeconomic}
A.~Mas-Colell, M.~D. Whinston, J.~R. Green \emph{et~al.}, \emph{Microeconomic
  theory}.\hskip 1em plus 0.5em minus 0.4em\relax Oxford university press New
  York, 1995, vol.~1.

\bibitem{fudenberg1991game}
D.~Fudenberg and J.~Tirole, \emph{Game theory}.\hskip 1em plus 0.5em minus
  0.4em\relax MIT press, 1991.

\end{thebibliography}

\appendix
\subsection{Proof of Claim in Example \ref{ex: 1}}\label{app: ex1}
Define two pure strategies $\mu^A$ and $\tilde{\mu}^A$ of Team A as follows:
\begin{align*}
\mu^{A, 1}(x_1^{A, 1}) &= x_1^{A, 1}, \quad \mu^{A, 2}(x_1^{A, 2}) = - x_1^{A, 2},\\
\tilde{\mu}^{A, 1}(x_1^{A, 1}) &= - x_1^{A, 1}, \quad \tilde{\mu}^{A, 2}(x_1^{A, 2}) = x_1^{A, 2}.
\end{align*}

Now, assume that Team A and Team B are restricted to use independently randomized strategies (type 2 strategies defined in Section \ref{sec: rew}). We will show in two steps that there exist no equilibria within this class of strategies.

\textbf{Step 1:} If Team A and Team B's type 2 strategies form an equilibrium, then Team A is playing either $\mu^A$ or $\tilde{\mu}^A$.

Let $p_j(x)$ denote the probability that player (A, $j$) plays $U_1^{A, j} = -x$ given $X_1^{A, j} = x$. Define $$q_j = \frac{1}{2}p_j(-1) + \frac{1}{2}p_j(+1),$$ i.e. the ex-ante probability that player (A, $j$) ``lies''.

Then we have 
$$\E[r_1^A(\bX_1, \bU_1)]=q_1(1-q_2) + q_2(1-q_1).$$

Under an equilibrium, Team B will optimally respond to Team A strategy's described through $(p_1, p_2)$. We can find a lower bound of Team B's reward by fixing a strategy: Consider the ``random guess'' strategy of Team B, where each of $(B, j)$ (for $j=1,2$) chooses $U_2^{B, j}$ uniformly at random irrespective of $\bU_1^A$ and independent of the other team member. Team B can thus guarantee an expected reward of $\frac{1}{2} + \frac{1}{2} = 1$
given any strategy of Team A. Since $r_2^A(\bX_2, \bU_2) = -r_2^B(\bX_2, \bU_2)$, we conclude that Team A's total reward in an equilibrium is upper bounded by 
\begin{align*}
&\quad~ q_1(1-q_2) + q_2(1-q_1) - 1  \\
&= -q_1q_2-(1-q_1)(1-q_2)\leq 0
\end{align*}

Let $\sigma^B$ denote the strategy of Team B. Let $\pi_j(u^1, u^2)$ denote the probability that player $(B, j)$ plays $U_2^{B, j} = -u^j$ given $U_1^{A, 1} = u^{1}, U_1^{A, 2} = u^{2}$ (i.e. the probability that player (B, $j$) believes that (A, $j$) was ``lying'' hence guesses the opposite of what was signaled). If Team A plays $\mu^A$, then the total reward of Team A is
\begin{align*}
	&\quad ~J^A(\mu^A, \sigma^B)\\
	&= 1 - \E[1 -\pi_1(X_1^{A, 1}, -X_1^{A, 2})  + \pi_2(X_1^{A, 1}, -X_1^{A, 2})]\\
	&=\dfrac{1}{4}\sum_{\mathbf{x}\in \{-1, 1\}^2 } (-\pi_1(\mathbf{x}) + \pi_2(\mathbf{x})).
\end{align*}

If Team A plays $\tilde{\mu}^A$, then the total reward of Team A is
\begin{align*}
&\quad ~J^A(\tilde{\mu}^A, \sigma^B)\\
&= 1 - \E[\pi_1(-X_1^{A, 1}, X_1^{A, 2})  + 1 - \pi_2(-X_1^{A, 1}, X_1^{A, 2})]\\
&=\dfrac{1}{4}\sum_{\mathbf{x}\in \{-1, 1\}^2 } (\pi_1(\mathbf{x}) - \pi_2(\mathbf{x})).
\end{align*}

Observe that $J^A(\mu^A, \sigma^B) + J^A(\tilde{\mu}^A, \sigma^B) = 0$. Hence for any $\sigma^B$, either $J^A(\mu^A, \sigma^B)\geq 0$ or $J^A(\tilde{\mu}^A, \sigma^B)\geq 0$. In particular, we can conclude that Team A's total reward is at least 0 in any equilibrium.

We have established both an upper bound and lower bound for Team A's total reward in an equilibrium. Hence we must have
\begin{equation}
	-q_1q_2-(1-q_1)(1-q_2) = 0,
\end{equation}
which implies $q_1=0, q_2=1$ or $q_1=1, q_2=0$. The former case corresponds to Team A playing the pure strategy $\mu^A$, and the latter to playing $\tilde{\mu}^A$.

\textbf{Step 2:} There does not exist equilibria where Team A plays $\mu^A$ or $\tilde{\mu}^A$.

Suppose that Team A plays $\mu^A$. Then the only best response of Team B is to play $U_2^{B, 1} = U_1^{A, 1}, U_2^{B, 2} = -U_1^{A, 2}$. Then, Team A's total reward is $J^A(\mu^A, \sigma^B) = 1 - 1 - 1 = -1$. If Team A deviate to $\tilde{\mu}^A$, then Team A can obtain a total reward of $+1$ (remember that $J^A(\mu^A, \sigma^B) + J^A(\tilde{\mu}^A, \sigma^B) = 0$ for any $\sigma^B$). Hence Team A does not play $\mu^A$ at equilibrium.

Similar arguments apply to $\tilde{\mu}^A$, which completes the proof.

\subsection{Proof of Lemma \ref{lem: pure2coord}}\label{app: pure2coord}
Given a pure strategy profile $\mu$, define a pure coordination strategy profile $\nu$ by
\begin{align*}
	\nu_t^{i}(h_t^i, \gamma_{1:t-1}^i) &= (\mu_t^{i, j}(h_t^i, \cdot))_{(i, j)\in\mathcal{N}_i}\\
	&\quad \forall h_t^i\in \mathcal{H}_t^i, \gamma_{1:t-1}^i\in \prescription_{1:t-1}^i, \forall i\in \mathcal{I}.
\end{align*}

We first prove one side of the result by coupling two systems, i.e. for every pure strategy profile $\mu$, there exist an equivalent coordination strategy profile $\nu$. In one of the systems, we assume that pure strategies are used. In the other system, we assume that the corresponding pure coordination strategies are used. The realizations of primitive random variables (i.e. $(X_1^i)_{i\in\mathcal{I}}, (W_t^{i, X}, W_t^{i, Y})_{i\in\mathcal{I}, t\in\mathcal{T}}$) are assumed to be the same for two systems. We proceed to show that the realizations of all system variables (i.e. $(\bX_t, \bY_t, \bU_t)_{t\in\mathcal{T}}$) will be the same for both systems. As a result, the expected payoffs are the same for both systems. The other direction can be proved analogously.

We prove that the realizations of $(\bX_t, \bY_t, \bU_t)_{t\in\mathcal{T}}$ are the same by induction on time $t$.

\textbf{Induction Base}: At $t=1$, the realizations of $\bX_1$ are the same for two systems by assumption. For the first system we have
\begin{equation}
U_1^{i, j} = \mu_1^{i, j}(X_1^{i, j}),
\end{equation}
and for the second system we have
\begin{align*}
\bm{\Gamma}_1^{i} &= \nu_t^i(H_1^{i}) = (\mu_t^{i, j}(\cdot))_{(i, j)\in\mathcal{N}_i}, \\
U_1^{i, j} &= \Gamma_1^{i, j}(X_1^{i, j}),
\end{align*}
which means that $U_1^{i, j} = \mu_1^i(X_1^{i, j})$ also holds in the second system. 

Since $(W_1^{i, Y})_{i\in\mathcal{I}}$ are the same for both systems, $Y_1^i=\ell_1^i(X_1^i, \bU_1, W_1^{i, Y})$ are the same for both systems.

\textbf{Induction Step}: Suppose that $\bX_s, \bY_s, \bU_s$ are the same for both systems for all $s<t$. Now we prove it for $t$.

First, since the realizations of $\bX_{t-1}^i, \bU_{t-1}, W_{t-1}^{i, X}$ are the same, we have
\begin{equation}
\bX_t^i = f_t^i(\bX_{t-1}^i, \bU_{t-1}, W_{t-1}^{i, X})
\end{equation}
to be the same for both systems.

Consider the actions. For the first system
\begin{equation}
	U_t^{i, j} = \mu_t^{i, j}(H_t^{i, j}) = \mu_t^{i, j}(H_t^{i}, X_{t-d+1:t}^{i, j}).
\end{equation}

In the second system
\begin{align*}
\bm{\Gamma}_t^{i} &= \nu_t^i(H_t^{i}) = (\mu_t^{i, j}(H_t^i, \cdot))_{(i, j)\in\mathcal{N}_i} \\
U_t^{i, j} &= \Gamma_t^{i, j}(X_{t-d+1:t}^{i, j}),
\end{align*}
which means that
\begin{equation}
	U_t^{i, j} = \mu_t^{i, j}(H_t^i, X_{t-d+1:t}^{i, j}).
\end{equation}

We conclude that $\bU_t$ has the same realization for two systems since $(H_t^i, X_{t-d+1:t}^{i, j})$ have the same realization by the induction hypothesis and the argument above. Since $(W_t^{i, Y})_{i\in\mathcal{I}}$ are the same for both systems, $Y_t^i=\ell_t^i(X_t^i, \bU_t, W_t^{i, Y})$ are same for both systems.

Therefore we have established the induction step, proving that for every pure strategy profile $\mu$ there exists an equivalent coordination strategy profile $\nu$.

To complete the other half of the proof, for each given coordination strategy $\nu$ we define
\begin{equation}\label{nu2mu1}
	\mu_t^{i, j}(h_t^{i, j}) = \gamma_{t}^{i, j}(x_{t-d+1:t}^{i, j})\quad\forall h_t^{i, j}\in\mathcal{H}_t^{i, j},
\end{equation}
where $\gamma_{t}^i=(\gamma_{t}^{i, j})_{(i, j) \in \mathcal{N}_i}$ is recursively defined by $\nu_{1:t}^i$ and $h_t^i$ through
\begin{equation}\label{nu2mu2}
	\gamma_{t}^i = \nu_t^i(h_t^i, \gamma_{1:t-1}^i)\quad\forall t\in \mathcal{T}.
\end{equation}

Then using a similar argument we can show that $\mu$ is equivalent to $\nu$.
\subsection{Proof of Lemma \ref{lem: condindep}}\label{app: condindep}
	Induction on time $t$.
	
	\textbf{Induction Base:} At $t=1$, we have $\bX_{1}^k$ to be independent for different $k$ because of the assumption on primitive random variables. Furthermore, since $H_1^k$ is a deterministic random vector (see Remark \ref{remark: initial}) and the randomization of different coordinators are independent, we conclude that $(\bX_{1}^k, \bm{\Gamma}_{1}^k)$ are mutually independent for different $k$. The distribution of $(\bX_{1}^k, \bm{\Gamma}_{1}^k)$ depends on $g$ only through $g^k$.
	
	\textbf{Induction Step:} Suppose that $(\bX_{1:t}^k, \bm{\Gamma}_{1:t}^k)$ are conditionally independent given $H_t^0$ and $\Pr^g(\bX_{1:t}^k, \bm{\Gamma}_{1:t}^k|H_t^0)$ depends on $g$ only through $g^k$. Now, we have
	\begin{align*}
	&\quad~\Pr^g(x_{1:t+1}, \gamma_{1:t+1}|h_{t+1}^0)\\
	&=\Pr^g(x_{t+1}|h_{t+1}^0, x_{1:t}, \gamma_{1:t+1})\times\Pr^g(\gamma_{t+1}|h_{t+1}^0, x_{1:t}, \gamma_{1:t})\times\\
	&\quad \times \Pr^g(x_{1:t}, \gamma_{1:t}|h_{t+1}^0)\\
	&=\left(\prod_{k\in\mathcal{I}} \Pr(x_{t+1}^k|x_t^k, u_t)g_{t+1}^k(\gamma_{t+1}^k|h_{t+1}^0, x_{1:t-d+1}^k, \gamma_{1:t}^k)\right) \times\\
	&\quad\times \Pr^g(x_{1:t}, \gamma_{1:t}|h_{t+1}^0).
	\end{align*}
	
	We then claim that
	\begin{equation}
	\Pr^g(x_{1:t}, \gamma_{1:t}, y_t, u_t|h_{t}^0) = \prod_{k\in \mathcal{I}} F_t^k(x_{1:t}^k, \gamma_{1:t}^k, h_{t+1}^0)
	\end{equation}
	where for each $k\in\mathcal{I}$, $F_t^k$ is a function that depends only on $g^k$.
	
	To establish the claim we note that
	\begin{align*}
	&\quad~\Pr^g(x_{1:t}, \gamma_{1:t}, y_t, u_t|h_{t}^0) \\
	&= \Pr^g(y_t, u_t|h_{t}^0, x_{1:t}, \gamma_{1:t}) \Pr^g(x_{1:t}, \gamma_{1:t}|h_t^0)\\
	&=\left(\prod_{k\in\mathcal{I}}\Pr(y_t^k|x_t^k, u_t) \bm{1}_{\{u_{t}^k = \gamma_{t}^k(x_{t-d+1:t}^k)\}}\right) \Pr^g(x_{1:t}, \gamma_{1:t}|h_t^0)\\
	&=\left(\prod_{k\in\mathcal{I}}\Pr(y_t^k|x_t^k, u_t) \bm{1}_{\{u_{t}^k = \gamma_{t}^k(x_{t-d+1:t}^k)\}}\right)\times \\ &\quad\times \left(\prod_{k\in\mathcal{I}}\Pr^{g_k}(x_{1:t}^k, \gamma_{1:t}^k|h_t^0) \right)\\
	&=\prod_{k\in\mathcal{I}} F_t^k(x_{1:t}^k, \gamma_{1:t}^k, h_{t+1}^0),
	\end{align*}
	where in the third step we have used the induction hypothesis. 
	
	Given the claim, we have
	\begin{align*}
	&\quad~\Pr^g(x_{1:t}, \gamma_{1:t}|h_{t+1}^0)\\
	&=\dfrac{\Pr^g(x_{1:t}, \gamma_{1:t}, y_t, u_t|h_{t}^0)}{\sum_{\tilde{x}_{1:t}, \tilde{\gamma}_{1:t}} \Pr^g(\tilde{x}_{1:t}, \tilde{\gamma}_{1:t}, y_t, u_t|h_{t}^0)}\\
	&=\dfrac{\prod_{k\in \mathcal{I}} F_t^k(x_{1:t}^k, \gamma_{1:t}^k, h_{t+1}^0)}{\sum_{\tilde{x}_{1:t}, \tilde{\gamma}_{1:t}}\prod_{k\in \mathcal{I}} F_t^k(\tilde{x}_{1:t}^k, \tilde{\gamma}_{1:t}^k, h_{t+1}^0)}\\
	&=\dfrac{\prod_{k\in \mathcal{I}} F_t^k(x_{1:t}^k, \gamma_{1:t}^k, h_{t+1}^0)}{\prod_{k\in \mathcal{I}} \left(\sum_{\tilde{x}_{1:t}^k, \tilde{\gamma}_{1:t}^k}F_t^k(\tilde{x}_{1:t}^k, \tilde{\gamma}_{1:t}^k, h_{t+1}^0)\right)}\\
	&=\prod_{k\in\mathcal{I}} \left( \dfrac{ F_t^k(x_{1:t}^k, \gamma_{1:t}^k, h_{t+1}^0)}{\sum_{\tilde{x}_{1:t}^k, \tilde{\gamma}_{1:t}^k}F_t^k(\tilde{x}_{1:t}^k, \tilde{\gamma}_{1:t}^k, h_{t+1}^0)}\right)
	\end{align*}
	and then
	\begin{equation}
	\Pr^g(x_{1:t+1}, \gamma_{1:t+1}|h_{t+1}^0) = \prod_{k\in\mathcal{I}} G_t^k(x_{1:t+1}^k, \gamma_{1:t+1}^k, h_{t+1}^0),
	\end{equation}
	where $G_t^k$ is given by
	\begin{align}
		&\quad~G_t^k(x_{1:t+1}^k, \gamma_{1:t+1}^k, h_{t+1}^0)\\
		&=\Pr(x_{t+1}^k|x_t^k, u_t)g_{t+1}^k(\gamma_{t+1}^k|h_{t+1}^0, x_{1:t-d+1}^k, \gamma_{1:t}^k) \times\\
		&\quad\times\dfrac{ F_t^k(x_{1:t}^k, \gamma_{1:t}^k, h_{t+1}^0)}{\sum_{\tilde{x}_{1:t}^k, \tilde{\gamma}_{1:t}^k}F_t^k(\tilde{x}_{1:t}^k, \tilde{\gamma}_{1:t}^k, h_{t+1}^0)}.
	\end{align}
	
	One can check that $G_t^k$ depends on $g$ only through $g^k$ and $\sum_{\tilde{x}_{1:t+1}^k, \tilde{\gamma}_{1:t+1}^k} G_t^k(\tilde{x}_{1:t+1}^k, \tilde{\gamma}_{1:t+1}^k, h_{t+1}^0) = 1$, therefore
	\begin{equation}
		G_t^k(x_{1:t+1}^k, \gamma_{1:t+1}^k, h_{t+1}^0) = \Pr^{g^k}(x_{1:t+1}^k, \gamma_{1:t+1}^k| h_{t+1}^0).
	\end{equation}
	Hence we establish the induction step.


\subsection{Proof of Lemma \ref{lem: selfbelief}}\label{app: selfbelief}
Assume that $\overline{h}_t^i\in\overline{\mathcal{H}}_t^i$ is admissible under ${g}$. From Lemma \ref{lem: condindep}, we know that $\Pr^{g} (x_{1:t}^i, \gamma_{1:t}^i|h_t^0)$ does not depend on $g^{-i}$. As a conditional distribution obtained from $\Pr^{g} (x_{1:t}^i, \gamma_{1:t}^i|h_t^0)$, $\Pr^{g}(x_{t-d+1:t}^i|\overline{h}_t^i)$ does not depend on $g^{-i}$ either.

Therefore, we can compute the belief of coordinator $i$ by replacing $g^{-i}$ with $\hat{g}^{-i}$, which is an open-loop strategy profile that always generates the actions $u_{1:t-1}^{-i}$. 
\begin{align}
&\Pr^{g^i, g^{-i}}(x_{t-d+1:t}^i|\overline{h}_t^i)=\Pr^{g^i, \hat{g}^{-i}}(x_{t-d+1:t}^i|\overline{h}_t^i).
\end{align}
Note that we always have $\Pr^{g^i, \hat{g}^{-i}}(\overline{h}_t^i) > 0$ for all $\overline{h}_t^i$ admissible under $g$.

Furthermore, we can also introduce additional random variables into the condition that are conditionally independent according to Lemma \ref{lem: condindep}, i.e.
\begin{align}
	&\Pr^{g^i, \hat{g}^{-i}}(x_{t-d+1:t}^i|\overline{h}_t^i)=\Pr^{g^i, \hat{g}^{-i}}(x_{t-d+1:t}^i|\overline{h}_t^i, x_{t-d:t}^{-i}),
\end{align}
where $x_{t-d:t}^{-i}\in\mathcal{X}_{t-d:t}^{-i}$ is such that $\Pr^{g^i, \hat{g}^{-i}}(x_{t-d:t}^{-i}|\overline{h}_t^i) > 0$.

Let $\tau = t-d+1$. By Bayes' rule
\begin{align}
&\quad~\Pr^{g^i, \hat{g}^{-i}}(x_{\tau:t}^i|\overline{h}_t^i, x_{\tau-1:t}^{-i})\\
&=\dfrac{\Pr^{g^i, \hat{g}^{-i}}(x_{\tau:t}, y_{\tau:t-1}, u_{\tau:t-1}, \gamma_{\tau:t-1}^i|h_{\tau}^{*i} )}{ \sum_{\tilde{x}_{\tau:t}^i} \Pr^{g^i, \hat{g}^{-i}}(\tilde{x}_{\tau:t}^i, x_{\tau:t}^{-i}, y_{\tau:t-1}, u_{\tau:t-1}, \gamma_{\tau:t-1}^i|h_{\tau}^{*i} )},\label{eq: selfbeliefproof1}
\end{align}
where
\begin{align}
	h_{\tau}^{*i} = (y_{1:\tau-1}, u_{1:\tau-1}, x_{1:\tau-1}^i, x_{\tau-1}^{-i}, \gamma_{1:\tau-1}^i).
\end{align}

We have
\begin{align}
	&\quad~\Pr^{g^i, \hat{g}^{-i}}(x_{\tau:t}, y_{\tau:t-1}, u_{\tau:t-1}, \gamma_{\tau:t-1}^i|h_{\tau}^{*i} )=\prod_{l=1}^{d-1} \Big[\\
	&\quad~\Pr^{g^i, \hat{g}^{-i}}(x_{t-l+1}, y_{t-l}|h_\tau^{*i}, x_{\tau:t-l}, y_{\tau:t-l-1}, u_{\tau:t-l}, \gamma_{\tau:t-l}^i)\times\\
	&\times \Pr^{g^i, \hat{g}^{-i}}(u_{t-l}^i|h_\tau^{*i}, x_{\tau:t-l}, y_{\tau:t-l-1}, u_{\tau:t-l-1}, \gamma_{\tau:t-l}^i)\times\\
	&\times \Pr^{g^i, \hat{g}^{-i}}(\gamma_{t-l}^i|h_\tau^{*i}, x_{\tau:t-l}, y_{\tau:t-l-1}, u_{\tau:t-l-1}, \gamma_{\tau:t-l-1}^i)\Big]\times\\
	&\times \Pr^{g^i, \hat{g}^{-i}}(x_{\tau}|h_{\tau}^{*i}).\label{eq: selfbeliefproof2}
\end{align}

The first three terms in the above product are
\begin{align}
	&\quad~\Pr^{g^i, \hat{g}^{-i}}(x_{t-l+1}, y_{t-l}|h_\tau^{*i}, x_{\tau:t-l}, y_{\tau:t-l-1}, u_{\tau:t-l}, \gamma_{\tau:t-l}^i)\!\\
	&=\prod_{k\in\mathcal{I}}[\Pr(x_{t-l+1}^k|x_{t-l}^k, u_{t-l})\Pr(y_{t-l}^k|x_{t-l}^k, u_{t-l}) ],\label{eq: selfbeliefproof3}\\
	&\quad~\Pr^{g^i, \hat{g}^{-i}}(u_{t-l}^i|h_\tau^{*i}, x_{\tau:t-l}, y_{\tau:t-l-1}, u_{\tau:t-l-1}, \gamma_{\tau:t-l}^i)\\
	&=\prod_{(i, j)\in \mathcal{N}_i }\bm{1}_{\{u_{t-l}^{i, j} = \gamma_{t-l}^{i, j}(x_{\tau-l:t-l}^{i, j}) \} } \\
	&= \prod_{(i, j)\in \mathcal{N}_i } \bm{1}_{\{u_{t-l}^{i, j} = \phi_{t-l, l}^i(x_{\tau:t-l}^i) \} },\label{eq: selfbeliefproof4}\\
	&\quad~\Pr^{g^i, \hat{g}^{-i}}(\gamma_{t-l}^i|h_\tau^{*i}, x_{\tau:t-l}, y_{\tau:t-l-1}, u_{\tau:t-l-1}, \gamma_{\tau:t-l-1}^i)\\
	&=g_{t-l}^i(\gamma_{t-l}^i|y_{1:t-l-1}, u_{1:t-l-1}, x_{1:t-l-d}^i, \gamma_{1:t-l-1}^i),\quad\label{eq: selfbeliefproof5}
\end{align}
respectively.

The last term satisfies
\begin{align}
	&\Pr^{g^i, \hat{g}^{-i}}(x_{\tau}|h_{\tau}^{*i}) = \prod_{k\in\mathcal{I}}\Pr(x_{\tau}^k|x_{\tau-1}^k, u_{\tau-1}).
\end{align}

Substituting \eqref{eq: selfbeliefproof2} - \eqref{eq: selfbeliefproof5} into \eqref{eq: selfbeliefproof1} we obtain
\begin{align}
	&\quad~\Pr^{g^i, \hat{g}^{-i}}(x_{\tau:t}^i|\overline{h}_t^i, x_{\tau-1:t}^{-i})\\
	&=\dfrac{F_t^i(x_{\tau:t}^i, y_{\tau:t-1}^i, u_{\tau-1:t-1}, x_{\tau-1}^i, \phi_{t}^i)}{\sum_{\tilde{x}_{\tau:t}^i } F_t^i(\tilde{x}_{\tau:t}^i, y_{\tau:t-1}^i, u_{\tau-1:t-1}, x_{\tau-1}^i, \phi_{t}^i)}
\end{align}
where
\begin{align}
	&\quad~F_t^i(x_{\tau:t}^i, y_{\tau:t-1}^i, u_{\tau-1:t-1}, \phi_{t}^i)\\
	&:= \Pr(x_{\tau}^i|x_{\tau-1}^i, u_{\tau-1})\prod_{l=1}^{d-1}\Big[\Pr(x_{t-l+1}^i|x_{t-l}^i, u_{t-l})\times\\
	&\times\Pr(y_{t-l}^i|x_{t-l}^i, u_{t-l})  \left(\prod_{(i, j)\in \mathcal{N}_i } \bm{1}_{\{u_{t-l}^{i, j} = \phi_{t-l, l}^{i, j}(x_{\tau:t-l}^{i, j}) \} }\right) \Big] 
\end{align}

Therefore we have proved that
\begin{align}
	&\quad~\Pr^{g}(x_{t-d+1:t}^i|\overline{h}_t^i)\\
	&=P_t^i(x_{t-d+1:t}^i|y_{t-d+1:t-1}^i, u_{t-d:t-1}, x_{t-d}^i, \phi_{t}^i)\\
	&:=\dfrac{F_t^i(x_{t-d+1:t}^i, y_{t-d+1:t-1}^i, u_{t-d:t-1}, x_{t-d}^i, \phi_{t}^i)}{\sum_{\tilde{x}_{t-d+1:t}^i } F_t^i(\tilde{x}_{t-d+1:t}^i, y_{t-d+1:t-1}^i, u_{t-d:t-1}, x_{t-d}^i, \phi_{t}^i)}
\end{align}
where $P_t^i$ is independent of $g$.

\subsection{Proof of Lemma \ref{lem: suffprivated3}}\label{app: suffprivated3}
Let $\tilde{g}^i$ denote coordinator $i$'s behavioral coordination strategy. Because of Lemma \ref{lem: condindep} we have
\begin{align*}
&~\quad \Pr^{\tilde{g}^i, g^{-i}}(x_{t-d+1:t}, \gamma_{t}^{-i}|\overline{h}_t^i, \gamma_t^i)\\
&=\Pr^{\tilde{g}^i, g^{-i}}(x_{t-d+1:t}, \gamma_{t}^{-i}|h_t^0, x_{1:t-d}^i, \gamma_{1:t}^i)\\
&=\Pr^{\tilde{g}^i}(x_{t-d+1:t}^i|h_t^0, x_{1:t-d}^i, \gamma_{1:t}^i)\prod_{k\neq i} \Pr^{g^{k}}(x_{t-d+1:t}^k, \gamma_{t}^k|h_t^0).
\end{align*}

We know that $\bm{\Gamma}_{t}^i$ and $\bX_{t-d+1:t}^i$ are conditionally independent given $\overline{H}_t^i$ since $\bm{\Gamma}_{t}^i$ is chosen as a randomized function of $\overline{H}_t^i$ at a time when $\bX_{t-d+1:t}^i$ are already realized. Therefore,
\begin{align}
&\quad~\Pr^{\tilde{g}^i, g^{-i}}(x_{t-d+1:t}^i|h_t^0, x_{1:t-d}^i, \gamma_{1:t}^i)\\
&=\Pr^{\tilde{g}^i, g^{-i}}(x_{t-d+1:t}^i|h_t^0, x_{1:t-d}^i, \gamma_{1:t-1}^i)\\
&=P_t^i(x_{t-d:t}^i| y_{t-d+1:t-1}^i, u_{t-d:t-1}, x_{t-d}^i, \phi_t^i),
\end{align}
where $P_t^i$ is the belief function defined in Eq. \eqref{eq: belieffunction}.

We conclude that
\begin{align}
&~\quad\Pr^{\tilde{g}^i, g^{-i}}(x_{t-d+1:t}, \gamma_{t}^{-i}|h_t^0, x_{1:t-d}^i, \gamma_{1:t-1}^i)\\
&=F_t^i(x_{t-d+1:t}, \gamma_{t}^{-i}| h_t^0, x_{t-d}^i, \phi_t^i; g^{-i})\label{belieffunctiond2}
\end{align}
for some function $F_t^i$ that does not depend on $\tilde{g}^i$.

Consider the reward of coordinator $i$. By the law of iterated expectation we can write
\begin{align*}
J^i(\tilde{g}^i, g^{-i}) = \E^{\tilde{g}^i, g^{-i}}\left[\sum_{t\in\mathcal{T}} \E^{\tilde{g}^i, g^{-i}}[r_t^i(\bX_t, \bU_t)|\overline{H}_t^i, \bm{\Gamma}_t^i] \right].
\end{align*}

For each term we have
\begin{align}
&\quad~\E^{\tilde{g}^i, g^{-i}}[r_t^i(\bX_t, \bU_t)|\overline{h}_t^i, \gamma_t^i]\\
&= \sum_{\tilde{x}_{t-d+1:t}}\sum_{\tilde{\gamma}_t^{-i}} r_t^i(\tilde{x}_t, (\gamma_t^i(\tilde{x}_{t-d+1:t}^i), \tilde{\gamma}_t^{-i}(\tilde{x}_{t-d+1:t}^{-i}) )) \times\\
&\quad~\times F_t^i(\tilde{x}_{t-d+1:t}, \tilde{\gamma}_{t}^{-i}| h_t^0, x_{t-d}^i, \phi_{t}^i; g^{-i}) \\
&=: \overline{r}_t^{i}(h_t^0, x_{t-d}^i, \phi_t^i, \gamma_t^i; g^{-i})\label{eq: instacost}
\end{align}
where $F_t^i$ is the belief function described in \eqref{belieffunctiond2}, and $\overline{r}_t^{i}$ is a function that does not depend on $\tilde{g}^i$.

We claim that $(H_t^0, \bX_{t-d}^i, \bm{\Phi}_{t}^i)$ is a controlled Markov process controlled by coordinator $i$'s prescriptions fixing the other coordinators' strategies. We need to prove that
\begin{align}
&\quad~\Pr^{\tilde{g}^i, g^{-i}}(h_{t+1}^0, x_{t-d+1}^i, \phi_{t+1}^i |h_{1:t}^0, x_{1:t-d}^i, \phi_{1:t}^i, \gamma_{1:t}^i) \\
&= G_t^i(h_{t+1}^0, x_{t-d+1}^i, \phi_{t+1}^i| h_{t}^0, x_{t-d}^i, \phi_{t}^i, \gamma_{t}^i)
\end{align}
for some function $G_t^i$ independent of $\tilde{g}^i$.

We know that $H_{t+1}^0 = (H_t^0, \bY_t, \bU_t)$ and
\begin{align*}
Y_t^k&= \ell_t^k(\bX_t^k, \bU_t, W_t^{k, Y})\quad\forall k\in \mathcal{I},\\
U_t^{k, j}&= \Gamma_t^{k, j}(X_{t-d+1:t}^{k, j})\quad\forall (k, j)\in \mathcal{N},\\
\bm{\Phi}_{t+1}^i &= (\Phi_{t+1-s, s}^{i, j})_{(i, j)\in \mathcal{N}_i, 1\leq s\leq d-1 }  \\
\Phi_{t, 1}^{i, j}&= \Gamma_{t}^{i, j}(X_{t-d+1}^{i, j}, \cdot)\quad\forall (i, j)\in\mathcal{N}_i,\\
\Phi_{t+1-s, s}^{i, j}&=\Phi_{t+1-s, s-1}^{i, j}(X_{t-d+1}^{i, j}, \cdot)\quad\forall (i, j)\in\mathcal{N}_i, s\geq 2,
\end{align*}
hence $(H_{t+1}^0, \bX_{t-d+1}^i, \bm{\Phi}_{t+1}^i)$ is a function of $(H_t^0, \bX_{t-d+1:t}, \bm{\Gamma}_t, \bm{\Phi}_t^i)$, and $\mathbf{W}_t^{Y}$. As $\mathbf{W}_t^{Y}$ is a primitive random vector independent of $(H_{1:t}^0, \bX_{1:t-d}^i, \bm{\Phi}_{1:t}^i, \bm{\Gamma}_{1:t}^i)$, it suffices to prove that
\begin{equation}\label{aha1}
\begin{split}
&\quad~\Pr^{\tilde{g}^i, g^{-i}}(x_{t-d+1:t}, \gamma_{t}^{-i}|h_{1:t}^0, x_{1:t-d}^i, \phi_{1:t}^i, \gamma_{1:t}^i) \\
&= G_t^i(x_{t-d+1:t}, \gamma_{t}^{-i}| h_{t}^0, x_{t-d}^i, \phi_{t}^i, \gamma_{t}^i)
\end{split}
\end{equation}
for some function $G_t^i$ independent of $\tilde{g}^i$.

Since $(H_{1:t}^0, \bX_{1:t-d}^i, \bm{\Phi}_{1:t}^i, \bm{\Gamma}_{1:t}^i)$ is a function of $(\overline{H}_t^i, \bm{\Gamma}_{t}^i)$, applying smoothing property of conditional expectation to both sides of \eqref{belieffunctiond2} we obtain
\begin{align*}
&\quad~\Pr^{\tilde{g}^i, g^{-i}}(x_{t-d+1:t}, \gamma_{t}^{-i}|h_{1:t}^0, x_{1:t-d}^i, \phi_{1:t}^i, \gamma_{1:t}^i)\\
&= F_t^i(x_{t-d+1:t}, \gamma_{t}^{-i}| h_{t}^0, x_{t-d}^i, \phi_{t}^i; g^{-i}).
\end{align*}

Hence, we conclude that coordinator $i$ faces a Markov Decision Problem where the state process is $(H_t^0, \bX_{t-d}^i, \bm{\Phi}_{t}^i)$, the control action is $\bm{\Gamma}_t^i$, and the total reward is $$\E^{\tilde{g}^i, g^{-i}} \left[\sum_{t\in\mathcal{T}}\overline{r}_t^{i}(H_t^0, \bX_{t-d}^i, \bm{\Phi}_{t}^i, \bm{\Gamma}_t^i; g^{-i})\right].$$

By standard MDP theory, coordinator $i$ can form a best response by choosing $\bm{\Gamma}_t^i$ based on $(H_t^0, \bX_{t-d}^i, \bm{\Phi}_{t}^i)$.

\subsection{Proof of Theorem \ref{thm: spibexist}}\label{app: spibexist}
The idea to prove the theorem is to apply Kakutani's fixed point theorem on a special best response correspondence defined through Bellman equations.

Define $\varXi_t^i\subset \mathcal{H}_t^0 \times \mathcal{S}_{t}^i$ to be the set of admissible $(h_t^0, s_{t}^i)$'s, i.e. $(h_t^0, s_{t}^i)$'s with strictly positive probability under at least one strategy profile of the coordinators.

For $\epsilon \geq 0$, define $\mathcal{R}^{\epsilon, i}$ be the set of SPIB strategy profiles for coordinator $i$ where each prescription has probability at least $\epsilon$ to be chosen at any information set. Specifically, it suffices to consider the prescription choices for $(h_t^0, s_t^i)\in \varXi_t^i$ for each $t\in\mathcal{T}$, and we can write
\begin{equation}
\mathcal{R}^{\epsilon, i} = \prod_{t\in\mathcal{T}}\prod_{\xi_t^i\in \varXi_t^i} \Delta^\epsilon(\prescription_t^i)
\end{equation}
where 
\begin{equation}
\Delta^\epsilon(\prescription_t^i) = \{\eta \in \Delta(\prescription_t^i) : \eta(\gamma_t^i) \geq \epsilon ~~\forall \gamma_t^i\in\prescription_t^i \}.
\end{equation}

We also define $\mathcal{R}^{\epsilon} = \prod_{i\in\mathcal{I}} \mathcal{R}^{\epsilon, i}$. $\mathcal{R}^{0}$ is then the set of all SPIB strategy profiles.

Recall that in the proof of Lemma \ref{lem: suffprivated3}, we have shown that fixing a behavioral strategy coordination profile $g^{-i}$, coordinator $i$ faces an MDP problem with state $\Xi_t^i:=(H_t^0, S_t^i)$ and control action $\bm{\Gamma}_t^i$ and total reward
\begin{equation}
\E\left[\sum_{t\in\mathcal{T}}\overline{r}_t^{i}(H_t^0, S_t^i, \bm{\Gamma}_t^i; g^{-i})\right],
\end{equation}
where $\overline{r}_t^{i}$ is defined in \ref{eq: instacost}.

With some abuse of notation, let $\overline{r}_t^{i}(\Xi_t^i, \bm{\Gamma}_t^i; \rho^{-i})$ denote the instantaneous cost when all coordinators except $i$ play SPIB strategy profile $\rho^{-i}$.

Hence we can define a subset of the best response correspondence through the following construction: For each $\xi_t^i\in \varXi_t^i$, define the correspondence $\mathrm{BR}_t^{\epsilon, i}[\xi_t^i]: \mathcal{R}^{\epsilon, -i} \mapsto \Delta^\epsilon(\prescription_t^i)$ sequentially through
\begin{align*}
Q_{T}^{\epsilon, i}(\xi_T^i, \gamma_T^i; \rho^{-i}) := \overline{r}_{T}^{i}(\xi_T^i, \gamma_T^i; \rho^{-i}),
\end{align*}
and for each $t\in\mathcal{T}$ and each $\xi_t^i\in\varXi_t^i$,
\begin{align*}
\mathrm{BR}_t^{\epsilon, i}[\xi_{t}^i](\rho^{-i}) &:= \underset{\eta \in \Delta^\epsilon(\prescription_t^i)}{\arg\max} \sum_{\gamma_t^i} \eta(\gamma_t^i) Q_t^{\epsilon, i}(\xi_{t}^i, \gamma_t^i; \rho^{-i}),\\
V_t^{\epsilon, i}(\xi_t^i; \rho^{-i}) &:= \max_{\eta \in \Delta^\epsilon(\prescription_t^i)} \sum_{\gamma_t^i} \eta(\gamma_t^i) Q_t^{\epsilon, i}(\xi_{t}^i, \gamma_t^i; \rho^{-i}),\\
Q_{t-1}^{\epsilon, i}(\xi_{t-1}^i, \gamma_{t-1}^i; \rho^{-i}) &:= \overline{r}_{t-1}^{i}(\xi_{t-1}^i, \gamma_{t-1}^i; \rho^{-i}) \\
& + \sum_{\xi_t^i} V_t^{\epsilon, i}(\xi_t^i; \rho^{-i})\Pr^{\rho^{-i}}(\xi_t^i| \xi_{t-1}^i, \gamma_t^i).
\end{align*}

Define $\mathrm{BR}^{\epsilon}: \mathcal{R}^\epsilon \mapsto \mathcal{R}^\epsilon$ by
\begin{align}
&\quad~\mathrm{BR}^{\epsilon}(\rho) \\
&= \{\tilde{g}\in \mathcal{R}^\epsilon: \tilde{g}_t^i(\xi_t^i) \in \mathrm{BR}^{\epsilon, i}[\xi_{t}^i](\rho^{-i})~~\forall \xi_t^i\in\varXi_t^i, \forall i\in\mathcal{I}  \}\\
&= \prod_{i\in\mathcal{I}} \prod_{t\in\mathcal{T}} \prod_{\xi_t^i\in \varXi_t^i} \mathrm{BR}_t^{\epsilon, i}[\xi_{t}^i](\rho^{-i}).
\end{align}

\textbf{Claim}: 
\begin{enumerate}[(a)]
	\item $\overline{r}_{t}^{i}(\xi_{t}^i, \gamma_{t}^i; \rho^{-i})$ is continuous in $\rho^{-i}$ on $\mathcal{R}^{\epsilon, -i}$ for all $t\in\mathcal{T}$ and all $\xi_t^i\in\varXi_t^i, \gamma_t^i\in\prescription_{t}^i$
	\item $\Pr^{\rho^{-i}}(\xi_{t+1}^i|\xi_{t}^i, \gamma_{t}^i)$ is continuous in $\rho^{-i}$ on $\mathcal{R}^{\epsilon, -i}$ for all $t\in \mathcal{T}\backslash\{T\}$ and all $\xi_{t+1}^i\in\varXi_{t+1}^i, \xi_{t}^i\in\varXi_{t}^i, \gamma_t^i\in\prescription_{t}^i$.
\end{enumerate}

Given the claims, we prove by induction that $Q_t^{\epsilon, i}(\xi_t^i, \gamma_t^i; \cdot)$ is continuous on $\mathcal{R}^{\epsilon, -i}$ for each $\xi_t^i\in \varXi_t^i$ and $\gamma_t^i\in \prescription_t^i$.

\textbf{Induction Base}: $Q_{T}^{\epsilon, i}(\xi_{T}^i, \gamma_{T}^i; \cdot)$ is continuous on $\mathcal{R}^{\epsilon, -i}$ since $\overline{r}_{T}^{i}(\xi_T^i, \gamma_{T}^i; \rho^{-i})$ is continuous in $\rho^{-i}$ on $\mathcal{R}^{\epsilon, -i}$ for all $\xi_T^i\in\varXi_T^i$ and all $\gamma_T^i\in \varGamma_T^i$.

\textbf{Induction Step}: Suppose that the induction hypothesis is true for $t$. Then $V_t^{\epsilon, i}(\xi_t^i; \cdot)$ is continuous on $\mathcal{R}^{\epsilon, -i}$ due to Berge's Maximum Theorem. Then for all $\xi_{t-1}^i\in \varXi_{t-1}^i$ and $\gamma_{t-1}^i\in \prescription_{t-1}^i$, $Q_{t-1}^{\epsilon, i}(\xi_{t-1}^i, \gamma_{t-1}^i; \cdot)$ is continuous on $\mathcal{R}^{\epsilon, -i}$ since $\overline{r}_{t-1}^{i}(\xi_{t-1}^i, \gamma_{t-1}^i; \rho^{-i})$ is continuous in $\rho^{-i}$ on $\mathcal{R}^{\epsilon, -i}$, and the transition probability $\Pr^{\rho^{-i}}(\xi_t^i|\xi_{t-1}^i, \gamma_{t-1}^i)$ is also continuous in $\rho^{-i}$ on $\mathcal{R}^{\epsilon, -i}$.\\

Because of Berge's Maximum Theorem, we conclude that $\mathrm{BR}^{\epsilon, i}[\xi_{t}^i]$ is upper hemicontinuous on $\mathcal{R}^{\epsilon, -i}$ for each $\xi_t^i\in\varXi_t^i$. $\mathrm{BR}^{\epsilon, i}[\xi_{t}^i](\rho^{-i})$ is also non-empty and convex for each $\rho^{-i}\in \mathcal{R}^{\epsilon, -i}$ since it is a solution set of a linear program.

As a product of compact-valued upper hemicontinuous correspondences, we know that $\mathrm{BR}^{\epsilon}$ is upper hemicontinuous. Furthermore, $\mathrm{BR}^{\epsilon}(\rho)$ is non-empty and convex for each $\rho\in \mathcal{R}^\epsilon$. By Kakutani's fixed point theorem, $\mathrm{BR}^{\epsilon}$ has a fixed point.

Let $\epsilon_n \searrow 0$. Let $\rho^{(n)}\in \mathcal{R}^{\epsilon_n}$ be a fixed point of $\mathrm{BR}^{\epsilon_n}$. Then for each $i\in\mathcal{I}$ we have
\begin{equation}
\rho^{(n), i}\in \underset{\rho^i\in\mathcal{R}^{\epsilon_n, i}}{\arg\max}~ J^i(\rho^i, \rho^{(n), -i})
\end{equation}
where
\begin{equation}
J^i(\rho) = \E^{\rho}\left[\sum_{t\in\mathcal{T}} r_t^i(\bX_t, \bU_t) \right] = \E^{\rho^i}\left[\sum_{t\in\mathcal{T}}\overline{r}_t^{i}(S_t^i, \bm{\Gamma}_t^i; \rho^{-i})\right].
\end{equation}

Let $\rho^{(\infty)}\in \mathcal{R}^0$ be the limit of some sub-sequence of $(\rho^{(n)})_{n\in\mathbb{N}}$. Since $J^i(\cdot)$ is continuous on $\mathcal{R}^0$ and $\epsilon \mapsto \mathcal{R}^{\epsilon, i}$ is a continuous correspondence with compact, non-empty value (for small enough $\epsilon$), by Berge's Maximum Theorem, we conclude that for each $i$,
\begin{equation}
\rho^{(\infty), i}\in \underset{\rho^i\in\mathcal{R}^{0, i}}{\arg\max}~ J^i(\rho^i, \rho^{(\infty), -i}).
\end{equation}
i.e. $\rho^{(\infty), i}$ is one of the optimal strategies among SPIB strategies to respond to $\rho^{(\infty), -i}$. Combining with Lemma \ref{lem: suffprivated3} which states that there always exists best response strategies that are SPIB strategies, we conclude that $\rho^{(\infty)}$ forms a CNE, proving the result.

\begin{proof}[Proof of Claim]
	We first notice that, by the proof of Lemma \ref{lem: suffprivated3}, both $\overline{r}_{t}^{i}(\xi_{t}^i, \gamma_{t}^i; \rho^{-i})$ and $\Pr^{\rho^{-i}}(\xi_{t+1}^i|\xi_{t}^i, \gamma_{t}^i)$ are linear functions of $F_t^i(x_{t-d+1:t}, \gamma_t^{-i}|\xi_t^i; \rho^{-i})$ (defined in \eqref{belieffunctiond2}). We have
	\begin{align}
		&\quad~F_t^i(x_{t-d+1:t}, \gamma_t^{-i}|\xi_t^i; \rho^{-i})\\
		&= \Pr^{\rho^{-i}}(x_{t-d+1:t}, \gamma_t^{-i}|\xi_t^i)
		= \dfrac{\Pr^{\hat{\rho}^i, \rho^{-i}}(x_{t-d+1:t}, \gamma_t^{-i}, \xi_t^i) }{\Pr^{\hat{\rho}^i, \rho^{-i}}(\xi_t^i) }
	\end{align}
	where $\hat{\rho}^i \in \mathcal{R}^{\epsilon, i}$ is a fixed, arbitrary SPIB strategy. We know that both $\Pr^{\hat{\rho}^i, \rho^{-i}}(x_{t-d+1:t}, \gamma_t^{-i}, \xi_t^i)$ and $\Pr^{\hat{\rho}^i, \rho^{-i}}(\xi_t^i)$ are sums of products of components of $\rho^{-i}$ and $\hat{\rho}^i$, hence both are continuous in $\rho^{-i}$. Furthermore, we have $\Pr^{\hat{\rho}^i, \rho^{-i}}(\xi_t^i) > 0$ for all $\rho^{-i}\in\mathcal{R}^{\epsilon, -i}$ since $\xi_t^i\in \varXi_t^i$ has strictly positive probability under some strategy profile, and $(\hat{\rho}^i, \rho^{-i})$ is a strategy profile that chooses strictly mixed prescriptions. Therefore $F_t^i(x_{t-d+1:t}, \gamma_t^{-i}|\xi_t^i; \rho^{-i})$ is continuous in $\rho^{-i}$ on $\mathcal{R}^{\epsilon, -i}$.
\end{proof}

\subsection{Proof of Lemma \ref{lem: piistruebelief0}}\label{app: piistruebelief0}
We will prove a stronger result which we need in the proof of Proposition \ref{prop: pbe}.

\begin{lemma}\label{lem: piistruebelief}
	Let $(\lambda^{*k}, \psi^*)$ be a CIB strategy such that $\psi^{*, k}$ is consistent with $\lambda^{*k}$. Let $g^{*k}$ be the behavioral strategy profile generated from $(\lambda^{*k}, \psi^*)$. Let $\pi_t^k$ represent the belief on $S_t^k$ generated by $\psi^*$ at time $t$ based on $h_t^0$. Let $t< \tau$. Consider a fixed $h_{\tau}^0\in \mathcal{H}_{\tau}^0$ and some $\tilde{g}_{1:t-1}^k$ (not necessarily equal to $g_{1:t-1}^{*k}$). Assume that $h_{\tau}^0$ is admissible under $(\tilde{g}_{1:t-1}^{k}, g_{t:\tau-1}^{*k})$.
	Suppose that 
	\begin{align}
	&\quad~\Pr^{\tilde{g}_{1:t-1}^k}(s_t^k, x_{t-d+1:t}^k|h_t^0) \\
	&=\pi_t^k(s_t^k)P_t^k(x_{t-d+1:t}^k|y_{t-d+1:t-1}^k, u_{t-d:t-1}, s_t^k) \\
	&\qquad\qquad \forall s_t^k\in\mathcal{S}_t^k ~\forall x_{t-d+1:t}^k\in\mathcal{X}_{t-d+1:t}^k.\label{beliefequation}
	\end{align}
Then
	\begin{align}
	&\quad~\Pr^{\tilde{g}_{1:t-1}^k, g_{t:\tau-1}^{*k}}(s_\tau^k, x_{\tau-d+1:\tau}^k|h_\tau^0) \\
	&=\pi_\tau^k(s_\tau^k)P_\tau^k(x_{\tau-d+1:\tau}^k|y_{\tau-d+1:\tau-1}^k, u_{\tau-d:\tau-1}, s_\tau^k) \\
	&\qquad\qquad \forall s_\tau^k\in\mathcal{S}_\tau^k~ \forall x_{\tau-d+1:\tau}^k\in\mathcal{X}_{\tau-d+1:\tau}^k.
	\end{align}
\end{lemma}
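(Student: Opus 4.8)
The plan is to prove the statement by induction on $\tau \geq t+1$, reducing to the single-step case where $\tau = t+1$. Once the one-step result is established, the general case follows by viewing the belief $(\pi_{t+1}^k)$ and observations/actions $(y_{t-d+2:t+1}^k, u_{t-d+1:t})$ as the new ``initial'' data at time $t+1$, invoking Lemma \ref{lem: selfbelief} for the internal belief function $P_{t+1}^k$, and reapplying the one-step argument with $\tilde g_{1:t}^k := (\tilde g_{1:t-1}^k, g_t^{*k})$ in place of $\tilde g_{1:t-1}^k$; the hypothesis \eqref{beliefequation} at time $t+1$ is exactly the conclusion of the one-step claim. So the core of the argument is the single-step passage from $t$ to $t+1$.

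For the one-step case, I would start from the left-hand side $\Pr^{\tilde g_{1:t-1}^k, g_t^{*k}}(s_{t+1}^k, x_{t-d+2:t+1}^k \mid h_{t+1}^0)$ and expand via Bayes' rule, writing $h_{t+1}^0 = (h_t^0, \mathbf{y}_t, \mathbf{u}_t)$, so that this conditional probability is proportional (in $(s_{t+1}^k, x_{t-d+2:t+1}^k)$) to a sum over $(\tilde s_t^k, \tilde x_{t-d+1:t}^k, \tilde\gamma_t^k)$ of the joint probability of arriving at time $t$ with SPI $\tilde s_t^k$ and hidden states $\tilde x_{t-d+1:t}^k$, then choosing prescription $\tilde\gamma_t^k$ with $g_t^{*k} = \lambda_t^{*k}$ under the CCI $b_t^k$ derived from $h_t^0$ via $\psi^*$, generating the observed action $u_t^k = \tilde\gamma_t^k(\tilde x_{t-d+1:t}^k)$, producing the observation $y_t^k$ with probability $\Pr(y_t^k \mid \tilde x_t^k, u_t)$, and updating $s_{t+1}^k = \iota_t^k(\tilde s_t^k, \tilde x_{t-d+1}^k, \tilde\gamma_t^k)$. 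The key input is the induction hypothesis \eqref{beliefequation}, which lets me replace $\Pr^{\tilde g_{1:t-1}^k}(\tilde s_t^k, \tilde x_{t-d+1:t}^k \mid h_t^0)$ by the product $\pi_t^k(\tilde s_t^k) P_t^k(\tilde x_{t-d+1:t}^k \mid y_{t-d+1:t-1}^k, u_{t-d:t-1}, \tilde s_t^k)$. I also need that, conditioned on $h_t^0$ and coordinator $k$'s own history, the prescription $\Gamma_t^k$ depends only on the CCI $b_t^k$ and $s_t^k$ (since $g_t^{*k}$ is a CIB strategy) and is conditionally independent of the hidden states $\tilde x_{t-d+1:t}^k$ — this is the same observation used in the proof of Lemma \ref{lem: suffprivated3}, namely that $\bm\Gamma_t^k$ is chosen after $\bX_{t-d+1:t}^k$ is realized, and for a CIB strategy uses only $(b_t^k, s_t^k)$.

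After these substitutions, the numerator of the Bayes expression becomes exactly the quantity $\Upsilon_t^k(b_t^k, y_t^k, u_t, s_{t+1}^k)$ from Definition \ref{def: consistency} (up to the separate, $s_{t+1}^k$-independent factor coming from the transition $\Pr(\tilde x_{t-d+1}^k \mid \cdots)$ and the hidden-state part, which I handle by appealing to Lemma \ref{lem: selfbelief} to recognize it as proportional to $P_{t+1}^k(x_{t-d+2:t+1}^k \mid y_{t-d+2:t}^k, u_{t-d+1:t}, s_{t+1}^k)$). The consistency hypothesis $\psi^{*,k}$ consistent with $\lambda^{*k}$ — i.e. $\psi_t^{*,k}(b_t^k, y_t^k, u_t) = \Upsilon_t^k(\cdot)/\sum \Upsilon_t^k(\cdot)$ — then identifies the normalized $s_{t+1}^k$-marginal with $\pi_{t+1}^k(s_{t+1}^k)$, and the remaining conditional factor is $P_{t+1}^k$ by Lemma \ref{lem: selfbelief}. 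The admissibility assumption on $h_\tau^0$ guarantees all relevant denominators are nonzero so Bayes' rule applies throughout.

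The main obstacle I anticipate is the bookkeeping around the hidden-state factor: separating the joint probability cleanly into the ``SPI-update + observation'' piece (which must match $\Upsilon_t^k$ verbatim, including the sum over prescriptions $\tilde\gamma_t^k$ consistent with $u_t^k$ on $\tilde x_{t-d+1:t}^k$) versus the ``hidden-state propagation'' piece (which must be reorganized via the recursive structure of $P_t^k$ and $P_{t+1}^k$ from Lemma \ref{lem: selfbelief}), and making sure the transition of $\bX^k_{t-d+1}$ into the ``observed'' range $\bX^k_{t-d+1}$, the incorporation of $y_t^k$, and the role of $\bm\Phi^k$ inside $s_{t+1}^k = \iota_t^k(s_t^k, x_{t-d+1}^k, \gamma_t^k)$ all line up with the definitions. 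A secondary subtlety is confirming that switching the strategy on the first $t-1$ steps from $g_{1:t-1}^{*k}$ to an arbitrary admissible $\tilde g_{1:t-1}^k$ genuinely does not matter beyond what \eqref{beliefequation} encodes — this is where I lean on Lemma \ref{lem: condindep} (coordinator $k$'s own belief on hidden states given $h_t^0$ and $s_t^k$ is strategy-independent, captured by $P_t^k$), so the only strategy-dependence that survives is through the SPI-marginal, which \eqref{beliefequation} pins down.
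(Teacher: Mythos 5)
Your proposal follows essentially the same route as the paper's proof: reduce to the one-step case $\tau=t+1$, expand $\Pr^{\tilde g_{1:t-1}^k, g_t^{*k}}(s_{t+1}^k \mid h_{t+1}^0)$ by Bayes' rule over $(\tilde s_t^k, \tilde x_{t-d+1:t}^k, \tilde\gamma_t^k)$ with $\tilde\gamma_t^k(\tilde x_{t-d+1:t}^k)=u_t^k$, substitute the hypothesis \eqref{beliefequation}, recognize the resulting ratio as $\Upsilon_t^k/\sum\Upsilon_t^k$ so that consistency of $\psi^{*,k}$ yields $\pi_{t+1}^k$, and then obtain the $P_{t+1}^k$ factor from Lemma \ref{lem: selfbelief} (the paper additionally invokes Lemma \ref{lem: condindep} to fix an open-loop $\hat g^{-k}$ and treats the $\pi_{t+1}^k(s_{t+1}^k)=0$ case separately, but these are the same bookkeeping points you flag). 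The argument is correct and matches the paper's.
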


The assertion of Lemma \ref{lem: piistruebelief0} follows from Lemma \ref{lem: piistruebelief} and the fact that \eqref{beliefequation} is true for $t=1$.

\begin{proof}[Proof of Lemma \ref{lem: piistruebelief}]
	We only need to prove the result for $\tau = t + 1$.
	
	Since $h_{t+1}^0$ is admissible under $(\tilde{g}_{1:t-1}^k, g_{t}^{*k})$, we have
	\begin{equation}
	\Pr^{\tilde{g}_{1:t-1}^k, g_{t}^{*k}, \hat{g}_{1:t}^{-k}}(h_{t+1}^0) > 0\label{probnonzero}
	\end{equation}
	where $\hat{g}_{1:t}^{-k}$ is the open-loop strategy where all coordinators except $k$ choose prescriptions that generate the actions $u_{1:t}^{-k}$.
	
	From Lemma \ref{lem: condindep} we know that $\Pr^{\tilde{g}_{1:t-1}^k, g_{t}^{*k}, g^{-k}}(s_{t+1}^k|h_{t+1}^0)$ is independent of $g^{-k}$. Therefore
	\begin{align}
	&\quad~\Pr^{\tilde{g}_{1:t-1}^k, g_{t}^{*k}}(s_{t+1}^k|h_{t+1}^0)\\
	&=\dfrac{\Pr^{\tilde{g}_{1:t-1}^k, g_{t}^{*k}, \hat{g}_{1:t}^{-k}}(s_{t+1}^k, y_t, u_t| h_{t}^0) }{\sum_{\tilde{s}_{t+1}^k}\Pr^{\tilde{g}_{1:t-1}^k, g_{t}^{*k}, \hat{g}_{1:t}^{-k}}(\tilde{s}_{t+1}^k, y_t, u_t| h_{t}^0)},\label{chopin}
	\end{align}
	and the denominator of \eqref{chopin} is non-zero due to \eqref{probnonzero}.
	
	We have
	\begin{align}
	&\quad~\Pr^{\tilde{g}_{1:t-1}^k, g_{t}^{*k}, \hat{g}_{1:t}^{-k}}(s_{t+1}^k, y_t, u_t|h_{t}^0)\\
	&=\sum_{\tilde{s}_t^k}\sum_{\tilde{x}_{t-d+1:t}^k }\sum_{\tilde{x}_t^{-k}} \sum_{ \tilde{\gamma}_t^k: \tilde{\gamma}_t^k(\tilde{x}_{t-d+1:t}^k) = u_t^k } \Big[\Pr(y_t^k|\tilde{x}_{t}^k, u_{t}) \times \\
	&\times \Pr(y_t^{-k}|\tilde{x}_{t}^{-k}, u_{t})\bm{1}_{ \{s_{t+1}^k = \iota_t^k(\tilde{s}_t^k, \tilde{x}_{t-d+1}^k, \tilde{\gamma}_t^k) \} } \lambda_{t}^{*k}(\tilde{\gamma}_{t}^k|b_{t}, \tilde{s}_{t}^k) \times\\ &\times\Pr^{\tilde{g}_{1:t-1}^k, g_{t}^{*k}, \hat{g}_{1:t}^{-k}}(\tilde{x}_{t-d+1:t}^k, \tilde{x}_t^{-k}, \tilde{s}_t^k|h_t^0)\Big] \\
	&=\sum_{\tilde{s}_t^k}\sum_{\tilde{x}_{t-d+1:t}^k }\sum_{\tilde{x}_t^{-k}} \sum_{ \tilde{\gamma}_t^k: \tilde{\gamma}_t^k(\tilde{x}_{t-d+1:t}^k) = u_t^k } \Big[\Pr(y_t^k|\tilde{x}_{t}^k, u_{t}) \times \\
	&\times \Pr(y_t^{-k}|\tilde{x}_{t}^{-k}, u_{t})\bm{1}_{ \{s_{t+1}^k = \iota_t^k(\tilde{s}_t^k, \tilde{x}_{t-d+1}^k, \tilde{\gamma}_t^k) \} } \lambda_{t}^{*k}(\tilde{\gamma}_{t}^k|b_{t}, \tilde{s}_{t}^k) \times\\ &\times\Pr^{\tilde{g}_{1:t-1}^k, g_{t}^{*k}, \hat{g}_{1:t}^{-k}}(\tilde{x}_{t-d+1:t}^k, \tilde{s}_t^k|h_t^0)\Pr^{\tilde{g}_{1:t-1}^k, g_{t}^{*k}, \hat{g}_{1:t}^{-k}}(\tilde{x}_t^{-k}| h_t^0) \Big] \\
	&= \left(\sum_{\tilde{x}_t^{-k}}  \Pr(y_t^{-k}|\tilde{x}_{t}^{-k}, u_{t}) \Pr^{\tilde{g}_{1:t-1}^k, g_{t}^{*k}, \hat{g}_{1:t}^{-k}}(\tilde{x}_t^{-k}| h_t^0) \right)\times\\
	&\times\sum_{\tilde{s}_t^k}\sum_{\tilde{x}_{t-d+1:t}^k } \sum_{ \tilde{\gamma}_t^k: \tilde{\gamma}_t^k(\tilde{x}_{t-d+1:t}^k) = u_t^k } \Big[\Pr(y_t^k|\tilde{x}_{t}^k, u_{t}) \times \\
	&\times \bm{1}_{ \{s_{t+1}^k = \iota_t^k(\tilde{s}_t^k, \tilde{x}_{t-d+1}^k, \tilde{\gamma}_t^k) \} } \lambda_{t}^{*k}(\tilde{\gamma}_{t}^k|b_{t}, \tilde{s}_{t}^k) \times\\ &\times\Pr^{\tilde{g}_{1:t-1}^k, g_{t}^{*k}, \hat{g}_{1:t}^{-k}}(\tilde{x}_{t-d+1:t}^k, \tilde{s}_t^k|h_t^0) \Big].\label{tchaikovsky}
	\end{align}
	
	Using \eqref{chopin} and \eqref{tchaikovsky} we obtain
	\begin{align}
	&\quad~\Pr^{\tilde{g}_{1:t-1}^k, g_{t}^{*k}}(s_{t+1}^k|h_{t+1}^0)\\
	&=\dfrac{\Upsilon_t^k(b_t, y_t^k, u_t, s_{t+1}^k)}{\sum_{ \tilde{s}_{t+1}^{k} }\Upsilon_t^k(b_t, y_t^k, u_t, \tilde{s}_{t+1}^k)}\label{provingbeliefupdate}
	\end{align}
	where
	\begin{align}
	&\quad~ \Upsilon_t^k(b_t, y_t^k, u_t, s_{t+1}^k)\\
	&=\sum_{\tilde{s}_t^k}\sum_{\tilde{x}_{t-d+1:t}^k } \sum_{ \tilde{\gamma}_t^k: \tilde{\gamma}_t^k(\tilde{x}_{t-d+1:t}^k) = u_t^k } \Big[\Pr(y_t^k|\tilde{x}_{t}^k, u_{t}) \times \\
	&\times \bm{1}_{ \{s_{t+1}^k = \iota_t^k(\tilde{s}_t^k, \tilde{x}_{t-d+1}^k, \tilde{\gamma}_t^k) \} } \lambda_{t}^{*k}(\tilde{\gamma}_{t}^k|b_{t}, \tilde{s}_{t}^k) \times\\ &\times P_t^k(\tilde{x}_{t-d+1:t}^k| y_{t-d+1:t-1}^k, u_{t-d:t-1}, s_{t}^k) \pi_t^k(s_t^k) \Big],
	\end{align}
	Therefore by the definition of consistency of $\psi^{*, k}$ with respect to $\lambda^{*k}$, we conclude that
	\begin{equation}
	\Pr^{\tilde{g}_{1:t-1}^k, g_{t}^{*k}}(s_{t+1}^k|h_{t+1}^0) = \pi_{t+1}^k(s_{t+1}^k).
	\end{equation}
	
	Now consider $\Pr^{\tilde{g}_{1:t-1}^k, g_{t}^{*k}}(\tilde{x}_{t-d+2:t+1}^k, s_{t+1}^k | h_{t+1}^0)$. 
	\begin{itemize}
		\item If $\Pr^{\tilde{g}_{1:t-1}^k, g_{t}^{*k}}(s_{t+1}^k | h_{t+1}^0) = 0$ then we have $\pi_{t+1}^k(s_{t+1}^k) = 0$ and
		\begin{align}
		&\quad~\Pr^{\tilde{g}_{1:t-1}^k, g_{t}^{*k}}(\tilde{x}_{t-d+2:t+1}^k, s_{t+1}^k | h_{t+1}^0) = 0.
		\end{align}
		
		\item If $\Pr^{\tilde{g}_{1:t-1}^k, g_{t}^{*k}}(s_{t+1}^k | h_{t+1}^0) > 0$ then
		\begin{align}
		&\quad~\Pr^{\tilde{g}_{1:t-1}^k, g_{t}^{*k}}(\tilde{x}_{t-d+2:t+1}^k, s_{t+1}^k | h_{t+1}^0) \\
		&=  \Pr^{\tilde{g}_{1:t-1}^k, g_{t}^{*k}}(\tilde{x}_{t-d+1:t}^k | h_{t+1}^0, s_{t+1}^k) \pi_{t+1}^k(s_{t+1}^k).
		\end{align}
		
		We have shown in Lemma \ref{lem: selfbelief} that
		\begin{align*}
		&\quad~\Pr^{\tilde{g}_{1:t-1}^k, g_{t}^{*k}}(\tilde{x}_{t-d+2:t+1}^k | \overline{h}_{t+1}^k)\\
		&=P_{t+1}^k(\tilde{x}_{t-d+2:t+1}^k| y_{t-d+2:t}^k, u_{t-d+1:t}, s_{t+1}^k)
		\end{align*}
		and $(h_{t+1}^0, s_{t+1}^k)$ is a function of $\overline{h}_{t+1}^k$. By the law of iterated expectation we have
		\begin{align*}
		&\quad~\Pr^{\tilde{g}_{1:t-1}^k, g_{t}^{*k}, \hat{g}_{1:t}^{-k}}(\tilde{x}_{t-d+2:t+1}^k | h_{t+1}^0, s_{t+1}^k)\\
		&=P_{t+1}^k(\tilde{x}_{t-d+2:t+1}^k| y_{t-d+2:t}^k, u_{t-d+1:t}, s_{t+1}^k).
		\end{align*}
	\end{itemize}
	
	We conclude that
	\begin{align}
	&\quad~\Pr^{\tilde{g}_{1:t-1}^k, g_{t}^{*k}}(\tilde{x}_{t-d+2:t+1}^k, s_{t+1}^k | h_{t+1}^0)\\
	&=P_t^k(\tilde{x}_{t-d+2:t+1}^k| y_{t-d+2:t}^k, u_{t-d+1:t}, s_{t+1}^k) \pi_{t+1}^k(s_{t+1}^k)
	\end{align}
	for all $s_{t+1}^k\in \mathcal{S}_{t+1}^k$ and all $x_{t-d+2:t+1}^k\in \mathcal{X}_{t-d+2:t+1}^k$. 
	
\end{proof}

\subsection{Proof of Lemma \ref{lem: closenessd1}}\label{app: closenessd1}
Let $g^{-i}$ denote the behavioral strategy profile of all coordinators other than $i$ generated from the CIB strategy profile $(\lambda^k, \psi^k)_{k\in\mathcal{I}\backslash\{ i\} }$. Let $(\overline{h}_t^i, \gamma_{t}^i)$ be admissible under $g^{-i}$. We have shown in the proof of Lemma \ref{lem: suffprivated3} that
\begin{align}
&\quad~\Pr^{g^{-i}}(x_{t-d+1:t}, \gamma_t^{-i}|\overline{h}_t^i, \gamma_{t}^i)\\
&=P_t^i(x_{t-d:t}^i| y_{t-d+1:t-1}^i, u_{t-d:t-1}, x_{t-d}^i, \phi_t^i) \times\\&\times\prod_{k\neq i}\Pr^{g^{k}}(x_{t-d+1:t}^{k}, \gamma_{t}^{k}|h_t^0).\label{bigproduct}
\end{align}
where $P_t^i$ is the belief function defined in Eq. \eqref{eq: belieffunction}.

Since all coordinators other than coordinator $i$ are using the same belief generation systems, we have $B_t^j=B_t^k$ for $j, k\neq i$. Denote $B_t=B_t^k$ for all $k\in\mathcal{I}\backslash\{i \}$. Let $b_t=\left(\left(\pi_t^{*, l}\right)_{l\in\mathcal{I}}, y_{t-d+1:t-1}, u_{t-d:t-1}\right)$ be a realization of $B_t$. Also define $\psi^*=\psi^k$ for all $k\neq i$.

Consider $k\neq i$. Coordinator $k$'s strategy $g^{k}$ is a self-consistent CIB strategy. We also have $h_t^0$ admissible under $g^{k}$ since $(\overline{h}_t^i, \gamma_{t}^i)$ is admissible under $g^{-i}$. Hence applying Lemma \ref{lem: piistruebelief0} we have
\begin{align}
&\quad~\Pr^{g^{k}}( \tilde{s}_{t}^k, x_{t-d+1:t}^{k}|h_t^0) \\
&= \pi_t^{*, k}( \tilde{s}_{t}^k)P_t^k(x_{t-d+1:t}^k| y_{t-d+1:t-1}^k, u_{t-d:t-1},  \tilde{s}_{t}^{k})
\end{align}

Hence the second term of the right hand side of \eqref{bigproduct} satisfies
\begin{align}
	&\quad~ \Pr^{g^{k}}(x_{t-d+1:t}^{k}, \gamma_{t}^{k}|h_t^0)= \sum_{ \tilde{s}_{t}^{k} } \Pr^{g^{k}}(\tilde{s}_{t}^{k}, x_{t-d+1:t}^k, \gamma_{t}^{k}|h_t^0) \\
	&= \sum_{ \tilde{s}_{t}^{k} } \Big[ \pi_t^{*, k}(\tilde{s}_{t}^k) P_t^k(x_{t-d+1:t}^k| y_{t-d+1:t-1}^k, u_{t-d:t-1}, \tilde{s}_{t}^{k}) \times\\
	&\quad\times \lambda_t^k(\gamma_{t}^k| b_t, \tilde{s}_{t}^{k}) \Big],\label{rhsofbigproduct}
\end{align}
where $P_t^k$ is the belief function defined in Eq. \eqref{eq: belieffunction}.

Recall that $b_t=\left(\left(\pi_t^{*, l}\right)_{l\in\mathcal{I}}, y_{t-d+1:t-1}, u_{t-d:t-1}\right)$. From \eqref{bigproduct} and \eqref{rhsofbigproduct} We conclude that
\begin{align}
&\Pr^{g^{-i}}(x_{t-d+1:t}, \gamma_{t}^{-i}|\overline{h}_t^i, \gamma_t^i )=F_t^i(x_{t-d+1:t}, \gamma_{t}^{-i}| b_t, s_{t}^i)\\
&\label{eq: ftid2}
\end{align}
for some function $F_t^i$ for all $(\overline{h}_t^i, \gamma_t^i)$ admissible under $g^{-i}$.

Consider the total reward of coordinator $i$. By the law of iterated expectation we can write
\begin{align}
J^{i}(\tilde{g}^i, g^{-i}) =\E^{\tilde{g}^i, g^{-i}}\left[\sum_{t\in\mathcal{T}}\E^{g^{-i}}[r_t^i(\bX_t, \bU_t)|\overline{H}_t^i, \bm{\Gamma}_t^i] \right].
\end{align}

For $(\overline{h}_t^i, \gamma_{t}^i)$ admissible under $g^{-i}$,
\begin{align}
&\quad~\E^{g^{-i}}[r_t^i(\bX_t, \bU_t)|\overline{h}_t^i, \gamma_t^i]\\
&= \sum_{\tilde{x}_{t-d+1:t}}\sum_{\tilde{\gamma}_t^{-i}} r_t^i(\tilde{x}_t, (\gamma_t^i(\tilde{x}_{t-d+1:t}^i), \tilde{\gamma}_t^{-i}(\tilde{x}_{t-d+1:t}^{-i}) ))\times \\
&\quad\times F_t^i(\tilde{x}_{t-d+1:t}, \tilde{\gamma}_t^{-i}| b_t, s_{t}^i)\\
&= \overline{r}_t^{i}(b_t, s_{t}^i, \gamma_t^i),\label{eq: inscost}
\end{align}
for some function $\overline{r}_t^{i}$ that depends on $g^{-i}$ (specifically, on $\lambda_t^{-i}$) but not on $\tilde{g}^i$. 

We claim that $(B_t, S_{t}^i)$ is a controlled Markov process controlled by coordinator $i$'s prescriptions, given that other coordinators are using the strategy profile $g^{-i}$. Let $\tilde{g}^i$ denote an arbitrary strategy for coordinator $i$ (not necessarily a CIB strategy). We need to prove that
\begin{align}
	&\quad~\Pr^{\tilde{g}^i, g^{-i}}(b_{t+1}, s_{t+1}^i| b_{1:t}, s_{1:t}^i, \gamma_{1:t}^i) \\
	&= \Xi_t^i(b_{t+1}, s_{t}^i| b_{t}, s_{t}^i, \gamma_{t}^i)\\
	&\quad \forall (b_{1:t}, s_{1:t}^i, \gamma_{1:t}^i) ~\mathrm{s.t.}~\Pr^{\tilde{g}^i, g^{-i}}(b_{1:t}, s_{1:t}^i, \gamma_{1:t}^i) > 0
\end{align}
for some function $\Xi_t^i$ independent of $\tilde{g}^i$.

We know that
\begin{align*}
B_{t+1} &= (\bm{\Pi}_{t+1}, \bY_{t-d+2:t}, \bU_{t-d+1:t}),\\
\bm{\Pi}_{t+1} &= \psi_t^*(B_t, \bY_{t}, \bU_t),\\
Y_t^k &= \ell_t^k(\bX_{t}^k, \bU_t, W_t^{k, Y})\quad\forall k\in\mathcal{I}, \\
U_t^{k, j} &= \Gamma_t^{k, j}(X_{t-d+1:t}^{k, j})\quad\forall (k, j)\in\mathcal{N},\\
S_{t+1}^{i}&= \iota_t^i(S_t^i, \bX_{t-d+1}^i, \bm{\Gamma}_t^i).
\end{align*}

Hence $(B_{t+1}, S_{t}^i)$ is a fixed function of $(B_t, S_t^i, \bX_{t-d+1:t}, \bm{\Gamma}_t, \mathbf{W}_t^Y)$, where $\mathbf{W}_t^Y$ is a primitive random vector independent of $(B_{1:t}, S_{1:t}^i, \bm{\Gamma}_{1:t}^i,  \bX_{t-d+1:t})$. Therefore, it suffices to prove that
\begin{equation}\label{jumpman}
\begin{split}
&~\quad\Pr^{\tilde{g}^i, g^{-i}}(x_{t-d+1:t}, \gamma_{t}^{-i}|b_{1:t}, s_{1:t}^i, \gamma_{1:t}^i) \\
&= \Xi_t^i(x_{t-d+1:t}, \gamma_{t}^{-i}| b_{t}, s_{t}^i, \gamma_{t}^i)
\end{split}
\end{equation}
for some function $\Xi_t^i$ independent of $\tilde{g}^i$.

$(B_{1:t}, S_{1:t}^i, \bm{\Gamma}_{1:t}^i)$ is a function of $(\overline{H}_t^i, \bm{\Gamma}_{t}^i)$. Therefore, by applying smoothing property of conditional expectations to both sides of \eqref{eq: ftid2} we obtain 
\begin{align}
&\quad~\Pr^{\tilde{g}^i, g^{-i}}(x_{t-d+1:t}, \gamma_{t}^{-i}|b_{1:t}, s_{1:t}^i, \gamma_{1:t}^i)\\
&=F_t^i(x_{t-d+1:t}, \gamma_{t}^{-i}| b_t, s_{t}^i),
\end{align}
where we know that $F_t^i$, as defined in \eqref{eq: ftid2}, is independent of $\tilde{g}^i$.

We conclude that coordinator $i$ faces a Markov Decision Problem where the state process is $(B_t, S_{t}^i)$, the control action is $\bm{\Gamma}_t^i$, and the total reward is $$\E\left[\sum_{t\in\mathcal{T}} \overline{r}_t^{i}(B_t, S_{t}^i, \bm{\Gamma}_t^i)\right].$$ By standard MDP theory, coordinator $i$ can form a best response by choosing $\bm{\Gamma}_t^i$ as a function of $(B_t, S_{t}^i)$.

\subsection{Proof of Theorem \ref{thm: sd1}}\label{app: sd1}
Let $(\lambda^*, \psi^*)$ be a pair that solves the dynamic program defined in the statement of the theorem. Let $g^{*k}$ denote the behavioral coordination strategy corresponding to $(\lambda^{*k}, \psi^*)$ for $k\in\mathcal{I}$.
We only need to show the following: Suppose that the coordinators other than coordinator $i$ play $g^{*-i}$, then $g^{*i}$ is a best response to $g^{*-i}$.

Let $h_t^0\in\mathcal{H}_t^0$ be admissible under $g^{*-i}$. Then
\begin{align}
&\quad~\Pr^{g^{*k}}(s_{t}^k, x_{t-d+1:t}^k|h_t^0)\\
&=\pi_t^{k}(s_{t}^k) P_t^k(x_{t-d+1:t}^k|y_{t-d+1:t-1}^k, u_{t-d:t-1}, s_{t}^k)\label{sufjanstevens}
\end{align}
for all $k\neq i$ by Lemma \ref{lem: piistruebelief0}, where $\pi_t^{k}$ is the belief generated by $\psi^*$ when $h_t^0$ occurs.

By Lemma \ref{lem: selfbelief} we also have
\begin{align}
	&\quad~\Pr(\tilde{s}_t^i, \tilde{x}_{t-d+1:t}^i |h_{t}^0, s_t^i)\\
	&=P_t^i(\tilde{x}_{t-d+1:t}^i|y_{t-d+1:t-1}^i, u_{t-d:t-1}, \tilde{s}_{t}^i)\label{sandyalexg}
\end{align}

Combining \eqref{sufjanstevens} and \eqref{sandyalexg}, the belief for coordinator $i$ defined in the stage game according to Definition \ref{def: stagegamed2} satisfies
\begin{align}
&\quad~\beta_t^i(\tilde{z}_t|s_{t}^i) \\
&= \bm{1}_{\{\tilde{s}_{t}^i=s_{t}^i \} } \prod_{k\neq i} \pi_t^{k}(\tilde{s}_{t}^k) \times\\
&\times\left(\prod_{k\in \mathcal{I}} P_t^k(\tilde{x}_{t-d+1:t}^k|y_{t-d+1:t-1}^k, u_{t-d:t-1}, \tilde{s}_{t}^k)\right)\Pr(\tilde{w}_t^{Y})\\
&=\Pr(\tilde{s}_t^i, \tilde{x}_{t-d+1:t}^i |h_{t}^0, s_t^i) \left(\prod_{k\neq i} \Pr^{g^{*k}}(\tilde{s}_t^k, \tilde{x}_{t-d+1:t}^k|h_t^0)\right)\Pr(\tilde{w}_t^{Y})\\
&=\Pr^{g^{*-i}}(\tilde{s}_t ,\tilde{x}_{t-d+1:t} |h_{t}^0, s_t^i)\Pr(\tilde{w}_t^{k, Y}) = \Pr^{g^{*-i}}(\tilde{z}_t|h_t^0, s_{t}^i)
\end{align}
for all $(h_t^0, s_{t}^i)$ admissible under $g^{*-i}$, i.e. the belief represents a true conditional distribution. Since $\beta_t^i(\cdot|s_{t}^i)$ is a fixed function of $(b_t, s_{t}^i)$, by applying smoothing property on both sides of the above equation we can obtain
\begin{align*}
\beta_t^i(\tilde{z}_t|s_{t}^i) = \Pr^{ g^{*-i}}(\tilde{z}_t|b_t, s_{t}^i).
\end{align*}
for all $(b_t, s_{t}^i)$ admissible under $g^{*-i}$. \footnote{Note that $\Pr^{ g^{-i}}(\tilde{z}_t|b_t, s_{t}^i)$ is different from $\beta_t^i(\tilde{z}_t|s_{t}^i)$. Since $B_t$ is just a compression of the common information based on an predetermined update rule $\psi$, which may or may not be consistent with the actually played strategy, $B_t$ may not represent the true belief. $\Pr^{ g^{-i}}(\tilde{z}_t|b_t, s_{t}^i)$ is the belief an agent \emph{inferred from} the event $B_t=b_t, S_t^i=s_t^i$. The agent knows that $b_t$ might not contain the true belief, but it is useful anyway in inferring the true state. $\beta_t^i(\tilde{z}_t|s_{t}^i)$ is a conditional distribution \emph{computed with} $b_t$, \emph{pretending} that $b_t$ contains the true belief.}

Then the interim expected utility considered in the definition of IBNE correspondences (Definition \ref{def: ibe}) can be written as
\begin{align*}
&\quad~\sum_{\tilde{z}_t, \tilde{\gamma}_t} \eta(\tilde{\gamma}_{t}^i) Q_t^i(\tilde{z}_t, \tilde{\gamma}_t) \beta_t^i(\tilde{z}_t|x_{t-1}^i)\prod_{k\neq i}\lambda_t^{*k}(\tilde{\gamma}_t^k|b_t, \tilde{s}_{t}^k) \\
&=\sum_{\tilde{\gamma}_t^i} \eta(\tilde{\gamma}_{t}^i) \E^{g_{1:t}^{*-i}}[Q_t^i(\mathbf{Z}_t, \bm{\Gamma}_{t})|b_t, s_{t}^i, \tilde{\gamma}_{t}^i].
\end{align*}
for all $(b_t, s_{t}^i)$ admissible under $g^{*-i}$.

The condition of Theorem \ref{thm: sd1} then implies
\begin{align}
\lambda_t^{*i}&(b_t, s_{t}^i) \in \underset{\eta\in \Delta(\prescription_{t}^i)}{\arg\max} \sum_{\tilde{\gamma}_t} \eta(\tilde{\gamma}_{t}^i) \E^{g^{*-i}}[r_t^i(\bX_t, \bU_t) + \\
&+V_{t+1}^i(B_{t+1}, S_{t+1}^i)|b_t, s_{t}^i,\tilde{\gamma}_{t}^i];\label{eq: stageopt}\\
V_t^i&(b_t, s_{t}^i)= \sum_{\tilde{\gamma}_t^i} \Big[\lambda_t^{*i}(\tilde{\gamma}_t^i|b_t, s_{t}^i) \times\\ &\times\E^{g_{1:t}^{*-i}}[r_t^i(\bX_t, \bU_t) +V_{t+1}^i(B_{t+1}, S_{t+1}^i)|b_t, s_{t}^i, \tilde{\gamma}_{t}^i]\Big]~~\label{eq: valueupdate}
\end{align}
for all $(b_t, s_{t}^i)$ admissible under $g^{*-i}$. 

Recall that in the proof of Lemma \ref{lem: closenessd1}, we have already proved that fixing $(\lambda^{*-i}, \psi^*)$, $(B_t, S_{t}^i)$ is a controlled Markov process controlled by $\bm{\Gamma}_{t}^i$. Hence \eqref{eq: stageopt} and \eqref{eq: valueupdate} show that $\lambda_t^{*i}$ is a dynamic programming solution of the MDP with instantaneous reward
	\begin{equation}
	\overline{r}_t^i(B_t, S_{t}^i, \bm{\Gamma}_{t}^i):= \E^{g^{*-i}}[r_t^i(\bX_t, \bU_t)|B_t, S_{t}^i, \bm{\Gamma}_{t}^i].
	\end{equation}

Therefore, $\lambda^{*i}$ maximizes 
\begin{equation}
\E^{\lambda^i, \lambda^{*-i}}\left[\sum_{t\in\mathcal{T}} \overline{r}_t^i(B_t, S_{t}^i, \bm{\Gamma}_t^i) \right]
\end{equation}
over all $\lambda^i=(\lambda_t^i)_{t\in\mathcal{T}}, \lambda_t^i: \mathcal{B}_t\times\mathcal{S}_t^i \mapsto \Delta(\prescription_t^i)$. 

Notice that for any $\lambda^i$, if $g^i$ is the behavioral coordination strategy corresponding to the CIB strategy $(\lambda^i, \psi_t^*)$, then by Law of Iterated Expectation
\begin{align*}
\E^{\lambda^i, \lambda^{*-i}}\left[\sum_{t\in\mathcal{T}} \overline{r}_t^i(B_t, S_{t}^i, \bm{\Gamma}_t^i) \right] &= \E^{g^i, g^{*-i}}\left[\sum_{t\in\mathcal{T}} r_t^i(\bX_t, \bU_t) \right].
\end{align*}

Hence we know that $g^{*i}$ maximizes
\begin{equation}\label{geq325}
\E^{g^{i}, g^{*-i}}\left[\sum_{t\in\mathcal{T}} r_t^i(\bX_t, \bU_t) \right]
\end{equation}
over all $g^i$ generated from a CIB strategy with the belief generation system $\psi^*$.

By the closedness property of CIB strategies (Lemma \ref{lem: closenessd1}), we conclude that $g^{*i}$ is a best response to $g^{*-i}$ over all behavioral coordination strategies of coordinator $i$, proving the result.

\subsection{Proof of Proposition \ref{prop: nonexistenceexample}}\label{app: nonexistenceexample}
We will characterize all the Bayes-Nash Equilibria of Example \ref{ex: nonexistence} in terms of individual players' behavioral strategies. Then we will show that none of the BNE correspond to a CIB-CNE. 

Let $p=(p_1, p_2)\in [0, 1]^2$ describe Alice's behavioral strategy: $p_1$ is the probability that Alice plays $U_1^A=-1$ given $X_1^A=-1$; $p_2$ is the probability that Alice plays $U_1^A=+1$ given $X_1^A=+1$. Let $q=(q_1, q_2)\in [0, 1]^2$ denote Bob's behavioral strategy: $q_1$ is the probability that Bob plays $U_3^B=\mathrm{L}$ when observing $U_1^A=-1$, $q_2$ is the probability that Bob plays $U_3^B=\mathrm{L}$ when observing $U_1^A=+1$.

\textbf{Claim:}
\begin{equation}
p^*=\left(\frac{1}{3}, \frac{1}{3}\right),\quad q^*=\left(\frac{1}{3}+\varepsilon, \frac{1}{3}-\varepsilon\right)
\end{equation}
is the unique BNE of Example \ref{ex: nonexistence}.

Given the claim, one can conclude that a CIB-CNE does not exist in this game: Suppose that $(\lambda^*, \psi^*)$ forms a CIB-CNE, Then by the definition of CIB strategies, at $t=1$ the team of Alice chooses a prescription (which maps $\mathcal{X}_1^A$ to $\mathcal{U}_1^A$) based on no information. At $t=3$, the team of Bob chooses a prescription (which is equivalent to an action since Bob has no state) based solely on $B_3$. Define the induced behavioral strategy of Alice and Bob through
\begin{align*}
p_1 &= \lambda_1^{*A}(\mathbf{id}|\varnothing) + \lambda_1^{*A}(\mathbf{cp}_{-1}|\varnothing),\\
p_2 &= \lambda_1^{*A}(\mathbf{id}|\varnothing) + \lambda_1^{*A}(\mathbf{cp}_{+1}|\varnothing),\\
q_1 &= \lambda_3^{*B}(\mathbf{L}|b_3[-1]),\\
q_2 &= \lambda_3^{*B}(\mathbf{L}|b_3[+1]),
\end{align*}
where $b_3[u]$ is the CCI under belief generation system $\psi^*$ when $U_1^A=u$. $\mathbf{id}$ is the prescription that chooses $U_1^A=X_1^A$; $\mathbf{cp}_{u}$ is the prescription that chooses $U_1^A=u$ irrespective of $X_1^A$; $\mathbf{L}$ is Bob's prescription that chooses $U_3^B=\mathrm{L}$.

The consistency of $\psi_1^*$ with respect to $\lambda_1^*$ implies that
\begin{align}
\Pi_2(-1) &= \dfrac{p_1}{p_1+1-p_2}\quad\text{ if }p\neq (0, 1), U_1=-1,\\
\Pi_2(+1) &= \dfrac{p_2}{p_2+1-p_1}\quad\text{ if }p\neq (1, 0), U_1=+1,
\end{align}

The consistency of $\psi_2^*$ with respect to $\lambda_2^*$ implies that
\begin{align}
\Pi_3(+1) &= \Pi_2(U_1^A).
\end{align}

If a CIB-CNE induces behavioral strategy $p^*=\left(\frac{1}{3}, \frac{1}{3}\right)$, then the CIB belief $\Pi_3\in \Delta(\mathcal{X}_2)$ will be the same for both $U_1=+1$ and $U_1=-1$ under any consistent belief generation system $\psi^*$. Then $B_3=(\Pi_3, \bU_2)$ will be the same for both $U_1=+1$ and $U_1=-1$ since $\bU_2$ only takes one value. Hence Bob's induced stage behavioral strategy $q$ should satisfy $q_1=q_2$. However $q^*=\left(\frac{1}{3}+\varepsilon, \frac{1}{3}-\varepsilon\right)$ is such that $q_1^*\neq q_2^*$, hence $(p^*, q^*)$ cannot be induced from any CIB-CNE. 

Since the induced behavioral strategy of any CIB-CNE should form a BNE in the game among individuals, we conclude that a CIB-CNE does not exist in Example \ref{ex: nonexistence}.\\ 

\textbf{Proof of Claim:}
Denote Alice's total expected payoff to be $J(p, q)$. Then 
\begin{align*}
J(p, q) &= \frac{1}{2}\varepsilon(1-p_1+p_2) + \frac{1}{2} \left((1-p_1)(1-q_2) + p_1 \cdot 2q_1 \right) +\\
&+ \frac{1}{2} \left((1-p_2)(1-q_1) + p_2\cdot 2q_2 \right)\\
&=\frac{1}{2}\varepsilon(1-p_1+p_2) + \frac{1}{2} (2 - p_1 - p_2) +\\
&+ \frac{1}{2}(2p_1 + p_2 - 1)q_1 + \frac{1}{2}(2p_2 + p_1 - 1)q_2.
\end{align*}

Since this is a zero-sum game, Alice's expected payoff at equilibrium can be characterized as
\begin{align*}
J^* = \max_{p} \min_q J(p, q)
\end{align*}

Alice plays $p$ at some equilibrium if and only if $\min_q J(p, q) = J^*$. Define $J^*(p) = \min_q J(p, q)$. We compute
\begin{align*}
J^*(p) &= \frac{1}{2}\varepsilon(1-p_1+p_2) + \frac{1}{2} (2 - p_1 - p_2) + \\
&+\begin{cases}
\frac{1}{2}(3p_1 + 3p_2) - 1& 2p_1 + p_2 \leq 1, 2p_2 + p_1 \leq 1\\
\frac{1}{2}(2p_2 + p_1 - 1) & 2p_1 + p_2 > 1, 2p_2 + p_1 \leq 1\\
\frac{1}{2}(2p_1 + p_2 - 1) & 2p_1 + p_2 \leq 1, 2p_2 + p_1 > 1\\
0& 2p_1 + p_2 > 1, 2p_2 + p_1 > 1
\end{cases}
\end{align*}

The set of equlibrium strategies for Alice is the set of maximizers of $J^*(p)$. Since $J^*(p)$ is a continuous piecewise linear function, the set of maximizers can be found by comparing the values at the extreme points of the pieces.

We have
\begin{align*}
J^*(0, 0) &= \frac{1}{2}\varepsilon + 1 - 1 = \frac{1}{2}\varepsilon;\\
J^*\left(\frac{1}{2}, 0\right) &= \frac{1}{2}\varepsilon \cdot \frac{1}{2} + \frac{1}{2} \cdot \frac{3}{2} + \frac{1}{2} \cdot \frac{3}{2} - 1= \frac{1}{4}\varepsilon + \frac{1}{2};\\
J^*\left(0, \frac{1}{2}\right) &= \frac{1}{2}\varepsilon\cdot \frac{3}{2} + \frac{1}{2} \cdot \frac{3}{2} + \frac{1}{2} \cdot \frac{3}{2} - 1= \frac{3}{4}\varepsilon + \frac{1}{2};\\
J^*(1, 0) &= \frac{1}{2}\varepsilon \cdot 0+ \frac{1}{2}\cdot 1 + \frac{1}{2}\cdot 0 = \frac{1}{2};\\
J^*(0, 1) &= \frac{1}{2}\varepsilon \cdot 2 + \frac{1}{2}\cdot 1 + \frac{1}{2} \cdot 0 = \varepsilon + \frac{1}{2};\\
J^*\left(\frac{1}{3}, \frac{1}{3}\right) &= \frac{1}{2}\varepsilon + \frac{1}{2}\cdot \frac{4}{3} + \frac{1}{2} \cdot 0  = \frac{1}{2}\varepsilon + \frac{2}{3};\\
J^*(1, 1) &= \frac{1}{2}\varepsilon + \frac{1}{2}\cdot 0 + 0 = \frac{1}{2}\varepsilon.
\end{align*}

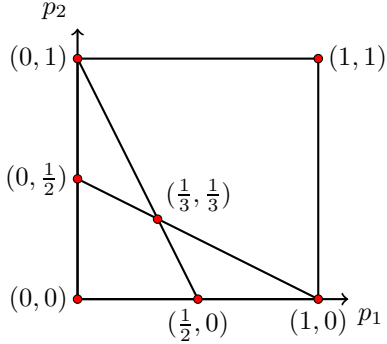
\begin{figure}[!ht]
	\centering
	\begin{tikzpicture}[scale=0.8]
	\draw[thick, ->] (0, 0) -- (4.5, 0) node[anchor=north west] {$p_1$};
	\draw[thick, ->] (0, 0) -- (0, 4.5) node[anchor=south east] {$p_2$};
	
	\draw[thick] (4, 0) -- (4, 4) node[anchor=west] {$(1, 1)$};
	\draw[thick] (4, 4) -- (0, 4) node[anchor=east] {$(0, 1)$};
	\draw[thick] (0, 4) -- (0, 0) node[anchor=east] {$(0, 0)$};
	\draw[thick] (0, 0) -- (4, 0) node[anchor=north] {$(1, 0)$};
	
	\draw[thick] (0, 4) -- (2, 0) node[anchor=north] {$(\frac{1}{2}, 0)$};
	\draw[thick] (4, 0) -- (0, 2) node[anchor=east] {$(0, \frac{1}{2})$};
	\draw[thick] (1.3, 1.3) node[anchor=south west] {$(\frac{1}{3}, \frac{1}{3})$};
	
	\foreach \point in {(0, 0), (4, 0), (0, 4), (2, 0), (0, 2), (1.33, 1.33), (4, 4)}
	\draw [fill=red] \point circle [radius=0.07];
	\end{tikzpicture}
	\caption{The pieces (polygons) for which $J^*(p)$ is linear on. The extreme points of the pieces are labeled.} \label{fig: extremepoints}
\end{figure}

Since $\varepsilon < \frac{1}{3}$, we have $(\frac{1}{3}, \frac{1}{3})$ to be the unique maximum among the extreme points. Hence we have $\arg\max_p J^*(p) = \{(\frac{1}{3}, \frac{1}{3}) \}$, i.e. Alice always plays $p^*=(\frac{1}{3}, \frac{1}{3})$ in any BNE of the game.

Now, consider Bob's equilibrium strategy. $q^*$ is an equilibrium strategy of Bob only if $p^* \in \arg\max_{p} J(p, q^*)$.

For each $q$, $J(p, q)$ is a linear function of $p$ and
\begin{align*}
\nabla_p J(p, q) = \left(-\frac{1}{2}\varepsilon - \frac{1}{2} + q_1 + \frac{1}{2}q_2, \frac{1}{2}\varepsilon - \frac{1}{2} + \frac{1}{2}q_1 + q_2 \right)&\\
\quad \forall p\in (0, 1)^2.&
\end{align*}

We need $\nabla_p J(p, q^*)\Big|_{p=p^*} = (0, 0)$. Hence
\begin{align*}
-\frac{1}{2}\varepsilon - \frac{1}{2} + q_1^* + \frac{1}{2}q_2^* &= 0;\\
\frac{1}{2}\varepsilon - \frac{1}{2} + \frac{1}{2}q_1^* + q_2^* &= 0,
\end{align*}
which implies that $q^*=(\frac{1}{3}+\varepsilon, \frac{1}{3}-\varepsilon)$, proving the claim.

\subsection{Proof of Theorem \ref{thm: case1}}\label{app: case1}
We use Theorem \ref{thm: sd1} to establish the existence of CIB-CNE: We show that for each $t$ there always exists a pair $(\lambda_t^*, \psi_t^*)$ such that $\lambda_t^*$ forms an equilibrium at $t$ given $\psi_t^*$, and $\psi_t^*$ is consistent with $\lambda_t^*$. We provide a constructive proof of existence of CIB-CNE by proceeding backwards in time.

Since $d=1$ we have $S_t^i = \bX_{t-1}^i$. The CCI consists of the beliefs along with $\bU_{t-1}$.

Consider the condensation of the information graph into a directed acyclic graph (DAG) whose nodes are strongly connected components. Each node may contain multiple teams. Consider one topological ordering of this DAG. Denote the nodes by $[1], [2], \cdots$ ($[j]$ is reachable from $[k]$ only if $k < j$.) We use the notation $X_t^{[k]}, \Pi_t^{[k]}$ to denote the vector of the system variables of the teams in a node. In particular, following Definition \ref{def: stagegamed2}, we define $\mathbf{Z}_t^{[k]} = (\bX_{t-1:t}^{[k]}, \mathbf{W}_t^{[k], Y})$. We also use $[1:k]$ as a short hand for the set $[1]\cup [2]\cup \cdots\cup [k]$. Define $B_t^{[1:k]} = (\Pi_t^{[1:k]}, \bU_{t-1}^{[1:k]})$. (Note that the usage of superscript here is different from the CCI $B_t^i$ defined in Definition \ref{def: cci2}.)

We construct the solution first backwards in time, then in the order of the node for each stage. For that matter, we need some induction invariant on the value functions $V_t^i$ (as defined in Theorem \ref{thm: sd1}) for the solution we are going to construct.

\textbf{Induction Invariant:} For each time $t$ and each node index $k$,
\begin{itemize}
	\item $V_t^{i}(b_t, x_{t-1}^{i})$ depends on $b_t$ only through $(b_t^{[1:k-1]}, u_{t-1}^{i})$ for all teams $i\in [k]$, if $[k]$ consists of only one team. (With some abuse of notation, we write $V_t^{i}(b_t, x_{t-1}^{i}) = V_t^{i}(b_t^{[1:k-1]}, u_{t-1}^{i}, x_{t-1}^{i})$ in this case.)
	
	\item $V_t^{i}(b_t, x_{t-1}^{i})$ depends on $b_t$ only through $b_t^{[1:k]}$ for all teams $i\in [k]$, if $[k]$ consists of multiple public teams. (We write $V_t^{i}(b_t, x_{t-1}^{i}) = V_t^{i}(b_t^{[1:k]}, x_{t-1}^{i})$ in this case.)
\end{itemize}

\textbf{Induction Base:} For $t=T+1$ we have $V_{T+1}^i(\cdot)\equiv 0$ for all coordinators $i\in\mathcal{I}$ hence the induction invariant is true.

\textbf{Induction Step:} Suppose that the induction invariant is true at time $t+1$ for all nodes. We construct the solution so that it is also true at time $t$. 

To complete this step we provide a procedure to solve the stage game. We argue that one can solve a series of optimization problems or finite games following the topological order of the nodes through an inner induction step. 

\textbf{Inner Induction Step:} Suppose that the first $k-1$ nodes has been solved, and the equilibrium strategy $\lambda_{t}^{*[1:k-1]}$ uses only $b_t^{[1:k-1]}$ along with private information. Suppose that the update rules $\psi_t^{*,[1:k-1]}$ have also been determined, and they use only $(b_t^{[1:k-1]}, y_t^{[1:k-1]}, u_t^{[1:k-1]})$. We now establish the same property for $(\lambda_{t}^{[k]}, \psi_t^{[k]})$.

\begin{itemize}
	\item If the $k$-th node contains a single coordinator $i$, the value to go is $V_{t+1}^i(B_{t+1}^{[1:k-1]}, \bU_t^i, \bX_t^i)$ by the induction hypothesis. The instantaneous reward for a coordinator $i$ in the $k$-th node can be expressed by $r_t^i(\bX_t^{[1:k]}, \bU_{t}^{[1:k]})$ by the information graph. In the stage game, coordinator $i$ chooses a prescription to maximize the expected value of
	\begin{align*}
		&\quad~ Q_t^i(b_{t}^{[1:k-1]}, \mathbf{Z}_t^{[1:k]}, \bm{\Gamma}_t^{[1:k]})\\
		&:=r_t^i(\bX_t^{[1:k]}, \bU_{t}^{[1:k]}) + V_{t+1}^i(B_{t+1}^{[1:k-1]}, \bU_{t}^i, \bX_t^i),
	\end{align*}
	where
	\begin{align*}
	B_{t+1}^{[1:k-1]} &= (\Pi_{t+1}^{[1:k-1]}, \bU_t^{[1:k-1]} ),\\
	\Pi_{t+1}^{j}&=\psi_t^{*,j}(b_t^{[1:k-1]}, \bY_t^{j}, \bU_t^{[1:k-1]})\quad\forall j\in [1:k-1], \\
	\bY_{t}^{j} &= \ell_t^j(\bX_{t}^j, \bU_t^{[1:k-1]}, \mathbf{W}_t^{j, Y})\quad\forall j\in [1:k-1],\\
	\bU_{t}^j &= \bm{\Gamma}_{t}^j(\bX_t^j)\quad\forall j\in [1:k].
	\end{align*}
	
	The expectation is computed using the belief $\beta_t^i$ (defined through Eq. \eqref{beliefstagegame2} in Definition \ref{def: stagegamed2}) along with $\lambda_{t}^{*[1:k-1]}$ that has already been determined. It can be written as
	\begin{align*}
	&\quad\sum_{\tilde{s}_t, \tilde{\gamma}_t^{[1:k-1]}} \beta_t^i(\tilde{s}_t|x_{t-1}^i) Q_t^i(b_t^{[1:k-1]}, \tilde{s}_t^{[1:k]}, (\tilde{\gamma}_t^{[1:k-1]}, \gamma_{t}^i) )\times \\
	&\times\prod_{j\in [1:k-1]} \lambda_{t}^j(\tilde{\gamma}_t^j|b_t^{[1:k-1]}, \tilde{x}_{t-1}^j) \\
	&=\sum_{\tilde{s}_{t}^{[1:k]}, \tilde{\gamma}_t^{[1:k-1]}} \bm{1}_{ \{\tilde{x}_{t-1}^i=x_{t-1}^i \} } \Pr(\tilde{w}_t^{[1:k], Y})\times \\
	&\times\left(\prod_{j\in [1:k-1]}  \pi_t^{j}(\tilde{x}_{t-1}^j)\Pr(\tilde{x}_t^j|\tilde{x}_{t-1}^j, u_{t-1}^{[1:k-1]}) \right)\times\\
	&\times\left(\prod_{j\in [1:k-1]} \lambda_{t}^{*j}(\tilde{\gamma}_t^j|b_t^{[1:k-1]}, x_{t-1}^j) \right)\times\\
	&\times \Pr(\tilde{x}_t^i|x_{t-1}^i, u_{t-1}^{[1:k]})  Q_t^i(b_t^{[1:k-1]}, \tilde{s}_t^{[1:k]}, (\tilde{\gamma}_t^{[1:k]}, \gamma_{t}^i)).
	\end{align*}
	Therefore, the expected reward of coordinator $i$ depends on $b_t$ through $(b_t^{[1:k-1]}, u_{t-1}^i)$. Coordinator $i$ can choose the optimal prescription based on $(b_t^{[1:k-1]}, u_{t-1}^i, x_{t-1}^i)$, i.e. $\lambda_t^{*i}(b_t, x_{t-1}^i)=\lambda_t^{*i}(b_t^{[1:k-1]}, u_{t-1}^i, x_{t-1}^i)$. We then have $V_t^i(b_t, x_{t-1}^i) = V_t^i(b_t^{[1:k-1]}, u_{t-1}^i, x_{t-1}^i)$. The update rule $\psi_t^{*, [k]} =\psi_t^{*, i}$ is then determined to be an arbitrary update rule consistent with $\lambda_{t}^{*, i}$, which can be chosen as a function from $\mathcal{B}_t^{[1:k]}\times \mathcal{Y}_t^{[k]}\times \mathcal{U}_t^{[1:k]}$ (instead of $\mathcal{B}_t\times \mathcal{Y}_t^{[k]}\times \mathcal{U}_t$) to $\varPi_{t+1}^{[k]}$.
	
	\item If the $k$-th node contains a group of public teams, then update rules $\hat{\psi}_t^{*, [k]}$ are fixed, irrespective of the stage game strategies, i.e. there exist a unique update rule $\hat{\psi}_t^{*, i}$ that is compatible with any $\lambda_{t}^{*, i}$ for a public team $i$. This update rule is a map from $\mathcal{Y}_t^{[k]}\times \mathcal{U}_t^{[1:k]}$ to a vector of delta measures on $\prod_{i\in [k]} \Delta(\mathcal{X}_{t-1}^i)$, i.e. the map to recover $\bX_{t-1}^{[k]}$ from the observations (see Definition \ref{def: publicteam}). The function takes $\bU_{t}^{[1:k]}$ as its argument due to the fact that the observations of the $k$-th node depends on $\bU_t$ only through $\bU_t^{[1:k]}$.
	
	The value to go for each coordinator $i$ can be expressed as $V_{t+1}^i(B_t^{[1:k]}, \bX_{t-1}^i)$ by induction hypothesis. The instantaneous reward can be written as $r_t^i(\bX_t^{[1:k]}, \bU_{t}^{[1:k]})$ by the definition of the information dependency graph.
	
	In the stage game, coordinator $i$ in the $k$-th node chooses a distribution $\eta_{t}^i$ on prescriptions to maximize the expected value of
	\begin{align*}
	&\quad~Q_t^i(b_{t}^{[1:k]}, \mathbf{Z}_t^{[1:k]}, \bm{\Gamma}_{t}^{[1:k]}) \\
	&:=r_t^i(\bX_t^{[1:k]}, \bU_{t}^{[1:k]}) + V_{t+1}^i(B_{t+1}^{[1:k]}, \bX_t^i),
	\end{align*}
	where
	\begin{align*}
	B_{t+1}^{[1:k]} &= (\Pi_{t+1}^{[1:k]}, \bU_t^{[1:k]} ),\\
	\Pi_{t+1}^{j}&=\psi_t^{*, j}(b_t^{[1:k-1]}, \bY_t^{j}, \bU_t^{[1:k-1]})\quad\forall j\in [1:k-1], \\
	\Pi_{t+1}^{[k]}&=\hat{\psi}_t^{*, [k]}(b_{t}^{[1:k]}, \bY_{t}^{[1:k]}, \bU_{t}^{[1:k]} ),\\
	\bY_{t}^{j} &= \ell_t^j(\bX_{t}^j, \bU_t^{[1:k]}, \mathbf{W}_t^{j, Y})\quad\forall j\in [1:k],\\
	\bU_{t}^j &= \bm{\Gamma}_{t}^j(\bX_t^j)\quad\forall j\in [1:k].
	\end{align*}
	The expectation is taken with respect to the belief $\beta_t^i$ (defined through Eq. \eqref{beliefstagegame2} in Definition \ref{def: stagegamed2}) and the strategy prediction $\lambda_t^{[1:k]}$. This expectation can be written as
	\begin{align*}
	&\quad\sum_{\tilde{s}_t, \tilde{\gamma}_t^{[1:k]}} \beta_t^i(\tilde{s}_t|x_{t-1}^i) Q_t(b_t^{[1:k]}, \tilde{s}_t^{[1:k]}, \tilde{\gamma}_t^{[1:k]} ) \eta_{t}^{i}(\tilde{\gamma}_t^i) \times \\
	&\times \prod_{\substack{ j\in [1:k]\\ j\neq i}} \lambda_{t}^j(\tilde{\gamma}_t^j|b_t^{[1:k-1]}, \tilde{x}_{t-1}^j) \\
	&=\sum_{\tilde{s}_{t}^{[1:k]}, \tilde{\gamma}_t^{[1:k]}} \bm{1}_{ \{\tilde{x}_{t-1}^i = x_{t-1}^i \} } \Pr(\tilde{w}_t^{[1:k], Y}) \times\\
	&\left(\prod_{\substack{ j\in [1:k]\\ j\neq i}} \pi_{t}^j(\tilde{x}_{t-1}^j)\Pr(\tilde{x}_t^j|\tilde{x}_{t-1}^j, u_{t-1}^{[1:k]})\lambda_{t}^{*j}(\tilde{\gamma}_t^j|b_t^{[1:k]}, \tilde{x}_{t-1}^j) \right)\times \\
	&\times \Pr(\tilde{x}_t^i|x_{t-1}^i, u_{t-1}^{[1:k]}) \eta_{t}^{i}(\tilde{\gamma}_t^i) Q_t^i(b_t^{[1:k]}, \tilde{s}_t^{[1:k]}, \tilde{\gamma}_t^{[1:k]}),
	\end{align*}
	which dependents only on $b_t$ only through $b_t^{[1:k]}$.
	Therefore, the stage game defined in Definition \ref{def: stagegamed2} induces a finite game between the coordinators in the $k$-th node (instead of all coordinators) with parameter $(b_t^{[1:k]}, (\psi_t^{*, [1:k-1]}, \hat{\psi}_t^{*,[k]}))$ (instead of $(b_t, \psi_t)$), where $\lambda_t^{*[1:k-1]}$ has been fixed. Teams in the $k$-th node form/play a stage game where the first $k-1$ nodes act like nature, while the coordinators after $k$-th node have no effect in the payoffs of the coordinators in the $k$-th node. Hence, a coordinator $i$ in the $k$-th node can based their decision on $(b_t^{[1:k]}, x_{t-1}^i)$, i.e. $\lambda_t^{*i}(b_t, x_{t-1}^i)=\lambda_t^{*i}(b_t^{[1:k]}, x_{t-1}^i)$. We also have $V_t^i(b_t, x_{t-1}^i) = V_t^i(b_t^{[1:k]}, x_{t-1}^i)$. The update rule is determined by $\psi_t^{*,[k]} = \hat{\psi}_t^{*,[k]}$, which is guaranteed to be consistent with $\lambda_t^{*[k]}$.
\end{itemize}

In summary, we determine $(\lambda_{t}^*, \psi_t^*)$ using a node-by-node approach. If the $k$-th node consists of one team, then we first determine $\lambda_{t}^{*[k]}$ from an optimization problem dependent on $(\lambda_{t}^{*[1:k-1]}, \psi_t^{*,[1:k-1]})$, and then determine $\psi_t^{*,[k]}$. If the $k$-th node consists of multiple public players, then we first determine $\psi_t^{*,[k]}$ and then solve $\lambda_{t}^{*[k]}$ from a finite game dependent on $(\lambda_{t}^{*[1:k-1]}, \psi_t^{*, [1:k]})$. Hence we have constructed the solution and established both inner and outer induction steps, proving the theorem.

\subsection{Proof of Theorem \ref{thm: signalfreegame}}\label{app: signalfreegame}
We prove the Theorem for $d=1$. The proof idea for $d>1$ is similar. 

We will prove a stronger result. 
For each $\Pi_t^i\in \Delta(\mathcal{X}_{t-1}^i)$, define the corresponding $\hat{\Pi}_t^i\in\Delta(\mathcal{X}_t)$ by
\begin{equation}
	\overline{\Pi}_t^i(x_t^i):= \sum_{\tilde{x}_{t-1}^i}\Pi_t^i(\tilde{x}_{t-1}^i) \Pr(x_t^i|\tilde{x}_{t-1}^i).
\end{equation}

Define $\hat{\psi}_t^i$ to be the signaling-free update function, i.e. the belief update function such that
\begin{align*}
\Pi_{t+1}^i(x_t^i) &=\hat{\psi}_t^i(\overline{\Pi}_t^i, \bY_t^i)= \dfrac{\overline{\Pi}_t^i(x_{t}^i)\Pr(\bY_t^i|x_t^i) }{\sum_{\tilde{y}_t^i} \overline{\Pi}_t^i(x_{t}^i)\Pr(\tilde{y}_t^i|x_t^i)}.
\end{align*}

Define \emph{open-loop} prescriptions as the prescriptions that simply instruct members of a team to take a certain action irrespective their private information. We will show that there exist an equilibrium where each team plays a common information based signaling-free (CIBSF) strategy, i.e. the common belief generation system for all coordinators is given by the signaling-free update functions $\hat{\psi}$, and coordinator $i$ chooses randomized open-loop prescriptions based on $\overline{\bm{\Pi}}_t=(\overline{\Pi}_t^i)_{i\in\mathcal{I}}$ instead of $(B_t, \bX_{t-1}^i)$.

\textbf{Induction Invariant:} $V_t^i(B_t, \bX_{t-1}^i) = V_t^i(\overline{\bm{\Pi}}_t, \bX_{t-1}^i)$.

\textbf{Induction Base:} The induction variant is true for $t=T+1$ since $V_{T+1}^i(\cdot) \equiv 0$ for all $i\in\mathcal{I}$.

\textbf{Induction Step:} Suppose that the induction variant is true for $t+1$, prove it for time $t$.

Let $\hat{\psi}_t$ be the signaling-free update rule. We solve the stage game $G_t(V_{t+1}, \hat{\psi}_t, b_t)$. In the stage game, coordinator $i$ chooses a prescription to maximize the expectation of
\begin{align*}
r_t^i(\bX_t^{-i}, \bU_{t}) + V_{t+1}^i(\overline{\bm{\Pi}}_{t+1}, \bX_t^i),
\end{align*}
where
\begin{align*}
\overline{\Pi}_{t+1}^k(x_{t+1}^k)&= \sum_{\tilde{x}_t^k} \Pi_{t+1}^k(\tilde{x}_t^k)\Pr(x_{t+1}^k|\tilde{x}_t^k)\quad\forall x_{t+1}^k\in\mathcal{X}_{t+1}^k, \\
\Pi_{t+1}^k&=\hat{\psi}_t^k(\overline{\Pi}_{t}^k, \bY_{t}^k)\quad\forall k\in\mathcal{I},\\
\bY_{t}^{k} &= \ell_t^k(\bX_{t}^k, \mathbf{W}_t^{k, Y})\quad\forall k\in\mathcal{I},\\
U_{t}^{k, j} &= \Gamma_{t}^{k, j}(X_t^{k, j})\quad\forall (k, j)\in\mathcal{N}.
\end{align*}

Since $V_{t+1}^i(\overline{\bm{\Pi}}_{t+1}, \bX_t^i)$ does not depend on coordinator $i$'s prescriptions, coordinator $i$ only need to maximize the expectation of $r_t^i(\bX_t^{-i}, \bU_{t})$, which is
\begin{align*}
	&\sum_{\tilde{x}_{t-1:t}^{-i}, \tilde{\gamma}_t^{-i} } \left(\prod_{j\neq i} \pi_t^{j}(\tilde{x}_{t-1}^{j}) \Pr(\tilde{x}_t^{j}|\tilde{x}_{t-1}^{j}) \lambda_{t}^{j}(\tilde{\gamma}_t^{j}|b_t, \tilde{x}_{t-1}^{j}) \right) \times \\
	&\quad\times r_t^i(\tilde{x}_t^{-i}, (\tilde{\gamma}_t^{-i}(\tilde{x}_t^{-i}), \gamma_{t}^i(x_t^i) ) ). 
\end{align*}

\textbf{Claim:} In the stage game, if all coordinators $-i$ use CIBSF strategy, then coordinator $i$ can respond with a CIBSF strategy.

\begin{proof}[Proof of Claim:]
	Let $\eta_{t}^k: \overline{\varPi}_t \mapsto \Delta(\mathcal{U}_t^k)$ be the CIBSF strategy of coordinator $k\neq i$. Then coordinator $i$'s expected payoff given $\gamma_{t}^i$ can be written as
	 \begin{align*}
	 &\quad~\sum_{\tilde{x}_{t-1:t}^{-i}, \tilde{u}_t^{-i} } \left(\prod_{j\neq i} \pi_t^{j}(\tilde{x}_{t-1}^{j}) \Pr(\tilde{x}_t^{j}|\tilde{x}_{t-1}^{j}) \eta_{t}^{j}(\tilde{u}_t^{j}|\overline{\pi}_t) \right) \times \\
	 &\quad\times r_t^i(\tilde{x}_t^{-i}, (\tilde{u}_t^{-i}, \gamma_{t}^i(x_t^i) ) ) \\
	 &=\sum_{\tilde{x}_{t}^{-i}, \tilde{u}_t^{-i} } \left(\prod_{j\neq i} \left(\sum_{\tilde{x}_{t-1}^j}\pi_t^{j}(\tilde{x}_{t-1}^{j}) \Pr(\tilde{x}_t^{j}|\tilde{x}_{t-1}^{j})\right) \eta_{t}^{j}(\tilde{u}_t^{j}|\overline{\pi}_t) \right) \times \\
	 &\quad\times r_t^i(\tilde{x}_t^{-i}, (\tilde{u}_t^{-i}, \gamma_{t}^i(x_t^i) ) ) \\
	 &=\sum_{\tilde{x}_{t}^{-i}, \tilde{u}_t^{-i} } \left(\prod_{j\neq i} \overline{\pi}_t^{j}(\tilde{x}_{t}^{j}) \eta_{t}^{j}(\tilde{u}_t^{j}|\overline{\pi}_t) \right) r_t^i(\tilde{x}_t^{-i}, (\tilde{u}_t^{-i}, \gamma_{t}^i(x_t^i) ) ) \\
	 &=:\overline{r}_t^i(\overline{\pi}_t, \eta_{t}^{-i}, \gamma_{t}^i(x_t^i)).
	 \end{align*}
	 
	 Hence coordinator $i$ can respond with a prescription $\gamma_{t}^i$ such that $\gamma_{t}^i(x_t^i) = u_t^i$ for all $x_t^i$, where
	 \begin{align*}
	 	u_t^i \in \arg\max_{\tilde{u}_t^i} \overline{r}_t^i(\overline{\pi}_t, \eta_{t}^{-i}, \tilde{u}_t^i),
	 \end{align*} 
	 can be chosen based on $(\overline{\pi}_t, \eta_{t}^{-i})$, proving the claim.
\end{proof}

Given the claim, we conclude that there exist a stage game equilibrium where all coordinators play CIBSF strategies: Define a new stage game where we restrict each coordinator to CIBSF strategies. A best response in the restricted stage game will be also a best response in the original stage game due to the claim. The restricted game is a finite game (It is a game of symmetrical information with parameter $\overline{\pi}_t$ where coordinator $i$'s action is $u_t^i$ and its payoff is a function of $\overline{\pi}_t$ and $u_t$.) that always has an equilibrium. The equilibrium strategy will be consistent with $\hat{\psi}_t$ due to Lemma \ref{lem: sigfreeupdate}.

\begin{lemma}\label{lem: sigfreeupdate}
	The signaling-free update rule $\hat{\psi}_t^i$ is consistent with any $\lambda_t^i: \mathcal{B}_t\times \mathcal{X}_{t-1}^i \mapsto \Delta(\prescription_{t}^i)$ that corresponds to a CIBSF strategy at time $t$.
\end{lemma}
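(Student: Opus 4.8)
The plan is to verify the defining equation of Definition \ref{def: consistency} directly: I would plug a CIBSF $\lambda_t^i$ into the expression for $\Upsilon_t^i$, carry out the sums using the fact that open-loop prescriptions carry no information about the current state, and check that the resulting normalized update is exactly $\hat{\psi}_t^i$.

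First I would specialize the ingredients of Definition \ref{def: consistency} to $d=1$ and to the signaling-free game. Here $S_t^i=\bX_{t-1}^i$, $S_{t+1}^i=\bX_t^i$, the update map $\iota_t^i$ merely returns $\bX_t^i$, and by Lemma \ref{lem: selfbelief} (with an empty PRP) the belief function is $P_t^i(x_t^i\mid u_{t-1},x_{t-1}^i)=\Pr(x_t^i\mid x_{t-1}^i)$, since states are uncontrolled; likewise the observation kernel is $\Pr(y_t^i\mid x_t^i,u_t)=\Pr(y_t^i\mid x_t^i)$, since observations are uncontrolled. I would then invoke the two structural features of a $\lambda_t^i$ corresponding to a CIBSF strategy: (i) as a function of $(b_t,s_t^i)$ it does not depend on the private component $s_t^i=x_{t-1}^i$; and (ii) it is supported on open-loop prescriptions, i.e.\ constant maps — among which the only one consistent with producing the realized action $u_t^i$ is the constant prescription that always prescribes $u_t^i$, call it $c_{u_t^i}$.

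Substituting these into $\Upsilon_t^i(b_t,y_t^i,u_t,s_{t+1}^i)$, the constraint $\tilde{\gamma}_t^i(\tilde{x}_t^i)=u_t^i$ forces $\tilde{\gamma}_t^i=c_{u_t^i}$; by (i)\,--\,(ii) the factor $\lambda_t^i(c_{u_t^i}\mid b_t)$ pulls out of every sum; the indicator $\bm{1}_{\{s_{t+1}^i=\tilde{x}_t^i\}}$ collapses the sum over $\tilde{x}_t^i$; and $\sum_{\tilde{x}_{t-1}^i}\pi_t^i(\tilde{x}_{t-1}^i)\Pr(s_{t+1}^i\mid\tilde{x}_{t-1}^i)$ is exactly $\overline{\pi}_t^i(s_{t+1}^i)$. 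Hence $\Upsilon_t^i(b_t,y_t^i,u_t,s_{t+1}^i)=\lambda_t^i(c_{u_t^i}\mid b_t)\,\Pr(y_t^i\mid s_{t+1}^i)\,\overline{\pi}_t^i(s_{t+1}^i)$. Whenever the denominator of \eqref{beliefupdate} is nonzero — which in particular requires $\lambda_t^i(c_{u_t^i}\mid b_t)>0$ — this common factor cancels in the normalization, leaving $[\psi_t^{*,i}(b_t,y_t,u_t)](s_{t+1}^i)=\Pr(y_t^i\mid s_{t+1}^i)\overline{\pi}_t^i(s_{t+1}^i)\big/\sum_{\tilde{s}_{t+1}^i}\Pr(y_t^i\mid\tilde{s}_{t+1}^i)\overline{\pi}_t^i(\tilde{s}_{t+1}^i)$, which is precisely $\hat{\psi}_t^i(\overline{\pi}_t^i,y_t^i)$ once we view $\hat{\psi}_t^i$ as a map on $\mathcal{B}_t\times\mathcal{Y}_t\times\mathcal{U}_t$ that reads $\overline{\pi}_t^i$ off the $\pi_t^i$-component of $b_t$, reads $y_t^i$ off $y_t$, and ignores $u_t$ and the remaining coordinates. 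At realizations where that denominator vanishes, Definition \ref{def: consistency} imposes no constraint, so any extension of $\hat{\psi}_t^i$ there is admissible. This gives consistency of $\hat{\psi}_t^i$ with $\lambda_t^i$.

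I expect the only mildly delicate part to be the bookkeeping: correctly matching the domain of $\hat{\psi}_t^i(\overline{\Pi}_t^i,\bY_t^i)$ to the signature $\psi_t^{*,i}:\mathcal{B}_t\times\mathcal{Y}_t\times\mathcal{U}_t\to\Delta(\mathcal{S}_{t+1}^i)$ demanded by Definition \ref{def: consistency}, and handling the ``whenever Bayes rule applies'' clause. The conceptual content — an open-loop prescription reveals nothing about $\bX_t^i$, so conditioning on the realized action leaves the predict-then-correct update unchanged — is immediate.
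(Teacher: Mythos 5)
Your proposal is correct and is essentially a detailed write-out of the argument the paper invokes in one line ("standard arguments for strategy independence of belief"): because a CIBSF $\lambda_t^i$ puts mass only on constant prescriptions chosen independently of $\bX_{t-1}^i$, the factor $\lambda_t^i(c_{u_t^i}\mid b_t)$ pulls out of $\Upsilon_t^i$ and cancels in the normalization, leaving exactly the predict-then-correct update $\hat{\psi}_t^i$. The only cosmetic discrepancy is that the paper's displayed formula for $\hat{\psi}_t^i$ normalizes over $\tilde{y}_t^i$ (an apparent typo), whereas your computation correctly normalizes over the state $\tilde{s}_{t+1}^i$.
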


\begin{proof}
	Can be done with standard arguments for strategy independence of belief.
\end{proof}

Let $\eta_{t}^{*}=(\eta_{t}^{*j})_{j\in\mathcal{I}}, \eta_{t}^{*j}: \overline{\varPi}_t \mapsto \Delta(\mathcal{U}_t^j)$ be a CIBSF strategy profile that is a stage game equilibrium. Then the value function
\begin{align*}
	&\quad ~V_{t}^i(b_t, x_{t-1}^i) = \left(\max_{\tilde{u}_t^i} \overline{r}_t^i(\overline{\pi}_t, \eta_{t}^{*-i}, \tilde{u}_t^i) \right) + \\
	&+\sum_{\tilde{x}_t, \tilde{y}_t} V_{t+1}^i(\hat{\psi}_t(\overline{\pi}_{t}, \tilde{y}_t), \tilde{x}_{t}^i) \Pr(\tilde{y}_t|\tilde{x}_t)\Pr(\tilde{x}_t^i|x_{t-1}^i)\overline{\pi}_t^{-i}(\tilde{x}_t^{-i})
\end{align*}
depends on $(b_t, x_{t-1}^i)$ only through $(\overline{\pi}_t, x_{t-1}^i)$, establishing the induction step.

\subsection{Proof of Lemma \ref{lem: suffhidden}}\label{app: suffhidden}
For ease of illustration we prove the result for $d=2$. The result for $d=1$ is trivially true, and the result for $d>2$ can be proved following a similar logic to that of this proof.

The key idea is to apply person-by-person refinement of a team strategy. 
Let $g^{-i}$ be some behavioral coordination strategy profile for coordinators other than coordinator $i$. Let $\mu^i$ denote a pure team strategy that is a best response to $g^{-i}$.\footnote{Note that we are not considering a coordination strategy and no randomization is considered. At time $t$, agent $(i, j)$ decides on her action through $u_t^{i, j} = \mu_t^{i, j}(h_t^{i, j})$.} To proceed we first prove the following lemma.

\begin{lemma}\label{lem: refineteam}
	Fixing $g^{-i}$, for any pure team strategy profile $\mu^i$ and any $(i, j)\in\mathcal{N}_i$, there exist a pure team strategy profile $\tilde{\mu}^i$ such that (1) $\tilde{\mu}_t^{i, j}(h_t^{i, j})$ does not depend on $x_{t-1}^i$; (2) $\tilde{\mu}^{i, -j} = \mu^{i, -j}$; (3) $J^i(\tilde{\mu}^i, g^{-i})\geq J^i(\mu^i, g^{-i})$.
\end{lemma}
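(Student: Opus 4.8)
The plan is to prove Lemma~\ref{lem: refineteam} by a person-by-person refinement (``shredding of irrelevant information'') argument: fix $g^{-i}$ and the rules $\mu^{i,-j}$ of the other members of team $i$, and regard the choice of $\mu^{i,j}$ as a single-agent sequential optimization problem whose objective is the team reward $J^i$. For concreteness I would carry out the case $d=2$ (so the stale own-state to be removed at time $t$ is $X_{t-1}^{i,j}$); the general $d$ is identical with $X_{t-d+1:t-1}^{i,j}$ in place of $X_{t-1}^{i,j}$. The goal is to show that this single-agent problem admits an optimal policy in which, at every time $t$, member $(i,j)$'s action depends on her information $H_t^{i,j}=(H_t^i,X_{t-1}^{i,j},X_t^{i,j})$ only through $(H_t^i,X_t^{i,j})$. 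Once this is established for a given $j$, conclusions (1)--(3) follow, and iterating over all $j\in\mathcal{N}_i$ (each step keeps the already-refined members fixed inside $\mu^{i,-j}$, and conclusion (3) is an inequality, so the refinements compose) yields the version used in the proof of Lemma~\ref{lem: suffhidden}.

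The first ingredient is the within-team conditional independence analogue of Lemma~\ref{lem: condindep}: under any $\mu^i$ and any $g^{-i}$, conditioned on $H_t^i$ the hidden own-states $X_{t-d+1:t}^{i,j}$ are conditionally independent of the entire collection of states, observations and actions of every other agent up to time $t$. This is proved by induction on $t$ exactly as Lemma~\ref{lem: condindep}, using the primitive independence of the noises together with the member-wise separation $X_{t+1}^{i,j}=f_t^{i,j}(X_t^{i,j},\bU_t,W_t^{i,j})$ and $Y_t^{i,j}=\ell_t^{i,j}(X_t^{i,j},\bU_t,W_t^{i,j,Y})$. Two consequences are immediate: member $(i,j)$'s posterior over everything outside her own channel, given $H_t^{i,j}$, is a function of $H_t^i$ alone; and $\E[r_t^i(\bX_t,\bU_t)\mid H_t^i,x_{t-1}^{i,j},x_t^{i,j},u_t^{i,j}]$ does not depend on $x_{t-1}^{i,j}$. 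I would then run a backward-in-time refinement: let $\tau$ be the last time at which $\mu_\tau^{i,j}$ genuinely uses $X_{\tau-1}^{i,j}$ (if there is none we are done), keep all the other rules of member $(i,j)$ as they are, and replace $\mu_\tau^{i,j}$ by a rule $\hat\mu_\tau^{i,j}(h_\tau^i,x_\tau^{i,j})$. Writing $R(h_\tau^i,x_{\tau-1}^{i,j},x_\tau^{i,j},a)$ for the conditional expected reward-to-go from time $\tau$ when member $(i,j)$ plays $a$ at time $\tau$ and continues with the fixed future rules, and noting that $\Pr(x_{\tau-1}^{i,j}\mid h_\tau^i,x_\tau^{i,j})$ is not affected by $\mu_\tau^{i,j}$, one sees that the substitution does not decrease $J^i$ as soon as $\arg\max_a R(h_\tau^i,x_{\tau-1}^{i,j},x_\tau^{i,j},a)$ can be chosen independently of $x_{\tau-1}^{i,j}$; one then decreases $\tau$ and repeats.

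The step I expect to be the main obstacle is precisely that last claim. Although $X_{\tau-1}^{i,j}$ is payoff-irrelevant from time $\tau$ on and is conditionally independent of the rest of the world given $H_\tau^i$, it re-enters the common information of team $i$ after the sharing delay (for $d=2$, already at time $\tau+1$), so the teammates' later actions do react to it and $R$ genuinely depends on $x_{\tau-1}^{i,j}$; what must be shown is that this dependence does not couple with the choice of the time-$\tau$ action $a=U_\tau^{i,j}$. The route I would take is to trace the downstream influence of $U_\tau^{i,j}$ --- it enters only $X_{\tau+1}^{i,j}=f_\tau^{i,j}(X_\tau^{i,j},\bU_\tau,W_\tau^{i,j})$, the other agents' time-$\tau$ states and observations through $\bU_\tau$, and the public record --- none of which involves $x_{\tau-1}^{i,j}$, while every later use of $X_{\tau-1}^{i,j}$ by the team is made from the directly shared copy in $\bX^i$ rather than through $U_\tau^{i,j}$. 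Turning this into a clean statement --- that, after taking the conditional expectations dictated by the within-team independence, $R(h_\tau^i,x_{\tau-1}^{i,j},x_\tau^{i,j},a)$ splits as an $a$-free term plus an $x_{\tau-1}^{i,j}$-free term --- is the substance of the proof, and it is exactly where the member-wise separation of dynamics and observations is indispensable (without it, $X_{\tau-1}^{i,j}$ would also carry information about the teammates' states and the separation would fail). An equivalent way to organize the same computation, which I would also try, is to recast member $(i,j)$'s problem as a Markov decision problem in the spirit of the proof of Lemma~\ref{lem: suffprivated3}, with an information state that from the outset excludes $X_{\tau-1}^{i,j}$, and to verify that its transition kernel and per-step reward are well defined.
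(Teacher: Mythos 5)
Your skeleton coincides with the paper's: fix $g^{-i}$ and $\mu^{i,-j}$, view member $(i,j)$'s choice of $\mu^{i,j}$ as a single-agent partially observed control problem, and use a within-team conditional-independence property to drop $X_{t-1}^{i,j}$. Your ``first ingredient'' is exactly the Claim inside the paper's proof of Lemma~\ref{lem: refineteam}: it is established by a Lemma~\ref{lem: condindep}-style induction across teams followed by a Bayes factorization within the team, and your observation that the member-wise separation of \emph{both} the dynamics and the observations is indispensable there is correct (this is precisely why the paper added the assumption $Y_t^{i,j}=\ell_t^{i,j}(X_t^{i,j},\bU_t,W_t^{i,j,Y})$).

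The gap is the step you yourself defer as ``the substance of the proof'': you never establish that $R(h_\tau^i,x_{\tau-1}^{i,j},x_\tau^{i,j},a)$ decouples $a$ from $x_{\tau-1}^{i,j}$, and this is not a formality that influence-tracing delivers. Concretely, with $d=2$ take $X_{\tau+1}^{i,j}=f_\tau^{i,j}(X_\tau^{i,j},\bU_\tau,W_\tau^{i,j})=U_\tau^{i,j}$, let a teammate $j'$ play $U_{\tau+1}^{i,j'}=X_{\tau-1}^{i,j}$ (legal, since $X_{\tau-1}^{i,j}\in H_{\tau+1}^{i,j'}$ after the sharing delay), and let the only nonzero reward be $r_{\tau+1}^i(\bX_{\tau+1},\bU_{\tau+1})=\bm{1}_{\{X_{\tau+1}^{i,j}=U_{\tau+1}^{i,j'}\}}$; then $R=\bm{1}_{\{a=x_{\tau-1}^{i,j}\}}$, which has a genuine cross term, and the unique optimal time-$\tau$ action uses $x_{\tau-1}^{i,j}$. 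So the additive split you need is not implied by the model assumptions for an arbitrary fixed $\mu^{i,-j}$ (even one belonging to a team-optimal profile), and your downstream-influence heuristic misses exactly this mechanism: the teammates can later \emph{compare} a quantity that $U_\tau^{i,j}$ controls against the shared copy of $X_{\tau-1}^{i,j}$. The paper itself takes your second route — a POMDP whose state is $(\bY_{1:t-1},\bU_{1:t-1},\bX_{1:t}^{-i},\bm{\Gamma}_{1:t-1}^{-i},\bX_{1:t}^{i,-j},X_t^{i,j})$ with $X_{t-1}^{i,j}$ relegated to the observation, plus a proof that the belief on that state given $H_t^{i,j}$ does not depend on $x_{t-1}^{i,j}$ — but that route silently requires the reward and transition kernel of this POMDP to be functions of the declared state and action, and the teammates' delayed reaction to $X_{t-1}^{i,j}$ is precisely what obstructs that verification. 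A complete argument must confront this coupling head-on (for instance by constraining the class of $\mu^{i,-j}$ against which the refinement is carried out) rather than postpone it.
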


Given the result of Lemma \ref{lem: refineteam}, we can refine any best-response pure strategy $\mu^i$ in a person-by-person manner to obtain a pure strategy in which $\mu_t^{i, j}(h_t^{i, j})$ does not depend on $x_{t-1}^{i, j}$ for all $(i, j)\in\mathcal{N}_i$. Then, one can transform the new pure strategy $\mu^i$ into one of its equivalent pure coordination strategies $\nu^i$, where $\nu^i$ always assigns simple prescriptions.

\begin{proof}[Proof of Lemma \ref{lem: refineteam}]
	Fix the strategy $\mu^{i, -j}$ for members of team $i$ other than $(i, j)$, and also fix $g^{-i}$ for other teams. We refine agent $(i, j)$'s strategy so as to maximize team $i$'s expected reward.
	
	We argue that agent $(i, j)$ is facing a POMDP problem with:
	\begin{itemize}
		\item State: $(\bY_{1:t-1}, \bU_{1:t-1}, \bX_{1:t}^{-i}, \bm{\Gamma}_{1:t-1}^{-i}, \bX_{1:t}^{i, -j}, X_t^{i, j})$
		\item Observation: $H_t^{i, j}=(\bY_{1:t-1}, \bU_{1:t-1}, \bX_{1:t-2}^i, X_{t-1:t}^{i, j})$
		\item Action: $U_t^{i, j}$
		\item Instantaneous reward: $r_t^i(\bX_t, \bU_t)$
	\end{itemize}
	where $\bU_t^{-i}$ follows the distribution induced from the random prescriptions generated by $g_t^{-i}$ and $(H_t^0, \bX_{1:t}^{-i}, \bm{\Gamma}_{1:t-1}^{-i})$ and $\bU_t^{i, -j}$ are generated from $\mu_t^{i, -j}$.
	
	By the standard POMDP structural result, the conditional distribution of the state given observations is an information state for agent $(i, j)$. Notice that $X_{t-1}^{i, j}$ only appears in the observation but not in the state. Furthermore, $\bY_{1:t-1}, \bU_{1:t-1}$, and $X_t^{i, j}$ are perfectly observed by agent $(i, j)$. Therefore, to prove that agent $(i, j)$ does not need to use $X_{t-1}^{i, j}$, it is sufficient to prove the following claim: 
	
	\textbf{Claim:} $\Pr^{\mu^{i, -j}, g^{-i}}(x_{1:t}^{-i}, \gamma_{1:t-1}^{-i}, x_{t-1:t}^{i, -j} |h_t^{i, j})$ does not depend on $x_{t-1}^{i, j}$.
	
	\begin{proof}[Proof of Claim]
		Due to conditional independence among different teams (Lemma \ref{lem: condindep}), we have
		\begin{align}
		&~\quad\Pr^{\mu^{i, -j}, g^{-i}}(x_{1:t}^{-i}, \gamma_{1:t-1}^{-i}, x_{t-1:t}^{i, -j} |h_t^i, x_{t-1:t}^{i, j})\\
		&=\Pr^{\mu^{i, -j}}(x_{t-1:t}^{i, -j} |h_t^i, x_{t-1:t}^{i, j}) \prod_{k\neq i} \Pr^{g^{-i}}(x_{1:t}^k, \gamma_{1:t-1}^{k}|h_t^0).\label{tameimpala}
		\end{align}
		
		Note that the first conditional belief term on the right hand side of \eqref{tameimpala} does not depend on strategy $g^{-i}$ (by Lemma \ref{lem: condindep}) and $\mu_{1:t-1}^{i, j}$ (by the standard policy-independence property of belief in POMDP). Therefore, one can consider the above conditional belief term assuming that other teams play according to open-loop strategy $\hat{g}^{-i}$, which generates $u_{1:t-1}^{-i}$, and agent $(i, j)$ plays according to open-loop strategy $\hat{g}^{i, j}$, which generates $u_{1:t-1}^{i, j}$. 
		
		Consequently, letting $\hat{x}_{t-1}^{-i} \in\mathcal{X}_{t-1}^{-i}$ be arbitrary and using Bayes' rule we obtain
		\begin{equation}
		\begin{split}
		&\quad~\Pr^{\mu^{i, -j}}(x_{t-1:t}^{i, -j} |h_t^i, x_{t-1:t}^{i, j}, \hat{x}_{t-1}^{-i})\\
		&=\dfrac{\Pr^{\hat{g} }(x_{t-1:t}^{i, -j}, y_{t-1}, u_{t-1}, x_{t-1:t}^{i, j}|h_{t-1}^0, x_{1:t-2}^i, \hat{x}_{t-1}^{-i})}{\sum_{\tilde{x}_{t-1:t}^{i, -j}} \Pr^{\hat{g} }(\tilde{x}_{t-1:t}^{i, -j}, y_{t-1}, u_{t-1}, x_{t-1:t}^{i, j}|h_{t-1}^0, x_{1:t-2}^i, \hat{x}_{t-1}^{-i}) }
		\end{split}\label{dividedown}
		\end{equation}
		where $\hat{g} = (\hat{g}^{i, j}, \mu^{i, -j}, \hat{g}^{-i})$.
		
		We have
		\begin{align}
		&\quad~ \Pr^{\mu^{i, -j}, \hat{g}^{i, j}, \hat{g}^{-i}}(x_{t-1:t}^{i, -j}, y_{t-1}, u_{t-1}, x_{t-1:t}^{i, j}|h_{t-1}^0, x_{1:t-2}^i, \hat{x}_{t-1}^{-i})\!\!\!\!\!\!\!\!\\
		&=\Pr^{\mu^{i, -j}, \hat{g}^{i, j}, \hat{g}^{-i}}(x_t^{i, -j}, x_t^{i, j}| y_{1:t-1}, u_{1:t-1}, x_{1:t-1}^i, \hat{x}_{t-1}^{-i})\\
		&\quad\times \Pr^{\mu^{i, -j}, \hat{g}^{i, j}, \hat{g}^{-i}}(y_{t-1}|y_{1:t-2}, u_{1:t-1}, x_{1:t-1}^i, \hat{x}_{t-1}^{-i})\\
		&\quad\times \Pr^{\mu^{i, -j}, \hat{g}^{i, j}, \hat{g}^{-i}}(u_{t-1}^i|y_{1:t-2}, u_{1:t-2}, x_{1:t-1}^i, \hat{x}_{t-1}^{-i})\\
		&\quad \times \Pr^{\mu^{i, -j}, \hat{g}^{i, j}, \hat{g}^{-i}}(x_{t-1}^i|y_{1:t-2}, u_{1:t-2}, x_{1:t-2}^i, \hat{x}_{t-1}^{-i})\\
		&=\Pr(x_t^{i, -j}|x_{t-1}^{i, -j}, u_{t-1}) \Pr(x_t^{i, j}|x_{t-1}^{i, j}, u_{t-1}) { \Pr(y_{t-1}^{i, j}|x_{t-1}^{i, j}, u_{t-1}) }\!\!\!\!\!\!\!\!\!\!\\
		&\quad {\times \Pr(y_{t-1}^{i, -j}|x_{t-1}^{i, -j}, u_{t-1}) }
		\Pr(y_{t-1}^{-i}|\hat{x}_{t-1}^{-i}, u_{t-1})\\
		&\quad\times \bm{1}_{\{\mu_{t-1}^{i, -j}(y_{1:t-2}, u_{1:t-2}, x_{1:t-3}^i, x_{t-2:t-1}^{i, -j}) = u_{t-1}^{i, -j} \}}\\
		&\quad \times\Pr(x_{t-1}^{i, -j}|x_{t-2}^{i, -j}, u_{t-2}) \Pr(x_{t-1}^{i, j}|x_{t-2}^{i, j}, u_{t-2})\\
		&=F_t(x_{t-1:t}^{i, -j}, h_t^i) \cdot G_t(x_{t-1:t}^{i, j}, \hat{x}_{t-1}^{-i}, h_t^i)\label{sepdynproductterms}
		\end{align}
		for some functions $F_t$ and $G_t$.
		
		Combining \eqref{dividedown} and \eqref{sepdynproductterms} we obtain
		\begin{align}
		&\quad~\Pr^{\mu^{i, -j}}(x_{t-1:t}^{i, -j} |h_t^0, x_{1:t-2}^i, x_{t-1:t}^{i, j}, \hat{x}_{t-1}^{-i})\\
		&=\dfrac{F_t(x_{t-1:t}^{i, -j}, h_t^i)  \cdot G_t(x_{t-1:t}^{i, j}, \hat{x}_{t-1}^{-i}, h_t^i)}{\sum_{\tilde{x}_{t-1:t}^{i, -j}} F_t(\tilde{x}_{t-1:t}^{i, -j}, h_t^i)  \cdot G_t(x_{t-1:t}^{i, j}, \hat{x}_{t-1}^{-i}, h_t^i)}\\
		&=\dfrac{F_t(x_{t-1:t}^{i, -j}, h_t^i) }{\sum_{\tilde{x}_{t-1:t}^{i, -j}} F_t(\tilde{x}_{t-1:t}^{i, -j}, h_t^i) }
		\end{align}
		which is independent of $x_{t-1}^{i, j}$. Hence we proved the claim. 
	\end{proof}
	
	Therefore, agent $(i, j)$ can solve the POMDP problem and obtain an optimal strategy $\tilde{\mu}^{i, j}$ that does not use $X_{t-1}^{i, j}$ to choose actions. Define $\tilde{\mu}^i = (\tilde{\mu}^{i, j}, \mu^{i, -j})$. One can verify that conditions (1)-(3) of Lemma \ref{lem: refineteam} are satisfied.
\end{proof}

\subsection{Proof of Proposition \ref{prop: pbe}}\label{app: pbe}
For each $t\in\mathcal{T}$ and $h_t^0\in\mathcal{H}_t^0$, let $(\pi_t^k)_{k\in\mathcal{I}}$ be the beliefs generated from the belief generation system $\psi^*$. Let $b_t=\left(\left(\pi_t^k\right)_{k\in\mathcal{I}}, y_{t-d+1:t-1}, u_{t-d:t-1}\right)$.

We first define the belief system $\vartheta^*$. Consider coordinator $i\in\mathcal{I}$ and $\overline{h}_t^i\in\overline{\mathcal{H}}_t^i$. Let $\hat{g}_{1:t-1}^i$ be the open loop coordination strategy that generates the prescriptions $\gamma_{1:t-1}^i$. We only need to consider realizations $\overline{h}_t^i\in\overline{H}_t^i$ that are admissible under $\hat{g}_{1:t-1}^i$. We define $\vartheta_t^{*i}(\overline{h}_t^i)$ as following.
\begin{itemize}
	\item Case I: $\Pr^{\hat{g}_{1:t-1}^i, g_{1:t-1}^{*-i}}(\overline{h}_t^i) > 0$. Define
	\begin{equation}
	\vartheta_t^{*i}(h_{t}^*|\overline{h}_t^i) = \Pr^{\hat{g}_{1:t-1}^i, g_{1:t-1}^{*-i}}(h_{t}^*| \overline{h}_t^i).\label{norahjones}
	\end{equation}
	
	\item Case II: $\Pr^{\hat{g}_{1:t-1}^i, g_{1:t-1}^{*-i}}(\overline{h}_t^i) = 0$. We define
	\begin{equation}
	\vartheta_t^{*i}(h_{t}^*|\overline{h}_t^i) = \Pr^{\hat{g}_{1:t-1}^i, \tilde{g}_{1:t-1}^{-i}}(h_{t}^*| \overline{h}_t^i).\label{johnmayer}
	\end{equation}
	where $\tilde{g}_{1:t-1}$ is an arbitrary strategy profile which satisfies $\Pr^{\tilde{g}_{1:t-1}}(h_t^0) > 0$ and $\prod_{k\in \mathcal{I}}\pi_t^k(s_t^k) = \Pr^{\tilde{g}_{1:t-1}}(s_t|h_t^0)$ for all $s_t^k\in\mathcal{S}_t^k$ (since $\psi^*$ is regular there exist such $\tilde{g}_{1:t-1}^k$). 
	
	Let $g_{1:t-1}^{-i}$ be such that $\Pr^{\hat{g}_{1:t-1}^i, g_{1:t-1}^{-i}}(\overline{h}_t^i) > 0$. Then \eqref{johnmayer} is well defined since
	\begin{align}
		&\quad~\Pr^{\hat{g}_{1:t-1}^i, \tilde{g}_{1:t-1}^{-i}}(\overline{h}_t^i) \\
		&=  \Pr^{\hat{g}_{1:t-1}^i, \tilde{g}_{1:t-1}^{-i}}(x_{1:t-d}^i, \gamma_{1:t-1}^i|h_t^0) \Pr^{\hat{g}_{1:t-1}^i, \tilde{g}_{1:t-1}}(h_t^0)\\
		&= \Pr^{\hat{g}_{1:t-1}^i}(x_{1:t-d}^i, \gamma_{1:t-1}^i|h_t^0) \Pr^{\hat{g}_{1:t-1}^i, \tilde{g}_{1:t-1}^{-i}}(h_t^0)\label{jimihendrix}
	\end{align}
	where in the last equality we used the strategy independence result in Lemma \ref{lem: condindep}.
	The first term in \eqref{jimihendrix} is non-zero since $\overline{h}_t^i$ is admissible under $\hat{g}_{1:t-1}^i$. The second term in \eqref{jimihendrix} is non-zero since $\Pr^{\tilde{g}_{1:t-1}}(h_t^0) > 0$. Therefore $\Pr^{\hat{g}_{1:t-1}^i, \tilde{g}_{1:t-1}^{-i}}(\overline{h}_t^i) > 0$.
	
\end{itemize}

It is clear that $\vartheta^*$ is consistent with $g^*$.
Now, we would argue below that $g^{*i}$ is sequentially rational with respect to $\vartheta^{*i}$.

By the construction of the stage game (Definition \ref{def: stagegamed2}), $(\lambda_t^{*i})_{t\in\mathcal{T}}$ is a solution to the dynamic program of the following MDP:
\begin{itemize}
	\item The state process is $(B_t, S_{t}^i)$;
	\item The control action is $\bm{\Gamma}_{t}^i$;
	\item The transition kernel is given by
	\begin{align}
	&\quad~\Pr(\tilde{b}_{t+1}, s_{t+1}^i| b_t, s_t^i, \gamma_t^i)\\
	&=\sum_{\tilde{s}_t, \tilde{x}_{t-d+1:t}} \sum_{\tilde{\gamma}_t^{-i}: \tilde{\gamma}_t^{-i}(\tilde{x}_{t-d+1:t}^{-i}) = \tilde{u}_t^{-i} } \bm{1}_{ \{\tilde{\bm{\pi}}_{t+1} = \psi_t^*(b_t, \tilde{y}_t, \tilde{u}_t) \} } \times\\
	&\times \bm{1}_{ \{\tilde{u}_t^i = \gamma_t^i(\tilde{x}_{t-d+1:t}^i) \} } \bm{1}_{ \{\tilde{y}_{t-d+2:t-1} = y_{t-d+2:t-1}\}} \times\\
	&\times \bm{1}_{\{ \tilde{u}_{t-d+1:t-1} = u_{t-d+1:t-1} \} } \Pr(\tilde{y}_t|\tilde{x}_t, \tilde{u}_t) \times\\
	&\times \bm{1}_{\{\tilde{s}_t^i = s_t^i \}  } \left(\prod_{k\neq i} \lambda_t^{*k}(\tilde{\gamma}_t^k|b_t, \tilde{s}_{t}^k) \pi_t^k(\tilde{s}_{t}^k)\right) \times \\
	&\times \prod_{k\in \mathcal{I}} P_t^k(\tilde{x}_{t-d+1:t}^k| y_{t-d+1:t-1}^k, u_{t-d:t-1}, \tilde{s}_{t}^k)\Big]\label{bsupdate}
	\end{align}
	\item The instantaneous reward is $\overline{r}_t^i(B_t, S_{t}^i, \bm{\Gamma}_{t}^i)$, where
	\begin{align}
	&\quad~\overline{r}_t^i(b_t, s_{t}^i, \gamma_{t}^i) \\
	&= \sum_{\tilde{s}_t, \tilde{x}_{t-d+1:t}, \tilde{\gamma}_t^{-i}} \Big[r_t^i(\tilde{x}_t, (\tilde{\gamma}_t^{-i}(\tilde{x}_{t-d+1:t}^{-i}), \gamma_t^i(\tilde{x}_{t-d+1:t}^i)))\times\\
	&\times \bm{1}_{\{\tilde{s}_t^i = s_t^i \}  } \left(\prod_{k\neq i} \lambda_t^{*k}(\tilde{\gamma}_t^k|b_t, \tilde{s}_{t}^k) \pi_t^k(\tilde{s}_{t}^k)\right) \times \\
	&\times \prod_{k\in \mathcal{I}} P_t^k(\tilde{x}_{t-d+1:t}^k| y_{t-d+1:t-1}^k, u_{t-d:t-1}, \tilde{s}_{t}^k)\Big]\label{eqbelief1}
	\end{align}
\end{itemize}

Now we investigate the condition for sequential rationality of $g^{*i}$ w.r.t. $\vartheta^*$ according to Definition \ref{def: wpbe}. Fix $t$ and $\overline{h}_t^i\in\overline{\mathcal{H}}_t^i$, we have
\begin{align}
	&~\quad J_t^i(g_{t:T}^i, g_{t:T}^{*-i}; \vartheta_t^*, \overline{h}_t^i)\\
	&=\E^{\vartheta_t^{*i}(\overline{h}_t^i), g_{t:T}^i, g_{t:T}^{*-i}}\left[\sum_{\tau=t}^T r_{\tau}^i(\bX_{\tau}, \bU_{\tau})\right]\\
	&=\E^{\vartheta_t^{*i}(\overline{h}_t^i), g_{t:T}^i, g_{t:T}^{*-i}}\left[\sum_{\tau=t}^T \tilde{r}_{\tau}^i(\overline{H}_{\tau}^i, \bm{\Gamma}_{\tau}^i) \right]\label{flyinglotus}
\end{align}
where
\begin{align}
	&\quad~\tilde{r}_{\tau}^i(\overline{h}_{\tau}^i, \gamma_{\tau}^i)\\
	&=\E^{\vartheta_t^{*i}(\overline{h}_t^i), g_{t:T}^i, g_{t:T}^{*-i}} [r_{\tau}^i(\bX_{\tau}, \bU_{\tau}) |\overline{h}_{\tau}^i,  \gamma_{\tau}^i]\\
	&= \sum_{\tilde{s}_{\tau}, \tilde{x}_{\tau-d+1:\tau}, \tilde{\gamma}_{\tau}^{-i} }\Big[r_{\tau}^i(\tilde{x}_{\tau}, (\tilde{\gamma}_{\tau}^{-i}(\tilde{x}_{\tau-d+1:\tau}^{-i}), \gamma_{\tau}^i(\tilde{x}_{\tau-d+1:\tau}^i)))\times\\
	& \quad~\times\Pr^{\vartheta_t^{*i}(\overline{h}_t^i), g_{t:T}^i, g_{t:T}^{*-i}}(\tilde{s}_{\tau}, \tilde{x}_{\tau-d+1:\tau}, \tilde{\gamma}_{\tau}^{-i} |\overline{h}_{\tau}^i)
\end{align}
for all $\overline{h}_{\tau}^i\in \mathcal{H}_\tau^i$ such that 
\begin{equation}
\Pr^{\vartheta_t^{*i}(\overline{h}_t^i), g_{t:T}^i, g_{t:T}^{*-i}}(\overline{h}_{\tau}^i) > 0.\label{thundercat}
\end{equation}

To analyze \eqref{flyinglotus}, we only need to consider $\tilde{r}_\tau(\overline{h}_{\tau}^i, \gamma_\tau^i)$ for $\overline{h}_{\tau}^i$ that satisfies \eqref{thundercat}.

Let $\tilde{g}_{1:t-1}^{-i}$ be either $g_{1:t-1}^{*-i}$ (when Case I is met) or $\tilde{g}_{1:t-1}^{-i}$ in \eqref{johnmayer} (when Case II is met). Then
\begin{align}
&\quad~\Pr^{\tilde{g}_{1:t-1}^{k}}(\tilde{s}_{t}^k, \tilde{x}_{t-d+1:\tau}^k| h_{t}^0) \\
&= \pi_t^k(\tilde{s}_{t}^k) P_t^k(\tilde{x}_{t-d+1:t}^k| y_{t-d+1:t-1}^k, u_{t-d:t-1}, \tilde{s}_{t}^k) 
\end{align}
for all $k\neq i$.
Furthermore, \eqref{thundercat} implies that $\Pr^{\hat{g}_{1:t-1}^i, \tilde{g}_{1:t-1}^{-i}, g_{t:T}^i, g_{t:T}^{*-i}}(\overline{h}_{\tau}^i) > 0$. In particular, $h_{\tau}^0$ is admissible under $(\tilde{g}_{1:t-1}^{k}, g_{t:\tau-1}^{*k})$. Therefore, by Lemma \ref{lem: piistruebelief} we conclude that 
\begin{align}
&\quad~\Pr^{\tilde{g}_{1:t-1}^{k}, g_{t:T}^{*k}}(\tilde{s}_{\tau}^k, \tilde{x}_{\tau-d+1:\tau}^k| h_{\tau}^0) \\
&= \pi_\tau^k(\tilde{s}_{\tau}^k) P_\tau^k(\tilde{x}_{\tau-d+1:\tau}^k| y_{\tau-d+1:\tau-1}^k, u_{\tau-d:\tau-1}, \tilde{s}_{\tau}^k) 
\end{align}
for all $k\neq i$. Therefore
\begin{align}
&\quad~\Pr^{\vartheta_t^{*i}(\overline{h}_t^i), g_{t:T}^i, g_{t:T}^{*-i}}(\tilde{s}_{\tau}, \tilde{x}_{\tau-d+1:\tau}, \tilde{\gamma}_{\tau}^{-i}|\overline{h}_{\tau}^i ) \\
&= \Pr^{\hat{g}_{1:t-1}^i, \tilde{g}_{1:t-1}^{-i}, g_{t:T}^i, g_{t:T}^{*-i}}(\tilde{s}_{\tau}, \tilde{x}_{\tau-d+1:\tau}, \tilde{\gamma}_{\tau}^{-i}|\overline{h}_{\tau}^i )\\
&= \Pr(\tilde{s}_{\tau}^i, \tilde{x}_{\tau-d+1:\tau}^i|\overline{h}_{\tau}^i ) \prod_{k\neq i}\Pr^{\tilde{g}_{1:t-1}^{k}, g_{t:T}^{*k}}(\tilde{s}_{\tau}^k, \tilde{x}_{\tau-d+1:\tau}^k, \tilde{\gamma}_{\tau}^{k}|\overline{h}_{\tau}^0 )\\
&= \bm{1}_{\{\tilde{s}_t^i = s_t^i \}  } \left(\prod_{k\neq i}  \lambda_{\tau}^{*k}(\tilde{\gamma}_\tau^k|b_\tau, \tilde{s}_{\tau}^k) \pi_\tau^k(\tilde{s}_{\tau}^k) \right) \times \\
&\times \prod_{k\in \mathcal{I}} P_\tau^k(\tilde{x}_{\tau-d+1:\tau}^k| y_{\tau-d+1:\tau-1}^k, u_{\tau-d:\tau-1}, \tilde{s}_{\tau}^k)\label{eqbelief2}
\end{align}

Comparing \eqref{eqbelief2} with \eqref{eqbelief1}, we conclude that $\tilde{r}_{\tau}^i(\overline{h}_{\tau}^i, \gamma_{\tau}^i) = \overline{r}_\tau^i(b_\tau, s_\tau^i)$ for $\overline{h}_{\tau}^i\in \mathcal{H}_\tau^i$ such that $\Pr^{\vartheta_t^{*i}(\overline{h}_t^i), g_{t:T}^i, g_{t:T}^{*-i}}(\overline{h}_{\tau}^i) > 0$. 

Similarly, we can show that under the measure generated by $(\vartheta_t^{*i}(\overline{h}_t^i), g_{t:T}^i, g_{t:T}^{*-i})$, $(B_t, S_t^i)_{t\geq \tau}$ is a controlled Markov Chain with control action $\bm{\Gamma}_t^i$ that has the same transition kernel as described in \eqref{bsupdate}. Therefore we conclude that $g_{t:T}^{*i}$ (generated from $\lambda_{t:T}^*$ and $\psi^*$) optimizes $J_t^i(g_{t:T}^i, g_{t:T}^{*-i}; \vartheta_t^*, \overline{h}_t^i)$ over all $g_{t:T}^i$, proving sequential rationality.

\end{document}